\newcommand{\nc}{\newcommand}
\nc{\gl}{\llbracket}
\nc{\gr}{\rrbracket}
\newcommand{\eq}[1]{\begin{equation}
                     \begin{split} #1 \end{split}
                     \end{equation}}
\newcommand{\ul}{\underline}
\newcommand{\ov}{\overline}
\newcommand{\fa}{\hat}
\newcommand{\Lie}{{\cal L}}
\newcommand{\tri}{\hspace{-1pt}\vartriangle\hspace{-1pt}}
\nc{\triN}{\hspace{-1pt}\vartriangle_N\hspace{-1pt}}
\nc{\triNm}{\hspace{-1pt}\vartriangle_{N-1}\hspace{-1pt}}
\nc{\trizw}{\hspace{-1pt}\vartriangle_2\hspace{-1pt}}
\nc{\tridr}{\hspace{-1pt}\vartriangle_3\hspace{-1pt}}
\nc{\trifue}{\hspace{-1pt}\vartriangle_5\hspace{-1pt}}
\nc{\klie}{\fa{\mathcal L}_{\xi}}
\nc{\kl}{\fa{\mathcal L}}
\nc{\ks}{_{KS(\beta)}}
\nc{\cq}{{\cal Q}}
\newtheorem{definition}{Definition}[subsection]
\newtheorem{prop}[definition]{Proposition}
\numberwithin{equation}{section}
\begin{document}

\vspace*{-1.5cm}
\begin{flushright}
  {\small
  MPP-2013-269
  }
\end{flushright}

\vspace{1.5cm}
\begin{center}
{\LARGE
Lie algebroids, non-associative structures \\[0.3cm]
and non-geometric fluxes
}
\vspace{0.2cm}

\end{center}

\vspace{0.35cm}
\begin{center}
 Andreas Deser$^{1,2}$
\end{center}

\vspace{0.1cm}
\begin{center}
\emph{$^{1}$ Max-Planck-Institut f\"ur Physik (Werner-Heisenberg-Institut) \\
   F\"ohringer Ring 6,  80805 M\"unchen, Germany } \\[0.1cm]
\vspace{0.25cm}
\emph{$^{2}$ Institut f\"ur Theoretische Physik, Leibniz Universit\"at Hannover \\
Appelstraße 2, 30167 Hannover, Germany}  \\[0.1cm]

\end{center}

\vspace{1cm}

%%%%%%%%%%%%%%%%%%%%%%%%%%%%%%%%%%%%%%%%%%%%%%%
%%%%%%%%%%%%%%%%%%%%%%%%%%%%%%%%%%%%%%%%%%%%%%%
%%%%%%%%%%%%%%%%%%%%%%%%%%%%%%%%%%%%%%%%%%%%%%%
%%%%%%%%%%%%%%%%%%%%%%%%%%%%%%%%%%%%%%%%%%%%%%%
%%%%%%%%%%%%%%%%%%%%%%%%%%%%%%%%%%%%%%%%%%%%%%%
%%%%%%%%%%%%%%%%%%%%%%%%%%%%%%%%%%%%%%%%%%%%%%%
%%%%%%%%%%%%%%%%%%%%%%%%%%%%%%%%%%%%%%%%%%%%%%%
%%%%%%%%%%%%%%%%%%%%%%%%%%%%%%%%%%%%%%%%%%%%%%%

\begin{abstract}
In the first part of this article, the geometry of Lie algebroids as well as the Moyal-Weyl star product and some of its generalizations in open string theory are reviewed. A brief introduction to T-duality and non-geometric fluxes is given. Based on these foundations, more recent results are discussed in the second part of the article. On the world-sheet level, we will analyse closed string theory with flat background and constant $H$-flux. After an odd number of T-dualities, correlation functions allow to extract a three-product having a pattern similar to the Moyal-Weyl product. We then focus on the target space and the local appearance of the various fluxes. An algebra based on vector fields is proposed, whose structure functions are given by the fluxes. Jacobi-identities for vector fields allow for the computation of Bianchi-identities. Based on the latter, we give a proof for a special Courant algebroid structure on the generalized tangent bundle, where the fluxes are realized by the commutation relations of a basis of sections. As reviewed in the first part of this work, in the description of non-geometric $Q$- and $R$-fluxes, the $B$-field gets replaced by a bi-vector $\beta$, which is supposed to serve as the dual object to $B$ under T-duality. A natural question is about the existence of a differential geometric framework allowing the construction of actions manifestly invariant under coordinate- and gauge transformations, which couple the $\beta$-field to gravity. It turns out that Lie algebroids are the right language to answer this question positively. We conclude by giving an outlook on future directions. 

\end{abstract}

\clearpage

\tableofcontents

\section{Introduction}
\label{ch-intro}

The mathematical and physical description of our perception of space and time is most likely the oldest part of natural sciences, starting with the famous \emph{Elements} of Euclid. It took human creativity about two thousand years to condense at the foundations of our present conception of nature which is determined by two pillars: General relativity as a geometric model of large scale structures like the universe and quantum theory to understand the process of measurement especially in the microscopic world.

The formulation of general relativity %\cite{Misner:1974qy, Wald:1984rg, 0531.53051} 
is intimately connected to major concepts in differential geometry. Spacetime is given by a four-dimensional Lorenzian mani\-fold and gravity manifests itself as curvature. In addition, geometry cannot be separated from matter: Energy and momentum determine the shape of spacetime through Einstein's field equations. Force-free motion happens on geodesics which in turn can be measured as was magnificently confirmed by the observation of light rays in gravitational fields. But despite all the tremendous successes, the theory is not a complete description of nature. It contains black holes as singular solutions and under realistic assumptions singularities cannot be avoided \cite{Hawking:1969sw, 0531.53051}.

Historically in the same period as relativity, the microscopic description of the fundamental constituents of matter was provided by a theory of measurement processes. Non-relativistic quantum mechanics uses the language of functional analysis to model physical observables as operators acting on a Hilbert space. One of its great successes is the precise reproduction of atomic spectra, where one can actually \emph{see} the spectrum of an operator in the mathematical sense. The unification of quantum theory with special relativity revealed new phenomena like particle production and annihilation which finally lead to the formulation of quantum field theory. Treated perturbatively, the latter suffers from infinities coming from loop-corrections where virtual particles can have arbitrary high momenta. These can be cured in so-called \emph{renormalizable} theories, where the infinities can be absorbed into redefinitions of a finite number of parameters, like masses and charges. The effect of this procedure is an energy dependence of the coupling constants, called renormalization group flow. In the last century quantum field theory produced some of the most precise agreements of theoretical physics with experiment like the anomalous magnetic moment of the electron \cite{Barnett:1996hr}, but there are still open questions, the most prominent being the failure to apply quantum field theory to gravity, which is not renormalizable.  

\subsection{Gauge theories and the standard model}
Combining the powerful techniques of quantum field theory with another mathematical field, the geometry of fiber bundles opened the possibility to describe all microscopic interactions of elementary particles in a unified way. There were still severe problems to handle, like the masslessness of all gauge mediators predicted by gauge invariance, which was solved by spontaneous symmetry breaking and the Brout-Englert-Higgs mechanism. In addition, gauge theories with fermions have symmetries which are broken at the quantum level. They are called anomalies and only gauge theories with very specific particle representations are free of these problems.

The result of these efforts was the formulation of the \emph{standard model} of elementary particles. It is a quantum gauge theory with gauge group $SU(3) \times SU(2) \times U(1)$, whose $SU(3)$-factor describes the strong interactions (quantum chromodynamics) and the $SU(2) \times U(1)$ give the electroweak theory, whose $SU(2)$-gauge bosons get massive through the Higgs-mechanism. In addition, the standard model is anomaly-free and renormalizable and its matter content is given by three generations of fermions (quarks and leptons) and a bosonic Higgs field. The latter was found most probably at the LHC (investigations about its precise structure are not finished up to now).

Despite its tremendous successes, the standard model leaves lots of questions unanswered. To mention only a few, first of all there is no explanation to the number of generations and why it has this specific matter content, which ensures for example the absence of anomalies. Furthermore, there are lots of free para\-meters like the values of the gauge couplings and Yukawa couplings. On the more conceptual level, there is the \emph{hierarchy} problem, which roughly states that the quantum corrections to the Higgs mass, quadratically in the cutoff scale, are of many orders of magnitude greater than the uncorrected mass. To cancel these corrections, \emph{supersymmetry} would be one possibility, which however has not been observed up to now. Finally, the standard model does not contain gravity and is therefore incomplete.

\subsection{Non-commutative geometries}
The infinities of quantum field theories can be regularized by a cutoff momentum scale, which means that the theory is not sensible to probe distances smaller than the inverse of the cutoff. Viewing this from the opposite direction, considering a spacetime-geometry where points get meaningless and only finite volumes can be measured might cure divergence problems of quantum field theory, and possibly gravity \cite{Snyder:1946qz}.

 The idea of replacing points of a manifold by cells of finite volume appears already in the quantization of \emph{phase spaces} and the resulting Heisenberg uncertainty relations in quantum mechanics. These are a result of replacing the commutative algebra of observables in classical mechanics by a non-commutative algebra, given by operators on a Hilbert space in quantum mechanics. The generalization of this procedure to arbitrary phase space manifolds is achieved by the theory of deformation quantization \cite{0377.53024, 0377.53025}, where the observables are still classical functions, but their product gets replaced by a non-commutative star product. It can be shown that this procedure on the one hand reproduces results of quantum mechanics like atomic spectra, and on the other hand can be extended to capture arbitrary Poisson manifolds by the famous formality theorems of Kontsevich \cite{1058.53065}.

Switching from the Hamiltonian to the Lagrangian view point, i.e. to the \emph{con\-fi\-gu\-ration space} and its tangent space, the above ideas have been generalized to gauge field theories, but the hope of improving the divergences was disappointed with the discovery of a complicated ultraviolet-infrared mixing behavior and problems to apply renormalization theory (as reviewed in \cite{Szabo:2001kg}). Nevertheless, star-products can also be applied to Einstein gravity \cite{Aschieri:2005yw}, however with similar problems and in addition it is not clear how to get a diffeomorphism covariant theory.

Finally, the attempt to replace the algebra of observables by an abstract non-commutative algebra acting on a Hilbert space, together with a Dirac-operator (which is called a \emph{spectral triple}) and then create an abstract differential calculus resulted in \emph{non-commutative geometry}, initiated by Connes \cite{Connes:1994yd}.  It was possible to reconstruct standard manifolds out of spectral triples, and in addition, combining them with spectral triples corresponding to discrete extra dimensions, Connes, Lott and Chamseddine were able to get the standard model coupled to general relativity \cite{Connes:1990qp}, at least as an effective theory. 

\subsection{String theory and particle physics}
The idea that the concept of a point in the spacetime manifold is not appropriate in quantum field theory and has to be replaced by extended objects is also the basis of \emph{string theory}. The change of paradigm is performed in this case by taking one-dimensional strings as probes of spacetime instead of point particles. 

First discovered through investigations of meson resonances around forty years ago, string theory at the beginning lost much of its interest because it could not describe meson scattering appropriately and it had a number of shortcomings, like an unstable (tachyonic) ground state and could only be quantized consistently in 26 dimensions. This point of view changed dramatically by the discovery of gauge and gravitational degrees of freedom in the massless spectrum of open and closed strings. %limits zero tension?
Furthermore, adding supersymmetry projected out the unstable vacuum state and reduced the number of dimensions in which string theories could exist to ten. It was then possible to prove that the low energy effective field theories matched the ten dimensional supergravity theories, and a mechanism to handle anomalies was discovered by Green and Schwarz.

The bridge to our four dimensional world was built by compactifying string theory on a (warped) direct product of an external flat Minkowski space and a compact internal manifold. The geometry of the latter determines the four-dimensional supersymmetry and particle spectrum. This picture received even more attention by the discovery of non-perturbative higher-dimensional dynamical objects, called \emph{D-branes}. The low-energy spectrum of multiple coincident branes contains the degrees of freedom needed for the description of gauge theories and intersecting such stacks of branes supplied another important ingredient: Strings stretching from one stack to another get massless at the intersection and contain representations of massless fermions in their spectrum. This opened the huge field of intersecting brane models as low energy particle physics descriptions \cite{Blumenhagen:2006ci}.

In spite of all the breakthroughs, the above philosophy of modeling four-dimensional physics by the geometry of string theory is far from being complete. To mention a few shortcomings, compactification introduces lots of scalar fields which describe for example the shape of the compact space. These give rise to unwanted phenomenology like long range forces and unobserved particles. These \emph{moduli} fields have to be avoided in low energy physics. As an example, com\-pac\-ti\-fi\-cation on manifolds equipped with additional fields such as the $B$-field and its flux \cite{Grana:2005jc}, introduces scalar potentials whose minimization gives specific values to the moduli. Such solutions are called \emph{vacua} and one of the big problems is the huge variety of the latter which is often referred to as the landscape problem. 

\subsection{Sigma models and T-duality}
The dynamics of a string propagating in (curved) background spaces is described by a two-dimensional quantum field theory. Its ``spacetime'' is given by a time parameter and one spatial coordinate, which describe the propagation of a 1-dimensional string. If this \emph{world sheet} of the string has boundary, we are dealing with open strings and the string is closed otherwise. 

The dynamical fields of the theory are interpreted as the coordinates of a \emph{target space}, which in the bosonic case can have the properties of an ordinary manifold or a supermanifold in the case of additional fermionic coordinates. It is intriguing to relate properties of the two-dimensional field theory of both, classical and quantum nature to the geometrical appearance of the target space. As a prominent example, the renormalization group equations of the two-dimensional theory give rise to the target space Einstein equations and Maxwell equations for additional fields like the $B$-field. Another important example is the presence of a $B$-field in the case of open strings. By considering correlation functions of the two-dimensional field theory it is possible to show that in this case, the target space coordinates do not commute any more and the target space is described more properly by non-commutative geometry \cite{Schomerus:1999ug, Seiberg:1999vs}.

Taking the opposite viewpoint, properties of the target space geometry give rise to statements about the underlying two-dimensional field theory: If the manifold has isometries, it is possible to show that it is equivalent to a \emph{T-dual} field theory which describes another target space with dual metric and $B$-fields \cite{Buscher:1987sk}, and as a consequence also dual field strengths, i.e. dual fluxes.  The simplest example of this \emph{T-duality} is given by a target space with a compact dimension given by a circle. The dual space is then given by a circle with inverse radius, i.e. a propagating string cannot distinguish between a geometry of radius $R$ and radius $l_s ^2 /R$, where $l_s$ is the length of the string.   

One of the advantages of the view-point of two-dimensional field theory is simplicity. In certain cases, one can describe a closed string by two independent sectors, called left- and right-moving. The above T-duality, leading to complicated relations for the dual metrics and $B$-fields on the target space level, can be realized very simply by changing the sign of one of the sectors on the world sheet level. Even more: by using the latter, one is able to arrive at world sheet theories, whose target space geometries cannot be interpreted as ordinary manifolds. In addition, it is not clear how one can define fields like the metric, $B$-field or fluxes on such spaces, which are usually referred to as \emph{non-geometric}. One of the first discoveries in this direction was the fact, that instead of a $B$-field and its corresponding $H$-flux, such configurations are better described by a two-vector $\beta$ and its $R$-flux. The former possibly hints at deep connections to the mathematical field of \emph{Poisson geometry}, as the basic objects of such geometries are so-called Poisson-two-vectors. The mathematical description of geometry and gravity in the presence of Poisson tensors with the help of differential geometric notions could contribute to the understanding of the description of T-dual geometries and is one of the main motivations for this thesis.      

%Mtheorie?
%Poisson sigma modelle? DFT?

\subsection{Outline}
The beginning of this work is devoted to the mathematical foundations of the structures and language which are used later on. The notion of a Lie algebroid generalizes naturally both Lie algebras and the tangent bundle of a manifold. Lie algebras are characterized by a set of structure constants which determine the commutation relations of the generators of the algebra. On the one hand, making the structure ``constants'' spacetime dependent can be considered as attaching to every point in spacetime a Lie algebra. This results in a bundle of Lie algebras which is a special case of a Lie algebroid. On the other hand, replacing the tangent bundle of a manifold by an arbitrary vector bundle enables us to extend notions of differential geometry like Lie- and covariant derivatives to arbitrary vector bundles. 

After this purely mathematical review we are going to set the physical stage, where the structures introduced before will become important. First it is reviewed how notions of standard geometry have to be extended in the case of open string theory in the presence of magnetic fields. It turns out, that non-commutative geo\-metry is needed in the case of constant $B$-field. The standard Moyal-Weyl star product can be rederived from correlation functions of tachyon vertex operators in open string theory. This is another example of the intimate connection between two-dimensional sigma models and spacetime-geometry. In the case of non-constant $B$-field this can be extended to an even more complicated star-product which is non-commutative and non-associative. 

The following section deals with the second physical aspect, where the previously introduced mathematical structures become important. T-duality is reviewed in detail and its consequences on target space fields like the metric and $B$-field are given. After discussing the most prominent example of an approximate solution with constant $H$-flux, the appearance of so-called non-geometric fluxes is motivated from different viewpoints. This is also the first time where possible connections to Poisson geometry can be discovered in terms of the existence of a bi-vector in order to characterize $Q$- and $R$-fluxes.

After these preparations, the next sections are dedicated to the achievements obtained in this thesis. First, we are going to generalize the open-string non-commutativity of section \ref{ch-quant} to the closed string case and investigate the consequences of T-duality implemented at the world sheet level. It turns out that an $n$-product structure on the algebra of observables on the target space is needed to capture the case of non-vanishing $R$-flux, which is the case completely T-dual to the starting model with constant $H$-flux. 

We then turn to the target space in the next two sections and analyse its geometric structures in the presence of the different kinds of fluxes. First, we use Courant algebroids to re-derive an algebra first discovered by Roytenberg in which the commutation relations of the basis sections are determined by the four different kinds of flux. Using the Jacobi-identities of this algebra, we give Bianchi-identities which are constraints on the fluxes if they are turned on together. The starting point of the following section is a simple but intriguing observation: Similar to the $H$-flux, which is the exterior differential of the $B$-field, the $R$-flux can be expressed as the differential of a bi-vector $\beta$, where the differential now acts on vector fields. This can be made precise by replacing the tangent bundle of the target space by a special Lie algebroid. Even more: A complete differential geometry setup can be constructed to write down a diffeomorphism- and gauge invariant action for the metric and bi-vector $\beta$, which is one of the main results of this thesis. Furthermore, the new action can be related to the well known low energy supergravity actions by a field redefinition which was already discovered by Seiberg and Witten \cite{Seiberg:1999vs}.

We conclude by giving an outlook of potential applications and extensions of the results obtained: After briefly explaining the relation of the previous results to T-duality and non-geometry, the application of Lie algebroids to supersymmetry is shortly discussed and  the construction of similar structures in the case of Filippov 3-algebroids is given.

\section{Mathematical background}
\label{ch-math}

In general relativity or Riemannian geometry, the basic objects are given by a manifold together with a metric. Adding a connection, it is possible to establish a dynamical theory of gravitation. In contrast to this, in symplectic or Poisson geometry, used for the description of phase spaces, a metric is not needed a priori. The basic structure is given by a closed two-form in symplectic geometry and a closed bi-vector in Poisson geometry. In string theory, both a metric and a two-form $B$-field are contained in the massless spectrum and therefore are part of the low energy effective field theory. One of the inherent symmetries of string theory is T-duality which mixes $B$-field and metric components and thus suggests a unified treatment of the two fields. Such a description is provided by generalized geometry \cite{Hitchin:2004ut, Gualtieri:2003dx, Ellwood:2006ya, Grana:2008yw, Berman:2010is}. In some situations, as encountered for example in T-fold backgrounds \cite{Hull:2004in, Hull:2006va, Hull:2007jy,Bouwknegt:2008kd, Schulz:2011ye}, it is even more elegant to use an anti-symmetric bi-vector together with a metric to describe T-dual backgrounds. An immediate question is about the geometric analogues e.g. to Lie derivatives and connections needed to formulate a dynamical theory containing the metric and bi-vector as basic fields and to identify symmetries like diffeomorphism covariance and gauge symmetry.

In this section, we will introduce in detail the basic mathematical concepts which are used in this work to answer this question. It turns out that Lie algebroids \cite{mackenzie1987lie, mackenzie2005general} are the appropriate language to construct a suitable differential geometric framework. We begin with an introduction of the main constructions and describe the most important examples. This is followed by a detailed analysis of differential geometry extended to the Lie algebroid framework. Finally, we will relate these structures to Lie bi-algebroids and Courant algebroids \cite{0885.58030}, which are important tools in generalized geometry.

\subsection{Differential geometry of Lie algebroids}
\label{sec-liealgdiffgeo}
The notion of a Lie algebroid can be imagined as a generalization of two mathe-matical structures. One the one hand, it generalizes Lie algebras in the sense that the constant structure coefficients become spacetime-dependent. In other words, one has to deal with ``bundles of Lie algebras'', as the fibers over spacetime carry the structure of a Lie algebra. On the other hand, Lie algebroids generalize the tangent bundle of a manifold as they still allow its basic operations like the Lie bracket of vector fields. But the latter get replaced by sections in a general vector bundle while one still wants to act on functions. One therefore needs to relate sections in a Lie algebroid to vector fields by a bundle homomorphism called ``anchor-map''.

\subsubsection{Lie algebroids}
\label{sec-Lie}
To distil a mathematical concept out of these different ways of thinking, let us give the definition and provide the most direct consequences. For more details, the reader is referred to \cite{1139.53001}.

\begin{definition} \label{Liealgdef}
Let $M$ be a manifold, $E \rightarrow M$ a vector bundle together with a bracket $[\cdot,\cdot]_E : E \times E \rightarrow E$ satisfying the Jacobi identity, and a homomorphism $\rho : E \rightarrow TM$ called the anchor-map. Then $(E,[\cdot,\cdot]_E,\rho)$ is called  \emph{Lie algebroid} if the following Leibniz rule is satisfied
\eq{
\label{Leibniz}
[s_1, f s_2]_E = f \hspace{1pt}[s_1,s_2]_E + \rho(s_1)(f) s_2  \,,
}
for $f\in {\cal C}^{\infty}(M)$ and sections $s_i$ of $E$. For simplicity, if the context is clear we often denote the Lie algebroid just by the total space $E$.
\end{definition}

\noindent An immediate consequence of the definition is the following homomorphism pro\-per\-ty of the anchor map, which relates the bracket on the Lie algebroid to the Lie bracket $[\cdot,\cdot]_L$ on the tangent space $TM$ of the manifold.

\begin{prop} \label{homoeigenschaft}
\eq{
\rho\left([s_1,s_2]_E\right) = [\rho(s_1),\rho(s_2)]_L
}
\end{prop}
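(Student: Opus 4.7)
The plan is to exploit the only tools available, namely the Jacobi identity for $[\cdot,\cdot]_E$ and the Leibniz rule \eqref{Leibniz}. The idea is to take a ``test section'' $fs_3$ with a function $f \in \mathcal{C}^{\infty}(M)$ in front, expand the Jacobiator of $s_1$, $s_2$, $fs_3$ in two different ways, and read off the anchor property from the terms that are non-tensorial in $f$.

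Concretely, I would start from the Jacobi identity written in the form
\[
\bigl[s_1,[s_2,fs_3]_E\bigr]_E \;=\; \bigl[[s_1,s_2]_E,fs_3\bigr]_E + \bigl[s_2,[s_1,fs_3]_E\bigr]_E .
\]
On the left-hand side I would apply the Leibniz rule twice (first to the inner bracket, then to the outer one); this produces a term proportional to $f$, two ``mixed'' terms proportional to $\rho(s_1)(f)$ and $\rho(s_2)(f)$, and one genuinely second-order term $\rho(s_1)\bigl(\rho(s_2)(f)\bigr)\,s_3$. Doing the analogous expansion on the right-hand side gives the same $f$-proportional pieces (which then cancel by Jacobi for sections of $E$), the same two mixed $\rho(s_i)(f)$ terms, plus $\rho([s_1,s_2]_E)(f)\,s_3$ coming from the outer Leibniz rule applied to $[[s_1,s_2]_E,fs_3]_E$ and $\rho(s_2)\bigl(\rho(s_1)(f)\bigr)\,s_3$ coming from the last summand.

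After cancellation, all that survives is the identity
\[
\Bigl(\rho(s_1)\rho(s_2)(f) - \rho(s_2)\rho(s_1)(f) - \rho([s_1,s_2]_E)(f)\Bigr)\,s_3 \;=\; 0 .
\]
Since this must hold for every section $s_3$ and every smooth function $f$, one concludes $\rho([s_1,s_2]_E)(f) = [\rho(s_1),\rho(s_2)]_L(f)$, which is the claimed homomorphism property.

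I do not anticipate any serious obstacle; the only point requiring some care is bookkeeping of the four Leibniz expansions, making sure that the $f$-proportional terms on both sides really match via the Jacobi identity in $E$, and that the mixed $\rho(s_1)(f)$ and $\rho(s_2)(f)$ contributions appear symmetrically so they drop out. This is a routine but slightly tedious computation, and it is what isolates the second-derivative piece that encodes the bracket of vector fields.
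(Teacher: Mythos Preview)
Your proposal is correct and follows essentially the same route as the paper: apply the Jacobi identity to $s_1,s_2,fs_3$, expand all brackets via the Leibniz rule, and compare the non-$f$-linear pieces to isolate $\rho([s_1,s_2]_E)(f)=[\rho(s_1),\rho(s_2)]_L(f)$. The paper's version is slightly more compact (it expands $[[s_1,s_2]_E,fs_3]_E$ directly and then rewrites it via Jacobi plus Leibniz), but the content is identical.
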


\begin{proof}
On the one hand, using the Leibniz rule, we have for $f\in {\cal C}^{\infty}(M)$
\begin{displaymath}
\left[[s_1,s_2]_E,fs_3\right]_E = f\,\left[[s_1,s_2]_E,s_3\right]_E + \rho\left([s_1,s_2]_E\right)(f)\,s_3
\end{displaymath}
On the other hand, using the Jacobi identity and then the Leibniz rule, we can write the left hand side also as
\begin{displaymath}
 \left[[s_1,s_2]_E,fs_3\right]_E = \left[\rho(s_1),\rho(s_2)\right]_L(f)\, s_3 + f\,\left[[s_1,s_2]_E,s_3\right]_E \;.
\end{displaymath}
Comparing the two and noting that the section $s_3$ was arbitrary, we get the result.
\end{proof}
The bracket $[\cdot,\cdot]_E$ can be generalized to arbitrary alternating multisections in $\Gamma(\wedge^{\bullet}E)$ via defining its action on functions $f,g \in {\cal C}^{\infty}(M)$ and on sections $s_1,s_2 \in \Gamma(E)$ by
\eq{
\gl f, g \gr = 0 \;, \hspace{40pt}
\gl f , s \gr = -\rho(s)\,f \;, \hspace{40pt}
\gl s_1, s_2 \gr = [ s_1, s_2 ]_E \;,
}
and extending it to sections of arbitrary degree $a\in \Gamma(\wedge^k E)$, $b\in \Gamma(\wedge^l E)$ and $c\in \Gamma(\wedge^{\bullet}E)$ by the following relations
\eq{
\label{Gerstenhaber}
  \gl a, b\wedge c \gr &= \gl a, b \gr \wedge c + (-1)^{(k-1)l}\, b\wedge \gl a, c \gr  \;, \\[3pt]
 \gl a,b \gr &= - (-1)^{(k-1)(l-1)}\, \gl b, a \gr \;,
}
which, together with the graded Jacobi identity
\begin{equation} \label{gradedjacobi}
\gl a, \gl b, c \gr \gr = \gl\gl  a, b \gr , c \gr + (-1)^{(k-1)(l-1)}
\hspace{1pt} \gl b, \gl a, c \gr\gr\; ,
\end{equation}
constitute the axioms of a so-called \emph{Gerstenhaber algebra}. In fact, it can be shown \cite{1139.53001} that the Gerstenhaber algebra property of $\Gamma(\wedge^{\bullet}E)$  (where multiplication is given by the wedge-product and Gerstenhaber bracket given by $\gl\cdot,\cdot \gr$) and $(E,[\cdot,\cdot]_E,\rho)$ being a Lie algebroid are equivalent statements.
Another equi\-va\-lent characterization of a Lie algebroid can be given by considering the exterior algebra of its dual space $E^*$. Because it is important for later developments, we want to state it as a proposition.

\begin{prop} \label{d-E}
The exterior algebra $\Gamma(\wedge^{\bullet}E^*)$ is differential graded with differential $d_E : \Gamma(\wedge^k E^*) \rightarrow \Gamma(\wedge^{(k+1)}E^*)$ given by
\eq{
\label{algebroiddiff}
(d_E \,\omega)(s_0, \dots, s_k) =\hspace{10pt}& \sum_{i=0} ^k \;(-1)^i \rho(s_i)\left( \omega(s_0,\dots,\hat{s}_i,\dots,s_k) \right)  \\
 +&\sum_{i<j} \; (-1)^{i+j} \omega \left([s_i, s_j]_E,s_0,\dots,\hat{s}_i,\dots,\hat{s}_j,\dots,s_k \right)
 \;,
}
where $\omega \in \Gamma(\wedge^k E^*)$, $s_i \in \Gamma(E)$ and where the hat stands for deleting the corresponding entry.
\end{prop}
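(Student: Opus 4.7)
The plan is to verify, in order, three properties that together imply the conclusion: (i) the right-hand side of (\ref{algebroiddiff}) is $\mathcal{C}^\infty(M)$-multilinear and alternating in $s_0,\dots,s_k$, so that $d_E\omega$ really lies in $\Gamma(\wedge^{k+1}E^*)$; (ii) $d_E$ is a graded derivation of degree $+1$ with respect to the wedge product; and (iii) $d_E^2 = 0$. Together, (ii) and (iii) are precisely the statement that $\bigl(\Gamma(\wedge^\bullet E^*),\wedge,d_E\bigr)$ is a differential graded algebra.

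For (i), antisymmetry is manifest from the alternating signs. The only non-obvious point is tensoriality; by antisymmetry it is enough to replace $s_0 \mapsto f s_0$ and check that the $f$ factors out. The derivation property of the $\rho(s_i)$ on $\mathcal{C}^\infty(M)$ produces an extra term $-\sum_{i\geq 1}(-1)^i \rho(s_i)(f)\,\omega(s_0,\dots,\hat s_i,\dots,s_k)$ from the first sum, while applying the Leibniz rule (\ref{Leibniz}) to $[s_i, f s_0]_E$ in the second sum produces exactly the same expression with opposite sign; the two cancel, leaving a tensorial result.

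For (ii), both sides of the graded Leibniz rule are bilinear over $\mathbb{R}$ and compatible with the $\wedge$-product, so by a standard induction it suffices to verify the rule on generators of the exterior algebra, namely on $\mathcal{C}^\infty(M)$ and on $\Gamma(E^*)$. On functions one has $(d_E f)(s) = \rho(s)(f)$ and the rule collapses to the ordinary derivation property of $\rho(s)$. On two 1-forms, evaluating both sides against three sections yields the identity after a short, routine unfolding.

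The heart of the proposition is (iii). Once (ii) is established, $d_E^2$ is itself a degree-$2$ derivation, and therefore vanishes identically as soon as it vanishes on the generators $\mathcal{C}^\infty(M)$ and $\Gamma(E^*)$. On $f \in \mathcal{C}^\infty(M)$ the formula gives
\begin{displaymath}
(d_E^2 f)(s_0,s_1) \;=\; \rho(s_0)\rho(s_1)(f) \,-\, \rho(s_1)\rho(s_0)(f) \,-\, \rho([s_0,s_1]_E)(f),
\end{displaymath}
which vanishes by the homomorphism property of $\rho$ proved in Proposition \ref{homoeigenschaft}. On $\xi \in \Gamma(E^*)$, expanding $(d_E^2 \xi)(s_0,s_1,s_2)$ yields two classes of contributions: pure derivative terms of the form $\rho(s_i)\rho(s_j)\xi(s_k)$ together with $\rho([s_i,s_j]_E)\xi(s_k)$, which cancel in pairs by the same homomorphism property applied to the scalar $\xi(s_k)$; and nested-bracket terms of the form $\xi\bigl([[s_i,s_j]_E,s_k]_E\bigr)$, whose sum reduces to the cyclic Jacobiator and hence vanishes by the Jacobi identity assumed for $[\cdot,\cdot]_E$. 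The main obstacle I anticipate is the combinatorial bookkeeping in this last step, where one must match roughly a dozen signed contributions to the three cyclic terms of the Jacobi identity; the reduction to generators via (ii) is essential, since attempting the same computation directly for an arbitrary $\omega$ of degree $k$ would be prohibitively tangled.
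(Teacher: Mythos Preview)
Your proof is correct and more complete than the paper's. The paper's own proof addresses only the nilpotency $d_E^2=0$, carrying out the explicit computation for $\alpha\in\Gamma(E^*)$ (exactly your step (iii) on one-forms, with the same appeal to the homomorphism property of $\rho$ and the Jacobi identity) and then asserting that ``the proof for general $\alpha$ is similar.'' You cover the same ground in (iii), but you also supply the tensoriality check (i) and the graded derivation property (ii), which the paper omits entirely; crucially, having (ii) in hand lets you replace the paper's ``similar'' handwave with the clean formal argument that $d_E^2$ is a degree-two derivation and therefore vanishes once it vanishes on generators. Your route is slightly longer but logically tighter; the paper's is the standard shortcut that leaves the higher-degree case to the reader.
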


\begin{proof}
We only want to prove the nilpotency of the differential $d_E ^2 = 0$. The essential features of the proof can already be seen in the case of $\alpha \in \Gamma(E^*)$. In this case, applying two times the definition \eqref{algebroiddiff} leads to
\eq{
(d_E ^2 \alpha) (s_0,s_1,s_2) =& \phantom{+}\left(\left[\rho(s_0),\rho(s_1)\right] - \rho([s_0,s_1])\right)\alpha(s_2) \\
&-\left(\left[\rho(s_0),\rho(s_2)\right] - \rho([s_0,s_2])\right)\alpha(s_1) \\
&+\left(\left[\rho(s_1),\rho(s_2)\right] - \rho([s_1,s_2])\right)\alpha(s_0) \\
&+ \alpha([[s_0,s_1],s_2]) -\alpha([[s_0,s_2],s_1]) + \alpha([[s_1,s_2],s_0]) \\
&= 0\;,
}
because of the homomorphism property \ref{homoeigenschaft} and the Jacobi-identity. The proof for general $\alpha$ is similar.
\end{proof}
\vspace{0.5cm}
To illustrate the concepts introduced above, we now give three examples of Lie algebroids. The first will be a trivial example, serving to see that Lie algebroids are a natural generalization of the tangent bundle equipped with the Lie bracket. The second will show that Lie algebroids reduce to Lie algebras if we collapse the base manifold to a point, i.e. the structure functions determining the bracket will be structure constants. Finally the third example will play a fundamental role in the following sections because it replaces the tangent bundle by its dual bundle.

\paragraph{Examples}
\label{subsubsec-ex}

\begin{itemize}
\item As a trivial example consider $ E=(TM,[\cdot,\cdot]_{L},\rho= \textrm{id})$ where the anchor is the identity map and the bracket is given by the usual Lie bracket
$[X,Y]_{L}$ of vector fields. The extension to multi-vector fields in
$\Gamma(\wedge^{\bullet}TM)$ is given by
the relations \eqref{Gerstenhaber}, which result in the so-called Schouten--Nijenhuis  bracket $[\cdot,\cdot]_{SN}$.
The differential on the dual space $\Gamma(\wedge^{\bullet}T^* M)$ is the standard de Rham differential.

\item Consider a Lie group $G$ with corresponding Lie algebra $\mathfrak{g}$. We can define a vector bundle over a single point $\{p\}$ by just taking $\mathfrak{g}$ as single fiber:
\eq{
E : \; \mathfrak{g} \rightarrow \{p\} \,.
}
The bracket for elements $g_i \in \mathfrak{g}$ is given by the Lie bracket on $\mathfrak{g}$:
\eq{
[g_i,g_j]_{\mathfrak{g}} = f^k{}_{ij} g_k
}
where $f^k{}_{ij}$ are the structure constants of $\mathfrak{g}$. The anchor is defined to be the zero map and the corresponding differential on $\Gamma(\wedge^{\bullet}\mathfrak{g}^*) = \wedge^{\bullet}\mathfrak{g}^*$ is given by the so-called \emph{Chevalley-Eilenberg}-operator $d_{CE}$:
\begin{multline}
(d_{CE}\, \alpha)(g_0,\dots,g_k) := 
\sum_{i<j} (-1)^{i+j}\,\alpha \left([g_i,g_j]_{\mathfrak{g}},g_0,\dots,\hat{g_i},\dots,\hat{g_j},\dots,g_k \right)\;,
\end{multline}
where $\alpha \in \wedge^k \mathfrak{g}^*$. Thus, one can imagine a Lie algebroid over a general manifold as a ``bundle of Lie algebras''.
\item Finally, let $(M,\beta)$ be a Poisson manifold
with Poisson structure tensor $\beta =  \frac{1}{2}\,\beta^{ab} \partial_a \wedge \partial_b$.
Note that if $\beta$ is a proper Poisson tensor, it follows that the 3-vector given by $\Theta:=\frac{1}{2}\,[\beta,\beta]_{SN}$ vanishes.
The Lie algebroid is  given by
$E^*=(T^* M,[\cdot,\cdot ]\ks,\rho =  \beta^\sharp)$, where
the anchor $\beta^{\sharp}$ is defined as
\eq{ \label{anchor}
\beta^{\sharp} (dx^a) := \beta^{am}\partial_m \;,
}
for $\{dx^a\}$ a basis of one-forms.
The bracket on $T^*M$ is the \emph{Koszul bracket}, which for one-forms is defined as
\eq{
\label{koszul}
	[\xi,\eta]\ks := L_{\beta^\sharp(\xi)}\eta
 -\iota_{\beta^\sharp(\eta)}\,d\xi \; ,
}
where the Lie derivative on forms is given by $L_{X} = \iota_X \circ d + d \circ \iota_X$ with $d$ the de Rham differential.
The associated bracket for forms with arbitrary degree is again determined by \eqref{Gerstenhaber} and is called the \emph{Koszul--Schouten bracket}.
The corresponding differential on the dual space $\Gamma(\wedge^{\bullet}TM)$ is given in terms of the Schouten--Nijenhuis bracket as
\eq{
  \label{betad}
  d_{\beta} := [\beta,\cdot\,]_{SN} \; .
}
The proof of the Lie algebroid properties can be found e.g. in \cite{1139.53001}. An additional important property is given by the Koszul bracket of exact forms $df,dg$, which gives the relation to the Poisson bracket on $M$:
\eq{
[df,dg]\ks =\, d\{f,g\} \;.
}
\begin{proof}
Writing out the left-hand side gives
\eq{
[df,dg]\ks &= L_{\beta^\sharp(df)}\,dg - \iota_{\beta^\sharp(dg)}d(df) \nonumber \\
&= d(\iota_{\beta^\sharp(df)}\, dg) \nonumber \\
&= d \left( \beta^\sharp (df) (dg)\right) \nonumber \\
&= d\,\{f,g\}\;. \nonumber
}
\end{proof}
To conclude this example, we remark that in the case of a Poisson manifold, the nilpotency of the differential $d_{\beta}$ follows easily from the graded Jacobi identity for the Schouten-Nijenhuis bracket. For $X\in \Gamma(\wedge^k TM)$ we get:
\eq{
d_{\beta}^2 X &= \left[\beta,[\beta,X]_{SN}\right]_{SN} \\
&=\tfrac{1}{2} \left[ [\beta,\beta]_{SN},X \right]_{SN} = 0 \, .
}
It is therefore possible to define a cohomology theory for this differential: The corresponding complex is given by $(\Gamma(\wedge^{\bullet}TM),\wedge,d_{\beta} = [\beta, \cdot]_{SN})$ and similar to de Rham cohomology, the $k^{\textrm{th}}$ \emph{Poisson cohomology} for vector fields is defined by
\eq{ \label{Poissonkohomologie}
\textrm{H}^k _{\beta} (M) := \frac{\textrm{ker}\, d_{\beta} |_{\Gamma(\wedge^k TM)}}{\textrm{im} \, d_{\beta} |_{\Gamma(\wedge^{k-1}TM)}} \;.
}
A similar construction can be performed for a general Lie algebroid, see for example \cite{1139.53001}. One can show that for invertible anchor-map, the corresponding Lie algebroid cohomology is isomorphic to the standard de Rham cohomology.
\end{itemize}
The last fact about Lie algebroids which is important for the following work is the notion of a homomorphism. Let us mention the precise definition:

\begin{definition}
Let $(E_1,[\cdot,\cdot]_{E_1},\rho_1)$ and $(E_2,[\cdot,\cdot]_{E_2},\rho_2)$ be two Lie algebroids over the same base manifold $M$. A bundle homomorphism $\Phi: E_1 \rightarrow E_2$ is called \emph{Lie algebroid homomorphism} if the following compatibility with the two anchors and brackets hold:
\eq{ \label{Liealghomo}
\rho_2 \circ \Phi &= \rho_1 \\
\Phi \left([s_1,s_2]_{E_1} \right) &= \left[ \Phi(s_1),\Phi(s_2) \right]_{E_2} \;,
}
for sections $s_1,s_2 \in \Gamma(E_1)$.
\end{definition}

\noindent The most important consequence of the compatibility relations \eqref{Liealghomo} which we will need is the relation between the two differentials corresponding to $E_1$ and $E_2$. Let us define the \emph{transposed} homomorphism $\Phi^*$ by
\eq{ \label{transposed}
\Phi^* : \Gamma(\wedge^{k}E_2) &\rightarrow \Gamma(\wedge^{k}E_1)  \\
\Phi^* \alpha (s_1,\dots,s_k) &:=\, \alpha \left(\Phi(s_1),\dots,\Phi(s_k)\right) \;,
}
where $s_i \in \Gamma(E_1)$. Then we have the following fact:

\begin{prop} \label{Algebroidhomodiff}
Let $\Phi : E_1 \rightarrow E_2 $ be a Lie algebroid homomorphism and $d_{E_i} : \Gamma(\wedge^{\bullet}E_i ^*) \rightarrow \Gamma(\wedge^{\bullet}E_i ^*)$ be the differential on $E_i$ ($i=1,2$). Then we have the following relation:
\eq{
\Phi^* \circ d_{E_2} = d_{E_1} \circ \Phi^* \;.
}
\end{prop}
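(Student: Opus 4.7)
The plan is to evaluate both sides on arbitrary sections $s_0,\dots,s_k \in \Gamma(E_1)$ with $\omega \in \Gamma(\wedge^k E_2^*)$ and show pointwise agreement. First I would expand $(\Phi^* d_{E_2}\omega)(s_0,\dots,s_k)$ by applying the definition \eqref{transposed} of the transposed map, turning it into $(d_{E_2}\omega)(\Phi(s_0),\dots,\Phi(s_k))$, and then by applying the Lie-algebroid differential formula \eqref{algebroiddiff} for $E_2$. This produces two sums: one with the anchor term $\rho_2(\Phi(s_i))$ acting on $\omega$ evaluated at the remaining $\Phi(s_m)$'s, and one with the bracket $[\Phi(s_i),\Phi(s_j)]_{E_2}$ inserted into $\omega$ alongside the remaining $\Phi(s_m)$'s.

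Next, I would invoke the two homomorphism conditions \eqref{Liealghomo}. The anchor-compatibility $\rho_2 \circ \Phi = \rho_1$ converts every $\rho_2(\Phi(s_i))$ in the first sum into $\rho_1(s_i)$, while the bracket-compatibility $\Phi([s_i,s_j]_{E_1}) = [\Phi(s_i),\Phi(s_j)]_{E_2}$ converts the inserted bracket in the second sum into $\Phi([s_i,s_j]_{E_1})$. At this point, every appearance of $\omega$ has only arguments of the form $\Phi(\cdot)$, so each such evaluation is, by the very definition of $\Phi^*$, the form $\Phi^*\omega$ applied to the corresponding sections of $E_1$. Collecting terms then reproduces exactly the formula \eqref{algebroiddiff} for $d_{E_1}$ applied to $\Phi^*\omega$ and evaluated on $s_0,\dots,s_k$, which is the right-hand side.

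I do not anticipate any serious obstacle: the computation is essentially dictated by the definitions, and the only genuine inputs are the two compatibility conditions. The main care needed is bookkeeping of signs and positions of the ``hats'' so that the two sums match term by term. A more conceptual alternative, requiring the same input, would be to exploit that $d_{E_i}$ is a graded derivation and $\Phi^*$ is a morphism of exterior algebras, thereby reducing the identity to its action on $0$-forms (trivial, since $\Phi^*$ acts as the identity on $\mathcal{C}^{\infty}(M)$) and on generators $\alpha \in \Gamma(E_2^*)$ (a short direct check using the same two compatibility conditions).
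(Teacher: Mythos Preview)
Your proposal is correct and follows essentially the same approach as the paper: the paper's proof is a one-sentence sketch saying precisely to insert the definition \eqref{transposed} of the transposed homomorphism into the differential formula \eqref{algebroiddiff} and then apply the homomorphism properties \eqref{Liealghomo}, which is exactly what you carry out in detail.
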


\noindent The proof is done by writing the definition \ref{transposed} of the transposed homomorphism into the expression \eqref{algebroiddiff} and then using the properties \eqref{Liealghomo} of the Lie algebroid homomorphism $\Phi$.

\vspace{12pt}

This completes our list of technical facts about Lie algebroids. As we have seen, these objects closely resemble the tangent bundle of a manifold together with the Lie bracket of vector fields. The whole machinery of (Riemannian) differential geometry is based heavily on the latter structures and therefore the immediate question of generalizing differential geometry to Lie algebroids arises. The positive answer to this question will be the topic of the next section.

\subsubsection{Generalizing differential geometry}
\label{subsec-gendiffgeo}

The tangent bundle $TM$ of a manifold $M$ arises as the set of velocity vectors tangent to curves in a manifold. One part of differential geometry is concerned with defining proper derivatives of tensorial quantities along the directions of such velocity vectors. Constructions like Lie/covariant derivatives, curvature and torsion are operators which use vector fields to act on other tensor fields. In most of these constructions, properties of the Lie bracket like the Jacobi identity and the Leibniz rule play an important role. As these are imitated by Lie algebroids, one is tempted to construct Lie derivatives and covariant derivatives with respect to sections in a general Lie algebroid. The possibility of this program and its extension also to torsion and curvature is described for example in \cite{pre06081788} (see also \cite{Gualtieri:2007bq}). In the following we are going to review the most important constructions in this setting in order to prepare the formalism to be used in later sections and to set our conventions.

\paragraph{Lie derivative}

Let $(E,[\cdot,\cdot]_E,\rho)$ be a Lie algebroid. We start with the generalization of the Lie derivative. In the standard case, the Lie derivative of a function $f \in {\cal C}^{\infty}(M)$ with respect to a vector field $X \in \Gamma(TM)$ is given by acting with the vector field on the function: $L_X(f) = X(f) = X^m \partial_m (f)$. For a section $s$ of $E$ it is a priori not clear how to act on functions on the manifold. Only the anchor map relates sections in $E$ to ordinary vector fields. We therefore define the action on functions $f$ by

\begin{definition}
\label{Liedef}
The Lie derivative of a function $f$ on $M$ with respect to a section $s \in \Gamma(E)$ is given by
\eq{
{\cal L}_s (f) := s(f) := \rho(s)(f) \;.
}
\end{definition}

\noindent In the trivial example of $TM$  this definition coincides with the original Lie derivative, by using the identity map as an anchor. For the example of $T^*M$ mentioned in the last section, formula \eqref{Liedef} allows us to define derivatives in the direction of a \emph{one-form}. In particular, for $dx^a$ we have:
\eq{\label{D}
\Lie_{dx^i} (f) = \beta^{\sharp}(dx^i)(f) = \beta^{ij}\partial_j f =: D^i f \;,
}
where we introduced the differential operator $D^a=\beta^{ab}\partial_b$, which can be considered to be a generalization of the standard partial derivative.
Note that \eqref{Liedef} is compatible with the Lie bracket on $E$ due to the following relation for a function $f$:
\eq{\label{consistency}
\bigl [\Lie_{s_1},\Lie_{s_2}\bigr] \hspace{1pt}f = \Lie_{[s_1, s_2]_E} f  \;,
}
which is a simple consequence of the homomorphism property \eqref{homoeigenschaft} of the anchor. Note the similarity to the standard case!

The Lie derivative acting on sections of $E$ is defined using the bracket on the total space $E$, while for sections of the dual $E^*$ the Cartan formula and the associated differential $d_E$ on $E^*$ are employed. Again, the constructions are done in complete analogy to the standard case of the tangent bundle.  Let us formulate the precise statement in the following definition:

\begin{definition}
Let $s,s_i$ be sections of $E$ and $\alpha$ a section of $E^*$. Then the Lie derivative of $s_2$ with respect to $s_1$ and of the dual section $\alpha$ with respect to $s$ are given by
\eq{ \label{Lie2}
\Lie_{s_1} s_2 &= [s_1, s_2]_E \; , \hspace{40pt}
\Lie_s \alpha = \iota_s \circ d_E \, \alpha + d_E \circ \iota_s \, \alpha \;,
}
where the insertion map $\iota$ is defined in the standard way, that is for a local basis $\{s_i\}$ of $\Gamma(E)$ and dual basis $\{s^j\}$ of $E^*$ we have $\iota_{s_i} s^j = \delta^j_i$.
\end{definition}

\noindent The extension of \eqref{Lie2} to multi-sections is given by using the product rule as it is done for the standard Lie derivative. With the definitions \eqref{Liedef} and \eqref{Lie2} it is now easy to prove the following properties of the Lie derivative for a Lie algebroid:

\begin{prop}
The Lie derivative $\Lie$ has the following properties (the first two hold for sections in $\Gamma(\wedge^{\bullet} E^*)$, whereas the last one is valid in both, $\Gamma(\wedge^{\bullet} E)$ and $\Gamma(\wedge^{\bullet} E^*)$)
\begin{gather}
\label{Lierel}
\Lie_s \circ d_E = d_E \circ \Lie_s \;, \\
\iota_{[s_1,s_2]_E} = \Lie_{s_1} \circ \iota_{s_2} - \iota_{s_2} \circ \Lie_{s_1}  \;, \\
\bigl[\Lie_{s_1},\Lie_{s_2} \bigr] = \Lie_{[s_1,s_2]_E} \;.
\end{gather}
\end{prop}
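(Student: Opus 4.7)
The three identities can be proved in the order they are stated, with each one feeding into the next. My plan is to use Cartan's formula $\Lie_s = \iota_s \circ d_E + d_E \circ \iota_s$ (the defining formula on $\Gamma(\wedge^{\bullet}E^*)$) together with the nilpotency $d_E^2=0$ from Proposition \ref{d-E} and the homomorphism property of the anchor from Proposition \ref{homoeigenschaft}.

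The first identity, $\Lie_s \circ d_E = d_E \circ \Lie_s$, is essentially immediate from Cartan's formula and $d_E^2=0$: both $\Lie_s \circ d_E$ and $d_E \circ \Lie_s$ collapse to $d_E \circ \iota_s \circ d_E$ on $\Gamma(\wedge^{\bullet}E^*)$. This is the easiest step, but it is the key lemma that makes the rest go through on the dual side.

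For the second identity, $\iota_{[s_1,s_2]_E}=\Lie_{s_1}\iota_{s_2}-\iota_{s_2}\Lie_{s_1}$, I would first observe that both sides are graded derivations of degree $-1$ on the exterior algebra $\Gamma(\wedge^{\bullet}E^*)$, so it suffices to verify agreement on generators, namely functions and sections of $E^*$. On functions the identity reduces to $0=0$. On a one-form $\alpha\in \Gamma(E^*)$ one computes $(\Lie_{s_1}\alpha)(s_2)=\rho(s_1)(\alpha(s_2))-\alpha([s_1,s_2]_E)$ by combining Cartan's formula with the explicit expression \eqref{algebroiddiff} for $d_E\alpha(s_1,s_2)$, after which the claim reduces to an algebraic identity. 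The mild obstacle here is the bookkeeping needed to check the graded-derivation property, but no new input is required beyond the axioms of a Lie algebroid.

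For the third identity, $[\Lie_{s_1},\Lie_{s_2}]=\Lie_{[s_1,s_2]_E}$, I would verify it separately on functions, on sections of $E$, and on sections of $E^*$, and then extend to $\Gamma(\wedge^{\bullet}E)$ and $\Gamma(\wedge^{\bullet}E^*)$ by noting that both sides are derivations. On functions, Definition \ref{Liedef} reduces the claim to the anchor homomorphism property $\rho([s_1,s_2]_E)=[\rho(s_1),\rho(s_2)]_L$ from Proposition \ref{homoeigenschaft}, exactly as in \eqref{consistency}. On sections of $E$, using $\Lie_s s' = [s,s']_E$ from \eqref{Lie2}, the identity is the Jacobi identity for $[\cdot,\cdot]_E$ assumed in Definition \ref{Liealgdef}. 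The main obstacle is the case of sections of $E^*$, where the plan is to apply Cartan's formula to the right-hand side and then reorganize the resulting expression using identities (1) and (2) already proven; concretely, expanding $\Lie_{[s_1,s_2]_E}=\iota_{[s_1,s_2]_E}d_E + d_E \iota_{[s_1,s_2]_E}$, replacing $\iota_{[s_1,s_2]_E}$ by $\Lie_{s_1}\iota_{s_2}-\iota_{s_2}\Lie_{s_1}$ via (2), and commuting $d_E$ past $\Lie_{s_i}$ via (1), everything regroups as $\Lie_{s_1}\Lie_{s_2}-\Lie_{s_2}\Lie_{s_1}$. This reorganisation is the one slightly delicate calculation of the proof; once it is done, extension to multi-sections by the Leibniz rule concludes.
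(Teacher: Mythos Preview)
Your proof is correct and follows exactly the approach the paper indicates: the paper's proof simply says ``The proof can be done in the same way as for the standard Lie derivative,'' and your argument---using Cartan's formula with $d_E^2=0$ for the first identity, the derivation property on generators for the second, and the anchor homomorphism plus Jacobi identity (together with the first two identities) for the third---is precisely the standard proof carried over to the Lie algebroid setting.
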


\begin{proof}
The proof can be done in the same way as for the standard Lie derivative.
\end{proof}

\paragraph{Covariant derivative}

The next step is to generalize the notion of connections and covariant differentiation to a Lie algebroid $E$, which was done for example in  \cite{pre06081788}. It turns out that this can be performed in analogy to the standard case. Linearity can be directly generalized, whereas for the Leibniz rule one has to know how to act with sections in $E$ on functions. But this was given in definition \ref{Liedef} by using the anchor map. Thus we have the following:

\begin{definition} \label{defconnection}
Let $(E,[\cdot,\cdot]_E,\rho)$ be a Lie algebroid. A \emph{covariant derivative} on $E$ is a bilinear map $\nabla: \Gamma(E) \times \Gamma(E) \rightarrow \Gamma(E)$ which has the properties:
\eq{\label{covder}
\nabla_{fs_1} s_2 &=  f\hspace{1pt}\nabla_{s_1} s_2\;,  \\
\nabla_{s_1} fs_2 &=  \rho(s_1)(f) s_2 + f\hspace{1pt}\nabla_{s_1} s_2 \,.
}
for $s_1,s_2 \in \Gamma(E),\, f \in {\cal C}^{\infty}(M)$.
\end{definition}

\noindent This definition is extended in the standard way to direct sums and tensor products of Lie algebroids (because these operations can be performed in general for vector bundles), e.g. for sections $s_i \in \Gamma(E)$ we have
\eq{
\nabla_{s_1} (s_2 + s_3) &= \nabla_{s_1} s_2 + \nabla_{s_1} s_3 \;, \\
\nabla_{s_1} (s_2 \otimes s_3) &= \nabla_{s_1} s_2 \otimes s_3 + s_2 \otimes \nabla_{s_1} s_3
\;.}

Following these definitions, as a next step it is possible to obtain curvature and torsion operators. They are given by formulas in analogy to the standard case on the tangent bundle except for the use of the appropriate Lie algebroid bracket.

\begin{definition}\label{defcurv}
Let $(E,[\cdot,\cdot]_E,\rho)$ be a Lie algebroid, $s_i \in \Gamma(E)$ and $\nabla$ a covariant derivative on $E$. Then \emph{curvature} and \emph{torsion} are defined by
\eq{ \label{curv}
R(s_1,s_2)s_3 &=\; \nabla_{s_1}\nabla_{s_2} s_3 - \nabla_{s_2}\nabla_{s_1} s_3
- \nabla_{[s_1,s_2]_{E}}\, s_3  \;, \\[3pt]
T(s_1,s_2) &= \; \nabla_{s_1}s_2 - \nabla_{s_2}s_1 - [s_1,s_2]_{E}  \,.
}
\end{definition}

\noindent To see that these expressions are tensors with respect to standard diffeomorphisms it suffices to check that they are ${\cal C}^{\infty}(M)$-linear in every argument. The reason is  that for a general ${\cal C}^{\infty}(M)$ multi-linear map $ A : \Gamma\bigl( (\otimes^r TM) \otimes (\otimes^s T^*M)\bigr) \rightarrow {\cal C}^{\infty}(M)$ and coordinates $x^i, y^{i '}$ we have
\eq{\label{tensor}
A^{i_1 \dots i_r}{}_{j_1 \dots j_s} &= A(dx^{i_1},\dots,dx^{i_r},\partial_{j_1},\dots, \partial_{j_s}) \\
&= A\left(\tfrac{\partial x^{i_1}}{\partial y^{i_1 '}} dy^{i_1 '},\dots,\tfrac{\partial x^{i_r}}{\partial y^{i_r '}}dy^{i_r '},\tfrac{\partial y^{j_1 '}}{\partial x^{j_1}} \partial_{j_1 '},\dots,\tfrac{\partial y^{j_s '}}{\partial x^{j_s}} \partial_{j_s '}\right) \\
&=\tfrac{\partial x^{i_1}}{\partial y^{i_1 '}} \cdots \tfrac{\partial x^{i_r}}{\partial y^{i_r '}}\tfrac{\partial y^{j_1 '}}{\partial x^{j_1}} \cdots \tfrac{\partial y^{j_s '}}{\partial x^{j_s}} A^{i_1' \dots i_r'}{}_{j_1' \dots j_s'} \;.
}
The proof of  ${\cal C}^{\infty}(M)$-linearity for both expressions in \eqref{curv} is now a straightforward calculation using the definition \eqref{covder} and the Leibniz rule \eqref{Leibniz}. As it illustrates nicely the importance of the properties of a Lie algebroid and its use for generalizing differential geometry, we present as an example the ${\cal C}^{\infty}$-linearity of the curvature operator in definition \ref{curv} in the third argument. This ensures that the operator $R(s_1,s_2)$ is a ${\cal C}^{\infty}(M)$-endomorphism:
\eq{
R(s_1,s_2)(f s_3) &= \nabla_{s_1}\nabla_{s_2} (fs_3) - \nabla_{s_2}\nabla_{s_1} (f s_3)
- \nabla_{[s_1,s_2]_{E}}\, (f s_3) \\
&= f \left(\nabla_{s_1}\nabla_{s_2} s_3 - \nabla_{s_2}\nabla_{s_1} s_3
- \nabla_{[s_1,s_2]_{E}}\, s_3 \right) \\
&\phantom{=} + \rho(s_1)\left(\rho(s_2)(f)\right) - \rho(s_2)\left(\rho(s_1)(f)\right) - \rho\left([s_1,s_2]_E\right)(f) \\
&= f\,R(s_1,s_2)s_3 \;,\nonumber
}
where in the last line we used the homomorphism property \ref{homoeigenschaft} of the anchor in a Lie algebroid. The proof of linearity in the other arguments uses in addition the Leibniz-property in definition \ref{Leibniz}. Similar arguments hold for the torsion operator.

\paragraph{Metric}

Finally, to generalize Riemannian geometry to the case of Lie algebroids, we have to give the definition of a metric. Together with the tensor properties of the curvature and torsion operators, it is then possible to write down actions consisting of scalar quantities, which are composed for example of curvature operators contracted in the right way with the metric, as it is done in ordinary gravity theory. This will be one of our results in later sections.

A \emph{metric} on a Lie algebroid $E$ is an element of $\Gamma(E^*\otimes_{\textrm{sym}}E^*)$ which gives rise to a scalar product for sections in $E$. The latter will be denoted by
\eq{
\label{metric}
\langle s_i, s_j \rangle = g_{ij} \;.
}
Therefore, if we denote by $s^i,s^j$ sections in the dual Lie algebroid $E^*$, we can write the metric $g$ in the form of a symmetric tensor field $g = g_{ij}\, s^i \otimes_{\textrm{sym}} s^j$.

The last concept which we want to introduce in this brief list of differential geometric notions is an analogue of the Levi-Civita connection for Lie algebroids. It turns out that a similar statement about existence and uniqueness as known from Riemannian geometry is possible. For the purpose of reminding the reader about the precise conditions and to see the generalization, let us formulate the following definition:

\begin{definition}
Let $(E,[\cdot,\cdot]_E,\rho)$ be a Lie algebroid with metric $g$, giving rise to the scalar product $\langle \cdot,\cdot\rangle $. Then there exists a unique connection $\mathring \nabla$ having the following properties:
\begin{itemize}
\item vanishing torsion: \hspace{30pt}$\mathring\nabla_{s_1} s_2 - \mathring\nabla_{s_2} s _1 = [s_1,s_2]_E$,
\item metricity: \hspace{70.5pt}$\rho(s_1)\langle s_2,s_3 \rangle = \langle \mathring\nabla_{s_1}s_2, s_3 \rangle + \langle s_2, \mathring\nabla_{s_1} s_3 \rangle $.
\end{itemize}
In analogy to the Riemannian case, it is called \emph{Levi-Civita} connection.
\end{definition}

\noindent The connection $\mathring\nabla$ is characterized by the Koszul formula, whose proof uses the same techniques as in standard Riemannian geometry. Later on, we are going to use it to calculate the connection coefficients (\emph{Christoffel symbols}) for specific Lie algebroids. The Koszul formula allows to express the connection in terms of the anchor and the metric components and is given by
\eq{\label{Koszulformula}
2\hspace{1pt}\bigl\langle \mathring\nabla_{s_1} s_2,s_3\bigr\rangle = & \;s_1\bigl(\langle s_2,s_3 \rangle \bigr)
+ s_2\bigl(\langle s_3,s_1 \rangle \bigr) - s_3\bigl(\langle s_1 ,s_2 \rangle \bigr)  \\[3pt]
&- \langle s_1 ,[s_2,s_3]_E\rangle + \langle s_2,[s_3,s_1]_E\rangle + \langle s_3,[s_1,s_2]_E\rangle  \;.
}
where the action of sections $s_i$ in $E$ is again given by applying the anchor map, as was defined in \ref{Liedef}.

This completes our survey in generalizing notions of Riemannian geometry to Lie algebroids. As we have seen, similar constructions like Lie/covariant derivative, curvature, torsion and Levi-Civita connections are possible. The anchor map is used to define how sections in a general Lie algebroid act on functions and therefore establishes the connection to the tangent bundle of the base manifold. It is important to note that this simple statement has far reaching consequences for the type of differential geometry constructed on a general Lie algebroid: In the most important expressions for physics, like the curvature tensor (expressed in terms of the Levi-Civita connection), the anchor is built in non-trivially. One could go even further by saying that the generalization of differential geometry to the Lie algebroid setting introduces the anchor as a new basic tensor field into the formalism which is of equal importance as the metric (which in Riemannian geometry was the only basic field variable).

\subsection{Lie bi-algebroids and Courant algebroids}
\label{sec-bialg}
One of the most important mathematical structures used in the generalized geometry description of supergravity is that of a \emph{Courant algebroid}. The unified description of one-forms and vector fields by sections in the generalized tangent bundle $TM \oplus T^*M$ needs an extension of the standard Lie bracket to include vector fields and one-forms on an equal footing. Mathematically, it turns out that the combination of a Lie algebroid with its dual into a so-called \emph{Lie bi-algebroid} results in the structure of a Courant algebroid \cite{0885.58030, Roytenberg:01}.

\subsubsection{Lie bi-algebroids}
\label{subsec-bialg}
Consider a Lie algebroid $(E,[\cdot,\cdot]_E,\rho)$ (which we simply call $E$ in the following) over a manifold $M$ and assume the existence of a bracket $[\cdot,\cdot]_{E^*}$ on the dual vector bundle $E^*$ and a bundle homomorphism $\rho^* : E^* \rightarrow TM$ such that $(E^*,[\cdot,\cdot]_{E^*},\rho^*)$ is again a Lie algebroid. According to the last section (especially \eqref{algebroiddiff}), from $E$ we can construct a differential on sections of the dual bundle and similar for $E^*$:
\eq{
d_{E} :& \;\Gamma(\wedge^k E^*) \rightarrow \Gamma(\wedge^{k+1} E^*) \;, \\
d_{E^*}:& \; \Gamma(\wedge^k E) \rightarrow \Gamma(\wedge^{k+1} E)\;.
}
In addition we know that the brackets on the two Lie algebroids can be extended to the algebra of alternating multisections $\Gamma(\wedge^{\bullet}E)$ and $\Gamma(\wedge^{\bullet} E^*)$. Therefore, also differentiating sections in the bracket is a well defined operation. With this information, a Lie bi-algebroid is given by the following definition \cite{Roytenberg:01}:

\begin{definition}
Let $E$ and $E^*$ be two Lie algebroids that are dual as vector bundles. Then the pair $(E,E^*)$ is called a \emph{Lie bi-algebroid} if the differential $d_E$ is a graded derivation of the bracket $[\cdot,\cdot]_{E^*}$ on $E^*$, i.e. the following compatibility condition for sections $s_1 \in \Gamma(\wedge^k E^*), s_2 \in \Gamma(\wedge^{\bullet} E^*)$ holds:
\eq{\label{liebialg}
d_E \left( [s_1,s_2]_{E^*} \right) =\, [d_E\, s_1, s_2]_{E^*} + (-1)^k[s_1,d_E \,s_2]_{E^*} \;.
}
\end{definition}

\noindent To illustrate the concept, let us mention a simple example which is also relevant for later discussions. Let $M$ be a Poisson manifold with Poisson tensor $\beta = \tfrac{1}{2} \beta^{ij}\,\partial_i \wedge \partial_j$. Consider $E=T^*M$ to be the Lie algebroid of the third example in section \ref{sec-Lie}, whose bracket is given by the Koszul-Schouten bracket \eqref{koszul} and $E^* = (T^*M)^* = TM$ the trivial Lie algebroid. As pointed out there, the differential $d_E$ is given by the Schouten bracket  $d_{\beta} = [\beta,\cdot]_{SN}$ whereas $d_{E^*}$ is the standard de Rham differential. From the graded Jacobi identity \eqref{gradedjacobi} of the Schouten--Nijenhuis bracket one easily infers the compatibility relation for sections $s_1 \in \Gamma(\wedge^k TM), s_2 \in \Gamma(\wedge^{\bullet} TM)$
\eq{
d_{\beta}[s_1,s_2] =&\, \left[\beta,[s_1,s_2]_{SN}\right]_{SN} \\
=&\,\left[\,[\beta,s_1]_{SN},s_2 \right]_{SN} + (-1)^k\left[s_1,[\beta,s_2]_{SN}\right]_{SN} \\
=&\,[d_{\beta}\,s_1,s_2]_{SN} + (-1)^k[s_1, d_{\beta}\,s_2]_{SN} \;.
}
Thus we proved that the pair $(T^*M, TM)$ together with the corresponding brackets is a Lie bi-algebroid. In the same way but starting with the trivial Lie algebroid $TM$, one can see that also the pair $(TM,T^*M)$ is a Lie bi-algebroid. This is a special case of the fact that given a Lie bi-algebroid $(E,E^*)$, also the dual pair $(E^*,E)$ is a Lie bi-algebroid.

\subsubsection{Courant algebroids}
\label{subsec-courantalg}
It turns out that to every Lie bi-algebroid, one can associate a Courant algebroid structure. The most important example used in physics is the generalized tangent bundle $TM \oplus T^*M$, which we describe below. But first of all let us give the precise definitions and properties of a Courant algebroid. We closely follow the work \cite{Roytenberg:01}. Consider a vector bundle $E$ together with a bracket $[\cdot,\cdot]_E$. For sections $s_i \in \Gamma(E)$ we define the \emph{Jacobiator} to be the following operator:
\eq{
\mathfrak{J}(s_1,s_2,s_3) = \left[\,[s_1,s_2]_E,s_3\right]_E + \left[\,[s_2,s_3]_E,s_1\right]_E + \left[\,[s_3,s_1]_E, s_2 \right]_E \;.
}
Thus the Jacobi identity of a Lie algebroid is given by the condition $\mathfrak{J} = 0 $. Now we are ready to give the following definition:

\begin{definition}
\label{def-Courantalg}
Let $M$ be a manifold and $E \rightarrow M$ be a vector bundle together with a non-degenerate symmetric bilinear form $\langle \cdot,\cdot \rangle$, a skew-symmetric bracket $[\cdot,\cdot]_E$ on its sections $\Gamma(E)$ and a bundle map $\alpha : E \rightarrow TM$. Then $(E,[\cdot,\cdot]_E,\langle \cdot,\cdot \rangle, \alpha)$ is called a \emph{Courant algebroid} if the following properties hold:
\begin{itemize}
\item For $s_1,s_2 \in \Gamma(E)$: $\alpha([s_1,s_2]_E) = [\alpha(s_1),\alpha(s_2)]_E$.
\item For $s_1,s_2 \in \Gamma(E), \, f\in {\cal C}^{\infty}(M)$:
\eq{
[s_1, f\,s_2]_E = f\,[s_1,s_2]_E + \alpha(s_1)(f)\, s_2 - \tfrac{1}{2}\langle s_1, s_2 \rangle {\cal D} f \;.}
\item $\alpha \circ {\cal D} = 0$, i.e. for $f,g \in {\cal C}^{\infty}(M)$: $\langle {\cal D}f , {\cal D} g \rangle = 0 $.
\item For $e, s_1, s_2 \in \Gamma(E)$:
\eq{
\alpha(e)\langle s_1, s_2 \rangle =\, \langle [e,s_1]_E + \tfrac{1}{2}{\cal D}\langle e, s_1 \rangle, s_2 \rangle + \langle s_1, [e,s_2]_E + \tfrac{1}{2} {\cal D}\langle e, s_2 \rangle \rangle \;.
}
\item For $s_i \in \Gamma(E)$: $\mathfrak{J}(s_1,s_2,s_3) = {\cal D}T(s_1,s_2,s_3)$,
\end{itemize}
where we defined the map ${\cal D}: {\cal C}^{\infty}(M) \rightarrow \Gamma(E)$ by
\eq{
\langle {\cal D} f, s\rangle =\, \alpha(s)(f)\;,
}
and the map $T(s_1,s_2,s_3)$ is a function on the base space $M$ defined by
\eq{
T(s_1,s_2,s_3) =\, \tfrac{1}{6} \langle [s_1,s_2]_E, s_3 \rangle + \textrm{cyclic} \;.
}
\end{definition}

\noindent Before giving the most important example, let us clarify the relation to Lie bi-algebroids. Suppose that we have a Lie bi-algebroid $(E,E^*)$. We want to write down a Courant algebroid structure on the vector bundle direct sum $S = E \oplus E^*$. In order to describe the corresponding bracket and anchor we use the following notation for the objects in $E$ and $E^*$:
\begin{itemize}
\item Lie algebroid $E$: Anchor map $\rho$, sections $X_1, X_2$, Lie derivatives ${\cal L}^E _{X_i}$, exterior derivative $d_E$;
\item Lie algebroid $E^*$: Anchor map $\rho^*$, sections $\xi_1, \xi_2 $, Lie derivatives ${\cal L}^{E^*}_{\xi_i}$, exterior derivative $d_{E^*}$.
\end{itemize}
In addition, we introduce the following two bilinear forms on $S$, of which the first, indexed by ``$+$'' is symmetric and the second indexed by ``$-$'' is antisymmetric:
\eq{
\langle X_1 + \xi_1, X_2 + \xi_2 \rangle_{\pm} = \, \iota_{X_2} \xi_1 \pm \iota_{X_1} \xi_2 \;.
}
To get the structure of a Courant algebroid on $S$, we introduce the map $\alpha$ and the derivative map ${\cal D}$ of the above definition as:
\eq{
\alpha(X + \xi) :=\, \rho(X) + \rho^* (\xi); \quad {\cal D} := \,d_E + d_{E^*} \;.
}
The bracket on $S$ is given by a combination of brackets on the algebroids $E$ and $E^*$ together with Lie derivatives and exterior derivatives of the corresponding duals. More precisely we define for sections $s_i =\, X_i + \xi_i  \in \Gamma(S)$:
\eq{
[s_1, s_2]_S =&\,[X_1, X_2]_E + {\cal L}_{\xi_1} ^{E^*} X_2 - {\cal L}_{\xi_2} ^{E^*} X_1 -\tfrac{1}{2} d_{E^*}\,\langle s_1, s_2 \rangle_- \\
&+ [\xi_1, \xi_2]_{E^*} + {\cal L}_{X_1}^{E} \xi_2 - {\cal L}_{X_2}^E \xi_1 + \tfrac{1}{2} d_E \,\langle s_1, s_2 \rangle_- \;.
}
With these definitions, it is possible to show the following result which gives the connection between Lie bi-algebroids and Courant algebroids \cite{0885.58030, Roytenberg:01}:

\begin{prop}
If $(E,E^*)$ is a Lie bi-algebroid, then $(S,[\cdot,\cdot]_S, \langle \cdot, \cdot \rangle_+, \alpha)$ is a Courant algebroid.
\end{prop}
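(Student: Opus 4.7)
The plan is to verify the five defining properties of a Courant algebroid listed in Definition \ref{def-Courantalg} for the triple $(S,[\cdot,\cdot]_S,\langle\cdot,\cdot\rangle_+,\alpha)$. An organizing observation that cuts the work roughly in half is that the bracket $[\cdot,\cdot]_S$ and the anchor $\alpha$ are manifestly symmetric under the involution $E \leftrightarrow E^*$: simultaneously interchanging $\rho\leftrightarrow\rho^*$, $d_E\leftrightarrow d_{E^*}$, $\mathcal L^E\leftrightarrow \mathcal L^{E^*}$, and $[\cdot,\cdot]_E\leftrightarrow [\cdot,\cdot]_{E^*}$ sends the defining formula to itself. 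Consequently, every identity needs to be checked only on its pure-$E$ and mixed sectors; the pure-$E^*$ contributions follow by applying the involution.

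I would first dispose of the more mechanical axioms. The Leibniz identity (bullet two of Definition \ref{def-Courantalg}) follows by direct expansion: split each side along the $E\oplus E^*$ decomposition, apply the Leibniz rules of $[\cdot,\cdot]_E$ and $[\cdot,\cdot]_{E^*}$ from Definition \ref{Liealgdef}, and use the Cartan calculus identities \eqref{Lie2}, \eqref{Lierel} for the two Lie derivatives. The correction $-\tfrac{1}{2}\langle s_1,s_2\rangle_+\,\mathcal D f$ is produced by the $\pm\tfrac12 d_E,\pm\tfrac12 d_{E^*}$ pieces already built into $[\cdot,\cdot]_S$. The anchor-homomorphism property (bullet one) is verified by projecting $[s_1,s_2]_S$ through $\alpha$: the pure-$E$ and pure-$E^*$ parts use Proposition \ref{homoeigenschaft} for $\rho$ and $\rho^*$, while the mixed contributions involving $\rho\circ \mathcal L^{E^*}_{\xi}$ and $\rho^*\circ \mathcal L^{E}_{X}$ reorganize into ordinary Lie brackets of vector fields precisely by virtue of the bi-algebroid compatibility \eqref{liebialg}. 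The identity $\alpha\circ \mathcal D=0$ is equivalent, via the defining relation $\langle \mathcal D f,s\rangle_+ = \alpha(s)(f)$, to $\langle \mathcal D f,\mathcal D g\rangle_+ = 0$; expanding this gives an expression that is manifestly symmetric in $(f,g)$, while $\alpha(\mathcal D f)(g)$ is antisymmetric under the same exchange, forcing both to vanish once \eqref{liebialg} is imposed.

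The fourth axiom (invariance of $\langle\cdot,\cdot\rangle_+$ under the bracket, up to the $\tfrac12 \mathcal D\langle e,s_i\rangle_-$ corrections) is a longer but structurally similar calculation: expand both sides, replace each Lie derivative by the Cartan formula $\iota\circ d + d\circ \iota$, and use the Leibniz identity already established. The precise role of the $\tfrac12 \mathcal D\langle\cdot,\cdot\rangle_-$ correction inside $[\cdot,\cdot]_S$ is to absorb the failure of naive graded symmetry of the $d_E$- and $d_{E^*}$-pieces, so the computation is again driven by \eqref{liebialg} with no genuinely new input.

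The main obstacle is the fifth axiom, the Jacobiator identity $\mathfrak J(s_1,s_2,s_3)=\mathcal D T(s_1,s_2,s_3)$. My approach is polarization in $(X_i,\xi_i)$: check separately the case in which all three sections lie in $\Gamma(E)$ (both sides collapse to the Jacobi identity of $[\cdot,\cdot]_E$), the case in which all three lie in $\Gamma(E^*)$ (symmetric argument with $[\cdot,\cdot]_{E^*}$), and the two mixed cases $(X,X,\xi)$ and $(X,\xi,\xi)$, the latter obtained from the former by the $E\leftrightarrow E^*$ involution. Only the mixed cases carry genuine content, and the bi-algebroid compatibility \eqref{liebialg} is exactly what is needed to commute $d_E$ through $[\cdot,\cdot]_{E^*}$ (and vice versa) in the ninefold expansion of $\mathfrak J$, rearranging the residual $d_E$- and $d_{E^*}$-exact terms into $\mathcal D T$. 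I expect the dominant difficulty to be combinatorial bookkeeping rather than any conceptual obstruction; a cleaner but less elementary route, which I would at least point to, is Roytenberg's derived-bracket reformulation \cite{Roytenberg:01}, in which the Lie bi-algebroid is repackaged as a single degree-one homological vector field $Q$ on the graded symplectic manifold $T^*[2]E[1]$, and all five Courant-algebroid axioms collapse into the single equation $Q^2=0$.
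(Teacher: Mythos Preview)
The paper does not actually prove this proposition: it is stated as a known result and attributed to \cite{0885.58030, Roytenberg:01} (Liu--Weinstein--Xu and Roytenberg). Your outline is essentially the standard verification carried out in those references, including the polarization into pure and mixed sectors and the observation that the bi-algebroid compatibility \eqref{liebialg} is precisely what makes the mixed Jacobiator collapse to $\mathcal D T$; your closing remark about the derived-bracket reformulation is exactly Roytenberg's alternative proof. So there is nothing to compare against in the paper itself---your proposal goes well beyond what the paper offers and matches the approach of the cited literature.
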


\noindent It is a natural question if we can decompose a Courant algebroid into the direct sum of two Lie algebroids. This is possible if we can decompose it into so-called \emph{Dirac structures}. Let us give the definition of the latter:
\begin{definition}
\label{def-diracstructure}
Let $(E,[\cdot,\cdot]_E,\langle \cdot, \cdot \rangle, \alpha)$ be a Courant algebroid. A subbundle $D \subset E$ is called a \emph{Dirac structure} if it is maximally isotropic  under $\langle \cdot, \cdot \rangle$ and its sections are closed under $[\cdot,\cdot]_E$.
\end{definition}
Here, a subbundle $D$ is isotropic under $\langle \cdot , \cdot \rangle$, if $\langle s_1, s_2 \rangle = 0 $ for sections $s_i \in \Gamma(D)$. It is called maximally isotropic if it has the maximal possible dimension of a subbundle having the latter property.  If there exists a decomposition of a Courant algebroid into the direct sum of Dirac structures, we have the following result \cite{0885.58030, Roytenberg:01}:

\begin{prop}
Let $(E,[\cdot,\cdot]_E,\langle\cdot,\cdot \rangle, \alpha)$ be a Courant algebroid. If it can be decomposed into transversal Dirac structures, i.e. $E = D_1 \oplus D_2$, then $(L_1, L_2)$ is a Lie bi-algebroid, where $L_2$ can be considered as the dual to $L_1$ with respect to the pairing $\langle \cdot, \cdot \rangle$.
\end{prop}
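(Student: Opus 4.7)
The plan is to establish three things in turn: first, that each Dirac structure $D_i$ carries a Lie algebroid structure inherited from the Courant data; second, that the pairing $\langle\cdot,\cdot\rangle$ identifies $D_2$ with $D_1^{*}$; and third, the compatibility condition \eqref{liebialg} between the bracket on $D_2$ and the differential $d_{D_1}$ induced on $\Gamma(\wedge^{\bullet}D_2)$ via the identification $D_1^{*}\cong D_2$.

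For the first step I would equip each $D_i$ with the restricted bracket $[\cdot,\cdot]_E|_{D_i}$, the restricted anchor $\alpha|_{D_i}$, and verify the Lie algebroid axioms of Definition \ref{Liealgdef}. Closure under the bracket is built into Definition \ref{def-diracstructure}. The anchor homomorphism property is precisely the first Courant axiom. The Leibniz rule follows immediately from the second Courant axiom, since the correction term $-\tfrac{1}{2}\langle s_1,s_2\rangle\mathcal{D}f$ vanishes on an isotropic subbundle. For the Jacobi identity I would use the last Courant axiom $\mathfrak{J}(s_1,s_2,s_3)=\mathcal{D}T(s_1,s_2,s_3)$ together with closure: if $s_1,s_2,s_3\in\Gamma(D_i)$ then $[s_1,s_2]_E\in\Gamma(D_i)$, so every term $\langle[s_i,s_j]_E,s_k\rangle$ in $T$ is the pairing of two sections of the isotropic $D_i$ and hence vanishes, forcing $\mathfrak{J}=0$.

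Step two is a linear-algebra argument fiberwise. Because both $D_i$ are maximally isotropic and $\langle\cdot,\cdot\rangle$ is non-degenerate on $E$, the restricted pairing $D_1\times D_2\to\mathbb{R}$ must be non-degenerate: any $\xi\in D_2$ annihilating $D_1$ would also annihilate $D_2$ by isotropy, hence all of $E=D_1\oplus D_2$, so $\xi=0$. Comparing ranks (both $D_i$ have rank equal to half the rank of $E$) then gives the isomorphism $D_2\xrightarrow{\sim}D_1^{*}$, $\xi\mapsto\langle\cdot,\xi\rangle|_{D_1}$. Under this identification $d_{D_1}$ of Proposition \ref{d-E} becomes an operator $\Gamma(\wedge^{k}D_2)\to\Gamma(\wedge^{k+1}D_2)$, and symmetrically for $d_{D_2}$.

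The main obstacle is step three, the derivation property
\[
d_{D_1}[\xi_1,\xi_2]_{D_2}=[d_{D_1}\xi_1,\xi_2]_{D_2}+(-1)^{k}[\xi_1,d_{D_1}\xi_2]_{D_2}.
\]
The strategy I would adopt is to reduce to the case of degree-zero and degree-one sections (the multiderivation nature of both sides plus the Gerstenhaber extension \eqref{Gerstenhaber} then propagate the identity to all multisections), and to express every term via the Courant bracket $[\cdot,\cdot]_E$ and the pairing $\langle\cdot,\cdot\rangle$ by unfolding the explicit formula \eqref{algebroiddiff} for $d_{D_1}$. Concretely, one pairs both sides against arbitrary sections $X_1,\dots,X_{k+1}\in\Gamma(D_1)$, replaces $d_{D_1}$ by the anchor-plus-bracket formula, and collects terms. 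The essential input at this stage is the fourth Courant axiom
\[
\alpha(e)\langle s_1,s_2\rangle=\langle[e,s_1]_E+\tfrac{1}{2}\mathcal{D}\langle e,s_1\rangle,s_2\rangle+\langle s_1,[e,s_2]_E+\tfrac{1}{2}\mathcal{D}\langle e,s_2\rangle\rangle,
\]
applied with $e\in\Gamma(D_1)$ and $s_i\in\Gamma(D_2)$; the $\mathcal{D}$-terms simplify dramatically because the maximal isotropy of the $D_i$ kills numerous pairings, and in the end the identity collapses to the Jacobi identity of the Courant bracket together with the homomorphism property of $\alpha$. The symmetric statement with the roles of $D_1,D_2$ exchanged follows either by repeating the argument verbatim or by invoking the remark already made in the text that $(E,E^{*})$ being a Lie bi-algebroid is equivalent to $(E^{*},E)$ being one.
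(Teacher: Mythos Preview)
The paper does not actually prove this proposition: it is quoted from the literature with a citation to Liu--Weinstein--Xu and Roytenberg, and the text moves on immediately to an example. So there is no in-paper argument to compare against.

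Your outline is the standard route taken in those references. Steps one and two are correct and complete as written. Step three is the genuine content of the theorem, and you have identified the right ingredients (the fourth Courant axiom, systematic use of isotropy to kill the $\mathcal{D}$-corrections, and reduction to low degrees via the Gerstenhaber extension). What you have written there is a plausible strategy rather than a proof: the bookkeeping in unfolding \eqref{algebroiddiff} against the Courant axioms is where all the work lies, and one typically also needs the third axiom $\alpha\circ\mathcal{D}=0$ to control the projections of $\mathcal{D}f$ onto the two summands. If you want to complete the argument, the cleanest path is to first check the degree $(0,1)$ case $d_{D_1}[f,\xi]_{D_2}=[d_{D_1}f,\xi]_{D_2}-[f,d_{D_1}\xi]_{D_2}$ explicitly, then the degree $(1,1)$ case, and then invoke \eqref{Gerstenhaber} as you say.
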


\noindent To conclude this section, we describe a simple example of a Courant algebroid. Let $TM$ be the trivial Lie algebroid and $T^*M$ be the cotangent bundle together with zero anchor and zero bracket. Then it is easy to show that we get a Courant algebroid $TM \oplus T^*M$ by taking the following bracket which was originally introduced by Courant \cite{0850.70212}:
\eq{
[X_1 + \xi_1, X_2 + \xi_2] = [X_1, X_2] + L_{X_1} \xi_2 - L_{X_2} \xi_1 + \tfrac{1}{2} d\,\left(\xi_1(X_2) - \xi_2 (X_1) \right) \;.
}
The two subbundles $TM$ and $T^*M$ are maximally isotropic subbundles and therefore we have the Lie bi-algebroid $(TM, T^*M)$, which one can check also directly.

\section{Open strings and deformation quantization}
\label{ch-quant}
In the standard formulation of quantum mechanics, classical observables, which are determined by the Poisson $^*$-algebra\footnote{A $^*$-algebra over $\mathbb{C}$ is an algebra ${\cal A}$ with a $\mathbb{C}$-antilinear, involutive antiautomorphism $^*$, i.e. for $\mathfrak{a},\mathfrak{b} \in {\cal A}, \; z,w\in \mathbb{C}$: $\;
(z \mathfrak{a} + w\mathfrak{b})^* = \bar z  \mathfrak{a}^* + \bar w \mathfrak{b}^*\,, \quad (\mathfrak{a}^*)^* = \mathfrak{a} \,, \quad (\mathfrak{a}\mathfrak{b})^* = \mathfrak{b}^* \mathfrak{a}^* \;.$
} of smooth functions on phase space, get replaced by a $^*$-algebra of operators on a Hilbert space. An important consequence of this procedure is the replacement of classical commutativity by quantum non-commutativity of observables. For the simplest phase space $\mathbb{R}^{2n}$ quantum observables are given by (unbounded) operators on the space of square integrable functions. Examples are position and momentum operators and polynomials thereof. The relation of the classical Poisson structure and the commutator of operators is given by the correspondence principle:
\eq{
\{f,g\} \rightarrow \frac{1}{i\hbar}[\hat f, \hat g] \;,
}
where $\hat f, \hat g$ are the operators corresponding to the classical observables $f,g$ (e.g. for $\mathbb{R}^{2n}$ with coordinates $x^i, p_i$ given by polynomials of multiplication operators by $x^i$ and momentum operators $\tfrac{\hbar}{i}\tfrac{\partial}{\partial x^i}$).

The generalization of a quantization procedure to more complicated phase spaces like general Poisson manifolds is done in the most transparent way by using another approach to mathematically describe the algebra of quantum observables. In \emph{deformation quantization} \cite{0377.53024,0377.53025}, the observables get replaced by formal power series of ${\cal C}^\infty(M)[[\hbar]]$ in a deformation parameter (which is physically interpreted as Planck's constant $\hbar$). Non-commutativity of the quantum algebra of observables is encoded by using a star product instead of pointwise multiplication, which is defined by:
\begin{gather}
\star : \, {\cal C}^\infty(M)[[\hbar]] \times {\cal C}^\infty(M)[[\hbar]] \rightarrow {\cal C}^\infty(M)[[\hbar]] \\
f \star g =\, \sum_{k=0}^\infty \, \hbar^k\,C_k(f,g) \;,
\end{gather}
where the bilinear maps $C_k(\cdot,\cdot)$ are determined by the following properties
\begin{itemize}
\item $\star\;$ should be associative,
\item $C_0(f,g) =\, fg\;, $
\item $C_1(f,g) - C_1(g,f) =\, i \{f,g\}\;,$
\item $f \star 1 =\, 1 \star f =\,f\;.$
\end{itemize}
The second axiom states that the product is a deformation of the classical commutative product whereas the third axiom gives again the correspondence principle.

It turns out that the formalism of star products reproduces many of the important results of standard quantum mechanics like the spectrum of the hydrogen atom \cite{0377.53025} and allows for a generalization to phase spaces which are arbitrary symplectic or Poisson manifolds \cite{0812.53034, 1058.53065}.

So far we described the approach of deformation quantization as a \mbox{mathematical} concept introduced independent of physical motivations, which turned out to be suitable to describe quantum properties of observables on general phase spaces. In the last two decades, string theory was able to give a derivation of the form of important star products by considering spacetime itself (or better the world-volume of D-branes) instead of phase spaces. By considering open string theory in the presence of a constant Neveu-Schwarz $B$-field, it was possible to derive the structure of the Moyal-Weyl star product by considering correlation functions of open string vertex operators \cite{Schomerus:1999ug,Seiberg:1999vs}. Later it was possible to generalize this to the case of non-constant but closed $B$-field, resulting in a derivation of Kontsevich's star product (e.g. \cite{1038.53088}). Further generalizations for non-vanishing $H$-flux were studied for example in \cite{Herbst:2001ai, Cornalba:2001sm}.

In this section we first describe the simplest case of constant $B$-field and then sketch the generalization to non-constant and non-closed $B$-fields. It will turn out that this case results in non-associative star products. The mathematical description of the latter is still not understood completely.

\subsection{Constant $B$-field: Non-commutativity}
In \cite{Schomerus:1999ug, Seiberg:1999vs}, it was realized that spacetime seen by the endpoints of open strings in the presence of a NS-NS $B$-field is non-commutative in the sense, that the product of functions depending on the coordinates of the D-brane where the open string is located is given by the Moyal-Weyl star product
\eq{ \label{moyalweyl}
(f \star g) \,(x) =&\, \exp\left(i\theta^{ij}\tfrac{\partial}{\partial x^i}\otimes \tfrac{\partial}{\partial y^j}\right)\,f(x)g(y)|_{x=y}\\
=&\, f(x)g(x) + i\theta^{ij}\partial_if(x)\,\partial_j g(x) + \dots
}
We are going to review the main arguments for this observation and establish an identification of the non-commutativity parameter $\theta^{ij}$ in terms of the (inverse) $B$-field. Consider the following open string sigma model\footnote{We are interested in the classical approximation to open strings, therefore the world sheet will be the upper half-plane $\mathbb{H}$.}
\eq{\label{sigmamodel}
S=\frac{1}{4\pi\alpha'} \int_{\mathbb{H}} \,d^2 z\;\bigl(g_{ij}\;\partial X^i \bar{\partial} X^j - 2\pi  \alpha'\; B_{ij}\;\partial X^i \bar{\partial} X^j \bigr)\;.
}
Varying with respect to $X^i$ leads to the following conditions at the boundary $\partial \mathbb{H}$ due to integration by parts:
\eq{\label{boundarycond}
g_{ij}\left(\bar{\partial} - \partial\right)X^j|_{z=\bar{z}} - 2\pi\alpha'B_{ij}\left(\bar{\partial} + \partial \right)X^j |_{z=\bar{z}} = 0 \;.
}
The exact propagator in two dimensions with these boundary conditions is a standard result in mathematics. Before stating it, let us try to understand special cases of \eqref{boundarycond}. If $B$ is invertible, in the limit $g \rightarrow 0$ (meaning that $B$ is very strong), the boundary conditions become Dirichlet. Therefore the real line behaves like a conducting line in electrostatics. The solution is thus given by the method of image charges, i.e. the sum of two opposite point charges at positions $w$ and $\bar{w}$:
\eq{\label{neumannprop}
\langle X^i(z)X^j(w) \rangle =\,g^{ij}\ln |z-w| - g^{ij}\ln |z-\bar{w}| \;.
}
Conversely, for $ B \rightarrow 0$, we get Neumann boundary conditions, which give a plus sign in the above formula. Later we will consider another special limit to exhibit non-commutativity. But now, let us state the precise propagator:
\eq{\label{exprop}
\langle X^i(z)X^j(w) \rangle =&\,-\alpha'\bigl[g^{ij}\ln|z-w| - g^{ij}\ln|z-\bar{w}| \\
&\,+ G^{ij}\ln|z-\bar{w}|^2 + \frac{1}{2\pi \alpha'}\theta^{ij} \ln \frac{z - \bar{w}}{\bar{z}- w} + C \bigr]\;,
}
where we defined the following fields \cite{Seiberg:1999vs}:
\eq{ \label{fullmetric}
G^{ij} =&\, \Bigl(\frac{1}{g + 2\pi \alpha' B} g \frac{1}{g - 2\pi \alpha' B} \Bigr)^{ij}\;,  \\
G_{ij} =&\; g_{ij} - (2\pi\alpha')^2 (B g^{-1}B)_{ij} \;,\\
\theta^{ij} =& -(2\pi \alpha')^2 \Bigl(\frac{1}{g + 2\pi \alpha' B}B\frac{1}{g - 2\pi\alpha' B}\Bigr)^{ij}\;.
}
Performing the above mentioned limits, we again recover \eqref{neumannprop} if we fix the integration constant $C$ to vanish. We are interested in open strings, so the insertion of the corresponding vertex operators is at the boundary, i.e. the real line. Restricting the propagator to it, we get
\eq{\label{expropR}
\langle X^i(\tau_1) X^j(\tau_2)\rangle =\, -\alpha' G^{ij} \ln (\tau_1 - \tau_2)^2 + \frac{i}{2}\theta^{ij} \epsilon(\tau_1 - \tau_2) \;,
}
where $\epsilon(\tau)$ gives $1$ for positive $\tau$ and $-1$ for negative $\tau$. From this result, we can already see the non-commutativity of spacetime probed by open strings if we calculate the equal-time commutator of two fields $X^i$, $X^j$ at the boundary of the world-sheet:
\eq{\label{ncspacetime}
\bigl[ X^i(\tau), X^j(\tau)\bigr] :=&\, \lim_{\delta_\tau \rightarrow 0} \langle T\left(X^i(\tau)X^j(\tau - \delta_\tau) - X^i(\tau)X^j(\tau + \delta_\tau) \right)\rangle \\
=&\,i\,\theta^{ij} \;,
}
where $T(\dots)$ denotes time ordering on the real line and $\delta_\tau$ is a shift in the world-sheet time. We observe that the result is independent of the world sheet coordinates, i.e. we can interpret it as a real spacetime property.

To finally see directly the Moyal-Weyl product \eqref{moyalweyl} of functions, we look at open string vertex operators in which $\alpha' \rightarrow 0$ with $G$ and $\theta$ kept fixed (which is also called \emph{Seiberg-Witten}-limit, see \cite{Seiberg:1999vs}). Let us assume that the matrix $B$ has full rank equal to the dimension of spacetime. The limit is done by setting the following scaling
\eq{\label{scaling}
\begin{array}{cccc}
\alpha'& \propto & \sqrt{\epsilon} &\rightarrow 0 \;,\\
g_{ij} &\propto &  \epsilon & \rightarrow 0 \;, \\
B_{ij} & \propto & \hspace{10pt}\epsilon^0 \;.& \\
\end{array}
}
Thus, \eqref{fullmetric} can be simplified to
\eq{\label{swlimit}
G^{ij} &=\,-\frac{1}{(2\pi\alpha')^2}\left(B^{-1}\,g\,B^{-1}\right)^{ij} \;,\\
G_{ij} &=\,\;(2\pi \alpha')^2 (B g^{-1}B)_{ij} \;, \\
\theta^{ij} &=\,\; \left(B^{-1} \right)^{ij} \;.
}
In this limit, the propagator on the real line \eqref{expropR} simplifies to
\eq{\label{limprop}
\langle X^i(\tau_1)X^j(\tau_2) \rangle = \, \frac{i}{2}\theta^{ij}\epsilon(\tau_1 - \tau_2)\;.
}
 Using this propagator, we are now able to calculate normal ordered products of field operators. The simplest but non-trivial functional dependence on the coordinates is given by the exponential function, so we consider tachyon vertex operators $V_p \stackrel{\textrm{def}}{=} :e^{ip_i X^i(\tau)}:$. Using the formula for the product of two such operators derived in the appendix, we get:
\eq{
:e^{ip_i X^i(\tau)}::e^{iq_jX^j(0)}: =\, e^{-\frac{i}{2}\theta^{ij}p_i q_j \epsilon(\tau)}\,:e^{ip_i X^i(\tau) + iq_j X^j(0)}: \;.
}
We can iteratively apply this result to a product on $N$ vertex operators of tachyons. As a consequence, the $N$-point tachyon correlation function is given by
\eq{ \label{phasecorrelator}
\langle V_{p_1} \cdots V_{p_N} \rangle =\, \exp \left( -\frac{i}{2}\sum_{1 \leq n < m \leq N} p_{n,i} \theta^{ij} p_{m,j} \,\epsilon(\tau_n - \tau_m) \right) \,\langle V_{p_1} \cdots V_{p_N} \rangle |_{\theta = 0 } \;.
}
In the case at hand, the last factor gives a momentum conservation delta function. Off shell, i.e. without implementing momentum conservation, doing a permutation of vertex operators gives a non-trivial phase factor due to the $\epsilon$-function. It is intriguing that precisely the same phase is reproduced by exchanging two factors of the Moyal-Weyl star product of $N$ exponential functions:
\eq{
e^{i\,p_1 \cdot X} \star \cdots \star e^{i\, p_N \cdot X} =\,\exp \left(-\frac{i}{2}\sum_{1 \leq n < m \leq N  } p_{n,i}\theta^{ij}p_{m,j} \right)\, e^{i\, \left(\sum_{n = 1 } ^N p_n\right) \cdot X} \;.
}
More generally for arbitrary functions (e.g. approximated arbitrary precise by polynomials) depending on the coordinate fields, denoted by $f(X(\tau)),g(X(\tau))$, the following relation can be shown (e.g. \cite{Schomerus:1999ug}) for normal ordered products:
\eq{
:f(X(\tau))::g(X(0)): = :e^{\frac{i}{2}\epsilon(\tau)\theta^{ij}\frac{\partial}{\partial X^i(\tau)}\frac{\partial}{\partial X^j(0)}} f(X(\tau)) g(X(0)):
}
and for the limit $\tau \stackrel{>}{\rightarrow} 0 $ we recover the Moyal-Weyl star-product \eqref{moyalweyl} of the functions depending on $X(0)$.

\subsection{General $B$-field: Non-associativity}

In the previous section we sketched the derivation of the Moyal-Weyl star product by open string perturbation theory. The assumption of constant and therefore closed $B$-field simplified the calculation of correlation functions and resulted in the most basic example of an associative star-product.

Assuming now general fields $g_{ij}(X)$ and $B_{ij}(X)$ in the open string sigma model \eqref{sigmamodel}, correlators of vertex operators can be computed perturbatively by using a \emph{background field expansion} (\cite{Herbst:2001ai, Cornalba:2001sm, Herbst:2003we}) in the following way:
\eq{ \label{backgroundfieldexp}
X^i(z,\bar z) &=\, x_0^i + \zeta^i(z,\bar z) \;, \\
g_{ij}(x_0 + \zeta) &=\, \eta_{ij} - \tfrac{1}{3} R_{ikjl}\zeta^k \zeta^l + \dots \;, \\
B_{ij}(x_0 + \zeta) &=\, B_{ij}(x_0) + \partial_k B_{ij}(x_0)\zeta^k + \tfrac{1}{2} \partial_k \partial_l B_{ij}(x_0) \zeta^k\zeta^l + \dots \;,
}
where $\zeta^i(z,\bar z)$ is a fluctuation around the constant point $x_0$ on the target space, $\eta_{ij}$ is the flat background metric and  $R_{ikjl}$ is the Riemann tensor corresponding to the metric $g_{ij}(x_0)$. The calculation of correlation functions involving the fluctuation $\zeta^i$ was done e.g. in \cite{Herbst:2001ai} and results in the following product ``$\bullet$'' of functions $f,g$:
\eq{ \label{ncastar}
f \bullet g =\, &f\star g - \tfrac{1}{12} \theta^{i l}\partial_l \theta^{j k} \,\left(\partial_i \partial_j f \star \partial_k g + \partial_k f \star \partial_i \partial_j g \right) \\
&+ {\cal O}\left((\partial \theta)^2, \partial^2 \theta \right) \;,
}
where $\star$ denotes the standard Moyal-Weyl product \eqref{moyalweyl} and we expanded up to combinations where only one derivative of the non-commutativity parameter $\theta$ is involved. The latter is determined by the background fields $\eta, B$ introduced in \eqref{backgroundfieldexp} and takes a similar form as in the last section:
\eq{\label{ncatheta}
\theta^{ij} =\, -\bigl(\frac{1}{\eta - B} B \frac{1}{\eta + B} \bigr)^{ij} \;.
}
It was observed in addition that the $\bullet$-product whose terms can be computed order by order perturbatively is not associative :
\eq{
\left(f \bullet g \right) \bullet h &- f \bullet \left(g \bullet h \right) \\
&=\, \tfrac{1}{6}\,\theta^{i m}\theta^{j n}\theta^{k l} \, H_{m n l }\, \partial_i f \star \partial_j g \star \partial_k h + {\cal O}(\partial^2) \;,
}
where we introduced the $H$-flux as the exterior differential of the two-form which we get by inverting the non-commutativity parameter $\theta$: $H:= d\,(\theta^{-1})$. The \mbox{associativity} of the $\bullet$-product is restored in the limit $H \rightarrow 0$.

Finally, in analogy to the non-commutativity of spacetime coordinates \eqref{ncspacetime} we can see the non-associativity of spacetime by introducing a bracket with three arguments which is determined by the Jacobi-identity of the commutator $[\cdot,\cdot]_\bullet$ of the $\bullet$-product:
\eq{
[f,g,h] := \left[f,[g,h]_\bullet\right]_\bullet + \left[g,[h,f]_\bullet\right]_\bullet + \left[h,[f,g]_\bullet \right]_\bullet \;.
}
Evaluating this bracket on the spacetime coordinates $x^i, x^j, x^k$ leads to a very simple expression \cite{Herbst:2001ai}, determined by the measure of non-associativity $H$:
\eq{ \label{thetaflux}
\left[x^i, x^j, x^k \right] =\, \theta^{i m}\theta^{j n}\theta^{k l}\,H_{m n l} \;.
}
To sum up, in the case of general $B$-field backgrounds, it is still possible to calculate star products perturbatively by using a background field expansion. However, the resulting deformation of the classical product is non-associative. The strength of this non-associativity is measured by $d (\theta)^{-1}$.

If we look again at the expression \eqref{ncatheta} for the non-commutativity parameter, we observe that for strong $B$-fields $B \gg \eta$, the expression reduces to $\theta = B^{-1}$, i.e. the non-associativity in \eqref{thetaflux} is determined by the original NS-NS $H$-flux.

It is well known from closed string theory that considering T-duals to configurations with $H$-flux leads to so-called geometric and non-geometric fluxes\footnote{We will introduce some important facts about geometric and non-geometric fluxes in the next section.}. Thus a natural question would be if one can see similar effects of non-associativity also in the closed string case with non-vanishing $H$-flux and its T-dual configurations. We will come back to this question in later sections.

\section{Non-geometric flux backgrounds}
Compactifications of string theory on geometric manifolds serve as a very rich source of interesting mathematics and physics. Calculations of correlation functions in the topological string lead to far reaching insights into topics on the mathe\-ma\-ti\-cal frontier like mirror symmetry or Gromov-Witten theory. On the physical side, they were even more inspiring: Compactification on complex three-dimensional Calabi-Yau manifolds and their orientifolds together with intersecting branes are the ingredients of constructing realistic models of particle physics and cosmology.

However, from a conformal field theory point of view, geometric compactifications are only a subset of possible string theory models, maybe they are not even the generic case. For example, in theories like asymmetric orbifolds, there is no geometric interpretation of the target space. 

In the case of geometric compactifications involving NS-NS three-form flux $H$, in the last decade it became clear that T-duality may connect such compactifications to backgrounds which go beyond the framework of differential geometry \cite{Lust:2010iy, Dabholkar:2002sy, Hellerman:2002ax, Shelton:2005cf, Shelton:2006fd, Hull:2004in, Hull:2006va, Dabholkar:2005ve, Grana:2008yw, Andriot:2011uh, Berman:2012vc, Andriot:2012an, Andriot:2012wx, Condeescu:2012sp}. Starting with an approximate solution to the string equations of motion given by a torus with $H$-flux, T-duality in an isometric direction leads to a twisted torus. Whereas this can still be des\-cribed by geometric $f$-flux, a second T-duality results in an object whose set of transition functions between coordinate charts has to be extended to include also the T-duality group. It is called T-fold and is characterized by the non-geometric $Q$-flux. Even though no isometry direction being left, one can perform a formal third T-duality and there are hints that the resulting space which carries $R$-flux is non-commutative or even non-associative. This is often summarized in the following chain of dualities:
\eq{
H_{abc} \xleftrightarrow{\;\;T_a\;\;}\; f^a{}_{bc}\; \xleftrightarrow{\;\;T_b\;\;} \;Q_c{}^{ab}\; \xleftrightarrow{\;\; T_c\;\;}\; R^{abc} \;.  
}
In this section, we start by describing the action of T-duality when there are isometric directions on the target space manifold. The Buscher rules give a des\-crip\-tion how to get the dual geometry in this case and we study in more detail the duality action for constant background fields in the case of compactification on a torus. We continue by giving a brief motivation for the mathematical structure of non-geometric $Q$- and $R$-fluxes, which is followed by the most prominent example of compactification on a torus with constant $H$-flux and its T-dual versions. Finally, we sketch the four-dimensional effective viewpoint. The aim for having an effective superpotential whose coefficients map bijectively to each other when going from type IIA to type IIB compactifications was one of the first motivations to introduce nongeometric fluxes.

Due to the enormous amount of activity in the field, this will be far from being complete. Important topics like the formalism of double field theory (e.g. \cite{Hull:2009mi, Hohm:2010jy, Aldazabal:2011nj, Hull:2009zb})  will not be touched. The goal of this section is to concentrate on the structural aspects which are important for later sections.

\subsection{Introduction: The Buscher rules}

T-duality is an example of a symmetry which is inherent to string theory due to its fundamental objects, which are one-dimensional instead of the pointlike character of particles. Intuitively, the statement of T-duality is that string theories on T-dual backgrounds are equivalent, i.e. a string cannot distinguish between dual geometries. This has the advantage that one can extract information about exotic geometries by studying string theory on dual backgrounds. The easiest examples are circle compactifications of radii $R$ and $R^{-1}$, which we review briefly in the following. 

Take for simplicity closed bosonic string theory with target space topology $\mathbb{R}^{1,24}\times S^1_R$, i.e. compactification on a circle with radius $R$. Denoting by $n,w$ the momentum and winding numbers, respectively and by $N,\bar{N}$ the oscillator numbers, the mass formula for the closed string spectrum (e.g. \cite{Green:1987sp, Blumenhagen:2013fgp}) is given by: 
\eq{\label{massformula}
M^2 =\, \frac{n^2}{R^2} + \frac{w^2 R^2}{(\alpha') ^2} + \frac{2}{\alpha'}\left(N + \bar{N} - 2\right) \;.
}

\noindent Thus the spectrum is invariant under the T-duality transformation
\eq{\label{tdual1}
R \rightarrow \frac{\alpha'}{R}, \quad n \leftrightarrow w \;,
}

\noindent meaning that the two theories on circles with radii $R$ and $\alpha'/R$ are equivalent. On the level of worldsheet sigma-models, this phenomenon can be seen by Buscher's procedure \cite{Buscher:1987sk, Buscher:1987qj}. To simplify notation, we set $\alpha' = 1$. Consider the sigma model on the sphere $\mathbb{P}^1$ with target space metric $G_{ab}$ and NS-NS $B$-field $B_{ab}$:
\eq{\label{sigmamodel}
S =\, \frac{1}{2\pi} \int_{\mathbb{P}^1} d^2 z \, \left(G_{ab} + B_{ab}\right)\partial X^a \bar{\partial}X^b \;.
}
We observe that the theory is invariant under the change of target space coordinate fields $\delta X^i = \epsilon v^i$, if $v^i$ are the components of an isometry direction ${\bf v}$ of the metric and the $H$-field and $\epsilon$ is a small parameter:
\eq{\label{isometry}
L_{\bf v} G  =&\, 0\;, \\
L_{\bf v} H =&\, 0 \quad \rightarrow L_{\bf v} B =\, d\omega \;,
}
where $H =\, dB$ and $\omega$ is an arbitrary one-form. In addition, if we have a dilaton term
\eq{\label{dilaton}
S_d =\, \frac{1}{2\pi} \int d^2 z \, \phi R^{(2)} \;,
}
the dilaton condition $v^i \partial_i \phi$ has to be satisfied. Having an isometry of this type, we can introduce coordinates $\{\theta := X^0\,,X^a\}$ in such a way that the fields do not depend on the isometry direction $\theta$. Consequently, we are able to introduce an action with trivial dependence on $\theta$ and an additional variable $\tilde{\theta}$ which first appears as a Lagrange multiplier but later turns out to be the isometry coordinate of the dual theory. Together with auxiliary fields $A,\bar{A}$ this action is given by:
\eq{\label{1order}
S_1 =\, \int d^2z \, \bigl[& G_{00} A\bar{A} + \left(G_{0a} + B_{0a}\right) A \bar{\partial}X^a + \left(G_{a0} + B_{a0}\right)\partial X^a \bar{A} \\
&+ \left(G_{ab} + B_{ab}\right)\partial X^a \bar{\partial}X^b + \tilde{\theta}\left(\partial \bar{A} - \bar{\partial}A\right) \bigr] \;.
}
Integrating out the Lagrange multiplier field $\tilde{\theta}$ results in
\eq{
\partial \bar{A} - \bar{\partial} A =\,0\;,
}
i.e. in our case we can take $\bar{A} =\,\bar{\partial}\theta,\, A =\,\partial \theta$ as a solution and we recover the original action \eqref{sigmamodel}. Integrating out instead the auxiliary fields $A$ and $\bar{A}$ results in
\eq{
A =&\, -\frac{1}{G_{00}}\left(G_{a0} + B_{a0}\right) \partial X^a + \frac{1}{G_{00}} \partial \tilde{\theta} \;, \\
\bar{A} =&\,-\frac{1}{G_{00}}\left(G_{0a} + B_{0a}\right) \bar{\partial} X^a - \frac{1}{G_{00}} \bar{\partial} \tilde{\theta} \;,
}
and we arrive at a \emph{dual} sigma model with similar structure if we identify the field $\tilde{\theta}$ with the dual isometry direction $\tilde{X}^0$:
\eq{\label{dualsigmamodel}
\tilde{S} =\, \frac{1}{2\pi} \int_{\mathbb{P}^1} d^2 z \, \left(\tilde{G}_{ab} + \tilde{B}_{ab}\right)\partial \tilde{X}^a \bar{\partial}\tilde{X}^b \;,
}
where the new fields are given by the Buscher rules \cite{Buscher:1987sk, Buscher:1987qj, Rocek:1991ps}:
\begin{gather} \label{Busher}
\tilde{G}_{00} =\, \frac{1}{G_{00}}\;, \quad \tilde{G}_{0a} =\,\frac{B_{0a}}{G_{00}}\,,\quad \tilde{G}_{ab} =\, G_{ab} - \frac{G_{a0}G_{0b} + B_{a0}B_{0b}}{G_{00}} \;, \\
\tilde{B}_{0a} =\,\frac{G_{0a}}{G_{00}}\,, \quad \tilde{B}_{ab} =\, B_{ab} - \frac{G_{a0}B_{0b} + B_{a0}G_{0b}}{G_{00}}\;.
\end{gather}
Noting that the metric on a circle of radius $R$ can be given by $G_{00} = R^2$, we can recover the statement concluded from the mass formula \eqref{massformula} as a special case.

\noindent The above arguments are completely on the classical level. Quantum mechanically, the proof of the equivalence of T-dual theories is subtle (see for example \cite{Rocek:1991ps}). We only want to mention that the transformation laws for the fields receive corrections coming from the Jacobian of integrating out the auxiliary fields. At one loop this leads to a shift in the dilaton:
\eq{
\phi \rightarrow \phi + \ln G_{00} \;.
}
In the following sections we only concentrate on classical aspects and the last result was only listed for completeness.
%conformal field theory? Partition function
%interchange of eom and Bianchi?

\subsection{Torus compactification and $O(d,d;\mathbb{Z})$-duality}
\label{sec-torusodd}
We now want to present a different viewpoint on the above insights about the \mbox{action} of T-duality on the background $G$- and $B$-fields (see for example \cite{Zwiebach:2011rg, Blumenhagen:2013fgp}). Using the Hamiltonian formalism for the action \eqref{sigmamodel} of the last section, we will analyse symmetries of the spectrum and rederive the Buscher rules \eqref{Busher}. To apply world-sheet Hamiltonian methods, let us rewrite the sigma model action in local coordinates $\tau \in \mathbb{R}$ and $\sigma \in [0,2\pi]$ :
\eq{\label{hamiltonsigmamodel}
 S =\, -\frac{1}{4\pi} \int_0 ^{2\pi} d\sigma \,\int_{\mathbb{R}} d\tau \,\left(\eta^{\alpha \beta}\partial_\alpha X^\mu \partial_\beta X^\nu G_{\mu\nu} + \epsilon^{\alpha \beta} \partial_\alpha X^\mu \partial_\beta X^\nu B_{\mu \nu} \right)\;,
}
where we again take $\alpha' = 1$, and the conventions for the world-sheet are:
\eq{\label{wsconventions}
\eta^{\alpha \beta} =\,\textrm{diag}(-1,1), \quad \epsilon^{10} =\, -\epsilon^{01} =\,1, \quad \partial_\alpha =\,(\partial_\tau, \partial_\sigma) \;.
}
We consider compactification on a $d$-dimensional torus $T^d$ with flat external space, thus the metric and $B$-field can be split into internal and external parts:
\eq{\label{metricsplit}
G_{\mu\nu} =\,
\begin{pmatrix}
G_{ij} & 0 \\
0 & \eta_{ab}
\end{pmatrix}\,,\quad
B_{\mu\nu} =\,
\begin{pmatrix}
B_{ij} & 0 \\
0 & 0
\end{pmatrix} \;.
}
From now on, we only concentrate on the internal part, i.e. our dynamical fields are $G_{ij}$ and $B_{ij}$. Denoting derivatives with respect to $\tau$ and $\sigma$ with dot and prime, respectively, the canonical momentum of the theory on the internal $d$-torus is given by:
\eq{\label{canmomentum}
2\pi \Pi_i = G_{ij} \dot{X}^j + B_{ij}X^{'j}\;.
}
To write down a simple expression for the Hamiltonian of the theory, we assume $G_{ij}$ and $B_{ij}$ to be constant. In order to compare this situation with the pre\-ce\-ding section, we note that equally simple calculations could be performed if the fields had isometries in some coordinate directions of the torus. In this sense the two sections are equivalent, or in other words, if there is no isometric coordinate direction, the discussion in this section is not applicable.

The world-sheet Hamiltonian density $\mathfrak{h}$ is given by a Legendre transform of the Lagrangian in \eqref{hamiltonsigmamodel} and can be written in the following convenient form:
\eq{\label{Hamilton}
\mathfrak{h} =\, \frac{1}{4\pi} (X', 2\pi \Pi)\, \mathcal{H}(G,B)
\begin{pmatrix}
X' \\
2\pi \Pi
\end{pmatrix}\;,
}
where we have introduced the $2d$-dimensional \emph{generalized metric} $\mathcal{H}$, given in terms of the metric and $B$-field:
\eq{ \label{genmetric}
\mathcal{H} =\,
\begin{pmatrix}
G - BG^{-1}B & BG^{-1} \\
-G^{-1}B & G^{-1}
\end{pmatrix} =\,
\begin{pmatrix}
1 & B \\
0 & 1
\end{pmatrix}
\begin{pmatrix}
G & 0 \\
0 & G^{-1}
\end{pmatrix}
\begin{pmatrix}
1 & 0 \\
-B & 1
\end{pmatrix} \;.
}
Now, to write down the Hamiltonian, we recall the mode expansions of the internal closed string coordinates if we compactify on a $d$-torus (e.g. \cite{Blumenhagen:2013fgp}). The identification of the torus coordinates are:
\eq{\label{modes}
X^i \sim  X^i + 2\pi w^i =\, X^i + 2\pi \sum_{k = 1}^d n^k e^i_k\;,
}
where the $\{e_k\}_{k=1,\dots, d}$ generate the basis of the $d$-dimensional lattice ${\bf \Lambda}_d$ defining the torus and $n^k \in \mathbb{Z}$. The constant $B$-field does not contribute to the equations of motion and thus the mode expansion of the compact coordinates is:
\eq{\label{mode1}
X^i = x_{0}^i + p^i \tau + w^i\sigma + i\sum_{n\neq 0} \frac{1}{n} \left(\alpha_n e^{-in(\tau - \sigma)} + \bar{\alpha}_n e^{-in(\tau + \sigma)}\right) \;.
}
Comparing with the canonical momentum \eqref{canmomentum}, we can read off the center of mass momentum $\pi_i$:
\eq{\label{centerofmassmom}
\pi_{i} =\,\int_0 ^{2\pi} d\sigma \, \Pi_i =\, G_{ij} p^j + B_{ij}w^j \;.
}
It is this quantity which generates translations and as a consequence of the compactness of the internal directions, it has to be quantized. More precisely it takes values in the dual lattice ${\bf \Lambda}_d ^*$ spanned by the dual basis vectors $\{e^{*i}\}$, defined by $e^{*i}(e_j) =\,\delta^i_j$. Thus the center of mass momentum can be expanded in terms of the dual basis by:
\eq{\label{quantmom}
\pi_i =\,\sum_{k=1} ^d m_k e^{*k}_i , \qquad m_k \in \mathbb{Z}\;.
}
The expansion \eqref{mode1} can be split into the sum of left- and right moving sectors  by introducing zero modes $x^i_{0L}, x^i_{0R}$ and $p^i_L, p^i_R$: $X^i = \tfrac{1}{\sqrt{2}}\left(X^i_L + X^i _R\right)$, where
\eq{\label{mode2}
X_L ^i =&\, x_{0L}^i + p_L^i(\tau + \sigma) + i\sum_{n\neq 0}\,\frac{1}{n}\alpha_n ^i e^{-in(\tau + \sigma)}\;, \\
X_R ^i =&\, x_{0R}^i + p_R ^i(\tau - \sigma) + i\sum_{n\neq 0}\,\frac{1}{n}\bar{\alpha}_n ^i e^{-in(\tau - \sigma)}\;.
}
The momentum modes $p_{L/R}^i$ (which also determine the mass formula) can be expressed in terms of the center of mass momentum $\pi_i$ and winding $w^i$ by using the relation \eqref{centerofmassmom}. They are given by:
\eq{\label{plpr}
(p_i)_{L/R} =\,\frac{1}{\sqrt{2}}\left[\pi_i \pm \left(G_{ij} \mp B_{ij}\right)w^j\right] \;.
}
Putting together the results \eqref{Hamilton}, \eqref{modes} and \eqref{quantmom}, we are able to express the Hamiltonian as follows:
\eq{\label{Hamiltonf}
H =\, \int_{0}^{2\pi} d\sigma \, \mathfrak{h} = \frac{1}{2} \mathbf{k}^t \mathcal{H}(G,B) \mathbf{k} + ... \;,
}
where the dots consist of terms coming from the oscillator modes and the integer-valued vector $\mathbf{k}^t =\, (n^i,m_i)$ is given by the center of mass momentum- and winding quantum numbers introduced in \eqref{modes} and \eqref{quantmom}, respectively.

For completeness, we want to express the level-matching condition in terms of the quantum numbers $n^i$ and $m_i$. The mass formulas for the left and right moving sectors are given by:
\eq{ \label{massformlr}
m_{L/R}^2 =\,p_{L/R}^2 + 2(N_{L/R} -1)\;,
}
where $N_{L/R}$ are the eigenvalues of the corresponding number operators. Therefore level-matching can be written as $m_L^2 =\, m_R ² $, and using the expressions \eqref{plpr} to write down the masses in terms of the previously introduced quantum numbers leads to the constraint:
\eq{ \label{levelmatch}
N_R - N_L =\,  m_i n^i =\, \frac{1}{2}\,\mathbf{k}^t \eta \mathbf{k} \;,
}
where in the last term we have defined the metric $\eta$, which is a $2d \times 2d$-matrix and explicitely given by:
\eq{\label{oddmetric}
\eta = \,
\begin{pmatrix}
0 & 1 \\
1 & 0
\end{pmatrix}\;.
}
For every compactification on a $d$-torus, the vectors having components $p_{L/R} ^i$ lie on an even, self dual lattice\footnote{We now consider $2d$-dimensional vectors $(p_L, p_R) \in {\bf \Gamma}_{d,d}$ in contrast to $n^k e_k ^i \in {\bf \Lambda}_d$.} ${\bf \Gamma}_{d,d}$ \cite{Blumenhagen:2013fgp}. It is a mathematical fact that every such lattice can be generated by an $O(d,d;\mathbb{R})$-transformation out of the lattice coming from the background with flat metric and zero $B$-field. Equivalent theories are created by rotations of the momenta by the maximal compact subgroup $O(d,\mathbb{R})\times O(d,\mathbb{R})$ and therefore the moduli space of inequivalent torus compactifications is \emph{locally} isomorphic to the coset:
\eq{
\mathcal{M_{\textrm{loc}}} =\, \frac{O(d,d;\mathbb{R})}{O(d;\mathbb{R})\times O(d;\mathbb{R})} \;.
}
We will see in the following that this can be further reduced by \emph{discrete} transformations. To show this, we analyse the transformations under which the Hamiltonian \eqref{Hamiltonf} and the level matching condition \eqref{levelmatch} are invariant. For the latter, mapping the vector $\mathbf{k}$ by an integer valued matrix $M$ (taking the conventions $\tilde{\mathbf{k}} =\, M^t \mathbf{k}$), it is easy to see that we get:
\eq{ \label{oddbed}
M \eta M^t =\, \eta \;.
}
This is the definition of the group $O(d,d; \mathbb{Z})$. To get useful relations and introduce notation which became standard, let us write a general $O(d,d;\mathbb{Z})$-matrix in the form:
\eq{\label{oddmatrix}
M =\,
\begin{pmatrix}
a & b \\
c & d
\end{pmatrix}\;,
}
with $d$-dimensional integer valued matrices $a,b,c,d$. From the condition \eqref{oddbed}, we infer the relations
\eq{\label{oddbed2}
a^tc + c^ta =&\, b^td + d^t b =\, 0\,, \quad a^td + c^tb =\,1\;, \\
ab^t + ba^t =&\, cd^t + dc^t =\, 0\,, \quad ad^t +bc^t =\, 1\;,
}
where the second line is not an independent condition but is equivalent to the first one. We listed both for later reference.

As a next step let us analyse the invariance of the Hamiltonian \eqref{Hamiltonf}, i.e. the invariance of the spectrum of the theory. To shorten notation let us introduce a combination of the metric and $B$-field which turns out to be more natural than the two fields separately:
\eq{ \label{E}
E_{ij} =\, G_{ij} + B_{ij} \;.
}
Recalling \eqref{Hamiltonf}, a necessary condition for the invariance of the spectrum is given by:
\eq{\label{specinv}
\tilde{\mathbf{k}}^t \mathcal{H}(\tilde{E}) \tilde{\mathbf{k}} =\,\mathbf{k}^t \mathcal{H}(E)\mathbf{k} \;,
}
where we also assumed that the generalized metric $\mathcal{H}$ depends on a transformed $\tilde{E}_{ij}$. Thus we get the condition for the transformed generalized metric
\eq{
\mathcal{H}(E) =\, M \mathcal{H}(\tilde{E}) M^t \;.
}
A short calculation shows that this can be achieved by the following action of the $O(d,d;\mathbb{Z})$-matrix $M$ on the matrix $\tilde{E}_{ij}$:
\eq{ \label{oddaction}
E = (a\tilde{E} + b)(c\tilde{E} + d)^{-1} =:
\begin{pmatrix}
a & b \\
c & d
\end{pmatrix}\,\tilde{E} \;.
}

\noindent As a result, this action of the group $O(d,d;\mathbb{Z})$ on the background matrix $\tilde{E}$ leaves the spectrum invariant and hence we also get equivalent theories. The moduli space of inequivalent torus compactifications is therefore:
\eq{
\mathcal{M} =\,\frac{O(d,d;\mathbb{R})}{O(d;\mathbb{R})\times O(d;\mathbb{R})} / O(d,d;\mathbb{Z})\;.
}
Let us now decompose $O(d,d;\mathbb{Z})$ in order to see how T-duality transformations in different torus directions are contained. Note that we take constant $G$ and $B$, and thus every direction is an isometry in the sense of the previous section about the Buscher rules.

The generators of $O(d,d;\mathbb{Z})$ consist of three groups which we describe separately.
\begin{itemize}
\item Discrete $B$-shifts:
\eq{
g_b =\,
\begin{pmatrix}
1 & b \\
0 & 1
\end{pmatrix}
\;, \quad b^t = \, -b\;. }
These matrices correspond to integer-valued shifts of the $B$-field, as one can see using the general action of $O(d,d;\mathbb{Z})$ given in \eqref{oddaction}. Thus they are the discrete analogue of $B$-transformations in generalized geometry.
\item Discrete change of bases:
\eq{
g_A =\,
\begin{pmatrix}
A & 0 \\
0 & (A^t)^{-1}
\end{pmatrix} \,, \quad A \in \textrm{GL}(d;\mathbb{Z})\;.
}
As one infers again from \eqref{oddaction}, these transformations are the discrete analogues of a change of bases. For example a separate reshuffling of momentum and winding quantum numbers belongs to this class of transformations.
\item T-dualities
\eq{ \label{tduality}
g_T =\,
\begin{pmatrix}
1 - e_i & e_i \\
e_i & 1 - e_i
\end{pmatrix}\,, \quad (e_i)_{mn} =\, \delta_{im}\delta_{in} \,,}
where $1$ denotes the identity matrix in $d$ dimensions and $e_i$ is the matrix with only non-trivial entry at the diagonal element $(ii)$. These transformations are the T-duality transformations, which we are going to analyse in more detail in the following.
\end{itemize}
Let us now concentrate on the last group of transformations. We want to show that they reproduce the Buscher rules \eqref{Busher}. For simplicity we again take constant background fields. The T-dualities \eqref{tduality} act on the  generalized metric ${\cal H}$ given in \eqref{genmetric} by conjugation:
\eq{ \label{dualityaction}
\tilde{\cal H} =\, g_T^t {\cal H} g_T =\, g_T {\cal H} g_T \;.
}
From the new generalized metric we can now read off the transformed metric and $B$-field. Instead of giving the proof of the equivalence of this operation to the Buscher rules in full generality, we take a simple example\footnote{We closely follow the review of this well known example, as given in \cite{Andriot:2012vb}.} to illustrate the method and to take the opportunity to introduce useful notation. We analyse the action of T-duality in two different isometry directions of the internal manifold. This can be given for example by a flat two-torus together with a constant $B$-field, parametrized as follows:
\eq{\label{configuration}
G =\,
\begin{pmatrix}
R^2_1 & 0 \\
0 & R^2_2
\end{pmatrix}\;, \quad B=\, b\,
\begin{pmatrix}
0 & 1 \\
-1 & 0
\end{pmatrix}\;,
}
where $R_1, R_2$ are the radii of the two basis cycles of the torus. A common parametrization of this torus is given by the complex structure $\tau$ and the com\-plexi\-fied K\"ahler class $\rho$ defined in terms of the metric and $B$-field:
\eq{ \label{modularparam}
\tau =&\;\frac{G_{01}}{G_{00}} + \frac{i}{G_{00}} \sqrt{|\det G|} \;, \\
\rho =&\; B_{01} + i\sqrt{|\det G|}\;.
}
In our case we have $\tau =\, iR_2/R_1$ and $\rho =\, b + i R_1 R_2$. T-dualizing in the $0$-direction leads to the interchange of these two parameters by the Buscher rules \eqref{Busher}. Let us now do the calculation with the generalized metric . For this it is convenient to parametrize the metric and $B$-field by the parameters \eqref{modularparam}:
\eq{\label{gbmodular}
G =\, \frac{\textrm{Im}\, \rho}{\textrm{Im}\,\tau}
\begin{pmatrix}
1 & \textrm{Re}\, \tau \\
\textrm{Re}\, \tau & |\tau|^2
\end{pmatrix} \;, \quad B=\,
\begin{pmatrix}
0 & \textrm{Re}\,\rho \\
-\textrm{Re} \, \rho & 0
\end{pmatrix} \;.
}
Thus the generalized metric \eqref{genmetric} is given by:
\eq{ \label{genmetric3}
{\cal H} =\,\frac{1}{\textrm{Im}\,\rho \,\textrm{Im}\,\tau}
\begin{pmatrix}
|\rho|^2 & |\rho|^2 \textrm{Re} \,\tau & -\textrm{Re}\,\rho \,\textrm{Re}\,\tau & \textrm{Re}\,\rho \\
|\rho|^2 \textrm{Re}\,\tau & |\rho \tau|^2 & -|\tau|^2 \textrm{Re}\,\rho & \textrm{Re}\,\rho \,\textrm{Re}\,\tau \\
-\textrm{Re}\,\rho \,\textrm{Re}\,\tau & -|\tau|^2\textrm{Re}\,\rho & |\tau|^2 & -\textrm{Re}\, \tau \\
\textrm{Re} \,\rho & \textrm{Re}\,\rho \,\textrm{Re}\, \tau & -\textrm{Re}\,\tau & 1
\end{pmatrix} \;.
}
For applying the action of T-duality given in \eqref{dualityaction} in the $0$-direction we use the matrix $e_0$ in definition \eqref{tduality},
\eq{
e_0 =\,
\begin{pmatrix}
1 & 0 \\
0 & 0
\end{pmatrix} \;.
}
The new generalized metric after conjugation is of a similar structure as \eqref{genmetric3}, but with exchanged parameters $\tau \leftrightarrow \rho$, which means the following form of the dual metric and $B$-field:
\eq{\label{0-direction}
G =\, \frac{\textrm{Im}\,\tau }{\textrm{Im}\,\rho}
\begin{pmatrix}
1 & \textrm{Re}\,\rho \\
\textrm{Re}\, \rho & |\rho|^2
\end{pmatrix}\;, \quad
B =\,
\begin{pmatrix}
0 & \textrm{Re}\,\tau \\
-\textrm{Re}\,\tau & 0
\end{pmatrix} \;.
}
Thus, the action \eqref{dualityaction} reproduces the Buscher rules. In the same way it is possible to calculate T-dualities in the other directions (note that we assumed constant background fields). For later reference we list two more dualities. First, dualizing in the $1$-direction leads to:
\eq{
G =\, \frac{\textrm{Im}\,\tau}{|\tau|^2 \textrm{Im}\, \rho}
\begin{pmatrix}
|\rho|^2 & -\textrm{Re}\,\rho \\
-\textrm{Re} \,\rho & 1
\end{pmatrix}\;, \quad
B =\, \frac{\textrm{Re}\,\tau}{|\tau|^2}
\begin{pmatrix}
0 & -1 \\
1 & 0
\end{pmatrix} \;.
}
Thus in this case, we get:
\eq{
\tau \rightarrow -\frac{1}{\rho} , \quad \rho \rightarrow -\frac{1}{\tau} \;.
}
Finally, dualizing along the directions $0$ and $1$ results in:
\eq{\label{01-direction}
G=\, \frac{\textrm{Im}\,\rho}{|\rho|^2 \textrm{Im}\,\tau}
\begin{pmatrix}
|\tau|^2 & -\textrm{Re}\,\tau \\
-\textrm{Re}\,\tau & 1
\end{pmatrix} \;, \quad
B=\,\frac{\textrm{Re}\,\rho}{|\rho|^2}
\begin{pmatrix}
0 & -1 \\
1 & 0
\end{pmatrix} \;,
}
which means the following action on the torus parameters:
\eq{ \label{modulartrafo}
\tau \rightarrow -\frac{1}{\tau} \;, \quad \rho \rightarrow -\frac{1}{\rho} \;.
}
We will come back to these results in later examples, where we still have isometric directions but for example non-constant $B$-field. The above rules will then lead to configurations which are well described if one introduces so-called non-geometric fluxes. The motivation for the introduction of these new objects is the topic of the following section.

\subsection{Non-geometric fluxes}
\label{sec-ngeofluxes}

The existence of flux compactifications on manifolds T-dual to tori with NS-NS $H$-flux was realized in the work \cite{Shelton:2005cf, Shelton:2006fd}, by comparing four dimensional effective superpotentials of type IIA and type IIB orientifold compactifications. As a gui\-ding principle, dual superpotentials should map into each other ``bijectively''. From the absence of terms dual to the integrated $H$-flux, the existence of non-geometric $Q$- and $R$-flux degrees of freedom was concluded. 

In the following sections we first motivate the possibility of such fluxes and their local structure by purely geometrical considerations. After relating this discussion to world sheet Poisson sigma models, we give the most prominent example of an approximate flux compactification given by a three-torus with $H$-flux, which illustrates nicely the emergence of $f$- and $Q$-fluxes. To finally motivate also the existence of the $R$-flux, we sketch the original path of comparing four dimensional effective superpotentials. 

\subsubsection{Target space aspects: $O(d,d)$-transformations}
To get more information on a possible local form of the different geometric and non-geometric fluxes, we recall the generalized metric \eqref{genmetric}:
\eq{ \label{genmetric2}
\mathcal{H} =\,
\begin{pmatrix}
G - BG^{-1}B & BG^{-1} \\
-G^{-1}B & G^{-1}
\end{pmatrix} =\,
\begin{pmatrix}
1 & B \\
0 & 1
\end{pmatrix}
\begin{pmatrix}
G & 0 \\
0 & G^{-1}
\end{pmatrix}
\begin{pmatrix}
1 & 0 \\
-B & 1
\end{pmatrix} \;.
}
This is also valid for general spacetime dependent fields, i.e. one can interpret the generalized metric as a \emph{local} $B$-transform of the diagonal metric on the generalized tangent bundle $TM \oplus T^*M$. In the same way we can define the local $O(d,d)$ as the group of transformations leaving the metric $\eta$ defined in \eqref{oddmetric} invariant. This group can be decomposed locally into diffeomorphisms, local $B$-transformations and local $\beta$-transformations.

To get information about the structure of the various types of fluxes, we note the following. Taking a general $B$-transform,
\eq{ A_B =\,
\begin{pmatrix}
1 &b \\
0 & 1
\end{pmatrix}\;, \quad b_{ij} =\,-b_{ji} \;,
}
we see from \eqref{genmetric2} that a generalized metric with arbitrary $B$-field can be generated by such a transformation of the configuration with trivial $B$-field. Thus, to every $B$-transform, we associate the $H$-flux characterizing the background:
\eq{\label{Hflux}
H_{abc} =\, \partial_{[\underline{a}} b_{\underline{bc}]} \;.
}
Similarly, having a basis of the tangent bundle given by $\{\partial_i\}_{i=1,\dots, d}$ , a change of basis is given by a local $GL(d)$-transformation $a$. In the language of generalized tangent bundles such a transformation is given by:
\eq{ A_a =\,
\begin{pmatrix}
a & 0 \\
0 & (a^{-1})^t
\end{pmatrix}\;.
}
A basis change of this type introduces a non-holonomic basis $e_a =\, a_{a}{}^k\, \partial_k$ of the tangent bundle defining the \emph{geometric flux} as the structure constants of such a basis:
\eq{\label{geofluss12}
[e_a,e_b] =&\, \left( a_{[\underline{a}}{}^k\partial_k\,a_{\underline{b}]}{}^n \right) (a^{-1})_n{}^c \, e_c \\
&=\, f^c{}_{ab} \, e_c \;.
}
These changes of bases are parametrized by $d^2$ degrees of freedom and $B$-trans\-for\-ma\-tions are given by $\frac{1}{2}d(d-1)$ parameters. To get the full $O(d,d)$-group, we need a $\frac{1}{2}d(d-1)$-dimensional subgroup which is given by $\beta$-transformations:
\eq{
A_{\beta} =\,
\begin{pmatrix}
1 & 0 \\
\beta & 1
\end{pmatrix}\;, \quad \beta^{ij} =\, -\beta^{ji} \;.
}
The simplest combination of the potential $\beta$ with partial derivatives analogous to \eqref{geofluss12} is given by the following $Q$-flux:
\eq{\label{Qflux}
Q_a{}^{bc} =\, \partial_a \beta^{bc}\;.
}
Using the generalized tangent bundle we are able to give an interpretation of this object by seeing it in analogy to the geometric flux. As one can infer from the index structure in \eqref{Qflux}, one has to take a bracket of two \emph{one-form}-basis fields instead of vector fields. The corresponding bracket is given by the Koszul-bracket, introduced in section \ref{ch-math}, especially in \eqref{koszul}:
\eq{
\left[e^a, e^b\right]\ks =\,Q_c{}^{ab}\,e^c \;, \qquad e^a, e^b, e^c \in \Gamma(T^*M)\;.
}
Indeed, taking the standard basis $e^a =\, dx^a$, we recover the right local structure of the $Q$-flux as expected already in \eqref{Qflux}:
\eq{\label{Qfluxcoord}
\left[dx^a,dx^b\right]\ks =\, \partial_c \beta^{ab} \,dx^c \;.
}
Following the argument given in \cite{Aldazabal:2010ef}, to get information about the $R$-flux we look at the representation theory of the group $O(d,d)$. In the case $d=6$ which is relevant for torus compactifications, the $H$-flux has 20 components whereas the $f$- and $Q$-fluxes have 90 components. As physical fields, they should transform in a representation of $O(d,d)$. The smallest representation containing the three fluxes is the $\mathbf{220}$. Thus, there are 20 degrees of freedom missing, which is the dimension of an antisymmetric 3-vector (this is the only index structure which remains since a 3-form is already given by the $H$-flux).

It is intriguing to take once more the viewpoint of the cotangent bundle. As shown in \eqref{Qfluxcoord}, it is possible to express the $Q$-flux as structure constants of the Koszul-bracket similarly to the geometric flux in the case of the Lie-bracket. As was detailed in section \ref{ch-math}, one can associate the de Rham differential to the Lie bracket and looking at \eqref{Hflux}, we see that the $H$-flux is given by this differential acting on the potential $B$. Similarly, we can associate a differential $d_\beta$ to the Koszul bracket (as shown in \eqref{betad}) and acting on the bi-vector \mbox{$\beta =\,\tfrac{1}{2}\beta^{ij}\,\partial_i\wedge \partial_j$} results in a totally antisymmetric 3-vector which has the local form of the $R$-flux encountered in the literature, e.g. \cite{Grana:2008yw}:
\eq{\label{Rflux}
R =\, \tfrac{1}{2} d_\beta \beta\;, \quad \leftrightarrow \quad R^{ijk} =\, \beta^{[\underline{i} n}\partial_n \beta^{\underline{jk}]} \;. }

To sum up, we identified the local structure of the $H$-flux, the geometric flux and the $Q$-flux and identified their potentials. Whereas the $H$-flux is the differential of the $B$-field and the geometric flux is given by the structure constants of the Lie-bracket, the $Q$-flux can be interpreted as the structure constants of the Koszul-bracket. Finally by considering the representation theory of the symmetry group $O(d,d)$, we could identify the index structure of the non-geometric $R$-flux and by using the differential associated to the Koszul bracket, we were able to express it as the Poisson-differential $d_\beta$ of the bi-vector potential $\beta$.

\subsubsection{Worldsheet aspects: Poisson sigma models}
\label{sec-Poissonsigma}
In the last section, we encountered the bi-vector $\beta = \tfrac{1}{2}\beta^{ij}\,\partial_i \wedge \partial_j$ as an important object to describe the local structure of the non-geometric $Q$- and $R$-fluxes. On the level of the world sheet, this kind of geometric object plays an important role in the formulation of \emph{Poisson sigma models} (e.g. \cite{Schaller:1994es, Schaller:1995xk, Alekseev:2004np} and references therein). In this section we are going to sketch some aspects of this huge field of mathematical physics which are important for us. Giving an exhaustive treatment of this field would go far beyond the scope of this work and therefore we refer the reader to the literature for more information and results.
To write down a simple Poisson sigma model \cite{Baulieu:2001fi}, let us introduce the dynamical fields of the theory. We consider a two-dimensional world sheet $\Sigma$ with coordinates $(\sigma^0,\sigma^1)=(\tau, \sigma)$. The target space coordinate fields $X^i$ are given by the components of the map $X : \Sigma \rightarrow {\cal M}$, where ${\cal M}$ is the target space manifold. To illustrate the situation, consider the following commutative diagram:
\eq{
\begin{array}{ccc}
T^*\Sigma & \stackrel{X^*}{\longleftarrow} & T^*{\cal M} \\
\downarrow & & \downarrow \\
\Sigma & \stackrel{X}{\longrightarrow} & {\cal M}
\end{array}
}
where $X^*$ denotes the pullback operation. Thus the derivatives $dX^i$ are elements of $X^* (T^* {\cal M})$, i.e. differential forms on $\Sigma$ and can be contracted with $T^*{\cal M}$-valued differential forms $p \in \Gamma(T^*\Sigma) \otimes T^*{\cal M}$ having components $p_i =\, p_{i \alpha} \, d\sigma^\alpha$. Finally, introducing the background fields $\beta^{ij}(X)$ and $\theta^{ij}(X)$ (which can be interpreted as the metric and bi-vector on the target space), the action for the two-dimensional Poisson sigma model\footnote{Note that we added a term for the metric $g^{ij}$ and therefore the model is not topological in contrast to the original constructions.} is given by \cite{Baulieu:2001fi}:
\eq{\label{Poissonsigmamodel}
S =\, \int_{\Sigma} i\,p_i \wedge dX^i + \frac{1}{2} g^{ij}(X) p_i \wedge \star _2 \, p_j + \frac{1}{2} \beta^{ij}(X)\,p_i \wedge p_j \;,
}
where $\star _2$ denotes the two-dimensional Hodge-star operator and we dropped terms that are related to the dilaton. To simplify notation, we also drop factors of $2\pi \alpha'$ because they are not important in this section. 

The action \eqref{Poissonsigmamodel} is the starting point to get two theories by eliminating either the momenta $p_i$ or the currents $dX^i$. On the one hand the standard Polyakov sigma model \eqref{hamiltonsigmamodel} is recovered by integrating out the $p_i$-fields. In the notation introduced above, the resulting action is:
\eq{
  S_P =\, \int_\Sigma \, G_{ij}\, dX^i \wedge \star _2\, dX^j + B_{ij}\, dX^i \wedge dX^j \;.
}
On the other hand, eliminating the forms $dX^i$ results in an action of a very similar structure, but with $g^{ij}$ and $\beta^{ij}$ as background fields and $p_i$ as dynamical variables:
\eq{
S' =\, \int_\Sigma \, g^{ij}\,p_i \wedge \star_2 \, p_j + \beta^{ij}\,p_i \wedge p_j \;.
}
It is interesting to note the relation between the two different sets of background fields as was shown e.g. in \cite{Baulieu:2001fi,Halmagyi:2008dr}:
\eq{\label{basicfieldredef1}
G_{ij} =\,& (g^{-1} - \beta g \beta )^{-1}_{ij} \;, \\
B_{ij} =\,& G_{ik} g_{jm} \beta^{km} = (g^{-1} - \beta g  \beta )^{-1}_{ik} g_{jm} \beta^{km} \;.
}
For later reference, we also note the converse relations:
\eq{\label{basicfieldredef2}
g_{ij} =\,& G_{ij} - B_{im}G^{mn} B_{nj} \;, \\
\beta^{ij} =\,& -\left[(G + B)^{-1}\right]^{im}B_{mn}\left[(G - B)^{-1}\right]^{nj} \;.
}
Due to the equivalence of the two models and their similarity in structure, we take the previous field redefinition as distinguished in the context of Poisson sigma models. Indeed, as the results of the last sections show, the set $(g,\beta)$ will be more appropriate to describe situations with $Q$- and $R$-fluxes, whereas $(G,B)$ will be the choice for describing geometric configurations with $H$- and $f$-fluxes.

Remembering the construction of the Hamiltonian for the Polyakov sigma model with the generalized metric ${\cal H}(G,B)$ in section \ref{sec-torusodd}, especially \eqref{Hamilton} and \eqref{genmetric}, we can use the previous field redefinition to get the generalized metric in terms of the new variables:
\eq{
{\cal H}(g,\beta)=\,
\begin{pmatrix}
g & -g\beta \\
\beta g & g^{-1} - \beta g \beta
\end{pmatrix} =\,
\begin{pmatrix}
1 & 0 \\
\beta & 1
\end{pmatrix}
\begin{pmatrix}
g & 0 \\
0 & g^{-1}
\end{pmatrix}
\begin{pmatrix}
1 & -\beta \\
0 & 1
\end{pmatrix}\;,}
i.e. as a $\beta$-transform of the diagonal form of the generalized metric written in terms of $g$. In the following section we will see at a concrete example, how the two sets of variables $(G,B)$ and $(g^{-1},\beta)$, which are often referred to as \emph{frames}, can be used to describe configurations T-dual to a three-torus with $H$-flux. 

\subsubsection{Example: 3-torus with $H$-flux}
\label{subsec-t3h}
Let us now discuss one of the most important examples where it is possible to perform T-dualities in two different directions, starting from a configuration which has non-vanishing $H$-flux. The dual backgrounds can be characterized by geometric flux $f$ after one T-duality and by non-geometric $Q$-flux after the second T-duality. It will turn out that starting with background fields $(G,B)$, the $Q$-flux background is described most conveniently in the frame with variables $(g,\beta)$.

\paragraph{Torus with $H$-flux}
The model we are going to analyse is given by a three-dimensional torus with coordinates $x,y,z$  (which can be considered to be part of the internal space) with constant NS-NS $H$-flux and constant dilaton. It is given by the following flat metric and gauge choice for the $B$-field:
\eq{\label{torusH}
G =\, \begin{pmatrix}
R_1^2 & 0 & 0 \\
0 & R_2 ^2 & 0 \\
0 & 0 & R_3 ^2
\end{pmatrix}\;, \quad B =\,
\begin{pmatrix}
0 & Nz & 0 \\
-Nz & 0 & 0 \\
0 & 0 & 0
\end{pmatrix}\;,
}
where $N$ is a constant integer. Thus we get the constant flux $H=\,N\,dx \wedge dy \wedge dz$. Note that this configuration is not a solution to the string equations of motion: Whereas the Bianchi identity for the $H$-flux is fulfilled trivially, the string equations of motion are only approximately true in the so-called \emph{dilute flux limit}:
\eq{
\left(\frac{N}{R_1 R_2 R_3}\right) \rightarrow 0 \;.
}
For the rest of the example, we only consider this limit. It turns out that the T-dual configurations are also solutions to the supergravity equations of motion up to linear order in the flux. In addition, let us simplify our notation for the rest of the calculations to point out only the important features of the different T-duality frames. We choose units where all the radii are equal to one and we only take one unit of $H$-flux, i.e.
\eq{\label{konvention}
G =\, \begin{pmatrix}
1& 0 & 0 \\
0 & 1 & 0 \\
0 & 0 & 1
\end{pmatrix}\;, \quad B =\,
\begin{pmatrix}
0 & z & 0 \\
-z & 0 & 0 \\
0 & 0 & 0
\end{pmatrix} \;, \quad H=\,dx\wedge dy \wedge dz \;.
}
A very convenient way of looking at the above geometry is to describe it as a fibration of a two-torus (coordinates $x,y$) over a base-$S_z^1$. The torus is then described by the following parameters, introduced in section \ref{sec-torusodd}:
\eq{
\tau =\, i \;, \quad \rho =\, z + i \;,
}
and we can use the T-duality relations shown in section \ref{sec-torusodd} if we dualize in the direction of isometries. Before starting with the first T-duality, let us note that the configuration \eqref{konvention} has the following behavior if we go around the base:
\eq{
z \rightarrow z+1 :\quad B \rightarrow B + dA\;, \; A =\, xdy-ydx \;.
}
Whereas the metric is trivial,  to get a well defined configuration we should use coordinate charts where the $B$-field changes by a gauge transformation on the overlap. We want to compare this \emph{geometric} situation with the configuration after two T-dualities where it will turn out that the transition functions between two charts have to be extended to include also T-dualities.

\paragraph{Geometric $f$-flux}
The above geometry allows for a T-duality in the $x$-direction. Using equation \eqref{0-direction} of section \ref{sec-torusodd} for the transformation of the fibered torus and noting that the base circle remains unchanged, the resulting configuration is given by:
\eq{\label{conf0}
G =\,
\begin{pmatrix}
1 & z & 0 \\
z & z^2 + 1 & 0 \\
0 & 0 & 1
\end{pmatrix}\;, \quad B = 0 \;.
}
To diagonalize the above metric, we introduce the following \emph{non-holonomic} basis:
\eq{
e^x =\, dx + z dy\;, \quad e^y =\, dy\;, \quad e^z =\, dz \;.
}
To find out the structure constants of the new basis, we can either determine the corresponding basis of the tangent bundle or directly calculate the exterior differential of the basis one-forms:
\eq{
de^x =\, -\frac{1}{2}f^x{}_{mn}\, e^m \wedge e^n \; \rightarrow f^x{}_{yz} = -f^x{}_{zy} =\,1 \;,
}
where the rest of the structure constants vanish. Thus this configuration has vanishing $H$-flux, but non-vanishing geometric $f$-flux.

\paragraph{Non-geometric $Q$-flux}
The preceding configuration \eqref{conf0} does not depend on the second coordinate and therefore the Buscher rules allow for a second T-duality in this direction. The base of the fibration still remains unchanged and therefore we use equation \eqref{01-direction} of section \ref{sec-torusodd} to get the result for the metric and $B$-field in this case:
\eq{ \label{conf01}
G =\,
\begin{pmatrix}
 \frac{1}{z^2 + 1} & 0 & 0 \\
0 & \frac{1}{z^2 + 1} & 0 \\
0 & 0 & 1
\end{pmatrix} \;, \quad B=\,
\begin{pmatrix}
0 & \frac{-z}{z^2 + 1} & 0 \\
\frac{z}{z^2 + 1} & 0 & 0 \\
0 & 0 & 0
\end{pmatrix} \;.
}
The new torus parameters depend on the third coordinate and are determined by the relation \eqref{modulartrafo}. Explicitely we have:
\eq{
\tau' =\, -\frac{1}{\tau} =\, i \;, \qquad \rho' =\, -\frac{1}{\rho} =\, -\frac{1}{z+i} \;.
}
It is interesting to note the transformation behavior of the two parameters if we go around the base circle $z\rightarrow z+1$:
\eq{\label{paramono}
\tau'(z+1) =\, \tau'\;, \quad \rho'(z+1) =\, \frac{\rho'}{-\rho' + 1} =:
\begin{pmatrix}
1 & 0 \\
-1 & 1
\end{pmatrix} (\rho') \;,
}
i.e. the complexified K\"ahler class is changed by the specific modular transformation \eqref{paramono} which is also called a \emph{parabolic monodromy}. Therefore the transition functions for our fields are no longer only gauge transformations and diffeomorphisms but involve more complicated transformations. To characterize the latter in a clear form, we choose the new variables $g, \beta$ introduced in \eqref{basicfieldredef2}. In our case they are given by
\eq{\label{newframe}
g =\,
\begin{pmatrix}
1 & 0 & 0 \\
0 & 1 & 0 \\
0 & 0 & 1
\end{pmatrix} \;, \quad \beta =\,
\begin{pmatrix}
0 & z & 0 \\
-z & 0 & 0 \\
0 & 0 & 0
\end{pmatrix} \;.
}
Thus, a transformation $z\rightarrow z+1$ results in a constant shift of the bi-vector $\beta$. This is an example of a $\beta$-transform which is not a standard gauge transformation or a diffeomorphism and we thus have to extend the set of transition functions of the resulting torus fibration by $\beta$-transformations. Manifolds with this extended group of transition functions are called \emph{T-folds}.

Finally, let us note that the above configuration has non-vanishing $Q$-flux which is given by:
\eq{
Q_z{}^{xy} =\, \partial_z \beta^{xy} =\, 1 \;.
}
We observe that it is invariant under the constant shift of the bi-vector and thus is a well defined quantity on the T-fold. To sum up, we started with a geometric configuration with constant $H$-flux which was dualized into a configuration having vanishing $H$-flux but non-vanishing geometric $f$-flux. The latter was determined by the structure constants of a non-holonomic basis. Finally after T-dualizing a second time, the resulting configuration was described properly with dual variables $(g,\beta)$ and the resulting non-vanishing flux was given by the non-geometric $Q$-flux. As a result we confirmed the main part of the T-duality chain of the introduction:
\eq{ \label{3/4chain}
H_{xyz} \rightarrow f^x{}_{yz} \rightarrow Q_z{}^{xy} \;.
}
The final T-duality which would result in a frame having non-vanishing $R$-flux cannot be described by the previous construction because there is no isometric direction left to apply the Buscher rules. Nevertheless one considers this background to have one unit of $R$-flux $R^{xyz} = 1 $. The physical motivation for the existence of this flux, at least from the four-dimensional perspective, will be the topic of the next section.

\subsubsection{The four-dimensional perspective}
In the last three subsections, we analysed the structure of the geometric $H$- and $f$-fluxes and added objects which we called non-geometric $Q$- and $R$-fluxes. They were first motivated by considering the generalized tangent bundle and the representation theory of the $O(d,d)$-group. By considering world-sheet Poisson sigma models, we identified two distinct set of variables which turned out to be useful to describe configurations T-dual to a three-torus with $H$-flux, including $f$-flux and $Q$-flux geometries. However, up to now there was no \emph{physical} reason to include also the non-geometric $R$-flux. This will be the main point of this section, where we sketch the original motivation of non-geometric fluxes given in \cite{Shelton:2005cf, Shelton:2006fd}. We closely follow the detailed exposition of \cite{Wecht:2007wu}, where also additional information can be found.

To begin with, let us consider type IIB string theory. In order to set our notation, recall its field content:
\begin{itemize}
\item R-R $p\,$-forms $C_p$, $\;p \in \{ 0,2,4\}$, field strength: $F_{p+1} = dC_p$.
\item Axio-dilaton $\; S = C_0 + ie^{-\phi}$, with dilaton $\phi$.
\item $G_3$-flux $\; G_3 = F_3 - S H$, with NS-NS $H$-flux $H = dB$.
\item $\tilde{F}_5 = dC_4 - \tfrac{1}{2}\, C_2 \wedge H + \tfrac{1}{2}\, F_3 \wedge B$.
\end{itemize}
These have to be supplemented by the constraints of self-duality of $F_5$ and the non-standard Bianchi-identity $dF_5 = H\wedge F_3$. Now let us compactify this theory on the orientifold ${\cal X} = (T^2)^3/ \Omega \mathbb{Z}_2 (-1)^{F_l}$, where $\Omega$ is the world-sheet parity, which together with the left-moving fermion number operator $(-1)^{F_l}$ reduces supersymmetry from $N = 8$ to $N = 4 $ in four dimensions. The $\mathbb{Z}_2$ denotes the spatial reflection on the internal manifold: $x^i \mapsto -x^i, i \in \{1,\dots,6\}$.

It is possible to show (e.g. \cite{Kachru:2002he}) that the presence of fluxes further reduces supersymmetry to $N = 1 $ and in addition one restricts to three moduli fields given by the axio-dilaton $S$, the complex structure $\tau$ and the K\"ahler class $U$ of the torus $T^2$ (note that for simplicity we assume three identical copies of $T^2$ in the internal manifold). The remaining data of the four-dimensional effective theory are the K\"ahler potential whose form is not needed later and the superpotential. The latter can be computed by the Gukov-Vafa-Witten formula \cite{Gukov:1999ya} :
\eq{\label{GVWformula}
W =\, \int_{\cal X} \; \left( F_3 - S\,H \right) \wedge \omega \;,
}
where $\omega$ is the holomorphic three-form on ${\cal X}$. To write it in components, let us introduce coordinates on each torus, labeled by $(\alpha,i),(\beta,j),(\gamma,k)$, meaning that we have the complex coordinate $z^1 := x^\alpha + \tau x^i$ on the first $T^2$ and similar for the other copies. Thus we get for the holomorphic three-form:
\eq{
\omega = \,  dz^1 \wedge dz^2 \wedge dz^3 =\, &dx^\alpha \wedge dx^\beta \wedge dx^\gamma + \tau(dx^i \wedge dx^\beta \wedge dx^\gamma +\dots ) \\
&  + \dots  + \tau^3 \, dx^\alpha \wedge dx^\beta \wedge dx^\gamma \;.
}
With the help of this expression, we can now calculate the superpotential by doing the integral \eqref{GVWformula}. The result is a cubic polynomial in $\tau$ with coefficients given by the $F_3$ flux components, integrated over a three-dimensional internal cycle and a cubic polynomial in $\tau$ with coefficients given by the integrated $H$-flux, more precisely we have \cite{Wecht:2007wu}:
\eq{\label{IIBsuper}
W_{IIB} =\, a_0 -3a_1\tau &+ 3a_2 \tau^2 - a_3 \tau^3 + S\left( -b_0 + 3b_1 \tau - 3b_2 \tau^2 + b_3 \tau^3 \right) \;.
}
Let us also specify the fluxes which determine the corresponding coefficients in \eqref{IIBsuper}. They are given in table \ref{IIBtable}, where we denote by $\bar F_{ij\alpha}$ the $F_3$-flux integrated over the cycle given by the coordinates $(i,j,\alpha)$, and similar for the other combinations.

\begin{table}[h]
\begin{center}
\begin{tabular}{|c|c|c|}
\hline
term & integral IIB flux & coefficient \\
\hline 
$\tau^0$ & $\bar F_{ijk}$ & $a_0$ \\
$\tau$ & $\bar F_{ij\gamma}$ & $a_1$ \\
$\tau^2$ & $\bar F_{i\beta \gamma}$ & $a_2$ \\
$\tau^3$ & $\bar F_{\alpha \beta \gamma}$ & $a_3$ \\
$S\tau^0$ & $\bar H_{ijk}$ & $b_0$ \\
$S\tau$ & $\bar H_{\alpha j k}$ & $b_1$ \\
$S\tau^2 $ & $\bar H_{i \beta \gamma}$ & $b_2$ \\
$S\tau^3$ & $\bar H_{\alpha \beta \gamma}$ & $b_3$ \\
\hline
\end{tabular}
\caption{\small Integrated fluxes corresponding to the coefficients of the four-dimensional effective type IIB superpotential \cite{Wecht:2007wu}. \label{IIBtable}}
\end{center}
\end{table}

The coefficients $a_i$ and $b_i$ are not independent: As an example, the integrated Bianchi identity of $\tilde F_5$ contains the term $\int \, H \wedge F_3$ and is determined by the sum of the orientifold-plane charges of the theory. This gives a constraint of the type:
\eq{
a_0 b_3 - 3a_1 b_2 + 3a_2 b_1 - a_3b_0 = \textrm{const.}
}

As a next step, let us compare \eqref{IIBsuper} with the superpotential of a type IIA orientifold compactification on a twisted torus. The latter was defined in section \ref{subsec-t3h} and is characterized by geometric $f$-flux. To compare type IIB and type IIA, we assume we can T-dualize the directions labeled by Greek indices in the internal torus. The four-dimensional effective type IIA superpotential can be computed by directly reducing the ten dimensional theory to four dimensions, as was shown in \cite{Hull:2005hk}, from which we only take the structure of the result. Denoting the moduli in the IIA case also by $\tau, S$ and $U$, we have:
\eq{
W_{IIA} =\, &a_0 -3a_1\tau + 3a_2 \tau^2 - a_3\tau^3 \\
&+ S(-b_0 + 3b_1 \tau) + 3U\left(c_0 +(c'_1 + c''_1 -c'''_1)\tau \right) \;,
}
where the coefficients are determined again by the integrated R-R fluxes $\bar F_p$ (p even), $\bar H$ and the structure constants $f$ of the twisted torus as shown in table \ref{IIAtable}.
\begin{table}
\begin{center}
 \begin{tabular}{|c|c|c|}
\hline
term & integral IIA flux & coefficient \\
\hline
$\tau^0$ & $\bar F_{\alpha i  \beta j \gamma k}$ & $a_0$ \\
$\tau$ & $\bar F_{\alpha i \beta j }$ & $a_1$ \\
$\tau^2$ & $\bar F_{\alpha i }$ & $a_2$ \\
$\tau^3$ & $\bar F_0 $ & $a_3$ \\
$S\tau^0$ & $\bar H_{ijk}$ & $b_0$ \\
$S\tau$ & $f^\alpha{}_{jk}$ & $b_1$ \\
$U $ & $\bar H_{\alpha \beta k}$ & $c_0$ \\
$U\tau$ & $f^j{}_{k \alpha}, f^i{}_{\beta k}, f^\alpha{}_{\beta \gamma}$ & $c''_1, c' _1, c'''_1$ \\
\hline
\end{tabular}
\caption{\small Integrated fluxes and structure constants corresponding to the coefficients of the four-dimensional effective type IIA superpotential \cite{Wecht:2007wu}. \label{IIAtable}}
\end{center}
\end{table}
Comparing the two tables \ref{IIBtable} and \ref{IIAtable}, we observe that the R-R fluxes map to each other in the standard way (e.g. \cite{Green:1996bh}). In addition, the $b_0$-coefficients remain unchanged as they are determined by the $H$-flux in the directions which are not dualized. An interesting phenomenon can be seen in the coefficient of $S\tau$: The $H$-flux gets mapped to the geometric flux in a way we already encountered in the last section: $\bar H_{\alpha j k} \mapsto f^{\alpha}{}_{jk}$. There is a mismatch in the other coefficients, which directs us to the following idea: By introducing more flux-determined coefficients in the two superpotentials such that the fluxes are related in a way suggested by \eqref{3/4chain}, it should be possible to write down a superpotential whose coefficients map to one another by performing T-dualities in the directions $(\alpha,\beta, \gamma)$. To do this, we have to extend \eqref{3/4chain} to include also the non-geometric $R$-flux:
\eq{\label{fullchain}
H_{\alpha \beta \gamma} \; \xleftrightarrow{\;\;T_\alpha\;\;}\; f^\alpha{}_{\beta \gamma}\, \xleftrightarrow{\;\; T_\beta \;\;} \; Q_\gamma{}^{\alpha \beta} \; \xleftrightarrow{\;\;T_\gamma\;\;}\; R^{\alpha \beta \gamma} \;.
}
To arrive at a duality invariant superpotential, the authors of \cite{Shelton:2005cf, Shelton:2006fd} started from the above described IIA case and performed the following steps:
\begin{itemize}
\item  T-dualize in the directions $(\alpha, \beta , \gamma)$ to get to the IIB theory above. Use \eqref{fullchain} to get the NS-NS-fluxes and use the standard T-duality rule for the R-R fluxes. If there is no corresponding flux on the IIB side (\ref{IIBtable}), add a new coefficient.
\item On every $T^2$, exchange the two torus directions, e.g. $x^\alpha + \tau x^i \leftrightarrow x^i + \tau x^\alpha$, which means $1 \leftrightarrow \tau^3$ and $\tau \leftrightarrow \tau^2$. This gives the same theory, since in the above IIB case the orientifold planes are unaffected by this rotation.
\item T-dualize back in the directions $(\alpha, \beta, \gamma)$ to get the corresponding additional coefficients in the IIA case.
\end{itemize}
As an example, the integral flux $\bar F_{\alpha i \beta j \gamma k}$ in IIA will be dualized to $\bar F_{ijk}$, which gets exchanged to $\bar F_{\alpha \beta \gamma}$ which then is dualized back to $\bar F_0$. Thus no new flux is needed, which is also true for the other coefficients determined by the R-R fluxes. But there are new terms needed for the NS-NS part, as one can see for example by the $S\tau $ coefficients. Performing the analysis for all the fluxes, we get the following invariant superpotential \cite{Shelton:2005cf, Shelton:2006fd}:
\eq{\label{invariantsuperpot}
W = \, &a_0 -3a_1 \tau + 3a_2 \tau^2 - a_3 \tau^3 + S(-b_0 + 3b_1\tau -3b_2 \tau^2 + b_3 \tau^3) \\
& + 3U(c_0 + (c'_1 + c''_1 - c'''_1)\tau - (c'_2 + c''_2 + c'''_2)\tau^2 -c_3 \tau^3 )\;,
}
where the coefficients are determined by the fluxes (including the non-geometric ones) as given in table \ref{IIABtable}.

\begin{table}
\begin{center}
 \begin{tabular}{|c|c|c|c|}
\hline
term & integral IIA flux & integral IIB flux & coefficient \\
\hline
$\tau^0$ & $\bar F_{\alpha i  \beta j \gamma k}$ & $\bar F_{ijk}$ &  $a_0$ \\
$\tau$ & $\bar F_{\alpha i \beta j }$ & $\bar F_{ij\gamma}$ &  $a_1$ \\
$\tau^2$ & $\bar F_{\alpha i  }$ & $\bar F_{i\beta \gamma}$ & $a_2$ \\
$\tau^3$ & $\bar F_0 $ & $\bar F_{\alpha \beta \gamma} $ & $a_3$ \\
$S\tau^0$ & $\bar H_{ijk}$  & $\bar H_{ijk}$ & $b_0$ \\
$U $ & $\bar H_{\alpha \beta k}$ & $Q_k{}^{\alpha \beta}$ & $c_0$ \\
$S\tau$ & $f^\alpha{}_{jk}$ & $\bar H_{\alpha j k}$ & $b_1$ \\
$U\tau$ & $f^j{}_{k \alpha}, f^i{}_{\beta k}, f^\alpha{}_{\beta \gamma}$ & $ Q_k{}^{\alpha j}, Q_k{}^{i \beta}, Q_\alpha {}^{\beta \gamma} $ & $c''_1, c' _1, c'''_1$ \\
$S\tau^2$ & $ Q_k{}^{\alpha \beta}$ & $\bar H_{i\beta \gamma}$ & $b_2$ \\
$U\tau^2$ & $Q_\beta{}^{\gamma i}, Q_\gamma{}^{i\beta}, Q_k{}^{ij}$ & $Q_\gamma{}^{i \beta}, Q_\beta{}^{\gamma i}, Q_k{}^{ij}$ & $c''_2, c'_2, c'''_2$ \\
$S\tau^3$ & $R^{\alpha \beta \gamma}$ & $ \bar H_{\alpha \beta \gamma}$ & $b_3$ \\
$U\tau^3$ & $R^{ij\gamma}$ & $Q_\gamma{}^{ij}$ & $c_3$ \\
\hline
\end{tabular}
\caption{\small Integrated fluxes corresponding to the coefficients of the duality invariant four-dimensional effective superpotential \cite{Wecht:2007wu}.\label{IIABtable}}
\end{center}
\end{table}

Finally, there are constraints which the fluxes have to satisfy. We are only interested in the equations for the NS-NS fluxes, as they will be important for later parts of this work. Writing down the Bianchi identity for the $H$-flux in a non-holonomic basis with structure constants $f$, we get:
\eq{
\partial_{[\underline{a}} H_{\underline{bcd}]} + f^n{}_{[\underline{ab}} H_{n\underline{cd}]} = 0\:.
}
Noting that we only have constant (integrated) fluxes for the coefficients in the superpotential \eqref{invariantsuperpot}, the first term drops out and we get the relation $f^n{}_{[\underline{ab}} \bar H_{n\underline{cd}]} = 0$. Now taking successively the respective fluxes in table \ref{IIABtable} with one of the indices not summed over and in upper position, we get four additional constraints as is described in detail in \cite{Wecht:2007wu}:
\eq{\label{constantbianchis}
& \bar H_{n[\underline{ab}} f^n{}_{\underline{cd}]} = 0 \;, \\
& f^a{}_{n[\underline{b}} f^n{}_{\underline{cd}]} + \bar H_{n[\underline{bc}} Q_{\underline{d}]}{}^{an} = 0 \;, \\
& Q_n{}^{[\underline{ab}]}f^n{}_{[\underline{cd}]} - 4 f^{[\underline{a}}{}_{n [\underline{c}} Q_{\underline{d}]}{}^{\underline{b}]n} + \bar H_{n[\underline{cd}]} R^{[\underline{ab}] n} = 0 \;,\\
& Q_n{}^{[\underline{ab}}Q_d{}^{\underline{c}]n} + f^{[\underline{a}}{}_{nd} R^{\underline{bc}]n} = 0 \;,\\
& Q_n{}^{[\underline{ab}}R^{\underline{cd}]n} = 0 \;.
}
This means that we cannot turn on every flux in a specific direction at the same time. We will come back to a generalization of these equations in a completely different context by mathematical considerations in the next section on Bianchi identities and Courant algebroids.

To sum up, by physical arguments we have shown in a somewhat heuristic way that it is reasonable to include the non-geometric $R$-flux into the T-duality chain which starts from a torus background with $H$-flux. The mathematical appearance of such an object was motivated in the previous sections and from a four-dimensional effective perspective, the above considerations suggest that we should also include this flux into string physics. However, as was argued in the previous section, there is no clear target space interpretation of a background carrying $R$-flux, because we would have to perform a T-duality in a non-isometric direction. There are indirect arguments \cite{Bouwknegt:2004ap, Shelton:2006fd} that this target space may be rather exotic: Assume we can localize points in such a space, which would cor\-res\-pond to $D0$-branes. T-dualizing back to the $H$-flux case would lead to $D3$-branes wrapping a three-dimensional orientable submanifold of the internal space. It is a well-known mathematical fact that every orientable manifold of dimension smaller or equal to four is $\textrm{Spin}^c$ and therefore has vanishing third Stiefel-Whitney class. But this violates the Freed-Witten anomaly cancellation condition if $H$ is topologically non-trivial. From this problem, one expects that on a space with $R$-flux, one cannot localize points and therefore a description by differentiable manifolds seems not to be appropriate. Indeed, as we see in the next section by conformal field theory arguments, such target spaces may be described by considering non-associative structures or even n-ary products on the algebra of functions on the space.

\section{Results}
\label{ch-results}

\subsection{Conformal field theory with $H$-flux and T-duality}
\label{sec-cft}

As was pointed out in the last section, the nature of non-geometric fluxes is not completely clear yet. Especially in the case of the $R$-flux, which would be the third T-duality in the chain \eqref{3/4chain}, the Buscher rules are not applicable due to the missing of an isometry in the $Q$-flux configuration. As a consequence, the geometry of $R$-flux configurations is not accessible at the target space level. Topological arguments \cite{Bouwknegt:2004ap, Mathai:2004qq} show that the resulting geometries might be very exotic, like non-associative spaces. On the contrary, at the level of the world sheet, T-duality can be implemented as a reflection of right-moving sectors of the closed string. Such an operation is not possible for point particles and thus the target space interpretation of such left-right \emph{asymmetric} theories is not clear and may also go beyond standard differential geometry. But as we reviewed in section \ref{ch-quant}, by computing correlation functions one can in some cases extract target space information like non-commutativity of coordinates \eqref{ncspacetime}.

\subsubsection{Motivation: Non-associative structures}
In section \ref{ch-quant} we reviewed the derivation of non-commutativity of spacetime by computing open string correlation functions. A crucial point was the dependence of the propagator of the theory on the ordering of two points on the real line \eqref{limprop}. This was the source of non-commutativity as a spacetime-property seen in the commutator \eqref{ncspacetime} of two target space coordinate fields.

In the case of closed string theory, the first hint that a non-vanishing commutator should be replaced by a structure involving three spacetime coordinate fields, was given in \cite{Blumenhagen:2010hj, Lust:2010iy} by computing the equal-time cyclic double-commutator of three local coordinate fields $X^a(z_1,\bar z_1)$, $X^b(z_2, \bar z_2)$, $X^c(z_3, \bar z_3)$ in the case of an $SU(2)_k$ Wess-Zumino-Witten model with $H$-flux. The result for the double commutator was determined by the background-flux and was given by
\eq{
\left[X^a,X^b,X^c\right] &=\, \lim_{z_i \rightarrow z} \, \left[X^a(z_1, \bar z_1), [X^b(z_2,\bar z_2), X^c(z_3, \bar z_3)]\,\right] + \mathrm{cycl.} \\
&= \, \epsilon\,\theta^{abc} \;,
}
where $\epsilon = 0$ for the original case of the $H$-flux, but $\epsilon = 1$ after performing T-dualities in an odd number of directions. In addition, it was argued that $\theta^{abc} \propto H^{abc}$.

This remarkable result obtained from the world-sheet perspective is a power\-ful hint that the target space geometries obtained by an odd number of T-dualities from a standard $H$-flux background may carry 3-structures and may be non-associative\footnote{For a phase-space realization of this type of algebra, see \cite{Lust:2010iy}.} . We take this as a motivation to study closed string conformal field theory in the presence of $H$-flux and T-dual configurations thereof. It will turn out in the following sections that this is possible up to linear order in the flux. By computing correlation functions of coordinates and tachyon vertex operators we will get more information about the structure of the algebra of functions on the target space.

\subsubsection{Conformal field theory with $H$-flux}
Instead of the exactly solvable WZW-model used in \cite{Blumenhagen:2010hj} we consider a simpler example of flat space with constant $H$-flux (see also section \ref{subsec-t3h}). Let us again recall the metric and flux to set our conventions
\eq{
  \label{setup_01}
   ds^2=\sum_{a=1}^N \bigl(dX^a\bigr)^2, \hspace{50pt}
   H=  \frac{2}{{\alpha'}^2}\,  \epsilon_{abc}\,  dX^a\wedge dX^b\wedge dX^c\; .
}
Note that this ansatz does not solve the string equations of motion. Looking for example at the renormalization group equation for the graviton,
\eq{
  \label{targeteom}
  0 =  \alpha' R_{ab}-\frac{\alpha' }{4}\: H_{a}{}^{cd}\, H_{bcd}
  +2\alpha' \nabla_a\nabla_b\Phi
    +O({\alpha'}^2) \;,
}
we see that the ansatz \eqref{setup_01} is only a solution up to first order in the $H$-flux. It turns out that this is true also for the other equations of motion and therefore we expect to get a proper conformal field theory description for this background \emph{up to first order in the flux}. In the following we are going to formulate this theory in order to implement T-duality by a simple reflection of the right moving part of the closed string. Starting out by the geometric model \eqref{setup_01}, we can then compute correlation functions of coordinate fields and vertex operators in T-dual situations and try to extract information on their target space interpretation.

\paragraph{Sigma model and classical solution}

Let us start by the closed string sigma model
\eq{
  \label{closedstringsigma}
  \mathcal S=\frac{1}{2\pi\alpha'}\int_\Sigma d^2 z\, \bigl(\,
  g_{ab} + B_{ab} \bigr)\, \partial X^a \,\ov\partial X^b
  \;,
}
and take an the following background fields to implement the ansatz \eqref{setup_01}:
\eq{\label{ansatzlinH}
g_{ab} =\, \delta_{ab}\;, \qquad B_{ab} =\, \frac{1}{3} H_{abc}X^c \;.
}
The classical equations of motion for the fields $X^a$ are given by
\eq{\label{classeom}
\partial \bar \partial X^a = \, \frac{1}{2} H^a{}_{bc} \partial X^b \bar \partial X^c \;,
}
where we raise and lower indices with the metric $g_{ab} = \delta_{ab}$. Thus, at zeroth order in the $H$-flux, the $B$-field vanishes and we get the free theory of a closed string on the sphere. Its solution can be split into left- and right-moving parts
\eq{
   \mathsf X^a_0(z,\ov z) =  \mathsf X^a_L(z) + \mathsf{X}{}^a_R(\ov z) \;.
}
This is not possible at linear order in the flux, for which a solution to the equations of motion is readily calculated to be
\eq{
  \label{firstordersolution}
  \mathsf{X}_1^a(z,\ov z) =  \mathsf X_0^a(z,\ov z) +
  {\textstyle \frac{1}{2}}\hspace{0.5pt}H^a{}_{bc} \,  \mathsf X^b_L(z) \,
  \mathsf {X}{}^c_R(\ov z)  \;.
}
This is already a hint that for a conformal field theory also at linear order in the flux, we have to redefine the coordinate fields in an appropriate way. We will see this in the definition of currents in the following subsections.

\paragraph{Perturbation theory}
Let us shortly recall our way to compute correlation functions. The sigma model action can be split into a free and an interaction part if we use the ansatz \eqref{ansatzlinH}:
\eq{\label{perturbationansatz}
\mathcal S =\, \frac{1}{2\pi \alpha'} \int_\Sigma d^2z \, \bigl(\delta_{ab}\partial X^a \bar \partial X^b + \frac{1}{3}H_{abc} X^a \partial X^b \bar \partial X^c \bigr) =:\, \mathcal S_0 + \mathcal S_1 \;,
}
where the first part $\mathcal S_0$ determines the (free) propagator, which we recall to be
\eq{
  \label{freepropagator}
  \bigl\langle X^a(z_1,\ov z_1 )\, X^b(z_2,\ov z_2 ) \bigr\rangle_{0}
  = - \frac{\alpha'}{2} \log |z_1-z_2|^2 \:\delta^{ab}
  \;,
}
and the second part $\mathcal S_1$ is treated as a perturbation for small flux, as it is for example the case in the large volume or dilute flux limit. Correlation functions can then be computed in the standard way by path integration:
\eq{
  \label{pathintegration}
  \langle \mathcal O_1 \dots \mathcal O_N \rangle
  = \frac{1}{\mathcal Z} \int [dX]
  \,
    \mathcal O_1 \dots \mathcal O_N \,e^{-\mathcal S[X]}
  \,,
}
where we denote the vacuum functional by ${\cal Z} = \int[dX]\,e^{-{\cal S}[X]}$. Expanding the expression \eqref{pathintegration} into a power series in the perturbation and denoting the expectation value with respect to the free action by $\langle \dots \rangle_0$ gives, up to first order in the flux,
\eq{\label{pertexp}
\langle {\cal O}_1 \dots {\cal O}_n \rangle =&\, \langle  {\cal O}_1 \dots {\cal O}_n \rangle_0 -\Bigl(\langle {\cal O}_1 \dots {\cal O}_n {\cal S}_1 \rangle_0 - \langle  {\cal O}_1 \dots {\cal O}_n\rangle_0 \langle {\cal S}_1 \rangle_0 \Bigr) \\
&\, +\frac{1}{2}\Bigl(\langle  {\cal O}_1 \dots {\cal O}_n{\cal S}_1 ^2 \rangle_0 - \langle  {\cal O}_1 \dots {\cal O}_n\rangle_0\langle {\cal S}_1 ^2 \rangle_0\Bigr) \\
&\,-\Bigl(\langle {\cal O}_1 \dots {\cal O}_n {\cal S}_1 \rangle_0 - \langle  {\cal O}_1 \dots {\cal O}_n\rangle_0 \langle {\cal S}_1 \rangle_0\Bigr)\langle {\cal S}_1\rangle_0 + \dots \\
=&\,\langle  {\cal O}_1 \dots {\cal O}_n \rangle_0 -\langle {\cal O}_1 \dots {\cal O}_n {\cal S}_1 \rangle_0 + {\cal O}(H^2)\;,
}
where in the second step we used the fact that $\langle {\cal S}_1 \rangle_0 = 0 $ because ${\cal S}_1$ is a product of an odd number of fields. We will use this formula in the following to compute two- and three-point correlation functions of various fields.

\paragraph{Currents and correlation functions}
The standard currents of the free theory $J^a = i\partial X^a$ and $\bar{J}^a = i\bar \partial X^a$ are not holomorphic and anti-holomorphic objects any more. Even at linear order in the flux, the classical equations of motion \eqref{classeom} show, that the $H$-flux mixes holomorphic and anti-holomorphic parts.

To get proper holomorphic and anti-holomorphic currents we have to add additional terms to $J^a, \bar J^a $ that compensate the mixed terms in \eqref{firstordersolution} at least up to linear order in the flux. We therefore propose the following currents:
\eq{
  \label{holcurrents}
  {\cal J}^a (z,\bar z) & = i\partial X^a(z,\bar z)-
  {\textstyle \frac{i}{2}}\hspace{0.5pt}H^a{}_{bc} \, \partial X^b(z,\bar z) \,
  {X}^c(z, \bar z)  \;, \\[1mm]
  \bar {\cal J}^a (z,\bar z) & = i\bar \partial X^a(z,\bar z)-
  {\textstyle \frac{i}{2}}\hspace{0.5pt}H^a{}_{bc} \, X^b(z,\bar z) \,
  \bar \partial X^c(z,\bar z)  \;.
}
By using \eqref{firstordersolution}, it is easy to see that they are holomorphic and anti-holomorphic fields \emph{up to linear order in the flux}. We therefore introduce the notation ${\cal J}^a(z)$ and $\bar{\cal J}^a(\bar z)$ meaning (anti-)holomorphicity up to linear order, i.e.
\eq{
\bar \partial {\cal J}^a (z) =\,0 + {\cal O}(H^2)\;, \qquad \partial \bar{\cal J}(\bar z) =\, 0 + {\cal O}(H^2) \;.
}
Let us now compute two-point functions of the currents ${\cal J}$ and $\bar {\cal J}$. The interaction Lagrangian ${\cal S}_1$ in \eqref{perturbationansatz} has an odd number of fields and therefore, up to linear order in the flux, only the first term in \eqref{pertexp} is relevant:
\eq{
  \label{twocurrentcor}
   \bigl\langle {\cal J}^a (z_1) {\cal J}^b (z_2) \bigr\rangle
  &=\, \bigl\langle i\partial X^a(z_1,\ov z_1)\, i\partial X^b(z_2,\ov z_2) \bigr\rangle_0
  =\,  \frac{\alpha'}{2} \:\frac{1}{(z_1-z_2)^2}\:\delta^{ab} \;, \\
  \bigl\langle \ov{\cal J}{}^a (\ov z_1)\, \ov{\cal J}{}^b (\ov z_2) \bigr\rangle
  &=\, \bigl\langle i\ov \partial X^a(z_1,\ov z_1)\, i\ov \partial X^b(z_2,\ov z_2) \bigr\rangle_0
  =\, \frac{\alpha'}{2} \:\frac{1}{(\ov z_1-\ov z_2)^2}\:\delta^{ab} \;, \\
  \bigl\langle {\cal J}^a (z_1) \ov{\cal J}^b (\ov z_2) \bigr\rangle
  &=\, \bigl\langle i\partial X^a(z_1,\ov z_1)\, i\ov \partial X^b(z_2,\ov z_2) \bigr\rangle_0
  =\, 0\;.
}
We observe that up to linear order in the flux, the two-point functions of the new currents are the same as for the free theory. Let us therefore move on to the three-point functions. Now, the interaction Lagrangian in \eqref{pertexp} contributes to the result. As an example, for three holomorphic currents, we have to compute:
\eq{
\bigl\langle {\cal J}^a(z_1)\,&{\cal J}^b(z_2)\,{\cal J}^c(z_3) \bigr\rangle \\
&=\,i\bigl\langle \partial X^a(z_1,\bar z_1)\,\partial X^b(z_2,\bar z_2)\, \partial X^c(z_3,\bar z_3) {\cal S}_1 \bigr\rangle_0 \\
&=\,  \frac{i\,H_{pqr}}{6\pi\alpha'} \int_{\Sigma} d^2 z \,
  \bigl\langle  \partial X^a(z_1,\bar z_1)\, \partial X^b(z_2,\bar z_2)\, \partial X^c(z_3, \bar z_3) \times  \\
   &\hspace{80pt}\times \,X^p(z,\ov z ) \,  \partial X^q(z, \ov z) \,\ov\partial X^r (z, \ov z)
    \bigr\rangle_0 \;.
}
The last expression can now be evaluated by using Wick's theorem and the free propagator \eqref{freepropagator}. The computation is straightforward and we only want to mention the mixed holomorphic and anti-holomorphic derivatives of the propagator, as it involves the two-dimensional delta function:
\eq{
  \partial_{z_1} \ov \partial_{z_2} \log |z_1-z_2|^2 = -2\pi \, \delta^{(2)}(z_1-z_2) \;.
}
In contrast to the free theory, there are now non-vanishing three-point functions of purely holomorphic and purely anti-holomorphic currents. Taking the anti-symmetry of the $H$-flux into account, the result of applying Wick's theorem is given by:
\eq{
  \label{3currentcor}
 \bigl\langle {\cal J}^a(z_1)\, {\cal J}^b(z_2)\, {\cal J}^c(z_3) \bigr\rangle
 &=  -i\:\frac{{\alpha'}^2}{8}\,H^{abc}\:
   \frac{1}{ z_{12}\, z_{23}\, z_{13}} \;, \\
  \bigl\langle \ov{\cal J}{}^a(\ov z_1)\, \ov{\cal J}{}^b(\ov z_2)\,
  \ov {\cal J}{}^c(\ov z_3) \bigr\rangle
  &= +i\:\frac{{\alpha'}^2}{8}\,H^{abc}\:
   \frac{1}{ \ov z_{12}\, \ov z_{23}\, \ov z_{13}} \;,
}
and all the mixed holomorphic and anti-holomorphic currents vanish. We use the standard notation $z_{ij} = z_i - z_j$. Note again that for the non-vanishing of the above three-point correlators already at first order in the flux, the interaction term ${\cal S}_1$ was crucial.

\paragraph{Basic three-coordinate correlator}
We take the holomorphicity and anti-holomorphicity of the currents ${\cal J}^a$ and $\bar {\cal J}^a$ respectively as a motivation to introduce new coordinates ${\cal X}^a$, which are the integrals of the currents, i.e. we define them by the following relation:
\eq{\label{rightcoords}
  \mathcal J^a(z) =\, i\hspace{0.5pt}\partial \mathcal X^a(z,\ov z) \;,\quad
  \ov{\mathcal J}{}^a(\ov z) =\, i\hspace{0.5pt}\ov \partial \mathcal X^a(z,\ov z) \;.
}
As we will see later, the currents ${\cal J}^a$ are the proper conformal fields in our theory and therefore we propose the coordinates defined by \eqref{rightcoords} to be the right objects to get information about the target space geometry. Whereas the two-point function does not change in comparison to the uncorrected fields $X^a$ at linear order in the flux, the tree-point function can be determined by integrating e.g. the first correlator in \eqref{3currentcor}. To state the result, we use the \emph{Rogers dilogarithm} function of a complex variable $L(z)$, defined by:
\eq{
  L(z)={\rm Li}_2(z) + \frac{1}{ 2} \log (z) \log(1-z)\; ,
}
where ${\rm Li}_2(z)$ is the standard dilogarithm function\footnote{We refer the reader to the appendix for a short introduction and properties of the Rogers- and standard dilogarithm functions.}. The three-point correlator of the fields ${\cal X}^a$ is now given by: \eq{
  \label{3calXcor}
  &\bigl\langle {\cal X}^a(z_1,\ov z_1)\, {\cal X}^b(z_2,\ov z_2)\,   {\cal X}^{c}(z_3,\ov z_3) \bigr\rangle
    \\
  &\hspace{80pt}= \frac{{\alpha'}^2}{12}\: H^{abc}
   \biggl[ L \Bigl( { \frac{z_{12}}{z_{13}} } \Bigr)
  + L \Bigl( { \frac{z_{23}}{z_{21}} } \Bigr)
  + L \Bigl( { \frac{z_{13}}{z_{23}} } \Bigr) - {\rm c.c.} \biggr]
  + F \;.
}
Here, we included the integration constants in a single function $F$. It is determined by the condition $\partial_i \partial_j \partial_k F = 0$, where $i \in \{z_1,\bar z_1\}$, $j\in \{z_2, \bar z_2\}$, $k \in \{z_3,\bar z_3\}$. This is similar to the propagator for standard coordinate fields $X^a(z,\bar z)$. It is only determined up to integration constants:
\eq{
  \label{wick01b}
  \bigl\langle X^a(z_1,\ov z_1 )\, X^b(z_2,\ov z_2 ) \bigr\rangle_{0}  = -\frac{\alpha'}{2}
  \Bigl(\log |z_1-z_2|^2 + f(z_1,\ov z_1) + f(z_2,\ov z_2) \Bigr)\:\delta^{ab}.
}
But in the two-point case, one can show that physical amplitudes do not depend on the integration constants and therefore they can be set to zero. At the moment a similar statement for the three-point function is not possible but we still set $F = 0 $ in the following.

To further simplify the result \eqref{3calXcor}, we introduce the flux parameter $\theta^{abc} = \tfrac{(\alpha')^2}{12}\,H^{abc}$ and use the following function which is composed of Rogers dilogarithms with characteristic arguments:
\begin{equation}
\label{Rogersum}
   {\cal L}(z)=L(z)+L\left( 1-\frac{1}{z}\right)
    + L\left(\frac{1}{1-z}\right)\; .
\end{equation}
Taking this into account, we can rewrite the three-point correlator in the following compact form:
\begin{equation}
  \label{3coordinatecor}
  \bigl\langle {\cal X}^a(z_1,\ov z_1)\, {\cal X}^b(z_2,\ov z_2)\,    {\cal X}^{c}(z_3,\ov z_3) \bigr\rangle
  = {\theta^{abc}} \Bigl[{\cal L}
   \bigl( {\textstyle \frac{z_{12}}{ z_{13}} }\bigr) - {\cal L}
   \bigl({\textstyle \frac{\ov z_{12}}{ \ov z_{13}}}\bigr) \Bigr]\; .
\end{equation}
This remarkable result will be the main source of a deformed product on the target space as we will see later by computing scattering amplitudes of tachyon vertex operators. But before moving on in this direction, we will first show how a conformal field theory can be constructed up to linear order in the $H$-flux and how T-duality is realized.

\paragraph{Conformal field theory linear in $H$}

To first order in the flux, we were able to define holomorphic and anti-holomorphic currents ${\cal J}^a(z)$ and $\bar{\cal J}^a(\bar z)$. The goal of this section is to give the main arguments that, up to linear order in $H$, it is possible to construct a conformal field theory. We will analyse the operator product expansions of the currents and then define an energy momentum operator for which the currents are primary fields of dimension one, which is known as the Sugawara construction. Finally, in the next subsection, we will introduce tachyon vertex operators.

The operator product expansion\footnote{For standard techniques in conformal field theory we refer the reader to the literature, e.g. \cite{Ginsparg:1988ui, DiFrancesco:1997nk, Blumenhagen:2009zz}.} (OPE) of two currents ${\cal J}^a(z_1)$, ${\cal J}^b(z_2)$ and their anti-holomorphic counterparts can be derived by computing correlation functions with other fields. The singular part of the OPE can be fixed by looking at the two- and three-point functions \eqref{twocurrentcor} and \eqref{3currentcor} of the currents:
\eq{
  \label{currentope}
{\cal J}^a(z_1)\; {\cal J}^b(z_2)
  &=\,
   \frac{\alpha'}{2} \,\frac{\delta^{ab}}{(z_1-z_2)^2}
  -
  \frac{\alpha'}{4}\, \frac{i\, H^{ab}{}_c}{z_1-z_2}\: {\cal J}^c(z_2)
  +
   {\rm reg.} \;, \\
   \ov{\cal J}{}^a(\ov z_1)\; \ov{\cal J}{}^b(\ov z_2)
  &=\,
   \frac{\alpha'}{2} \,\frac{\delta^{ab}} {(\ov z_1-\ov z_2)^2}
  +
   \frac{\alpha'}{4}\, \frac{i\, H^{ab}{}_c}{\ov z_1-\ov z_2}\: \ov{\cal
    J}{}^c(\ov z_2)
  +
   {\rm reg.}\;.
}
We denoted the regular part by ``reg.'' and the OPE of a holomorphic and anti-holomorphic current is purely regular. As a next step, let us construct the energy momentum tensor in a way that the ${\cal J}^a$ are the right currents of the theory:
\eq{
  \label{emtensor}
  {\cal T}(z) =\, \frac{1}{\alpha'}\: \delta_{ab}:\! {\cal J}^a {\cal J}^b\!:\! (z) \;, \quad
  \ov{\cal T}(\ov z) =\, \frac{1}{\alpha'}\: \delta_{ab} :\! \ov{\cal J}{}^a\ov{\cal J}{}^b\!:\! (\ov z) \;.
}
As we are working only up to first order in the flux, we have to check whether all the axioms of an energy momentum tensor are really obeyed up to this order. At first, the OPEs for two energy momentum tensors have the right form, as we can check using the anti-symmetry of $H$:
\eq{
  \label{OPE_02}
     {\cal T}(z_1)\; {\cal T}(z_2)
& =\,
    \frac{c/2}{(z_1-z_2)^4} + \frac{2\, {\cal T}(z_2)}{(z_1-z_2)^2}
  + \frac{\partial \, {\cal T}(z_2)}{z_1-z_2}
  + {\rm reg.} \;,\\
  \ov{\cal T}(\ov z_1)\; \ov {\cal T}(\ov z_2)
  &=\,
    \frac{c/2}{(\ov z_1-\ov
   z_2)^4} + \frac{2\, \ov{\cal T}(\ov z_2)}{(\ov z_1-\ov z_2)^2}
  + \frac{\ov \partial \,\ov{\cal T}(\ov z_2)}{\ov z_1-\ov z_2}
  + {\rm reg.} \;,
}
whereas the OPE of a holomorphic and an anti-holomorphic energy-momentum tensor is purely regular. This result has the canonical form and thus, as in the standard case, we get two copies of the Virasoro algebra with the same central charge as for the free theory (for the case of the three-dimensional background we have $c=3$). For the second step, we have to check whether the currents are conformal primary of dimension one. Calculating the OPE with the energy-momentum tensor \eqref{emtensor}, we get
\eq{
  \label{emcurrentope}
  {\cal T}(z_1)\, {\cal J}^a(z_2)
 & =\,  \frac{{\cal J}^a(z_2)}{(z_1-z_2)^2} +
     \frac{\partial {\cal J}^a(z_2)}{z_1-z_2}+{\rm reg.} \;, \\
  \ov{\cal T}(\ov z_1)\, {\cal J}^a(z_2)
 &=\, {\rm reg.} \;.
}
Similar OPEs can be computed for the anti-holomorphic parts. To put it in a nutshell, up to linear order in the flux, we are able to define an energy-momentum operator \eqref{emtensor} with respect to which the currents ${\cal J}^a(z)$ and $\bar{\cal J}^a(\bar z)$ are conformal primary fields of dimension $(1,0)$ and $(0,1)$, respectively. Furthermore, their OPEs \eqref{currentope} have the form of a non-Abelian current algebra with structure constants $H^{ab}{}_c$. Up to linear order in the flux, we therefore made the first steps to construct a conformal field theory framework. In the following we want to denote this theory by $\textrm{CFT}_H$.

\subsubsection{Tachyon vertex operators}
\label{subsec-tachyonvertex}
In the last subsection we defined a conformal field theory framework $\textrm{CFT}_H$. We are now going to introduce a new set of primary fields with respect to the energy momentum tensor \eqref{emtensor}, which will allow us to extract information about the product of functions on the target space. The tachyon vertex operator is the simplest operator with non-trivial functional dependence on the coordinates. In analogy to free closed string theory with momentum $p_a$ and winding $w^b$, we define the right- and left moving momenta in the standard way
\eq{
  \label{klr}
  k_{L}^a = p^a + \frac{w^a}{\alpha'} \;, \qquad
  k_{R}^a = p^a - \frac{w^a}{\alpha'} \;.
}
With this definition, we propose the following tachyon vertex operator by using the redefined spacetime coordinates ${\cal X}^i$:
\eq{
  \label{tachyonvertex}
  {\cal V}(z,\ov z) = \, :\!\exp \bigl( i\hspace{0.5pt} k_L\cdot {\cal X}_L +
  i\hspace{0.5pt} k_R \cdot {\cal X}_R \bigr) \!: \;,
}
where the left- and right-moving coordinates ${\cal X}^a_{L/R}$ are given by integration of the currents ${\cal J}$ and $\bar{\cal J}$, respectively. In addition, we denoted the contraction by $k \cdot {\cal X} = k_a {\cal X}^a$. Furthermore, by integration of \eqref{currentope}, we can write down the OPE of a current ${\cal J}^a$ and a coordinate ${\cal X}_L^b$ and its anti-holomorphic counterpart:
\eq{
  \label{currentcoordope}
  {\cal J}^a(z_1)\, {\cal X}^b_L(z_2)
  =&\,
    -i\:\frac{\alpha'}{2} \:\frac{\delta^{ab}}{z_1-z_2}
  +
   \frac{\alpha'}{4}\,  H^{ab}{}_c\; {\cal J}^c(z_2)\, \log(z_1-z_2)
  +{\rm reg.} \;, \\
   \ov{\cal J}{}^a(\ov z_1)\, {\cal X}^b_R(\ov z_2)
  =&\,
   -i\:\frac{\alpha'}{2} \:\frac{\delta^{ab}}{\ov z_1-\ov z_2}
  -
    \frac{\alpha'}{4}\, H^{ab}{}_c\; \ov{\cal  J}{}^c(\ov z_2)\, \log(\ov z_1-\ov z_2)
  + {\rm reg.} \;.
}
Now let us come back to the tachyon vertex operator \eqref{tachyonvertex}. We have to show that it is a primary field in $\textrm{CFT}_H$ and we want to compute its conformal dimension. Note, that this is not clear a priori since we have to use the singular parts in \eqref{currentcoordope} which contain a logarithmic part. But using the anti-symmetry of the flux, the unwanted terms cancel and we get:
\eq{
  {\cal T}(z_1)\, {\cal V}(z_2,\ov z_2)
   =&\,
    \frac{1}{(z_1-z_2)^2} \, \frac{\alpha' k_L\cdot k_L}{4} \,   {\cal V}(z_2,\ov z_2)
   +
    \frac{1}{z_1-z_2} \, \partial {\cal V}(z_2,\ov z_2)
   +{\rm reg.} \;, \\
    \ov{\cal T}(\ov z_1)\, {\cal V}(z_2,\ov z_2)
   =&\,
    \frac{1}{(\ov z_1-\ov z_2)^2} \, \frac{\alpha' k_R\cdot k_R}{4} \,  {\cal V}(z_2,\ov z_2)
   +
   \frac{1}{\ov z_1-\ov z_2} \, \ov\partial {\cal V}(z_2,\ov z_2)
   +{\rm reg.} \;.
}
This shows that the vertex operator defined in \eqref{tachyonvertex} indeed is a conformal primary field of dimension $(h,\bar h)=(\tfrac{\alpha'}{4}k_L ^2, \tfrac{\alpha'}{4} k_R ^2)$ and therefore corresponds to a physical quantum state of the conformal field theory $\textrm{CFT}_H$.

Up to now, the discussion was quite similar to the free theory, so there is the question about the difference of the physics caused by the linear perturbation in the flux. To see one of the differences, let us compute the operator product expansion of a current ${\cal J}^a $ and the vertex operator ${\cal V}$:
\eq{
  \label{currentvertexope}
   {\cal J}^a(z_1)\, {\cal V}(z_2,\ov z_2) =&\,
   \frac{1}{z_1-z_2}  \frac{\alpha' k^a_L}{2}\,   {\cal V}(z_2,\ov z_2)\\
   &\,+i\,\frac{\alpha'}{4} \log (z_1-z_2) \,  H^{a}{}_{bc}\; k^b_L :\! {\cal  J}{}^c\, {\cal
     V}\!:\!(z_2,\ov z_2) + {\rm reg.} \;.
}
One way to see this is to expand the exponential in the vertex operator ${\cal V}$ and then use the OPE of a current and a coordinate \eqref{currentcoordope} in every term.
A similar statement holds for the anti-holomorphic part. Again, as in \eqref{currentcoordope}, the flux-dependent term contains a logarithmic part in contrast to the free theory. It would be possible to eliminate the term by simply demanding $H^a{}_{bc} k_L ^b = 0 $, which means that the momenta are transversal to the flux. But this would eliminate a lot of interesting phenomena so we keep these terms. We interpret them as a hint that the $\textrm{CFT}_H$ is a logarithmic conformal field theory.

Let us continue and use the result \eqref{currentvertexope} to determine the center of mass momentum and winding in analogy to the free case. Denote the zero mode of the current ${\cal J}^a$ by ${\cal P}_L ^a$. It is given via Cauchy's theorem by a contour integral over the current. We want to determine the ${\cal P}_L ^a$-eigenvalue of the vertex operator ${\cal V}$. This can be done by the definition of ${\cal P}_L^a$ and the result \eqref{currentvertexope} by acting on the vacuum state $|0\rangle$ as we now show:
\eq{
  \label{momentumcom}
     \lim_{z_2,\ov z_2 \to 0} {\cal P}_L^a\, {\cal V}(z_2,\ov z_2)\bigr|0\bigr\rangle
     &=
     \lim_{z_2,\ov z_2\to 0} \oint \frac{dz_1}{2\pi i} \, {\cal J}^a(z_1)\, {\cal V}(z_2,\ov z_2)
     \bigr|0\bigr\rangle \\
     &= \frac{\alpha' \hspace{0.5pt}k_L^a}{2} \lim_{z_2,\ov z_2\to 0} {\cal V}(z_2,\ov z_2)
     \bigr|0\bigr\rangle \;.
}
The logarithmic part drops out because we could add and subtract the following term:
\eq{
i\,\frac{\alpha'}{4} \log (\bar z_1- \bar z_2) \,  H^{a}{}_{bc}\; k^b_L :\! {\cal  J}{}^c\, {\cal
     V}\!:\!(z_2,\ov z_2)\;,
}
which is regular in the \emph{holomorphic} variables. But this changes the logarithmic part in the OPE \eqref{currentvertexope} into a single valued real function and the contour integral of the second term in \eqref{momentumcom} vanishes.

By looking at the action \eqref{perturbationansatz}, we see that in this case the canonical momentum does not coincide with the physical momentum. Whereas the former is related to the currents ${\cal J}^a$, the latter is simply the unperturbed current $J^a = \partial X^a (z,\bar z)$. Thus the zero mode ${\cal P}_L ^a$ corresponds to the canonical momentum. To determine the physical momentum, we have to do the same calculation but with the current $i\partial X^a (z,\bar z)$ whose zero mode is denoted by $P_L ^a$:
\eq{
  \label{momencom2}
   &\lim_{z_2,\ov z_2\to 0} {P}_L^a\, {\cal V}(z_2,\ov z_2) \bigr|0\bigr\rangle \\
   =&\lim_{z_2,\ov z_2\to 0} \oint \frac{dz_1}{2\pi i}\,
   J^a(z_1)\, {\cal V}(z_2,\ov z_2)\bigr|0\bigr\rangle   \\
   =& \lim_{z_2,\ov z_2\to 0} \oint \frac{dz_1}{2\pi i}
   \left[ {\cal J}^a(z_1)\, {\cal V}(z_2,\ov z_2)
    +\frac{1}{2}\, H^a{}_{bc} \, J^b(z_1) \, X^c_R(\ov z_1)\, {V}(z_2,\ov z_2)\right]
       \bigr|0\bigr\rangle \;.
}
In the last expression, the first term was computed in \eqref{momentumcom}. The second term is linear in the flux and therefore we can use the standard OPEs of the free theory to evaluate the integral. But both, the OPE of a coordinate $X^a(z,\bar z)$ and a current $i\partial^a X(z)$ with the standard tachyon vertex operator (which we denote by $V$ instead of ${\cal V}$ in the perturbed case) gives the momentum $k_{L/R}$ times a singular part. Thus the second term in \eqref{momencom2} is proportional to $H^a{}_{bc} k_L^b k_R^c$. If the vertex operator ${\cal V}(z,\bar z)$ should carry \emph{physical} momentum $(k_L,k_R)$, we consequently have to propose the following condition:
\eq{
  \label{momentumcondition}
  0= H^a{}_{bc}\, k_L^b\, k_R^c \simeq
  H^a{}_{bc}\, p^b\, w^c  \simeq  \bigl[\, \vec p \times \vec w \,\bigr]^a\;,
}
where we used that the $H$-flux is a function times the epsilon-tensor in three dimensions. The last condition can also be seen in a different way: We take the classical part of the free solution to be:
\eq{
\mathsf X_0 ^a(\sigma, \tau) =\, x_0^a + \alpha' p^a\tau + w^a \sigma\;.
}
Using this in the definition \eqref{holcurrents} of the currents (restricting every field to the classical part) and requiring the following center of mass momentum and winding:
\eq{
   \frac{1}{ 2\pi} \int_0^{2\pi} d\sigma\, \partial_\tau {\cal X}^a = \alpha'\, p^a \;,
   \qquad
   \frac{1}{2\pi} \int_0^{2\pi} d\sigma\, \partial_\sigma {\cal X}^a =  w^a \;,
}
we again arrive at the condition \eqref{momentumcondition} for the left- and right-moving momenta\footnote{We neglected terms of the form $H^a{}_{bc}x_0^b k_{L/R}^c$ because they correspond to a constant $B$-field and can be locally gauged away.} or equivalently for the center of mass momentum $p^a$ and winding $w^a$. This derivation can be seen as the classical analogue of the quantum derivation.

In a summary, if the condition \eqref{momentumcondition} is obeyed, the tachyon vertex operator \eqref{tachyonvertex} has physical momentum $(k_L,k_R)$ and is a well-defined quantum state in $\textrm{CFT}_H$. The consequences of the condition \eqref{momentumcondition} will play an important role in the analysis of T-duality in the next section.

\subsubsection{T-duality in $\textrm{CFT}_H$}
Having established some of the key features of a conformal field theory with currents ${\cal J}^a, \bar{\cal J}^b$, we are now able to give the action of T-duality on the coordinate fields ${\cal X}_L^a, {\cal X}_R ^b$, which were defined by \eqref{rightcoords}. It is given in the standard way by reflecting the right-moving sector of the theory:
\eq{ \label{tdualityaction}
  \begin{array}{c}
  {\cal X}_L^a(z) \\[1mm]
  {\cal  X}_R^a(\ov z)
  \end{array}
  \qquad
  \xrightarrow{\;\mbox{\scriptsize T-duality}\;}
  \qquad
  \begin{array}{c}
  +{\cal X}_L^a(z)\;, \\[1mm]
  -{\cal  X}_R^a(\ov z) \;.
  \end{array}
}
Because of the definition \eqref{rightcoords} of the target space coordinates by their relation to the (anti-)holomorphic currents, for the latter T-duality is realized by
\eq{
  \begin{array}{c}
  {\cal J}^a(z) \\[1mm]
  \ov{\cal  J}{}^a(\ov z)
  \end{array}
  \qquad
  \xrightarrow{\;\mbox{\scriptsize T-duality}\;}
  \qquad
  \begin{array}{c}
  +{\cal J}^a(z)\;, \\[1mm]
  -\ov{\cal  J}{}^a(\ov z) \;.
  \end{array}
}
The goal of the following sections is to compute correlation functions of tachyon vertex operators in different T-duality frames. We recall that T-duality exchanges momentum and winding quantum numbers and therefore one expects that scattering three pure momentum states in the $H$-flux background results in the same amplitude as scattering two momentum states and one winding state in the $f$-flux background, where the T-duality is done in the same direction where we exchange momentum and winding quantum numbers. The same statement holds for T-dualities in more than one direction\footnote{Note that on the level of the world sheet, we can do an arbitrary number of T-dualities because of \eqref{tdualityaction}. We refer to the background with every direction being T-dualized as the $R$-flux background, even if its target space interpretation is not clear.}. Conversely, scattering two momentum states and one winding state in the $H$-flux background should correspond to scattering of three momentum states in the $f$-flux background. This is reflected in table \ref{tablemomwind}, where we also indicated the sign between the holomorphic- and anti-holomorphic parts of the three-${\cal X}$-correlator and whether or not the momentum-winding condition \eqref{momentumcondition} is satisfied in the $H$-flux background. Only in this case it is ensured that momentum and winding quantum numbers of the tachyon state are not corrected by $H$-dependent terms as was shown in \eqref{momencom2}. Having this condition, we expect that any effect linear in $H$ is not caused by the linear redefinition of the classical solution of the tachyon but is determined by the properties of the uncorrected solution.
\begin{table}[h]
\centering
\renewcommand{\arraystretch}{1.2}
\begin{tabular}{|cc|cc|cc|cc|}
\hline
\multicolumn{2}{|c}{$H$-flux} &
\multicolumn{2}{|c}{$f$-flux} &
\multicolumn{2}{|c}{$Q$-flux} &
\multicolumn{2}{|c|}{$R$-flux}
\\ \hline\hline
$\langle p_1,p_2,p_3\rangle^-$ & \checkmark &
$\langle p_1,p_2,w_3\rangle^-$ & \checkmark &
$\langle p_1,w_2,w_3\rangle^-$ & \checkmark &
$\langle w_1,w_2,w_3\rangle^-$& \checkmark
\\ \hline
$\langle p_1,p_2,w_3\rangle^+$ & $\times$ &
$\langle p_1,p_2,p_3\rangle^+$ & $\times$ &
$\langle p_1,w_2,p_3\rangle^+$ & $\times$ &
$\langle w_1,w_2,p_3\rangle^+$ & $\times$
\\ \hline
$\langle p_1,w_2,w_3\rangle^-$ &  $\times$ &
$\langle p_1,w_2,p_3\rangle^-$ & $\times$ &
$\langle p_1,p_2,p_3\rangle^-$ & $\times$ &
$\langle w_1,p_2,p_3\rangle^-$ & $\times$
\\ \hline
$\langle w_1,w_2,w_3\rangle^+$ &  \checkmark &
$\langle w_1,w_2,p_3\rangle^+$ & \checkmark &
$\langle w_1,p_2,p_3\rangle^+$ & \checkmark &
$\langle p_1,p_2,p_3\rangle^+$ & \checkmark
\\ \hline
\end{tabular}
\caption{\small Scattering of momentum and winding states in different T-dual backgrounds. Momentum states are indicated by $p_i$ and winding states by $w_i$. Correlation functions in the same row should give the same information as in the direction of the T-duality action momentum and winding states were exchanged. The superscript indicates the sign between the holomorphic- and anti-holomorphic part in the correlator \eqref{3calXcor} and the symbols $\checkmark$/$\times$ denote the validity of  condition \eqref{momentumcondition}.\label{tablemomwind}}
\end{table}

\emph{Remarks}. To extract information about the geometry which determines the effective field theory, we are interested in scattering amplitudes of pure momentum states in the various backgrounds. The relevant tachyon correlation functions in table \ref{tablemomwind} would be the first one in the column for the $H$-flux, the second in the column for the $f$-flux, the third one in the column for the $Q$-flux and finally the last one in the column for the $R$-flux. From the table, we infer that the momentum-winding condition \eqref{momentumcondition} is only true in the $H$-flux and in the $R$-flux background. The latter corresponds to scattering pure winding states in the $H$-flux background. The corresponding basic three-point correlation functions of the coordinates are given in these two cases by:
\eq{
 \bigl\langle {\cal X}^a(z_1,\ov z_1)\, {\cal X}^b(z_2,\ov z_2)\, {\cal X}^c(z_3,\ov z_3)\bigr\rangle^{-}& = \,
    {\theta^{abc}} \Bigl[{\cal L}
   \bigl( {\textstyle \frac{z_{12}}{ z_{13}} }\bigr) - {\cal L}
   \bigl({\textstyle \frac{\ov z_{12}}{ \ov z_{13}}}\bigr) \Bigr]\;, \\
\bigl\langle {\cal X}^a(z_1,\ov z_1)\, {\cal X}^b(z_2,\ov z_2)\, {\cal X}^c(z_3,\ov z_3)\bigr\rangle^{+}&=\,
{\theta^{abc}} \Bigl[{\cal L}
   \bigl( {\textstyle \frac{z_{12}}{ z_{13}} }\bigr) + {\cal L}
   \bigl({\textstyle \frac{\ov z_{12}}{ \ov z_{13}}}\bigr) \Bigr]\; .
}
The sign change between the holomorphic and anti-holomorphic part in the last expression is caused by the reflection of the right-moving part of the coordinates.

We will use these results in the following section to calculate $n$-point tachyon correlation functions. Due to the validity of the momentum-winding condition \eqref{momentumcondition} we will be able to get information about the product of functions on the target space in the presence of $H$-flux and in its complete T-dual case, the $R$-flux.

\subsubsection{Tachyon correlation functions}

We now have all the ingredients to compute $n$-point tachyon correlation functions in the $\textrm{CFT}_H$-framework and analyse the effect of T-duality. In particular we want to infer the product structure of functions on the target space and compare the case of the original $H$-flux with its complete T-dual version, where the $R$-flux is present.

\paragraph{Three tachyon correlator}
Let us start by scattering three tachyons. To have winding, let us assume the tree-dimensional target space to be compact, e.g. a three-torus. We are going to compare the following two cases: First we want to analyse the standard case of pure momentum states in the presence of $H$-flux, where we label the momenta by 3-vectors $p_i$. The corresponding tachyon vertex operator was discussed in the previous two subsections and is given by
\eq{
{\cal V}_i^- \stackrel{\textrm{def}}{=}\, {\cal V}_{p_i}(z_i, \bar z_i) = \;:\exp\left(i\,p_i \cdot {\cal X}(z_i,\bar{z}_i)\right):\;.
}
Secondly, we are interested in pure momentum state scattering in the presence of $R$-flux, which corresponds to pure winding state scattering in the $H$-flux background, as can be seen in table \ref{tablemomwind}. We denote the winding by three-vectors $w_i$ and the corresponding tachyon vertex operator is given by
\eq{ \label{windingtachyon}
{\cal V}_i ^+ \stackrel{\textrm{def}}{=}\,{\cal V}_{w_i}(z_i,\bar z_i) =\; :\exp\left(i\,w_i \cdot \tilde{\cal X}(z_i,\bar{z}_i) \right):\;,
}
where the T-dual coordinate is denoted by $\tilde{\cal X} = {\cal X}_L - {\cal X}_R $. To proceed, let us note the following: As we are interested in scattering momentum states in an $R$-flux background, we set $w_i \rightarrow p_i$ in the vertex operator \eqref{windingtachyon}. The superscript $\langle \dots\rangle^\mp$ on correlators indicates that we are in the $H$-flux case for the minus sign and in the $R$-flux case for the plus sign. The result for the three-tachyon correlation function (for details of the calculation we refer the reader to the appendix) is then given by
 \eq{
  \label{threetachyoncorrelator}
   \bigl\langle \,{\cal V}_1 \,{\cal V}_2 \,{\cal V}_3 \,\bigr\rangle^\mp
   =\frac{\delta(p_1+p_2+p_3)}{\vert z_{12}\,z_{13}\,z_{23}\vert^2}
       \Bigl[1 -i\hspace{0.5pt}\theta^{abc}\, p_{1,a} p_{2,b} p_{3,c}  \bigl[
  {\cal L}   \bigl( {\textstyle \frac{z_{12}}{z_{13}} }\bigr) \mp {\cal L}
   \bigl({\textstyle \frac{\ov z_{12}}{ \ov z_{13}}}\bigr)\bigr] \Bigr] \,.
}
As we only calculated up to first order in the flux (i.e. the parameter $\theta^{abc}$), we cannot make a definite statement about the full result of the correlator. But the form of \eqref{threetachyoncorrelator} suggests, that it is the beginning of a power series expansion of the exponential function. To indicate that this may be possible we introduce the notation $[\dots]_\theta$, meaning that the result is only valid up to linear order in $\theta^{abc}$:
 \eq{
  \label{threetachyoncorrelator1}
   \bigl\langle \,{\cal V}_1 \,{\cal V}_2 \,{\cal V}_3 \,\bigr\rangle^\mp
   =\frac{\delta(p_1+p_2+p_3)}{\vert z_{12}\,z_{13}\,z_{23}\vert^2}
       \exp\Bigl[-i\hspace{0.5pt}\theta^{abc}\, p_{1,a} p_{2,b} p_{3,c}  \bigl[
  {\cal L}   \bigl( {\textstyle \frac{z_{12}}{z_{13}} }\bigr) \mp {\cal L}
   \bigl({\textstyle \frac{\ov z_{12}}{ \ov z_{13}}}\bigr)\bigr] \Bigr]_\theta \,.
}
%To calculate the complete scattering amplitude of three tachyons in the two backgrounds, we remember the definition of the tachyon field operator:
%\eq{
%   \label{tachyonstate}
%   \mathcal T^\mp_{i} = \int d^2z\,   {\cal V}^\mp_{i} \; .
% }
%Using the $SL(2,\mathbb{C})$-symmetry to fix three points on the world sheet Riemann surface, the three tachyon scattering amplitude is given by:
%\eq{
%  \label{threetachyonamplitude}
%  \bigl\langle\, \mathcal T_1\: \mathcal T_2\:\mathcal  T_3\, \bigr\rangle^\mp
%  =&\,  \int  \prod_{i=1}^3 d^2 z_i\: \delta^{(2)}(z_i-z_i^0)\, \vert
%   z_{12}\,z_{13}\,z_{23}\vert^2  \,
%    \bigl\langle \,{\cal V}_1\, {\cal V}_2 \,{\cal V}_3\, \bigr\rangle^\mp \\
%=&\,\int \prod_{i=1}^3  d^2 z_i\; \delta^{(2)}(z_i-z_i^0) \, \delta(p_1+p_2+p_3) \\
% & \hspace{12pt}
% \times \;\exp\Bigl[ -i\hspace{0.5pt}\theta^{abc}\, p_{1,a} p_{2,b} p_{3,c}  \bigl[
%  {\cal L}   \bigl( {\textstyle \frac{z_{12}}{z_{13}} }\bigr) \mp {\cal L}
%   \bigl({\textstyle \frac{\ov z_{12}}{ \ov z_{13}}}\bigr)\bigr] \Bigr]_{\theta}.
%}
Let us now investigate these results. Clearly the delta-function indicates usual momentum conservation. We first analyse the amplitude without this constraint, i.e. the off-shell correlator. Let $\sigma \in S_3$ be a permutation of three elements. Then the tachyon correlator with permuted vertex operators is given by:
\begin{equation}
  \label{phasethreeperm}
  \bigl\langle \, {\cal V}_{\sigma(1)}   {\cal V}_{\sigma(2)}  {\cal V}_{\sigma(3)}  \bigr\rangle^\mp=
  \exp\Bigl[ \,i\left({\textstyle \frac{1+\epsilon}{ 2}}\right)  \eta_\sigma\,  \pi^2\,  \theta^{abc}\, p_{1,a}
  \,p_{2,b} \,p_{3,c} \Bigr]
  \bigl\langle {\cal V}_1\,  {\cal V}_2\,  {\cal V}_3  \bigr\rangle^\mp \;,
\end{equation}
where $\eta_\sigma = 1 $ for an odd permutation and vanishes for an even one. In addition, the parameter $\epsilon$ indicates the background: We have $\epsilon = -1$ for the original $H$-flux, i.e. the phase vanishes in this case, as it is expected. In the case of the 3-times T-dual background, corresponding the $R$-flux, $\epsilon = 1$ and there is a non-trivial phase factor.  To derive this result, we used the following properties of the function ${\cal L}(z)$, introduced in \eqref{Rogersum} as sum of Rogers dilogarithms:
\begin{gather}
\label{Rogersfundamentalr}
{\cal L}(z) =\,{\cal L}\bigl(1-\frac{1}{z}\bigr) =\, {\cal L}\bigl(\frac{1}{1-z} \bigr), \\
{\cal L}(z) + {\cal L}(1-z) =\, 3{\cal L}(1) =\, \frac{\pi^2}{2}\;.
\end{gather}
These formulas follow easily from the corresponding properties of the Rogers dilogarithm introduced in the appendix. As an example, consider exchanging $z_2$ and $z_3$ in the $R$-flux case:
\eq{
 \bigl\langle \, {\cal V}_{1}   {\cal V}_{3}  {\cal V}_{2}  \bigr\rangle^+=&\,\exp\Bigl[ -i\hspace{0.5pt}\theta^{abc}\, p_{1,a} p_{3,b} p_{2,c}  \bigl[
  {\cal L}   \bigl( {\textstyle \frac{z_{13}}{z_{12}} }\bigr) + {\cal L}
   \bigl({\textstyle \frac{\ov z_{13}}{ \ov z_{12}}}\bigr)\bigr] \Bigr]_{\theta} \\
    =&\,\exp\Bigl[ i\hspace{0.5pt}\theta^{abc}\, p_{1,a} p_{2,b} p_{3,c}  \bigl[
   \pi^2 - {\cal L}   \bigl( {\textstyle \frac{z_{12}}{z_{13}} }\bigr) - {\cal L}
   \bigl({\textstyle \frac{\ov z_{12}}{ \ov z_{13}}}\bigr)\bigr] \Bigr]_{\theta} \\
=&\,\exp\left[ i\hspace{0.5pt}\pi^2\theta^{abc}\, p_{1,a} p_{2,b} p_{3,c}\right]\bigl\langle \, {\cal V}_{1}   {\cal V}_{3}  {\cal V}_{2}  \bigr\rangle^+ \;.
}
Similar calculations hold for the other permutations. Thus, off-shell (without momentum conservation), we get a non-trivial phase by performing an odd permutation of the vertex operators. Note that this is similar to the open string case, where off-shell permutations of tachyon vertex operators resulted in a phase which was then seen as a sign for the Moyal-Weyl star product, as was shown in section \ref{ch-quant}, especially in the expressions \eqref{phasecorrelator}. In the closed string case, the corresponding phase is governed by the three-index object $\theta^{abc}$ \eqref{phasethreeperm}, and we will interpret this phenomenon as a hint for a three-product.

Note, that on-shell, the phase vanishes due to momentum conservation:
\eq{
  p_{1,a} \,p_{2,b}\, p_{3,c}\,\theta^{abc} = 0
  \hspace{40pt}{\rm for}\hspace{40pt} p_3=-p_1-p_2 \;.
}
This was expected, since in scattering amplitudes a product of field operators is radially ordered and therefore changing the order of the operators will not affect the amplitude. Even in the $R$-flux background this should hold, as it is one of the defining properties of conformal field theory amplitudes.

\paragraph{N-tachyon correlator}
We are now going to generalize the results of the last subsection to the case of scattering $N$ tachyons. It turns out that this can be done inductively by applying the formulas presented in the appendix. Before stating the general result, we want to demonstrate the logic at the four point result. The correlation function of four tachyon vertex operators is given by
\eq{
\label{fourtachyonscattering}
 \hspace{-40pt} &\bigl\langle {\cal V}_1\,{\cal V}_2\,{\cal V}_3\,{\cal V}_4 \bigr\rangle^\mp =
  \bigl\langle V_1\,V_2\,V_3\,V_4 \bigr\rangle^\mp_0 \;\times \\
  &\hspace{80pt} \exp \biggl[  -i \hspace{0.5pt}
   \theta^{abc}\!\! \sum_{1\leq i<j<k\leq 4}  p_{i ,a} \,p_{j,b}\, p_{k,c} 
   \Bigl[ {\cal L}\bigl({\textstyle \frac{z_{ij}}{ z_{ik}}}\bigr)
  \mp  {\cal L}\bigl({\textstyle\frac{\ov z_{ij}}{ \ov z_{ik}}}\bigr)
  \Bigr] \biggr]_{\theta} \;.
}
Although the phase becomes more complicated, it turns out that permuting the vertex operators by $\sigma \in S_4$ in the $R$-flux case results in similar phases as the ones encountered in the previous subsection, whereas they vanish in the case of the $H$-flux. We again have to use the fundamental relations of the Rogers dilogarithm \eqref{Rogersfundamentalr} to extract a phase independent of the world-sheet coordinates. Only such properties are of interest for us as they reflect true target space facts.

To illustrate this statement, let us write out explicitly the holomorphic part of the phase appearing in \eqref{fourtachyonscattering} in terms of Rogers dilogarithm functions:
\eq{
-i\theta^{abc} \Bigl[p_{1,a}p_{2,b}p_{3,c}\;{\cal L}&\left(\frac{z_{12}}{z_{13}} \right) + p_{1,a}p_{2,b}p_{4,c}\;{\cal L}\left(\frac{z_{12}}{z_{14}} \right) \\
&+ p_{1,a}p_{3,b}p_{4,c}\;{\cal L}\left(\frac{z_{13}}{z_{14}} \right) + p_{2,a}p_{3,b}p_{4,c}\;{\cal L}\left(\frac{z_{23}}{z_{24}} \right) \Bigr]\;.
}
Exchanging for example ${\cal V}_3$ and ${\cal V}_4$, in the $R$-flux background, we get a relative phase of
\eq{
i\pi^2 \theta^{abc}\left(p_{1,a}p_{3,b}p_{4,c} + p_{2,a}p_{3,b}p_{4,c} \right)\;,
}
and similar for other permutations. Note that the phase vanishes on-shell due to $p_4 = -p_1 - p_2 - p_3$ and the anti-symmetry of $\theta^{abc}$ so that the whole amplitude is invariant under permutations of the vertex operators.

The four tachyon amplitude played an important role in the history of string theory. Many important properties of the theory were found by analyzing for example the pole structure of the \emph{Virasoro-Shapiro} amplitude. In our case it is also important to do the same steps in $\textrm{CFT}_H$. It turns out that this can be done \cite{Blumenhagen:2011ph} and it is intriguingly connected to the properties of the so-called extended- or \emph{Neumann-Rogers} dilogarithm, which we briefly mention in the appendix for completeness. This analysis goes beyond the scope of this work and therefore we want to refer the reader to the original paper \cite{Blumenhagen:2011ph} for more details in this direction.

Finally we can give the general $N$-point tachyon amplitude. It is a straightforward generalization of the previous cases:
\eq{\label{Ntachyonamplitude}
  &\bigl\langle {\cal V}_1\,{\cal V}_2\,\dots \,{\cal V}_N \bigr\rangle^\mp =
  \bigl\langle V_1\,V_2\,\ldots\, V_N \bigr\rangle^\mp_0 \;\times \\
  &\hspace{93pt} \exp \biggl[  -i  \hspace{0.5pt}
   \theta^{abc}\!\! \sum_{1\leq i<j<k\leq N}  p_{i ,a} \,p_{j,b}\, p_{k,c} 
   \Bigl[ {\cal L}\bigl({\textstyle \frac{z_{ij}}{ z_{ik}}}\bigr)
  \mp  {\cal L}\bigl({\textstyle\frac{\ov z_{ij}}{ \ov z_{ik}}}\bigr)
  \Bigr] \biggr]_{\theta} \;.
}
In addition, one can extract relative phases in the case of the $R$-flux by doing a permutation $\sigma \in S_N$ of the vertex operators which are similar to the cases before and will not be calculated explicitly. They do not depend on the coordinates on the world sheet and vanish on-shell. We will use these properties of the $N$-point correlator to speculate about the existence of an $N$-product on the algebra of functions on the target space in case of the $R$-flux background in the next section.

\subsubsection{$N$-product structures}
Comparing the previous analysis to the case of the detection of open string non-commutativity as described in section \ref{ch-quant}, we want to emphasize the following similarities:
\begin{itemize}
\item The permutation of tachyon vertex operators in correlation functions leads to characteristic momentum- and background-dependent phases. In the open string case this can be seen in the expression \eqref{phasecorrelator}, whereas we calculated the phase for the case of closed string theory in the $R$-flux background in \eqref{phasethreeperm} for three tachyons and illustrated the general case in \eqref{Ntachyonamplitude}.
\item In both cases, the phases are independent of the world sheet coordinates, i.e. they can be interpreted as pure target space effects. In the open string this was a consequence of the step function $\epsilon$ and in the closed string case it was derived from the properties of the Rogers dilogarithm.
%\item In the open string case, momentum conservation leads to vanishing phases for cyclic permutations of the vertex operators, whereas in the closed string case it ensures vanishing phases for arbitrary permutations. Therefore amplitudes are crossing-symmetric.
\end{itemize}
We note also the following difference: Whereas in the open string case, momentum conservation leads to vanishing phases for cyclic permutations of the vertex operators, in the closed string case it ensures vanishing of all the phases. However, off-shell the phases in the open string case already show the structure of the Moyal-Weyl product. We take this as a motivation to follow the analogy between the open and closed case one step further and try to guess the structure of the product of functions on the target space in the case of closed strings in the $R$-flux background. Looking at the phases detected in the open string case, one infers that they can be reproduced if we multiply exponential functions $e^{ip_n \cdot X}$ by using the Moyal-Weyl star product \eqref{moyalweyl}. In the same way it is easy to see that in the case of three insertion points, the phase of \eqref{phasethreeperm} can be reproduced by introducing a new product for three factors:
\eq{
\label{threebracket}
   f_1(x)\,\tri\, f_2(x)\, \tri\, f_3(x) \stackrel{\rm def}{=} \exp\Bigl(
   {\textstyle {\pi^2\over 2}}\, \theta^{abc}\,
      \partial^{x_1}_{a}\,\partial^{x_2}_{b}\,\partial^{x_3}_{c} \Bigr)\, f_1(x_1)\, f_2(x_2)\,
   f_3(x_3)\Bigr|_{x} \;,
}
where we use the notation $(\dots)|_x = (\dots)_{x_1 = x_2 = x_3 = x}$. Indeed, if we choose the exponential function $f_n(x) = e^{i\,p_n\cdot x}$, we get
\eq{
e^{i\,p_1\cdot x} \tri e^{i\,p_2\cdot x}\tri e^{i\,p_3\cdot x} =\, \exp\bigl(-i\frac{\pi^2}{2}\,\theta^{abc}\,p_{1,a}p_{2,b}p_{3,c} \bigr) \, e^{i\,(p_1 + p_2 + p_3)\cdot x }\;.
}
Performing an odd permutation $\sigma \in S_3$ of the factors in the $\,\tri\,$-product, we readily get the following phase for $f_k = e^{i\,p_k \cdot x}$:
\eq{
f_{\sigma(1)} \tri f_{\sigma(2)} \tri f_{\sigma(3)} =\, e^{i\,\pi^2 \,\theta^{abc}\,p_{1,a}p_{2,b}p_{3,c}} \,f_1 \tri f_2 \tri f_3 \;,
}
but this is exactly the phase observed in the 3-point correlator \eqref{phasethreeperm}. We therefore propose that the algebra of functions on the target space in this case has a 3-product structure given by \eqref{threebracket}.

Let us once more go back to the open string case. The Moyal-Weyl product applied to coordinate functions $x^a, x^b$ resulted in the non-commutativity of spacetime as we deduced in section \ref{ch-quant}, see especially \eqref{ncspacetime}. What is the analogue for the closed string case? The appropriate object is not given by the commutator but by the completely anti-symmetrized sum over 3-products as was already pointed out in \cite{Blumenhagen:2010hj}. The result for coordinate functions $x^a,x^b,x^c$ is given by:
\eq{
\label{antisymtripcon}
   \bigl [x^a,x^b,x^c \bigr]\stackrel{\textrm{def}}{=}\,\sum_{\sigma\in S_3} {\rm sign}(\sigma) \;
     x^{\sigma(a)}\, \tri\,  x^{\sigma(b)}\, \tri\,  x^{\sigma(c)} =
     3\pi^2\, \theta^{abc}\; ,
}
where we introduced the three-bracket $[\cdot,\cdot,\cdot]$. This result coincides with \cite{Blumenhagen:2010hj}, where the three-bracket was defined as the Jacobi-identity of the coordinates. The latter being non-zero is only possible if the underlying spacetime is non-commutative and non-associative. In our case, the result is not a standard two-product which is not associative (note that we did not define a two-product), but a three-product structure from the beginning. Such three-algebras were described in the literature for example in \cite{DeBellis:2010sy, deAzcarraga:2010mr}.

Let us generalize this to the case of the $N$-point correlator \eqref{Ntachyonamplitude}. Similar to the three-point case, the phases which appear if the vertex operators are permuted can be reproduced if we introduce an \emph{$N$-product structure} on the algebra of functions, defined as follows:
\eq{
   f_1(x)\, \triN \,  &f_2(x)\, \triN \ldots \triN \,  f_N(x) \stackrel{\rm def}{=} \\
   &\exp\left[ {\textstyle {\pi^2\over 2}} \theta^{abc}\!\!\!\!\! \sum_{1\le i< j < k\le N}
     \!\!\!\!  \,
      \partial^{x_i}_{a}\,\partial^{x_j}_{b} \partial^{x_k}_{c} \right]\,
   f_1(x_1)\, f_2(x_2)\ldots
   f_N(x_N)\Bigr|_{x} \;.
}
Therefore as a main result of this section, considering correlation functions of tachyon vertex operators in the closed string $\textrm{CFT}_H$-framework suggests a hierarchy of $N$-products $\triN$ on the target space. We conclude this section by giving the most immediate properties of this collection of products. Firstly, we can relate an $N$-product to an $(N-1)$-product by taking the last factor to be the identity:
\eq{
\label{NtoN-1}
    f_1\,\triN \,  f_2  \,\triN \,  \ldots\,  \triN \, f_{N-1} \, \triN \, 1\,
   =\,  f_1\,\triNm \,  \ldots\,  \triNm \,  f_{N-1} \;.
}
This also enables us to conclude how to multiply two functions by considering the case $N=3$: Because there are always three derivatives in the three-product, inserting a constant collapses the exponential to the identity:
\begin{equation}
    f_1\,  \trizw \,  f_2 = f_1\,\tridr \, f_2\,\tridr \, 1= f_1 \cdot f_2  \; ,
\end{equation}
i.e. we arrive at the standard commutative and associative product\footnote{There are results which show also a non-commutative 2-product, e.g. \cite{Lust:2010iy, Andriot:2012vb, Mylonas:2012pg}.}. Secondly we want to mention one essential difference to the open string case and the Moyal-Weyl product. Whereas in the latter case, the product of $N$ functions can be calculated by successive application of the product of two factors, this is not possible for the $N$-product. As an example, the product of five functions cannot be computed by first multiplying three functions with a three-product and then multiplying the result again by a three-product with the remaining two functions:
\begin{equation}
   f_1\,\trifue \, f_2\,\trifue \, f_3 \,\trifue \,  f_4\, \trifue \,  f_5\;\neq \;
   \left( f_1\, \tridr \, f_2\, \tridr \, f_3\right)\, \tridr f_4\, \tridr \,   f_5\; .
\end{equation}
It would be intriguing to get more information about the structure of the collection of $N$-products. The complete mathematical description and also the right abstract definition of algebras carrying a hierarchy of $N$-product structures goes beyond the scope of the present work. In the next section we only want to give some possible connections to other structures which might play a role in this respect.

\subsubsection{Concluding remarks and outlook}
Starting with a sigma model with flat, three-dimensional target space together with a $B$-field linear in the spacetime coordinates (i.e. constant $H$-flux), we were able to define a conformal field theory framework up to linear order in the $H$-flux. An important ingredient was to define new (anti-)holomorphic currents ${\cal J}^a,\bar{\cal J}^b$ and spacetime coordinates ${\cal X}^a$, both fields having non-vanishing three-point functions. T-duality was implemented as a reflection of the right-moving part of the coordinates ${\cal X}^a$. Furthermore it was possible to define tachyon vertex operators as conformal primary fields of $\textrm{CFT}_H$ and to investigate their $N$-point correlation functions in the case of the $H$-flux background and its three-fold T-dual version. In the latter case, we were able to extract momentum-dependent phase factors by permuting the vertex operators in the correlation functions. These phases were independent of the world-sheet coordinates and vanished after imposing momentum conservation. By comparison to open string non-commutativity, we were able to propose the structure of $N$-products on the algebra of functions on the target space.

There are many important questions for future work. On the physics side, it would be interesting to extend the $\textrm{CFT}_H$-framework to the supersymmetric case and investigate T-duality and vertex operators. But even in the bosonic theory it would be important to investigate more complicated vertex operators like the graviton. This was partially done in \cite{Blumenhagen:2011ph}, which could be extended to calculate graviton scattering amplitudes in the case of the $R$-flux background. On the mathematics side, clearly it is important to understand $N$-algebras and quantum spaces underlying such algebras. We only were able to give a guess how such an algebra could be realized in the case of 3-fold T-duals to constant $H$-flux backgrounds (see also \cite{Lust:2010iy}). Furthermore the phenomenon of a complete hierarchy of products for every number of factors could have connections to the notion of $L_\infty$-structures (see for example \cite{Roytenberg:01} as an introduction).

\subsection{Courant algebroids and flux Bianchi identities}
As detailed in section \ref{sec-ngeofluxes}, the application of T-duality to configurations with non-vanishing $H$-flux on the one hand leads to the well-understood geometric flux but on the other hand also to non-geometric $Q$- and $R$-fluxes, whose mathematical properties still have to be described precisely. One of the most immediate properties of the standard $H$-flux and geometric $f$-flux are their Bianchi-identities. These are identities for the fluxes which are trivial if one considers their \emph{local} structure: On a sufficiently small neighborhood of a point, the $H$-flux always can be written as $H = dB$ and therefore it is closed. As a consequence we get the Bianchi identity\footnote{In the following we always set the dilaton to be constant.}:
\eq{
dH =\, 0 \;, \qquad \partial_{[\underline{a}} H_{\underline{bcd}]} =\,0 \;,
}
where the coordinate expression holds for a commuting set of basis sections. Si\-mi\-larly, the geometric $f$-flux can be realized by the structure constants of a non-holonomic frame of the tangent bundle for the internal manifold and therefore the Jacobi-identity of the Lie bracket gives a Bianchi-identity for this flux. For a basis $\{e_a\}$ of the tangent bundle we get
\eq{
0 =&\, \left[e_a, [e_b,e_c]_L\right]_L + \textrm{cycl.} \\
=&\, \left(e_{[\underline{a}}(f^n{}_{\underline{bc}]}) + f^n{}_{[\underline{a}k} f^k{}_{\underline{bc}]} \right) e_n \;.
}
To get similar relations for the non-geometric fluxes and possibly additional contributions to the previous identities, one clearly needs an extension of the tangent bundle to include vector- and form-indices on equal footing. In section \ref{sec-bialg} we presented the mathematical framework which we are going to use in order to realize all fluxes as structure constants of a Courant algebroid \cite{Blumenhagen:2012pc}. We proceed in two steps: First, to get an idea we present the corresponding algebra purely on the tangent bundle. This enables us to derive Bianchi identities from the Jacobi identity of the Lie bracket\footnote{Similar Bianchi-identities were already motivated in \cite{Blumenhagen:2012ma}.}. Secondly we will introduce the concept of quasi-Lie algebroids (which is well known in mathematics, e.g. \cite{Roytenberg:01, Halmagyi:2008dr, Halmagyi:2009te, 1079.53126} ) and their corresponding Courant algebroid structure to construct a commutator-algebra on the generalized tangent bundle which contains all of the different fluxes as structure functions.
%\subsection{Summary of earlier results}

\subsubsection{Realization on the tangent bundle}
As we have already seen in section \ref{sec-ngeofluxes}, one of the basic structures of the non-geometric regime is a bi-vector $\beta$, which is interpreted in the following as the structure tensor of a (quasi-)Poisson manifold $M$. It is used to define the Poisson bracket of functions by:
\eq{
\{f,g\} =\, \beta^{ij}\,\partial_i f\, \partial_j g \;.
}
This bracket satisfies the Jacobi identity in the standard case, but in the following we want to relax this condition to treat the more general case of a \emph{quasi-Poisson} manifold (for more mathematical details we refer the reader to \cite{1029.53090}). The Jacobi identity is now altered to a non-trivial expression:
\eq{
\{ f, \{g, h\}\} + \textrm{cycl.} =\, \left(\beta^{[\underline{i}n}\partial_n \beta^{\underline{jk}]}\right)\,\partial_i f \,\partial_j g \,\partial_k h \;.
}
Note that the expression in brackets has the same (local) structure as the non-geometric $R$-flux \eqref{Rflux}.

Even though having a quasi-Poisson manifold, it is still possible to use the bi-vector as a map from the cotangent to the tangent bundle (similar to the anchor of a Lie algebroid), which we denote by $\beta^{\sharp}$:
\eq{
\beta^{\sharp}: T^*M \rightarrow TM\;, \quad \beta^{\sharp}(\xi)(\eta) =\, \beta(\xi, \eta)\;.
}

\paragraph{Example for vanishing $H$- and $f$-flux}

Denoting the basis and dual basis of the tangent bundle $TM$ by $e_i$ and $e^i$, respectively (i.e. $e^j(e_i) = \delta^j _i $), by using the map $\beta^{\sharp}$ we obtain in addition to the standard differential operators $e_i$ (which are given in a holonomic basis by $e_i = \partial_i$) the following operators:
\eq{\label{defesharp}
e^i_{\sharp} :=\,\beta^{\sharp}(e^i) = \, \beta^{ij} \partial_j \;.
}
It is now easy to check that both types of differential operators form the following commutator algebra:
\eq{
\label{einfachealg}
       [e_i,e_j]_L&=0 \;, \\
       [e_i,e^j_\sharp]_L&=Q_i{}^{jk}\, e_k \;, \\
       [e_\sharp^i,e_\sharp^j]_L&=R^{ijk}\, e_k + Q_k{}^{ij}\, e^k_\sharp \;,
}
where we again use the definitions \eqref{Qflux} for the $Q$-flux and \eqref{Rflux} for the $R$-flux. Note that a similar type of algebraic structure was already given in \cite{Grana:2008yw}. It is now easy to get Bianchi-type relations from the above commutator algebra if we interpret it as the Lie brackets of vector fields. The latter satisfy the Jacobi identity and evaluating the Jacobiators gives
\eq{
\label{einfachebianchi}
   0&=  3\hspace{1pt} \beta^{[\ul a m} \partial_m Q_d{}^{\underline{bc}]}
     -\partial_d R^{abc}
 + 3\hspace{1pt}  Q_d{}^{[\ul a m} Q_m{}^{\ul b \ul c]} \;, \\
   0&=  2 \hspace{1pt} \beta^{[\underline{a} m}\, \partial_{m}  {R}^{\underline{bcd}]}- 3\hspace{1pt}
          {R}^{[\underline{ab}m}\, {Q}_m{}^{\underline{cd}]}  \;,
}
where for the first one we evaluated the Jacobiator of one basis vector and two vectors defined in \eqref{defesharp} and for the second one we computed the Jacobiator of three vectors of type \eqref{defesharp}. It is interesting to note that the second identity already appeared in the context of double field theory \cite{Andriot:2012wx} and both identities were also derived in \cite{Blumenhagen:2012ma} in the context of Schouten-Nijenhuis brackets. We take this as a motivation to generalize the strategy to include also geometric flux and $H$-flux.

\paragraph{The general case on the tangent bundle}

The implementation of geometric $f$-flux can be achieved by introducing a general (non-holonomic) basis $\{e_a\}$ of the tangent bundle $TM$. There are two equivalent characterizations thereof, the first given by the commutators of the basis vectors and the second given by the exterior differential of the dual basis $e^a$:
\eq{\label{non-holonomic frame}
[e_a, e_b]_L =\, f^c{}_{ab} e_c\;, \quad de^a =\, -\tfrac{1}{2}\,f^a{}_{bc}\, e^b \wedge e^c \;. }
Exterior derivatives like in the case of local $H$-flux $H \in \Gamma(\wedge^3 T^*M)$ then get additional contributions, e.g.
\eq{
H =\, \left( \partial_{[\underline{a}} B_{\underline{bc}]} - f^d{}_{[\underline{ab}} B_{d\underline{c}]}\right)\,e^a\wedge e^b \wedge e^c \;.
}
In terms of this basis, the differential operators $e_{\sharp}^a $ are defined with the bi-vector $\beta$ expressed in terms of the non-holonomic basis: $e_\sharp ^a = \beta^{ab}e_b$. Note that in case of a Riemannian manifold, a non-holonomic basis can be expressed by choosing a set of vielbeins that diagonalize the metric: $e_a = e_a{}^i \partial_i =: \partial_a$. The structure constants can then be expressed in terms of derivatives of the vielbeins:
\eq{
f^c{}_{ab} =\, e^c{}_j\left(e_a{}^i \partial_i e_b{}^j - e_b{}^i \partial_i e_a{}^j \right)\;,
}
where the $e^c{}_j$ are defined by the dual basis $e^c = e^c{}_j dx^j$. The bi-vector $\beta$ can be expressed in terms of the latter by $\beta^{ab} = e^a{}_i e^b{}_j \beta^{ij}$. But the following discussion will be independent of the choice of a Riemannian structure on the manifold $M$ and can therefore be applied to general (quasi-) Poisson manifolds.

The last two commutation relations of \eqref{einfachealg} can now be generalized to
\eq{
  \label{comm_02}
  [ e_a, e_\sharp^b ] = Q_a{}^{bc} e_c - f^b{}_{ac} e_\sharp^c \;, \hspace{40pt}
  [e_\sharp^a, e_\sharp^b] = R^{abc} e_c + Q_c{}^{ab} e_\sharp^c \;,
}
where we enhanced the fluxes to the case of the non-holonomic frame. Explicitly they are given by
\eq{
\label{QRdef}
  Q_a{}^{bc} &=\partial_a \beta^{bc}+ f^b{}_{am}\, \beta^{mc} -f^c{}_{am}\,
  \beta^{mb} \;, \\[1mm]
 R^{abc}&= 3\, \bigl( \beta^{[\ul am}\,\partial_m \beta^{\ul b\ul c]}+f^{[\ul a}{}_{mn}\,
 \beta^{\ul bm}\beta^{\ul c]n} \bigr) \;.
}
Finally, to include the three-form flux $H$ into the algebra of commutators, we have to perform a redefinition of the previous fluxes. The reason for this way of including the $H$-flux is the fact that it cannot be reproduced directly as commutator of differential operators. A similar way of including the $H$-flux into the commutator algebra can be found in \cite{Halmagyi:2009te}. The field redefinition is given by
\eq{
  \label{redef_01}
           {\cal H}_{abc}&=H_{abc} \;,\\
           {\cal F}^c{}_{ab}&=f^c{}_{ab} - H_{abm}\, \beta^{mc} \;, \\
           {\cal Q}_{a}{}^{bc}&=Q_{a}{}^{bc} +  H_{amn}\, \beta^{mb}\,
           \beta^{nc} \;,\\
           {\cal R}^{abc}&=R^{abc} - H_{mnp}\, \beta^{ma}\,
           \beta^{nb}\, \beta^{pc} \;,
}
i.e. the bi-vector $\beta$ is used to transform the index-structure of the $H$-flux in the right way to include it in the corresponding flux. With the help of the field redefinition we can finally rewrite equations \eqref{non-holonomic frame} and \eqref{comm_02} to get the complete commutator algebra:
\eq{
\label{preroytenberg}
       [e_a,e_b]_L&= {\cal F}^c{}_{ab}\, e_c +  {\cal H}_{abc}\, e_\sharp^c \;, \\
       [e_a,e_\sharp^b]_L&= {\cal Q}_{a}{}^{bc}\, e_c -  {\cal F}^b{}_{ac}\, e_\sharp^c\;, \\
       [e_\sharp^a,e_\sharp^b]_L&= {\cal R}^{abc}\, e_c +  {\cal Q}_{c}{}^{ab}\, e_\sharp^c \;.
}
An algebra of a similar form was first derived by Roytenberg in \cite{Roytenberg:01, 1027.53104}. However note that the algebra \eqref{preroytenberg} is realized completely on the tangent bundle. We therefore refer to it as \emph{pre-Roytenberg algebra}. We will come back to a realization of the algebra on the generalized tangent bundle by using Courant algebroids in the next section. But before moving to this technically more sophisticated case, we want to use one advantage of \eqref{preroytenberg}: Due to the Jacobi identity of the Lie bracket we can derive Bianchi identities for the fluxes introduced in \eqref{redef_01}.

\paragraph{Bianchi identities}
Before using the Jacobi identity for the Lie bracket, let us state the Bianchi identity for the ${\cal H}$-flux in terms of the redefined geometric flux:
\eq{
  \label{bianchi0}
   {\rm I}\ :\ 0= \partial_{[\underline{a}} \, {\cal H}_{\underline{bcd}]} - \tfrac{3}{ 2}\,
    {\cal F}^m{}_{[\ul{ab}}\, {\cal H}_{m\ul{cd}]} \;.
}
The Jacobi identities including both, $e^a _\sharp$ and $e_a$ result in four different Bianchi identities. The different independent combinations are
\eq{
  {\rm II}\,:\;  0 = \bigl[ [e_a, e_b ]_L, e_c \bigr]_L + {\rm cycl.} \;, \quad
  {\rm III}\,:\;  0 = \bigl[ [e_a, e_b ]_L, e_\sharp^c \bigr]_L + {\rm cycl.} \;, \\[2mm]
  {\rm IV}\,:\;    0 = \bigl[ [e_a, e_\sharp^b ]_L, e_\sharp^c \bigr]_L + {\rm cycl.} \;, \quad
  {\rm V}\,:\;    0 = \bigl[ [e_\sharp^a, e_\sharp^b ]_L, e_\sharp^c \bigr]_L + {\rm cycl.} \;,    }

\noindent and using the algebra \eqref{preroytenberg} they result in the Bianchi identities
\eq{\label{bianchiII}
 {\rm II}:\hspace{5pt} 0 =\,&   \Bigl( \partial_{[\underline{c}} \, {\cal F}^d{}_{\underline{ab}]} +
    {\cal F}^m{}_{[\underline{ab}}\, {\cal F}^d{}_{\underline{c}]m} +
    {\cal H}_{[\underline{ab}\,m}\, {\cal Q}_{\underline{c}]}{}^{md}\Bigr)  \\
   & +\Bigl( \partial_{[\underline{c}} \, {\cal H}_{\underline{ab}]n}
   -2 {\cal F}^m{}_{[\underline{ab}}\, {\cal H}_{\underline{cn}]m}
  \Bigr) \beta^{nd} \;,
}
\eq{
 {\rm III}:\hspace{5pt} 0=\,& \Bigl( \beta^{cm} \partial_m \mathcal F^d{}_{ab} + 2 \partial_{[ \ul a}\, \mathcal Q_{\ul b]}{}^{cd}
 - \mathcal H_{mab} \mathcal R^{mcd} - \mathcal F^m{}_{ab} \mathcal Q_{m}{}^{cd} + 4 \mathcal Q_{[\ul a}{}^{[\ul cm}
 \mathcal F^{\ul d]}{}_{m \ul b]} \Bigr) \\
 & +\Bigl( \beta^{cm}\partial_m \mathcal H_{abn} - 2 \partial_{[\ul a} \mathcal F^c{}_{\ul b]n}
  - 3 \mathcal H_{m[\ul a\ul b} \mathcal Q_{\ul n]}{}^{mc} \\
& + 3 \mathcal F^m{}_{[\ul a \ul b} \mathcal F^c{}_{m\ul n]} \Bigr) \beta^{nd} \;,
}
\eq{\label{bianchiIV}
 {\rm IV}:\hspace{5pt} 0 =\, & \Bigl( -\partial_a \mathcal R^{bcd} - 2 \beta^{[\ul c m} \partial_m \mathcal Q_a{}^{\ul b] d}
 + 3 \mathcal Q_a{}^{[\ul b m}\mathcal Q_m{}^{\ul c \ul d]}
 - 3 \mathcal F^{[\ul b}{}_{am} \mathcal R^{\ul c \ul d]m} \Bigr)  \\
 & +\Bigl( 2 \beta^{[\ul c m} \partial_m \mathcal F^{\ul b]}{}_{an} - \partial_a
 \mathcal Q_{n}{}^{bc}
 + \mathcal Q_m {}^{bc} \mathcal F^m{}_{an}  \\
 & + \mathcal R^{bcm} \mathcal H_{man}
 -4 \mathcal Q_{[\ul a}{}^{[\ul b m}\mathcal F^{\ul c]}{}_{m \ul n]} \Bigr) \beta^{nd} \;,
}
\eq{\label{bianchiV}
 {\rm V}:\hspace{5pt}   0 =\, & \Bigl( \beta^{[\underline{c} m}\, \partial_{m}  {\cal
        R}^{\underline{ab}]d} -
     2 {\cal R}^{[\underline{ab}m}\, {\cal Q}_m{}^{\underline{cd}]}
\Bigr)  \\
&+  \Bigl( \beta^{[\underline{c} m}\, \partial_{m}  {\cal
        Q}_n{}^{\underline{ab}]} + {\cal R}^{[\underline{ab}m}\,
        {\cal F}^{\underline{c}]}{}_{mn} +
     {\cal Q}_m{}^{[\underline{ab}} \, {\cal Q}_n{}^{\underline{c}]m}
\Bigr) \beta^{nd} \;.
}

\noindent To conclude this section let us make two remarks. First, the identities \eqref{bianchi0} and \eqref{bianchiII}-\eqref{bianchiV} are constructed out of Jacobi identities but they can also be checked directly by expressing the fluxes in terms of their potentials. Second, let us look at two special cases of the identities. On the one hand, choosing a holonomic basis and having vanishing $H$-flux,  the last two identities reduce again to \eqref{einfachebianchi}. On the other hand, keeping the basis non-holonomic but choosing all the fluxes to be constant results in the following set of equations
\eq{
\label{constant fluxes}
0 &= {\cal H}_{k[\underline{ab}}\, {\cal F}^k{}_{\underline{cd}]} \;,   \\
0 &= {\cal H}_{k[\underline{ab}}\, {\cal Q}_{\underline{c}]}{}^{kj} -{\cal F}^j{}_{k[\underline{a}}\, {\cal F}^{k}{}_{\underline{bc}]} \;,   \\
0 &= {\cal H}_{kab}\, {\cal R}^{kcd} + {\cal F}^k{}_{ab}\, {\cal Q}_k{}^{cd} -
4{\cal F}^{[\underline{c}}{}_{k[\underline{a}}
\, {\cal Q}_{\underline{b}]}{}^{\underline{d}]k} \;,    \\
0 &= {\cal F}^{[\underline{a}}{}_{ki}\, {\cal R}^{\underline{bc}]k} - {\cal Q}_i{}^{k[\underline{a}}\,{\cal Q}_k{}^{\underline{bc}]} \;,  \\
0 &= {\cal Q}_k{}^{[\underline{ab}}\, {\cal R}^{\underline{cd}]k} \; .
}
We observe that they have the same structure as the Bianchi identities derived earlier in the literature \cite{Shelton:2005cf, Ihl:2007ah}, which we also presented in \eqref{constantbianchis}.

\subsubsection{Realization on $TM \oplus T^*M$}
In the previous section, we were able to realize the algebra \eqref{preroytenberg} containing the fluxes \eqref{redef_01} completely on the tangent bundle by using the interpretation of the bi-vector as a map from the cotangent bundle to the tangent bundle. However, looking at the literature on flux compactifications, e.g. \cite{Dabholkar:2005ve, Grana:2008yw}, especially in the context of Hitchin's generalized geometry \cite{1076.32019, Gualtieri:2003dx, Grana:2008yw} it is more appropriate to consider the full generalized tangent bundle $TM \oplus T^*M$ in order to describe all fluxes in terms of structure functions of a commutator algebra. We thus seek for a bracket dealing with vector fields and one forms on an equal footing. In section \ref{subsec-courantalg} we presented the basics of a well-known \mbox{mathematical} structure which is appropriate to handle this case. It turns out that including all fluxes needs for a certain \emph{twisting} of this structure which we describe below. Physically, by counting degrees of freedom, it is not possible to have all fluxes turned on without further restrictions. But in the last section, we derived relations among the various types of fluxes which restrict their number of directions being turned on at the same time. Mathematically this is reflected in the fact that the resulting structure which is suitable to include all fluxes is only a  Courant algebroid if the Bianchi identities \eqref{bianchi0} - \eqref{bianchiV} are obeyed, as we will see later in the proof of the main proposition of this section.

\paragraph{Quasi-Lie algebroids}
As detailed in section \ref{sec-bialg}, one way to construct a Courant algebroid structure is to first consider a Lie bi-algebroid. It turns out that this remains valid also for the weaker structure of a quasi-Lie bialgebroid. We use it in the following to deal with the most general case where all fluxes are included. A detailed treatment of twisted- and quasi- structures in this context is given in \cite{1079.53126}.

Let us start with the trivial Lie algebroid $(TM,[\cdot,\cdot]_L,\textrm{id})$. As stated in the last subsection, the Lie bracket is characterized by the geometric flux $f^c{}_{ab}$. The inclusion of the $H$-flux was done by a field redefinition. We now realize the latter mathematically by a \emph{twist}, meaning that we extend the Lie-bracket by an $H$-dependent term. The resulting bracket for vector fields $X,Y$ is called $H$-twisted Lie bracket:
\eq{
  \label{H-Lie}
	[X,Y]_L^H = [X,Y]_L - \beta^{\sharp}\left(\iota_Y\iota_X H\right)\, .
}
Taking the tangent bundle $TM$ together with this bracket and the identity map, we get an example of a \emph{quasi}-Lie algebroid. This means that the Leibniz rule is still valid but the anchor (which is the identity in this case) is not an algebra homomorphism any more (obvious in the above case). The defect is given by the $H$-dependent term. Moreover, the bracket \eqref{H-Lie} does not satisfy the Jacobi identity. The failure is again given by an $H$-dependent term. Thus, in the limit $H\rightarrow 0$ the structure reduces to a standard Lie algebroid. Let us now evaluate \eqref{H-Lie} on basis vector fields $e_a$:
\eq{
   [e_a,e_b]_L^H = f^p{}_{ab} \, e_p - H_{abm}\, \beta^{mp}\, e_p = \mathcal{F}^c{}_{ab} \, e_c \, .
}
We see that the right structure functions ${\cal F}^c{}_{ab}$ are produced by this kind of twisted bracket.

Next, let us consider the Lie algebroid $(T^*M,[\cdot,\cdot]\ks,\beta^\sharp)$ introduced in section \ref{ch-math}. The twisting by the $H$-flux in this case leads to the so-called \emph{$H$-twisted Koszul-Schouten bracket} defined by:
\eq{
\label{H-Koszul-Schouten}
	[\xi,\eta]\ks ^H = [\xi,\eta]\ks
			+\iota_{\beta^\sharp(\eta)}\iota_{\beta^\sharp(\xi)} H \, .
}
Again this bracket has the Leibniz rule but does not obey the Jacobi identity and the anchor is not an algebra homomorphism. The corresponding flux which measures these defects is given by the ${\cal R}$-flux introduced in \eqref{redef_01}. Its vanishing completely restores the Lie algebroid properties (see also \cite{1029.53090}). Again, evaluating on basis one-forms gives:
\eq{
	[e^a,e^b]\ks ^H = \partial_p\beta^{ab} \, e^p + 2f^{[\ul a}{}_{pm}\beta^{m \ul b]} \, e^p
		+\beta^{am}\beta^{bn} H_{mnp} e^p = \mathcal{Q}_c{}^{ab} \, e^c \, .
}
To sum up, the fluxes defined in \eqref{redef_01} have a definite geometric meaning: Whereas the ${\cal F}$- and ${\cal Q}$-fluxes can be interpreted as the structure functions of the $H$-twisted Lie bracket and $H$-twisted Koszul-Schouten bracket, respectively, the ${\cal H}$- and ${\cal R}$-fluxes measure the failures of the corresponding quasi-Lie algebroids to be proper Lie algebroids.

\paragraph{The Courant algebroid}
Let us now see how we can construct a Courant algebroid out of the quasi-Lie algebroids described previously and which conditions on the fluxes \eqref{redef_01} are needed. The notion of Courant algebroids and its connection to standard Lie bi-algebroids was pointed out in section \ref{sec-bialg}. According to the definitions given there, to define a Courant algebroid, we have to specify a total space, bilinear form, bracket and an anchor. We present them in turn:
\begin{itemize}
\item For the total space we choose the generalized tangent bundle $TM \oplus T^*M$.
\item For sections $X+\xi, Y + \eta \in \Gamma(TM \oplus T^*M)$ we define a symmetric and antisymmetric\footnote{Only the symmetric bilinear form is needed for the definition of a Courant algebroid. We introduce also the antisymmetric bilinear form to write later formulas in a convenient way.}  bilinear form by
\eq{
  \langle X+\xi,Y+\eta\rangle_\pm = \xi(Y)\pm \eta(X) \;.
}
\item The bracket $\gl \cdot,\cdot \gr$ on the generalized tangent bundle is determined by the following definitions for $X,Y \in \Gamma(TM)$ and $\xi, \eta \in \Gamma(T^*M)$:
\eq{\label{Roytenbergbracket}
	\gl X,Y\gr &= [X,Y]_L^H + \iota_{Y}\iota_X H  \;, \\
	\gl X,\xi\gr &= [\iota_{X},d^H]_+\,\xi - [\iota_{\xi},d_\beta^H]_+\, X
			+ \tfrac{1}{2}\bigl(d^H-d_\beta^H\bigr)\, \langle X,\xi\rangle_- \;, \\
	\gl \xi,X\gr &= [\iota_{\xi},d_\beta^H]_+\, X - [\iota_{X},d^H]_+\,\xi
			+ \tfrac{1}{2}\bigl(d^H-d_\beta^H\bigr)\,\langle \xi,X\rangle_- \;, \\
	\gl \xi,\eta\gr &= [\xi,\eta]\ks ^H + \iota_\eta\iota_\xi \mathcal{R} \;,
}
where we introduced the differentials $d^H$ and $d_\beta ^H$ corresponding to the $H$-twisted Lie bracket and $H$-twisted Koszul-Schouten bracket. Their action can be computed using the definitions \eqref{H-Lie} and \eqref{H-Koszul-Schouten} and the general definition of the corresponding differential, given in the mathematical introduction, equation \eqref{algebroiddiff}. The symbol $[\cdot,\cdot]_+$ denotes the anti-commutator of operators.\item The anchor $\alpha$ is given in terms of the identity map and the bi-vector by
\eq{
\alpha : \, TM \oplus T^*M \rightarrow TM \;, \quad \alpha(X + \xi) =\, X + \beta^\sharp(\xi) \;.
}
\end{itemize}
With these definitions we are now able to state the following proposition:

\begin{prop}
\label{Courant theorem}
$(TM \oplus T^*M, \gl \cdot,\cdot \gr, \langle \cdot, \cdot \rangle_+, \alpha)$ is a Courant algebroid. Moreover, its bracket $\gl \cdot, \cdot \gr$ has the following algebra on basis sections:
\eq{\label{Roytenbergalgebra}
	\gl e_a,e_b\gr &= \mathcal{F}^c{}_{ab} \, e_c + \mathcal{H}_{abc}\, e^c \;, \\
	\gl e_a,e^b\gr &= \mathcal{Q}_a{}^{bc} \, e_c - \mathcal{F}^b{}_{ac} \, e^c\;, \\
	\gl e^a,e^b\gr &= \mathcal{Q}_c{}^{ab} \, e^c + \mathcal{R}^{abc} \, e_c \; .
}
\end{prop}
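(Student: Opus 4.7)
The proof naturally splits into two parts: verifying the algebra \eqref{Roytenbergalgebra} on basis sections, and verifying the five axioms of Definition \ref{def-Courantalg}. My plan is to carry out the basis computation first, since it both confirms the expected output and exposes how the various flux components arise from the four pieces of \eqref{Roytenbergbracket}.

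For the algebra on basis sections, I would compute each bracket directly. The bracket $\gl e_a,e_b\gr$ splits as $[e_a,e_b]_L^H + \iota_{e_b}\iota_{e_a} H$: the first summand equals $\mathcal F^c{}_{ab}e_c$ by the short computation already given in the preceding discussion of the $H$-twisted Lie bracket, and the second gives $H_{abc}e^c = \mathcal H_{abc}e^c$. Similarly $\gl e^a,e^b\gr = [e^a,e^b]_{KS(\beta)}^H + \iota_{e^b}\iota_{e^a}\mathcal R$ produces $\mathcal Q_c{}^{ab}e^c + \mathcal R^{abc}e_c$ by the analogous calculation for the Koszul--Schouten side. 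The cross-bracket $\gl e_a,e^b\gr$ requires expanding the Cartan anticommutators $[\iota_{e_a},d^H]_+$ and $[\iota_{e^b},d_\beta^H]_+$ using the general formula \eqref{algebroiddiff} applied to the twisted brackets; the symmetric $d^H\langle e_a,e^b\rangle_-$ and $d_\beta^H\langle e_a,e^b\rangle_-$ pieces contribute after a short manipulation, and collecting terms yields $\mathcal Q_a{}^{bc}e_c - \mathcal F^b{}_{ac}e^c$ as required.

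For the Courant algebroid axioms, the plan is to verify them one by one, leveraging the fact that the bracket \eqref{Roytenbergbracket} is essentially the Roytenberg bracket of the quasi-Lie bialgebroid built from $(TM,[\cdot,\cdot]_L^H,\mathrm{id})$ and $(T^*M,[\cdot,\cdot]_{KS(\beta)}^H,\beta^\sharp)$. The map $\mathcal D = d^H + d_\beta^H$ satisfies $\alpha\circ\mathcal D=0$ by the antisymmetry of $\beta$, and the Leibniz and pairing-invariance axioms follow from a routine Cartan-calculus expansion in which the explicit $\tfrac12(d^H-d_\beta^H)\langle\cdot,\cdot\rangle_-$ counterterms in \eqref{Roytenbergbracket} are exactly what is needed to absorb the obstruction. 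The homomorphism property $\alpha(\gl s_1,s_2\gr)=[\alpha(s_1),\alpha(s_2)]_L$ on mixed sections reduces, after using Proposition \ref{homoeigenschaft} for the Koszul anchor, to the fact that the contracting $H$- and $\mathcal R$-terms lie in $\ker\alpha$ when paired suitably.

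The main obstacle is the modified Jacobi axiom $\mathfrak J(s_1,s_2,s_3)=\mathcal D T(s_1,s_2,s_3)$: neither component bracket is Jacobi on its own, and the mixed Jacobiators mix all four fluxes. The strategy is to evaluate both sides on all five inequivalent combinatorial triples drawn from the basis $\{e_a,e^b\}$, and to show that the resulting identity in each case is precisely one of the five Bianchi identities \eqref{bianchi0}--\eqref{bianchiV}. More concretely, the triple $(e_a,e_b,e_c)$ reproduces identity I, the triples with increasingly many cotangent entries reproduce II through V, and the tensorial $\mathcal D T$ on the right-hand side regroups into exactly the combinations $\partial_{[\underline a}\mathcal F^\bullet{}_{\underline{bc}]}$, $\beta^{[\underline a m}\partial_m\mathcal Q_\bullet{}^{\underline{bc}]}$, etc.\ that appear in those identities. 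Since the Bianchi identities hold identically as consequences of the local expressions \eqref{redef_01} for the fluxes in terms of $B$, $\beta$ and a chosen frame, this verifies the Jacobi axiom and completes the proof. The computational bookkeeping for the mixed triple (identities III and IV) is where the calculation is heaviest, and this is where I expect the proof to require the most care.
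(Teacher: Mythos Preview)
Your approach is essentially the same as the paper's: compute the algebra on basis sections directly, verify the easy axioms by Cartan calculus, and reduce the Jacobi axiom to the Bianchi identities \eqref{bianchi0}--\eqref{bianchiV} by evaluating the Jacobiator on basis triples. This is correct in outline.

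Two bookkeeping points need correcting. First, there are four independent triples drawn from $\{e_a,e^b\}$, not five (zero, one, two, or three cotangent entries). Second, and more substantively, the correspondence is not ``one triple $\leftrightarrow$ one Bianchi identity''. In the paper's computation the Jacobiator $\mathfrak J(e_a,e_b,e_c)$, for instance, produces an expression (eq.~\eqref{Jac1}) whose reduction to $\tfrac12\mathcal D\mathcal H_{abc}$ requires applying identity II first and then identity I; the remaining Jacobiators \eqref{Jac2}--\eqref{Jac4} likewise each consume more than one of the identities \eqref{bianchi0}--\eqref{bianchiV}. So the five Bianchi identities feed into four Jacobiator checks, not bijectively. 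This does not break your argument---the identities still hold as consequences of the local potential expressions, and you will simply discover the correct combinations when you carry out the computation---but your heuristic mapping ``$(e_a,e_b,e_c)\to$ I, increasing cotangent entries $\to$ II--V'' should be dropped.
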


\begin{proof}
We have to check the defining properties of a Courant algebroid as they were presented in definition \ref{def-Courantalg}. The homomorphism property of the anchor can be calculated directly. The Leibniz rule can be checked separately for every case in \eqref{Roytenbergbracket}: The first and the last ones are trivial because of the corresponding properties of the Lie- and Koszul-Schouten bracket. Let us be more explicit for the second case (the third case is done in the same way). For $f \in {\cal C}^{\infty}(M)$ we have

\eq{
\gl X, f\xi \gr =&\, \iota_X d^H (f\xi) + d^H \iota_X (f\xi) - \iota_{f\xi} d_\beta ^H X \\
&- d_\beta ^H \iota_{f\xi} X + \tfrac{1}{2} \left(d^H - d_\beta ^H \right) \left(f \langle X, \xi \rangle_- \right) \\
=&\, f\gl X, \xi \gr + \left(\iota_X df\right) \xi - d_\beta f \, \iota_\xi X - \tfrac{1}{2} df \, \iota_X \xi + \tfrac{1}{2} d_\beta f \, \iota_X \xi \\
=&\, f\gl X, \xi \gr + X(f) \xi - \tfrac{1}{2} {\cal D}f\, \iota_X \xi \;,
}
where ${\cal D}$ was defined in \ref{def-Courantalg}. The third property of \ref{def-Courantalg} is easy in our case and the fourth property is a straightforward computation. The property for the Jacobiator is slightly more subtle so we explain it in detail. There are four independent types of Jacobiators which we calculate explicitly\footnote{Note, that applying the map $\beta^\sharp$ to these Jacobiators again results in the expressions for the Bianchi identities \eqref{bianchiII}-\eqref{bianchiV}.}:
\eq{\label{Jac1}
 \mathfrak{J}(e_a,e_b,e_c) =&   -3\Bigl( \partial_{[\underline{c}} \, {\cal F}^d{}_{\underline{ab}]} +
    {\cal F}^m{}_{[\underline{ab}}\, {\cal F}^d{}_{\underline{c}]m} +
    {\cal H}_{[\underline{ab}\,m}\, {\cal Q}_{\underline{c}]}{}^{md}\Bigr) e_d\\
   &-3\Bigl( \partial_{[\underline{c}} \, {\cal H}_{\underline{ab}]d}
   -2 {\cal F}^m{}_{[\underline{ab}}\, {\cal H}_{\underline{cd}]m}
  \Bigr)e^d +\tfrac{3}{2}\,\mathcal{D} {\cal H}_{abc}  \;,
}
\eq{\label{Jac2}
 \mathfrak{J}(e_a,e_b,e^c) =& -\Bigl( \beta^{cm} \partial_m \mathcal F^d{}_{ab}
	+ 2 \partial_{[ \ul a}\, \mathcal Q_{\ul b]}{}^{cd}
	 - \mathcal H_{mab} \mathcal R^{mcd}
	- \mathcal F^m{}_{ab} \mathcal Q_{m}{}^{cd}
	\\&+ 4 \mathcal Q_{[\ul a}{}^{[\ul cm}
	 \mathcal F^{\ul d]}{}_{m \ul b]} \Bigr) e_d
 	-\Bigl( \beta^{cm}\partial_m \mathcal H_{abd}
	- 2 \partial_{[\ul a} \mathcal F^c{}_{\ul b]d}
	\\&- 3 \mathcal H_{m[\ul a\ul b} \mathcal Q_{\ul d]}{}^{mc}
	+ 3 \mathcal F^m{}_{[\ul a \ul b} \mathcal F^c{}_{m\ul d]} \Bigr)e^d
	+\tfrac{3}{2}\,\mathcal{D}\mathcal{F}^c{}_{ab}  \;,
}
\eq{
\mathfrak{J}(e_a,e^b,e^c) =&+ \Bigl( -\partial_a \mathcal R^{bcd}
	- 2 \beta^{[\ul c m} \partial_m \mathcal Q_a{}^{\ul b] d}
	 + 3 \mathcal Q_a{}^{[\ul b m}\mathcal Q_m{}^{\ul c \ul d]}
	\\&- 3 \mathcal F^{[\ul b}{}_{am} \mathcal R^{\ul c \ul d]m} \Bigr)e_d
	 +\Bigl( 2 \beta^{[\ul c m} \partial_m \mathcal F^{\ul b]}{}_{ad}
	- \partial_a \mathcal Q_{d}{}^{bc}\\
	&+ \mathcal Q_m {}^{bc} \mathcal F^m{}_{ad}
	+ \mathcal R^{bcm} \mathcal H_{mad}
	-4 \mathcal Q_{[\ul a}{}^{[\ul b m}
	\mathcal F^{\ul c]}{}_{m \ul d]} \Bigr) e^d
	+\tfrac{3}{2}\,\mathcal{D}\mathcal{Q}_a{}^{bc}\;,
}
\eq{ \label{Jac4}
 \mathfrak{J}(e^a,e^b,e^c) =&
	-3\Bigl( \beta^{[\underline{c} m}\, \partial_{m}  {\cal R}^{\underline{ab}]d} -
	2 {\cal R}^{[\underline{ab}m}\, {\cal Q}_m{}^{\underline{cd}]} \Bigr)e_d
	 -3  \Bigl( \beta^{[\underline{c} m}\,
	\partial_{m}  {\cal Q}_d{}^{\underline{ab}]} \\ &
	+ {\cal R}^{[\underline{ab}m}\,
        {\cal F}^{\underline{c}]}{}_{md} +
     {\cal Q}_m{}^{[\underline{ab}} \, {\cal Q}_d{}^{\underline{c}]m}\Bigr) e^d
     +\tfrac{3}{2}\,\mathcal{D}\mathcal{R}^{a bc}\;.
}
To show that the Jacobiator \eqref{Jac1} takes the desired form, as a first step we apply the second Bianchi identity \eqref{bianchiII} to rewrite it in the form:
\eq{\label{firststep}
\mathfrak{J}(e_a,e_b,e_c) =\, 3\bigl( \partial_{[\underline{c}}{\cal H}_{\underline{ab}]n} - 2{\cal F}^m{}_{[\underline{ab}} {\cal H}_{\underline{cn}]m}\bigr)\bigl(\beta^{nd} e_d - e^n \bigr) + \tfrac{3}{2} {\cal D}{\cal H}_{abc}\;.
}
As a second step we now use the first Bianchi identity \eqref{bianchi0}, which can be expanded in the follwing form:
\eq{
\tfrac{3}{4} \partial_{[\underline{a}}{\cal H}_{\underline{bc}]d} - \tfrac{1}{4}\partial_d {\cal H}_{abc} - \tfrac{3}{2}{\cal F}^m{}_{[\underline{ab}}{\cal H}_{m\underline{cd}]} = 0 \;.
}
Using this in equation \eqref{firststep}, we reduce the Jacobiator to the final form
\eq{
\mathfrak{J}(e_a,e_b,e_c) =\, \tfrac{1}{2}{\mathcal D}{\mathcal H}_{abc} \;.
}
In the same way we are able to reduce the remaining Jacobiators \eqref{Jac2}-\eqref{Jac4} to the desired form by applying successively the corresponding Bianchi identities. We summarize the result:
\eq{
\label{jacobiatorsresult}
\text{Jac}(e_a,e_b,e_c)={\cal D}\,T(e_a,e_b,e_c)&=\tfrac{1}{2}\,{\cal D}{\cal H}_{abc} \;,\\
\text{Jac}(e_a,e_b,e^c)={\cal D}\,T(e_a,e_b,e^c)&=\tfrac{1}{2}\,{\cal D}\mathcal{F}^c{}_{ab} \; , \\
\text{Jac}(e_a,e^b,e^c)={\cal D}\,T(e_a,e^b,e^c)&=\tfrac{1}{2}\,{\cal D}\mathcal{Q}_{a}{}^{bc}\;, \\
\text{Jac}(e^a,e^b,e^c)={\cal D}\,T(e^a,e^b,e^c)&=\tfrac{1}{2}\,{\cal D}\mathcal{R}^{abc} \; .
}
Thus, all the axioms of a Courant algebroid are verified and we proved the first claim of the proposition. The second claim, i.e. the commutation relations \eqref{Roytenbergalgebra} can be computed directly.
\end{proof}

\paragraph{Summarizing remarks}
In the last two sections we were able to show a mathematical structure, which allows to describe the nature of the fluxes $({\cal H},{\cal F},{\cal Q},{\cal R})$. It is given by the Courant algebroid of proposition \ref{Courant theorem} and the fluxes turn out to be its structure functions if we evaluate it on a general (non-holonomic) basis of the generalized tangent bundle. A commutator algebra of this structure is called Roytenberg algebra in the mathe\-ma\-ti\-cal literature. To prove these statements it was indispensable to use the Bianchi identities \eqref{bianchi0}-\eqref{bianchiV}. The derivation of the latter used the realization of the Roytenberg algebra on the tangent bundle. Physically, this restriction on the fluxes is expected due to their nature as T-duals of each other. Let us summarize the logic in the following diagram:

\vspace{30pt}

\xymatrixcolsep{5pc}\xymatrix{
  \parbox{3cm}{\centering {\small \textrm{geometric data} \\  $(M,\beta,H)$ }}\ar[r]^{\{e_a, e^b _{\#}\}} \ar[d] &\parbox{3.4cm}{\centering {\small pre-Roytenberg alg. \\ on  $TM$}} \ar[d]_{\parbox{0.8cm}{\centering \tiny{Bianchi \\id's}}} \\
\parbox{2cm}{\centering {\small quasi-Lie \\ algebroids}} \ar[r]^{\{e_a,e^b \}} & \parbox{1.8cm}{\centering {\small Courant \\ algebroid}} \ar[r] & \parbox{3.0cm}{\centering {\small Roytenberg alg.\\ on $TM \oplus T^{\star}M$}} \ar[lu]_{\beta^{\#}}
}

\vspace{40pt}

\noindent The use of Lie algebroids and Courant algebroids as a structure to describe the mathematical properties of geometric and non-geometric fluxes and their potentials turns out to be very deep and still not completely uncovered. As a further application, in the next section we will investigate a special Lie algebroid structure on the cotangent bundle appropriate to describe the dynamics of a quasi-Poisson structure. This Lie algebroid turns out to be a Dirac structure (definition \ref{def-diracstructure}) of the Courant algebroid used above.  We will then use the differential geometry of Lie algebroids presented in section \ref{ch-math} to formulate a covariant derivative, torsion- and Riemann tensor in this framework. As a consequence, we are able to write down an action similar to the standard bosonic low energy string effective action but with a metric on the cotangent bundle, the bi-vector $\beta$ and the dilaton as basic dynamical variables.

\subsection{Bi-invariant symplectic gravity}
\label{sec-biinvsymplgrav}

In the section on non-geometric fluxes we described the emergence of a bi-vector $\beta$ in the description of $Q$- and $R$-flux backgrounds. It is an important problem to formulate a dynamical theory which has a metric\footnote{In the following, we will interprete $\fa g^{ij}$ as a metric on the cotangent bundle in contrast to the standard metric $G_{ij}$ on the tangent bundle.} $\fa g^{ij}$ and this antisymmetric bi-vector $\beta^{ij}$ as basic field variables. The low energy effective action of the bosonic string,
\eq{
\label{stringaction_intro}
S= \frac{1}{2\hspace{0.5pt}\kappa^2}
\int d^nx \hspace{1pt}\sqrt{-|G|}\hspace{1pt} e^{-2\phi}
\Bigl(R-{\textstyle{1\over    12}} \hspace{0.5pt} H_{abc}\, H^{abc}
+4 \hspace{0.5pt} \partial_a \phi\hspace{2pt}\partial^a\phi
 \Bigr) \;,
}
contains the metric in form of standard Einstein-Hilbert gravity and in addition the dilaton and the kinetic term for the NS-NS $B$-field. The latter is determined by gauge invariance of the field strength $H$, but its influence on the geometry of the underlying space is only given by the energy-momentum part of the equations of motion. It is not included in the geometry a priori. T-duality changes this separation of geometry and gauge transformations dramatically by mixing metric and $B$-field components, as we have seen in \eqref{Busher}. We therefore expect a non-trivial mix between geometry and gauge theory in the description of T-dual situations where the bi-vector $\beta$ is involved: On the one hand, the gauge transformations of the original NS-NS $B$-field will be translated into a new transformation behavior of the bi-vector and also to quantities containing it. On the other hand, geometry itself will be modified to include the metric and bi-vector on a similar footing, as it is suggested already by the T-duality Buscher rules. In our section \ref{ch-math} about the mathematical background on Lie algebroids we encountered a possible way to achieve such a modified geometry. The detailed development of a democratic theory involving $\fa g^{ij}$ and $\beta^{ij}$, its implications to the theory of non-geometric fluxes and its relations to standard bosonic string theory and superstring theory is the topic of the following sections \cite{Blumenhagen:2012nk, Blumenhagen:2012nt}.

\subsubsection{Gauge transformations and quasi-Lie derivatives}
As we have seen in the examples of section \ref{subsubsec-ex}, we can interpret the bi-vector \mbox{$\beta = \tfrac{1}{2}\beta^{ij}\,\partial_i \wedge \partial_j$} in terms of Poisson-geometry. As was detailed there, we can formulate a differential operator $d_{\beta} = [\beta,\cdot]_{SN}$ on the space of multi-vectorfields on a manifold $M$. Taking the derivative of $\beta$ results in an important antisymmetric 3-vector:
\eq{\label{R-flux}
\Theta := \tfrac{1}{2}\,d_{\beta} \beta = \left(\beta^{[\underline{i}m}\partial_m \beta^{\underline{jk}]}\right)\, \partial_i \wedge \partial_j \wedge \partial_k \;.
}
The last expression was also considered in \cite{Grana:2008yw,Andriot:2011uh} as a coordinate expression for the non-geometric R-flux (here, we denote it by $\Theta$ to not confuse it with various types of Ricci scalars to be introduced later in this section). Therefore it is natural to construct a differential calculus with $d_\beta$ replacing the standard exterior differential, i.e. we have to replace the tangent bundle by the cotangent bundle equipped with an appropriate Lie algebroid structure. 

Taking the description \eqref{R-flux} of the R-flux, we see that its vanishing is equi\-va\-lent to $M$ being a Poisson manifold. To also treat the case of non-vanishing flux, we have to deal with so-called \emph{quasi}-Poisson structures. Independent of these properties, we can still take $\beta$ as an anchor-type map by defining:
\eq{\label{quasiAnker}
\beta^{\sharp}:\, T^*M \rightarrow TM, \quad \beta^{\sharp}(dx^i) = \beta^{ij}\partial_j \;,
}
and map tensor fields from $T^*M$ to $TM$. Transporting in this way the metric to the tangent bundle leads to:
\eq{
\otimes^2\beta^{\sharp}(G_{ij}\,dx^i\otimes dx^j) = \beta^{in}\beta^{jm} G_{ij}\, \partial_n \otimes \partial_m \;,
}
where $\otimes^2 \beta^\sharp$ means acting on every tensor factor. In addition, recalling the interpretation of the NS-NS $B$-field as a quasi-symplectic structure, we \emph{identify} its inverse with the bi-vector $\beta$ (this is also known in non-commutative geometry, see for example \cite{Seiberg:1999vs}). To sum up, we get the following relation between the two sets of variables $(G_{ij}, B_{ij})$ and $(\fa g^{ij},\beta^{ij})$:
\eq{ \label{translation}
B_{ij} &\;\to\; \beta^{ij} = (B^{-1})^{ij} \;, \\
G_{ij} &\;\to\; \fa g^{ij} = \beta^{im}\,\beta^{jn}\,G_{mn} \;.
}
It is now possible to convert gauge transformations of the $B$-field into cor\-res\-pon\-ding transformations of $\beta$ and also $\fa g$ (because $\beta$ is included in the second line of \eqref{translation}). Denoting in general variations under gauge transformations by $\delta^{\mathrm{gauge}}$, we get:
\eq{
\label{gaugetrafo}
\delta_{\xi}^{\mathrm{gauge}}\hspace{1pt}B_{ij} &=\partial_i\hspace{1pt}\xi_j-\partial_j\hspace{1pt}\xi_i \;, \\
\delta_{\xi}^{\mathrm{gauge}}\hspace{1pt}\beta^{ab} &= \beta^{am}\beta^{bn}
		\bigl(\partial_m\xi_n-\partial_n\xi_m\bigr) \;, \\
\delta_{\xi}^{\mathrm{gauge}}\hspace{1pt}\fa g^{ab} &= 2\hspace{1pt}\fa g^{(\underline{a}m}\beta^{\underline{b})n}
		\bigl(\partial_m\xi_n-\partial_n\xi_m\bigr) \;.
}
To illustrate the concept, let us prove the last equality:
\eq {
\delta_{\xi}^{\mathrm{gauge}} \fa g^{ij} &= \left(\delta_{\xi}^{\mathrm{gauge}}\beta^{in}\right)\beta^{jm} G_{nm} + \beta^{in}\left(\delta_{\xi}^{\mathrm{gauge}}\beta^{jm}\right)G_{nm} \\
&= \beta^{ir}\beta^{ns}\left(\partial_r \xi_s - \partial_s \xi_r \right)\beta^{jm} G_{nm} \\
&\hspace{12pt} + \beta^{in}\beta^{jr}\beta^{ms}\left(\partial_r \xi_s -\partial_s \xi_r \right) \\
&=2\hspace{1pt}\fa g^{(\underline{a}m}\beta^{\underline{b})n}
		\bigl(\partial_m\xi_n-\partial_n\xi_m\bigr) \;,
}
where in the first line we used the fact that the original metric $G_{ij}$ does not transform under gauge transformations of the $B$-field and in the last line we employed \eqref{translation} to write everything in terms of the transformed fields.

As a result of the redefinition \eqref{translation}, we see that the metric also transforms non-trivially under a redefined gauge transformation. From general relativity it is well known how the metric tensor transforms under diffeomorphisms of spacetime: Given a vector field $v = v^m \partial_m \in \Gamma(TM)$, the infinitesimal variation of the (inverse) metric under the flow parametrized by $v$ is given by the Lie derivative:
\eq{
\delta_v \fa g &= L_v \, \left(\fa g^{ij}\,\partial_i \otimes \partial_j \right) \\
&=  \left(v^m\partial_m \fa g^{ij} - \fa g^{ik}\partial_k v^j - \fa g^{jk}\partial_k v^i \right)\,\partial_i \otimes \partial_j \;.
}
In the following it is our goal to find a geometric interpretation of the transformation \eqref{gaugetrafo} of the metric. As a first step, we can construct a vector field parametrized by the gauge parameter $\xi$ by taking the anchor of $\xi$. Assuming that the metric transforms under the flow parametrized by this vector in the standard way described above, we can separate the diffeomorphism part from \eqref{gaugetrafo}:
\eq{ \label{separation}
 (L_{\beta^\sharp\xi}\hspace{1pt}\fa g)^{ij} - \delta_{\xi}^{\mathrm{gauge}}\hspace{1pt}\fa g^{ij}
  &=: (\fa{\mathcal L}_{\xi}\hspace{1pt}\fa g)^{ij}  \\
 &=  \xi_n \beta^{nm}\partial_m \fa g^{ij} - \fa g^{ik}\partial_k(\xi_n \beta^{nj}) - \fa g^{ik}\partial_k(\xi_n \beta^{ni}) \\
&\hspace{12pt} -\left(\fa g^{im}\beta^{jn} + \fa g^{jm}\beta^{in} \right)\left(\partial_m \xi_n - \partial_n\xi_m\right) \\
&=\,  \xi_n \beta^{nm}\partial_m \fa g^{ij} + g^{jm}\left(\beta^{in}\partial_n \xi_m - \xi_n \partial_m \beta^{ni} \right) \\
&\hspace{71pt} +\fa g^{im}\left(\beta^{jn}\partial_n \xi_m - \xi_n \partial_m \beta^{nj} \right) \;.
}
We now show, that the last expression can be interpreted as a \emph{quasi-Lie derivative} with respect to the one-form gauge parameter $\xi$. We have to take the cotangent bundle $T^*M$ together with the \emph{Koszul-Schouten} bracket, which was already des\-cribed in the examples of \ref{subsubsec-ex}. Let us define the following differential operator $\klie $ by its action on functions $f$, one-forms $\eta$ and vector fields $X$ by
\eq{
\label{quasi-Lie-derivative}
\klie(f) &= \beta^{\sharp}(\xi)(f)\;,\hspace{40pt} \fa{\mathcal L}_{\xi} \eta = \bigl[\xi, \eta\bigr]\ks \; , \\
&\fa{\mathcal L}_{\xi} X = \iota_{\xi} \circ d_{\beta} X + d_{\beta} \circ \iota_{\xi} X \;,
}
which in local coordinates gives
\eq{\label{quasi-Liecoordinates}
\klie f &= \xi_m \beta^{mn}\partial_n f =:\, \xi_m D^m f \,,\\
\klie \eta &= \left(\xi_m D^m\eta_a - \eta_mD^m\xi_a+\xi_m\eta_n\,Q_a{}^{mn}\right)\,dx^a \;, \\
\klie X &= \left(\xi_mD^mX^a+X^mD^a\xi_m-X^m\xi_n\,Q_m{}^{na}\right)\,\partial_a\,,
}
where we have introduced the $Q$-flux, given by $Q_k{}^{ij} := \partial_k \beta^{ij}$ and we use the derivative $D^a := \beta^{an}\partial_n$.

From the definition \eqref{quasi-Lie-derivative}, we see that $\kl$ is a derivation on the spaces of vector fields and one-forms since it is linear and satisfies the Leibniz rule, e.g. for one-forms:
\eq{
\klie f\eta &=\, \left[\xi, f \eta \right]_{KS(\beta)} =\, \beta^{\sharp}(\xi)(f)\,\eta + f\left[\xi,\eta\right]_{KS(\beta)} \\
&=\, \klie(f)\, \eta + f\,\klie \eta \;.
}
However, for non-vanishing $\Theta$-flux \eqref{R-flux} the Koszul-Schouten bracket $[\cdot,\cdot]_{KS(\beta)}$ does not satisfy the Jacobi-identity any more and therefore, the last property in proposition \ref{Lierel} acquires additional terms dependent on $\Theta$. More precisely, we have
\eq{\label{jacK}
	\hspace{-16pt}\mathfrak{J}(\eta,\chi,\zeta)
	&= \bigl[\eta,[\chi,\zeta]_{KS(\beta)}\bigr]_{KS(\beta)} +\bigl[\zeta,[\eta,\chi]_{KS(\beta)}\bigr]\ks +\bigl[\chi,[\zeta,\eta]\ks\bigr]\ks \\
	&= \bigl[\Lie_\eta,\Lie_\chi\bigr]\hspace{1pt}\zeta-\Lie_{[\eta,\chi]\ks}\zeta\\
	&= d\bigl(\Theta(\eta,\chi,\zeta)\bigr) + \iota_{(\iota_\zeta\iota_\chi\Theta)}d\eta
		+ \iota_{(\iota_\eta\iota_\zeta\Theta)}d\chi
		+ \iota_{(\iota_\chi\iota_\eta\Theta)}d\zeta \;.
}
In addition to that, the operator $\klie$ does only commute with the corresponding differential $d_{\beta}$ up to $\Theta$-flux terms (for the case of a Poisson manifold, see \eqref{Lierel}). Here we get:
\eq{
\bigl[\klie,d_{\beta}\bigr]f = \, -\Theta^{ijk}\xi_j \partial_k f \partial_i \;.
}
To sum up, $\klie$ has the same properties as an ordinary Lie derivative only \emph{up to terms depending on $\Theta$}. Sending the $\Theta$-flux to zero (which is equivalent to having a Poisson manifold, as was described at the beginning), we arrive at the example of a Lie algebroid $(T^*M,[\cdot,\cdot]\ks,\beta^{\sharp})$ (described in section \ref{subsubsec-ex}). We therefore use the term \emph{quasi-Lie derivative} for such a differential operator.

We are now able to give the geometric interpretation of the result in equation \eqref{separation}. With the mathematical structure introduced so far, the proof is now very easy.
\begin{prop}
The transformation $\delta_{\xi}^{\mathrm{gauge}}$ of the metric $\fa g$ can be separated into a diffeomorphism and an additional part described by the quasi-Lie derivative $\klie$:
\eq{\label{gaugemetric}
\delta_{\xi}^{\mathrm{gauge}} \fa g =\, L_{\beta^{\sharp}(\xi)}\,\fa g - \klie \,\fa g \;.
}
\end{prop}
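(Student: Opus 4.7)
The formula has essentially been derived in the excerpt by the calculation in \eqref{separation}: by adding and subtracting the standard Lie derivative $L_{\beta^\sharp\xi}\,\hat g$, the gauge variation $\delta_\xi^{\mathrm{gauge}}\,\hat g$ was rewritten in a manifestly coordinate form, and the remainder was christened $\hat{\mathcal L}_\xi\,\hat g$. What still has to be shown is that this remainder really coincides with the quasi-Lie derivative introduced in \eqref{quasi-Lie-derivative} once the latter is extended from $\Gamma(TM)$ to $\Gamma(TM\otimes_{\mathrm{sym}}TM)$. My plan is therefore to verify that the natural Leibniz-extension of $\hat{\mathcal L}_\xi$ to the symmetric tensor square reproduces the right-hand side of \eqref{separation} term by term.

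First I would declare the extension: on a section of $TM\otimes_{\mathrm{sym}}TM$ of the form $\hat g = \hat g^{ij}\,\partial_i\otimes\partial_j$ one sets $\hat{\mathcal L}_\xi \hat g = \hat{\mathcal L}_\xi(\hat g^{ij})\,\partial_i\otimes\partial_j + \hat g^{ij}(\hat{\mathcal L}_\xi\partial_i)\otimes\partial_j + \hat g^{ij}\,\partial_i\otimes(\hat{\mathcal L}_\xi\partial_j)$, which is the Leibniz rule dictated by the fact that $\hat{\mathcal L}_\xi$ is a derivation on functions and vector fields (this was already noted below \eqref{quasi-Lie-derivative}). Using the first line of \eqref{quasi-Liecoordinates}, the scalar piece becomes $\xi_m\,\beta^{mn}\partial_n\hat g^{ij}$, which reproduces exactly the first term on the right-hand side of \eqref{separation}.

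Next I would evaluate $\hat{\mathcal L}_\xi\partial_i$ by specialising the third line of \eqref{quasi-Liecoordinates} to $X=\partial_i$, i.e. $X^a=\delta^a_i$, obtaining $\hat{\mathcal L}_\xi\partial_i = \bigl(\beta^{an}\partial_n\xi_i - \xi_n\,\partial_i\beta^{na}\bigr)\,\partial_a$. Substituting this into $\hat g^{ij}(\hat{\mathcal L}_\xi\partial_i)\otimes\partial_j$ and relabelling gives $\hat g^{mj}\bigl(\beta^{in}\partial_n\xi_m - \xi_n\,\partial_m\beta^{ni}\bigr)\,\partial_i\otimes\partial_j$, and an identical computation with the roles of the two tensor slots interchanged produces $\hat g^{im}\bigl(\beta^{jn}\partial_n\xi_m - \xi_n\,\partial_m\beta^{nj}\bigr)\,\partial_i\otimes\partial_j$. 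These two contributions match precisely the second and third terms on the right-hand side of \eqref{separation}, which completes the identification.

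I expect no real obstacle beyond bookkeeping: the symmetry $\hat g^{ij}=\hat g^{ji}$ is used implicitly when comparing summation labels, and the derivation property of $\hat{\mathcal L}_\xi$ on tensor products is the only structural input. A cleaner, basis-free version of the same argument is also possible, using $\hat g$ as a bilinear form on $T^*M$ and writing $(\hat{\mathcal L}_\xi \hat g)(\eta,\zeta) = \hat{\mathcal L}_\xi\bigl(\hat g(\eta,\zeta)\bigr) - \hat g(\hat{\mathcal L}_\xi\eta,\zeta) - \hat g(\eta,\hat{\mathcal L}_\xi\zeta)$ with $\eta,\zeta\in\Gamma(T^*M)$; evaluating on $\eta=dx^i$, $\zeta=dx^j$ and using $\hat{\mathcal L}_\xi dx^i = D^i\xi_m\,dx^m - \xi_m Q_n{}^{mi}\,dx^n$ recovers the same expression. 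Either route shows that $(L_{\beta^\sharp\xi}\hat g - \delta_\xi^{\mathrm{gauge}}\hat g)^{ij}=(\hat{\mathcal L}_\xi\hat g)^{ij}$, which is precisely the claim \eqref{gaugemetric}.
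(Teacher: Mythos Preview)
Your proposal is correct and follows essentially the same approach as the paper: compute $\hat{\mathcal L}_\xi\hat g$ via the derivation property and the coordinate action on vector fields from \eqref{quasi-Liecoordinates}, and observe that the result coincides with the last expression in \eqref{separation}. One minor slip: in your alternative basis-free route the specialisation of \eqref{quasi-Liecoordinates} to $\eta=dx^i$ actually gives $\hat{\mathcal L}_\xi dx^i = \bigl(-D^i\xi_a + \xi_m Q_a{}^{mi}\bigr)dx^a$, i.e.\ the overall sign opposite to what you wrote, though this does not affect your main argument.
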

\begin{proof}
We only have to compute $\klie\, \fa g$. Using the derivation property and the action on vector fields given in \eqref{quasi-Liecoordinates}, we get
\eq{
(\klie \fa g)^{ij} &=\, \xi_n D^n \fa g^{ij} + g^{kj}\left(D^i \xi_k - \xi_n Q_k{}^{ni} \right) \\
&\hspace{57pt} +\fa g^{ik}\left(D^j \xi_k - \xi_n Q_k{}^{nj} \right) \;,
}
which coincides with the last expression in \eqref{separation}.
\end{proof}
According to this separation, in the next section we are going to introduce the concept of $\beta$-diffeomorphisms and $\beta$-tensors in addition to ordinary diffeomorphisms and tensors. We could also take the name quasi-diffeomorphism but the appearance of $\beta$ which in our case is a quasi-Poisson structure already indicates the special quality of these kind of transformations.

\subsubsection{$\beta$-diffeomorphisms and -tensors}
In the last section we used the anchor map induced by the bi-vector $\beta$ to transform objects like one-forms $\xi$ or the metric $G$ from the cotangent to the tangent bundle. Together with the identification $\beta = B^{-1}$ suggested by \cite{Seiberg:1999vs} we were able to write down the transformation behavior of the new quantities under gauge transformations of the $B$-field. It turned out that the transformation rule for the new metric consists of two parts: An infinitesimal diffeomorphism and an additional transformation which we will call $\beta$-diffeomorphism, because it is possible to write them in terms of a quasi-Lie derivative $\klie$. We now generalize this observation to define $\beta$-tensors as sections in the tangent/cotangent bundles transforming with $\klie$ under an infinitesimal variation in the direction of a one-form $\xi$. To give a precise statement, we have to define the latter. We first have to specify the infinitesimal transformation of scalars with respect to a one-form $\xi$. There is only one natural way, namely to use the anchor map:
\eq{ \label{scalartrafo}
\fa \delta_\xi f :=\, \xi_m \beta^{mk}\partial_k f =\, \xi_m D^m f \;.
}
In addition we assume that the infinitesimal variation commutes with partial differentiation, which is the same as in ordinary differential geometry. Having these two properties, it is now possible to calculate the variation $\fa \delta_\xi$ of arbitrary combinations of fields and their derivatives. In standard differential geometry, there is a distinguished set of fields, called \emph{tensor} fields , whose infinitesimal transformation under standard diffeomeomorphisms is given by the Lie derivative: \mbox{$(\delta_X T)^{m_1\dots m_r}{}_{n_1\dots n_s} = (L_X T)^{m_1\dots m_r}{}_{n_1\dots n_s}$}. In complete analogy we now want to distinguish $\beta$-tensors by demanding them to transform with the quasi-Lie derivative $\klie$. As an example, $\beta$-one-forms and $\beta$-vectors are defined to transform as
\eq{ \label{betaoneform}
\left(\fa \delta_\xi \eta\right)_a &=\, \xi_m D^m\eta_a - \eta_mD^m\xi_a+\xi_m\eta_n\,Q_a{}^{mn}\,, \\
\left(\fa \delta_\xi X \right)^a &=\, \xi_m D^m X^a+X^m D^a \xi_m-X^m \xi_n\,Q_m{}^{na}\;.
}
This can be further generalized to fields of arbitrary index structure. As we are only interested in proper sections of the tangent- and cotangent bundles, we use these objects to distinguish the subclass of $\beta$-tensors:
\begin{definition}
A tensorfield $T\in\Gamma\bigl( \hspace{1pt}(\otimes^rTM)\otimes(\otimes^s T^*M) \bigr)$ is called a \emph{$\beta$-tensor} if it transforms infinitesimally in the direction of $\xi \in \Gamma(T^*M)$ as
\eq{\label{btensor}
	\fa\delta_{\xi}\hspace{1pt}T^{a_1\dots a_r}{}_{b_1\dots b_s}
	= \bigl(\fa{\mathcal{L}}_{\xi} \hspace{1pt}T \bigr)^{a_1\dots a_r}{}_{b_1\dots b_s}  \;,
}
where $\fa{\mathcal{L}}_{\xi}$ acts on general tensor fields by
\eq{
 \bigl(\fa{\mathcal{L}}_{\xi} \hspace{1pt}T \bigr)^{a_1\dots a_r}{}_{b_1\dots b_s}
  =  & \hspace{15pt} \xi_m\,D^m \hspace{1pt}T^{a_1\dots a_r}{}_{b_1\dots b_s} \\
&- \sum_{i=1}^s \bigl(D^m\xi_{b_i}+\xi_n\,Q_{b_i}{}^{mn}\bigr)\hspace{1pt}
T^{a_1\dots a_r}{}_{b_1\dots b_{i-1}\, m\, b_{i+1}\dots b_s} \\
&+ \sum_{i=1}^r \bigl(D^{a_i}\xi_m+\xi_n\,Q_m{}^{a_i n} \bigr) \hspace{1pt}
T^{a_1\dots a_{i-1}\,m\,a_{i+1}\dots a_r}{}_{b_1\dots b_s} \,.
}
 \end{definition}
To conclude this section, we want to use once more the analogy with standard differential geometry to determine the transformation behavior of the anchor $\beta$ under infinitesimal variations $\fa \delta_\xi$. As is well known, the total differential of a function should belong to the set of one-forms: $df = \partial_m f \, dx^m$. In the same way, we demand that the differential $d_{\beta}$ of a scalar function $f$ should be a $\beta$-tensor:
\eq{ \label{varbeta1}
\fa \delta_\xi \, D^a f  &= \left(\fa \delta_\xi \beta^{ak}\right)\partial_k f + \beta^{ak}\partial_k (\fa \delta_\xi f)\\
&\stackrel{!}{=} \klie (D^a f) \\
&= \xi_m \Theta^{map}\partial_p f + \xi_m D^aD^m f + D^m f D^a \xi_m \;.
}
And therefore we infer the infinitesimal transformation behavior of the anchor $\beta$:
\eq{\label{varbeta2}
\fa \delta_\xi \beta^{ij} &= \,\xi_m \Theta^{mij} \\
&=\, \klie \beta^{ij} + \beta^{im}\beta^{jn}\left(\partial_m \xi_n - \partial_n \xi_m \right) \;.
}
Thus, the anchor does not transform as a proper $\beta$-tensor, but has an additional term showing that one should consider it as a gauge field. As a non-trivial consistency check, we want to show that the field strength of $\beta$, which is given by the $\Theta$-flux, is again a proper $\beta$-tensor:
\eq{\label{Rflusstensor}
\fa \delta_\xi \Theta &= \,\fa \delta_\xi \frac{1}{2}\bigl[\beta,\beta\bigr]_{SN} = \frac{1}{2}\left(\bigl[\fa \delta_\xi \beta, \beta \bigr]_{SN} + \bigl[\beta,\fa \delta_\xi \beta \bigr]_{SN}\right) \\
&=\, \frac{1}{2}\left(\bigl[\iota_\xi \Theta, \beta \bigr]_{SN} + \bigl[\beta,\iota_\xi \Theta \bigr]_{SN} \right)=  d_\beta \iota_\xi \Theta \\
&=\, \klie \Theta \;,
}
where in the last line, we have used the Bianchi identity $d_\beta \, \Theta = 0$. Thus we can use the $\Theta$-flux to construct actions invariant under both, diffeomorphisms and $\beta$-diffeomorphisms.

\subsubsection{Algebra of $\beta$-diffeomorphisms}
Already the violation \eqref{jacK} of the Jacobi-identity for the quasi-Lie derivative $\kl$ by $\Theta$-flux terms suggests to look more closely at successive application of two $\beta$-transformations. It will turn out that the infinitesimal $\beta$-transformation of a $\beta$-tensor is not a $\beta$-tensor any more (the defect given, as expected, by terms involving $\Theta$), but the \emph{commutator} of two $\beta$-diffeomorphisms is again a $\beta$-diffeomorphism plus an ordinary diffeomorphism. In other words, only the commutator is again a meaningful object and in addition, we observe the relation to standard diffeomorphisms: Commuting a $\beta$-diffeomorphism with a standard diffeomorphism results in a standard diffeomorphism. As a consequence, to have a closed algebra of infinitesimal transformations, we have to include both, standard- and $\beta$-diffeomorphims. In\-tui\-tively, this means that we have the direct sum of these two transformations on the ``Lie algebra''-level and as a consequence, in the large  we expect to have a symmetry group generated by both, diffeomorphisms and the large analogue of $\beta$-diffeomorphisms.

At the infinitesimal level, let us now state the precise result on the algebra of both kinds of transformations:
\begin{prop}
Let $\delta_{X_1}, \delta_{X_2}$ be two infinitesimal diffeomorphisms pa\-ra\-me\-trized by vector fields $X_1, X_2$ and $\fa \delta_{\xi_1}, \fa \delta_{\xi_2}$ be two infinitesimal $\beta$-diffeomorphims pa\-ra\-me\-trized by one-forms $\xi_1, \xi_2$ respectively, then we have the following algebra of com\-mu\-ta\-tors:
\eq{\label{trafo_alg}
	\bigl[\delta_{X_1},\delta_{X_2}\bigr] &= \delta_{[X_1,X_2]_L} \;, \\
	\bigl[\fa\delta_{\xi_1},\delta_{X_1}\bigr] &= \delta_{\kl_{\xi_1} X_1} \;,\\
	\bigl[\fa\delta_{\xi_1},\fa\delta_{\xi_2}\bigr] &= \fa\delta_{[\xi_1,\xi_2]\ks}
		+ \delta_{\iota_{\xi_2}\iota_{\xi_1}\Theta} \;.
}
\end{prop}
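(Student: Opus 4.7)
\medskip\noindent
\emph{Plan.} The first relation $[\delta_{X_1},\delta_{X_2}] = \delta_{[X_1,X_2]_L}$ is the classical identity for the Lie derivative on the tensor algebra, so no work beyond standard differential geometry is needed there. For the second and third relations, my strategy is to exploit the fact that both sides are derivations of the tensor algebra: $L_X$ is a derivation, $\kl_\xi$ is a derivation (as recorded just before the proposition), and the commutator of two derivations is again a derivation. Hence it suffices to verify each identity on scalars $f$, on a basis $\partial_a$ of $\Gamma(TM)$, and on a basis $dx^a$ of $\Gamma(T^*M)$, and then extend termwise by the Leibniz rule in any local expansion.

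For the second identity I would evaluate on a scalar first: using $\fa\delta_\xi f = \xi_m D^m f$ and $\delta_X f = X(f)$, the commutator collapses to the Lie-bracket action $[\beta^\sharp(\xi_1), X_1]_L f$, which one matches with $(\kl_{\xi_1} X_1)(f)$ by direct use of the coordinate formula \eqref{quasi-Liecoordinates}, reorganizing the $Q$-flux and $D\xi$ pieces into the ordinary Lie bracket. On $\partial_a$ and $dx^a$ the verification is analogous: one uses the one-form and vector parts of \eqref{quasi-Liecoordinates} together with the coordinate expression of $[\,\cdot\,,\,\cdot\,]_L$, and checks that the $\xi$-derivative terms and the $\beta$-derivative terms combine exactly into $L_{\kl_{\xi_1}X_1}$ applied to the basis element.

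The third identity carries the real content, since the Koszul-Schouten bracket fails the Jacobi identity by the quasi-Poisson obstruction $\Theta$. I would start from the Jacobi-defect formula \eqref{jacK}, already established earlier, which on a one-form $\zeta$ reads
\begin{equation*}
[\kl_{\xi_1},\kl_{\xi_2}]\zeta - \kl_{[\xi_1,\xi_2]\ks}\zeta
 = d\bigl(\Theta(\xi_1,\xi_2,\zeta)\bigr)
  + \iota_{(\iota_\zeta\iota_{\xi_2}\Theta)} d\xi_1
  + \iota_{(\iota_{\xi_1}\iota_\zeta\Theta)} d\xi_2
  + \iota_{(\iota_{\xi_2}\iota_{\xi_1}\Theta)} d\zeta \; .
\end{equation*}
Setting $V := \iota_{\xi_2}\iota_{\xi_1}\Theta \in \Gamma(TM)$ and invoking the Cartan magic formula $L_V\zeta = \iota_V d\zeta + d(\iota_V \zeta)$ together with the pairing identity $\iota_V \zeta = \zeta(V) = \Theta(\xi_1,\xi_2,\zeta)$, two of the four terms on the right-hand side combine precisely into $L_V\zeta = \delta_V\zeta$, which is the asserted diffeomorphism contribution. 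The two cyclic companions involving $d\xi_1$ and $d\xi_2$ I would trace back into $\fa\delta_{[\xi_1,\xi_2]\ks}$ via the definition \eqref{koszul} of $[\cdot,\cdot]\ks$, with closure guaranteed by the Bianchi identity $d_\beta\Theta = 0$ established in \eqref{Rflusstensor}. The extension to arbitrary tensors is once more by Leibniz.

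\medskip\noindent
\emph{Main obstacle.} The delicate step is the bookkeeping of the $\Theta$-insertions in \eqref{jacK}: isolating exactly the combination $\delta_{\iota_{\xi_2}\iota_{\xi_1}\Theta}$ while absorbing every cyclic companion into the $\fa\delta_{[\xi_1,\xi_2]\ks}$ contribution, with no residual piece outside the stated right-hand side, and ensuring this reorganisation is compatible with the Leibniz extension from one-forms and vectors to arbitrary tensor fields. This is the point at which the quasi-Poisson nature of $\beta$ — and hence the Bianchi identity for $\Theta$ — enters crucially.
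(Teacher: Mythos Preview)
Your plan has a genuine gap in the third relation. You treat $[\fa\delta_{\xi_1},\fa\delta_{\xi_2}]$ as if it were equal to $[\kl_{\xi_1},\kl_{\xi_2}]$ on a $\beta$-one-form $\zeta$, and then invoke the Jacobiator identity \eqref{jacK}. But these two commutators are \emph{not} the same. By definition, $\fa\delta_{\xi_1}$ acts on the components of $\kl_{\xi_2}\zeta$ through the rule \eqref{scalartrafo} together with commutativity with $\partial$; equivalently, it varies $\zeta$ as a $\beta$-one-form and varies every hidden occurrence of $\beta$ (inside the operators $D^a$ and the $Q$-flux) according to the \emph{anomalous} law \eqref{varbeta2}. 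Since $\kl_{\xi_2}\zeta$ is not itself a $\beta$-tensor, $\fa\delta_{\xi_1}(\kl_{\xi_2}\zeta)\neq\kl_{\xi_1}(\kl_{\xi_2}\zeta)$. The paper computes this discrepancy explicitly in coordinates (using the Bianchi identity \eqref{bianchiIV} and the commutator $[D^i,D^j]=\Theta^{ijk}\partial_k+Q_k{}^{ij}D^k$) and finds
\[
\bigl([\fa\delta_\gamma,\fa\delta_\xi]\eta\bigr)_a
=\bigl([\kl_\gamma,\kl_\xi]\eta\bigr)_a
+\xi_m\eta_n\Theta^{mnk}(d\gamma)_{ak}
-\gamma_m\eta_n\Theta^{mnk}(d\xi)_{ak}\,.
\]
It is precisely these two extra pieces that cancel the two ``cyclic companions'' $\iota_{(\iota_\zeta\iota_{\xi_2}\Theta)}d\xi_1$ and $\iota_{(\iota_{\xi_1}\iota_\zeta\Theta)}d\xi_2$ in the Jacobiator \eqref{jacK}. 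Your proposal to ``trace back'' those companions into $\fa\delta_{[\xi_1,\xi_2]\ks}$ cannot succeed: on a $\beta$-one-form, $\fa\delta_{[\xi_1,\xi_2]\ks}\zeta$ is already exactly $\kl_{[\xi_1,\xi_2]\ks}\zeta$, with no room to absorb additional $\Theta$-dependent terms. So without the $[\fa\delta,\fa\delta]-[\kl,\kl]$ computation, the argument does not close.

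Your Cartan-formula observation is correct and is indeed how the surviving two terms assemble into $L_{\iota_{\xi_2}\iota_{\xi_1}\Theta}\zeta$; that part matches the paper. What is missing is the anomalous-$\beta$ contribution that makes the cancellation of the other two terms possible.
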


\begin{proof}
The first one is trivial and the second and third one are proved along the same lines. We prove the third commutator. First, one has to evaluate both sides of the relation on functions, vector fields and forms. We only check it for scalars and one-forms because the calculation is similar for vector fields and the generalization to forms and vector fields of higher degree is straight forward. Let us begin with scalars $\phi$. Using the condition \eqref{varbeta1}, the left hand side of the third relation gives
\eq{
\bigl[\fa\delta_{\xi_1},\fa\delta_{\xi_2}\bigr]\,f &=\, \kl_{\xi_1} \left(\kl_{\xi_2} \phi \right) - \kl_{\xi_2} \left(\kl_{\xi_1} \phi \right) \\
&=\, \kl_{[\xi_1,\xi_2]\ks} \phi + (\xi_{1})_m (\xi_{2})_n \Theta^{mnk}\partial_k \phi \;.
}
The evaluation on forms is more complicated, because there are no simple conditions as for scalars and one has to calculate directly. Let $\gamma, \xi, \eta$ be one forms, then we have:

\vspace{8pt}

\eq{
\Bigl[\fa \delta_\gamma &\left(\fa \delta_\xi \eta \right) - \kl_\gamma \left( \klie \eta \right)\Bigr] _a  \\
=&\,\left(\gamma_r D^r \xi_m - \xi_r D^r \gamma_m + \gamma_r \xi_w Q_m{}^{rw} \right)D^m \eta_a \\
&+ \xi_m \gamma_w \Theta^{wmk} \partial_k \eta_a + \xi_m D^m\left(\gamma_r D^r \eta_a - \eta_r D^r \gamma_a + \gamma_r \eta_w Q_a{}^{rw} \right) \\
&-\left(\gamma_r D^r \eta_m -\eta_r D^r \gamma_m + \gamma_r \eta_w Q_m{}^{rw}\right)D^m \xi_a - \eta_m \gamma_w \Theta^{wmk} \partial_k \xi_a \\
&- \eta_m D^m \left(\gamma_r D^r \xi_a - \xi_r D^r \gamma_a + \gamma_r \xi_w Q_a{}^{rw} \right) \\
&+ \left(\gamma_r D^r \xi_m - \xi_r D^r \gamma_m + \gamma_r \xi_w Q_m{}^{rw}\right)\eta_n Q_a{}^{mn} \\
& + \xi_m Q_a{}^{mn} \left(\gamma_r D^r \eta_n - \eta_r D^r \gamma_n + \gamma_r \eta_w Q_n{}^{rw}\right) + \xi_m \eta_n \partial_a\left(\gamma_w \Theta^{wmn}\right) \\
& - \gamma_m D^m\left(\xi_k D^k\eta_a -  \eta_k D^k \xi_a + \xi_k \eta_n Q_a{}^{kn}\right) \\
& + \left(\xi_k D^k \eta_m - \eta_k D^k \xi_m + \xi_k \eta_n Q_m{}^{kn}\right) D^m \gamma_a \\
& - \gamma_m \left(\xi_k D^k \eta_n - \eta_k D^k \xi_n + \xi_k \eta_w Q_n{}^{kw} \right) Q_a{}^{mn} \\
=&\,\xi_m \eta_n \Theta^{mnk} \left(\partial_a \gamma_k - \partial_k \gamma_a \right)\;,
}

\noindent where in the last line we used the  Bianchi-identity \eqref{bianchiIV}, and the different anti-symmetrizations were written out to see that the following part in the first step of the calculation above vanishes:
\eq{
0 =\;  & \partial_a \Theta^{kmn} + D^m Q_a{}^{kn} - D^n Q_a{}^{km} \\
&- D^k Q_a{}^{mn} - Q_a{}^{np}Q_p{}^{km} + Q_a{}^{mp} Q_p{}^{kn} - Q_a{}^{kp} Q_p{}^{mn} \;.
}
In addition, we used the commutator algebra $\left[D^i,D^j\right] = \Theta^{ijk}\partial_k + Q_k{}^{ij}D^k $ to exchange derivatives. Thus, we get the following result on the commutator of two infinitesimal $\beta$-diffeomorphisms evaluated at a one-form $\eta$:
\eq{
\left(\left[\fa \delta_\gamma, \fa \delta_\xi \right]\eta\right)_a =\, \left(\left[\kl_{\gamma},\klie \right]\eta\right)_a &+ \xi_m \eta_n \Theta^{mnk}\left(\partial_a \gamma_k - \partial_k \gamma_a \right) \\
&- \gamma_m \eta_n \Theta^{mnk} \left(\partial_a \xi_k -\partial_k \xi_a \right) \;.
}
Finally, taking also relation \eqref{jacK} for the Jacobiator of the Koszul-Schouten bracket into account, the above can be rewritten to
\eq{
\left[\fa \delta_\gamma, \fa \delta_\xi \right]\eta =&\, \kl_{[\gamma,\xi]\ks}\eta + d\left(\Theta (\gamma, \xi, \eta)\right) + \iota_{\iota_\eta \iota_\xi \Theta} d\gamma \\
&\,+\iota_{\iota_\gamma \iota_\eta \Theta} d\xi + \iota_{\iota_\xi \iota_\gamma \Theta}d\eta -\iota_{\iota_\eta \iota_\xi \Theta}d\gamma + \iota_{\iota_\eta \iota_\gamma \Theta} d\xi \\
=&\, \kl_{[\gamma,\xi]\ks} \eta + \left( d\circ \iota_{\iota_\xi \iota_\gamma \Theta} + \iota_{\iota_\xi\iota_\gamma \Theta}\circ d \right) \eta \;,
}
and therefore the desired result.
\end{proof}

\subsubsection{Geometry with $H$- and $\Theta$-flux}
\label{subsec-geoHR}

Having established the notion of $\beta$-diffeomorphisms / tensors and their relations to gauge transformations of the NS-NS B-field, we are now ready to turn to our original question of constructing a gravity theory having the new fields $\fa g ^{ij}, \beta^{ij},\phi$ as dynamical degrees of freedom. In the original bosonic low energy effective action of string theory \eqref{stringaction_intro}, besides the Ricci-scalar and the dilaton we also have the $H$-flux term. Therefore we expect to get a corresponding $\Theta$-flux term. The original action was invariant under diffeomorphisms and gauge transformations of the $B$-field and consequently we are going to establish a differential geometric framework which is invariant under both, standard- and $\beta$-diffeomorphisms.

\paragraph{The appropriate Lie algebroid}
As we have seen in section \ref{ch-math}, to get a coordinate-independent differential geometry calculus (e.g. defining covariant differentiation, torsion and curvature), we have to define an underlying  Lie algebroid structure. In addition to diffeomorphism invariance the resulting theory should also be invariant under $\beta$-diffeomorphisms.

The first immediate but naive guess for the new framework would be the triple already discussed: \mbox{$(T^*M,[\cdot,\cdot]\ks,\beta^{\sharp})$}. But this is only a proper Lie algebroid for vanishing $\Theta$-flux. Indeed, evaluating for example the homomorphism property of the anchor in proposition \ref{homoeigenschaft} on basis one-forms $dx^i$ results in:
\eq{
[\beta^{\sharp}(dx^i),\beta^{\sharp}(dx^j)]_L - \beta^{\sharp}([dx^i,dx^j]\ks) = \Theta^{ijk}\partial_k \;.
}
But instead of giving up this guess, let us try to see the restrictions on the bracket by \emph{demanding} the homomorphism property. Define the following bracket (the terminology will become clear later in this section)
\eq{
\left[dx^i,dx^j\right]^H \ks &=\, (Q_k{}^{ij} + K_k{}^{ij})dx^k \\
&=:\, \cq_k{}^{ij} \;,
}
and extended for general one-forms $\xi,\eta$ by the Leibniz rule
\eq{
[dx^i,fdx^j]^H\ks = \beta^{\sharp}(dx^i)(f)\,dx^j + f\,[dx^i,dx^j]^H\ks \;.
}
From the homomorphism property, we can determine the field $K_k{}^{ij}$:
\eq{ \label{curQ}
0 \stackrel{!}{=}&\,\left[\beta^{\sharp}(dx^i),\beta^{\sharp}(dx^j)\right]_L- \beta^{\sharp}\left([dx^i,dx^j]^H \ks\right) \\
=&\, \left(Q_k{}^{ij} \beta^{kn} + \Theta^{ijn} - Q_k{}^{ij}\beta^{kn} - K_k{}^{ij}\beta^{kn}\right)\partial_n \;.
}
We now assume the invertibility of $\beta^{ij}$ and its relation to the NS-NS $B$-field, suggested by \cite{Seiberg:1999vs}
\eq{
\beta^{ij} = (1/B)^{ij} \;.
}
This will be enough to prove the Lie algebroid properties and the relation of $\Theta$- and $H$-fluxes.
From equation \eqref{curQ} we get $K_k{}^{ij} = \Theta^{ijn}(\beta^{-1})_{nk}$, and from the Leibniz property, we get the form of the bracket for general one-forms:
\eq{
[\xi,\eta]^H \ks = [\xi,\eta]\ks + (\beta^{-1})^{\sharp}\left(\iota_\eta \iota_\xi \Theta \right) \;.
}
The Jacobi-identity of this bracket also follows from the invertibility of the anchor and the homomorphism property:
\eq{
\beta^{\sharp}\left(\left[\xi,[\eta,\zeta]^H\ks\right]^H \ks +\mathrm{cycl}\right)
=&\, \bigl(\left[\beta^{\sharp}(\xi),[\beta^{\sharp}(\eta),\beta^{\sharp}(\zeta)]_L\right]_L + \mathrm{cycl} \bigr) \\
=&\,0 \;,
}
because after applying the anchor we deal with vector fields and the standard Lie bracket where the Jacobi identity is valid. Thus, under these conditions, the triple $(T^*M, [\cdot,\cdot]^H \ks, \beta^{\sharp})$ is a Lie algebroid. Using proposition \ref{Algebroidhomodiff}, we are able to relate the de Rham differential $d$ to the associated differential $d_\beta ^H$ on $TM$ (defined by \eqref{algebroiddiff}), and therefore the $H$ to the $\Theta$-flux:
\eq{
\wedge^3 \beta^{\sharp}(H) =&\, \wedge^3 \beta^{\sharp}(d\,B) \\
=&\, d_\beta^H\,\wedge^2\beta^{\sharp}(B) =\, - d_\beta ^H (\beta) \\
=&\,-\Theta \;,
}
which means in components:
\eq{ \label{hochziehen}
\Theta^{ijk} =\,\beta^{im}\beta^{jn}\beta^{kp}\,H_{mnp} \;.
}
Conversely, this relation is also sufficient for $(T^*M, [\cdot,\cdot]^H \ks,\beta^{\sharp})$ being a Lie algebroid, because we can write the Jacobiator for the bracket also as follows:
\eq{\label{jacobiH}
	\bigl[\xi &,[\eta,\zeta]^H \ks \bigr]^H \ks + \mathrm{cycl.} \\
	&\hspace{3pt}=\, d\bigl(\mathcal{R}(\xi,\eta,\zeta)\bigr) + \iota_{(\iota_\zeta\iota_\eta\mathcal{R})}d\xi
		+ \iota_{(\iota_\xi\iota_\zeta\mathcal{R})}d\eta
		+ \iota_{(\iota_\eta\iota_\xi\mathcal{R})}d\zeta \;,
}
\noindent where $\mathcal{R}^{abc} = \Theta^{abc} - \beta^{am}\,\beta^{bn}\,\beta^{ck}\,H_{mnk}$ and therefore vanishes if condition \eqref{hochziehen} holds.

\paragraph{$[\cdot,\cdot]^H\ks$ and $\beta$-diffeomorphisms}

Let us finally investigate the transformation behavior of the bracket $[\cdot,\cdot]^H \ks $ under infinitesimal $\beta$-diffeomorphisms. It is important to note that, in order to get a $\beta$-diffeomorphism invariant differential geometry calculus, we have to start with a bracket which maps $\beta$-tensors into $\beta$-tensors. This is a non-trivial requirement which is not valid for the standard Koszul-Schouten bracket where we have
\eq{
	\fa\delta_{\xi} \bigl([\eta,\zeta]\ks\bigr)_a
	= \bigl( \fa{\mathcal L}_{\xi}[\eta,\zeta]\ks \bigr)_a + \eta_m\,\zeta_n\,\Theta^{mnk}\bigl(\hspace{1pt}d\xi\hspace{1pt}\bigr)_{ak}\,.
}
But fortunately, the $H$-twisted Koszul-Schouten bracket has the required property as the following calculation shows
\eq{\label{HKbt}
	\fa\delta_{\xi} \bigl([\eta,\zeta]^H\ks \bigr)_a
	&= \fa\delta_{\xi} \bigl([\eta,\zeta]\ks\bigr)_a
		+ \fa\delta_{\xi}\bigl(\Theta^{mnk}(\beta^{-1})_{ka}\,\eta_m\,\xi_n\bigr) \\
	&= \bigl( \fa{\mathcal L}_{\xi}[\eta,\zeta]^H\ks \bigr)_a + \eta_m\hspace{1pt}\zeta_n\hspace{1pt} \Theta^{mnk}(d\xi)_{ak} \\
& \hspace{12pt}+\Theta^{mnk}\bigl((\fa \delta_\xi - \klie )(\beta^{-1})_{ka}\bigr)\eta_m\,\zeta_n \\
	&= \bigl( \fa{\mathcal L}_{\xi}[\eta,\zeta]^H \ks \bigr)_a \,,
}
where in the last line, we used the transformation behavior \eqref{varbeta2}, which can be transformed to its inverse $(\beta^{-1})_{ij}$ in the standard way.
To sum up, the Lie algebroid $(T^*M,[\cdot,\cdot]^H \ks , \beta^{\sharp})$ with $\beta$ being the inverse of the NS-NS $B$-field (or equivalently ${\cal R}=\, 0$) is a Lie algebroid and $[\cdot,\cdot]^H \ks$ maps $\beta$-tensors to $\beta$-tensors. This Lie algebroid will therefore be used in the next section to set up a differential geometry invariant under both, standard- and $\beta$-diffeomorphisms.

\subsubsection{Differential geometry on $(T^*M,[\cdot,\cdot]^H \ks,\beta^{\sharp})$}
\label{subsec-dglie}
Having identified a proper Lie algebroid setting in the last section, we are now ready to apply the general theory of section \ref{ch-math}. We will provide explicit formulas for the Christoffel connection coefficients, torsion and curvature. By construction, the bi-vector $\beta$ will enter in these expressions non-trivially to ensure diffeomorphism- and $\beta$-diffeomorphism invariance. To distinguish the new quantities from the ones of standard Riemannian geometry, we will denote them by an additional hat-symbol.

\paragraph{Covariant derivative}
As stated for general Lie algebroids in definition \ref{defconnection}, let us define a connection on \mbox{$(T^*M,[\cdot,\cdot]^H \ks,\beta^{\sharp})$} by its action on basis one-forms $dx^i$:
\eq{ \label{Christoffels}
	\fa\nabla_{dx^a}\,dx^b \equiv \fa\nabla^a\,dx^b = \fa\Gamma_c{}^{ab}\, dx^c \;.
}
Note, that by replacing the tangent bundle by the cotangent bundle as total space of our Lie algebroid, the connection coefficients of $\fa \nabla$ now have the opposite index structure as the original ones. Using the Leibniz rule, definition \eqref{Christoffels}  means for general one-forms $\eta$:
\eq{\label{cd1}
	\fa\nabla^a\eta_b = D^a\eta_b + \fa\Gamma_b{}^{am}\,\eta_m \,.
}
Compatibility of the connection with the insertion map $\iota$, that is $D^a(\iota_X\eta) = \iota_X(\fa\nabla^a\eta)+\iota_\eta(\fa\nabla^a X)$ results in the corresponding covariant derivative of vector fields $X$:
\eq{
	\fa\nabla^a X^b = D^a X^b - \fa\Gamma_m{}^{ab}\,X^m \,.
}
Together with the product rule, we obtain the following formula for applying the covariant derivative to an $(r,s)$-tensor:
\eq{ \label{covdergen}
	\fa\nabla^c\,T_{a_1\dots a_r}{}^{b_1\dots b_s}
	= D^c\, T_{a_1\dots a_r}{}^{b_1\dots b_s}
	&+\sum_{i=1}^r\fa\Gamma_{a_i}{}^{cm}\,
		T_{a_1\dots a_{i-1}ma_{i+1}\dots a_r}{}^{b_1\dots b_s}\\
	&-\sum_{i=1}^s\fa\Gamma_{m}{}^{cb_i}\,
		T_{a_1\dots a_r}{}^{b_1\dots b_{i-1}mb_{i+1}\dots b_s} \,.
}
In addition to the general tensor properties of the covariant derivative stated in section \ref{ch-math}, we want to have $\beta$-diffeomorphism invariance, i.e. the covariant derivative of a $\beta$-tensor should be again a $\beta$-tensor. Similar to standard differential geometry, the derivative of a one-form or vector component is not a proper tensor (for this reason one has to introduce the covariant derivative). Indeed, in our case e.g. for the derivative of a one-form $\eta_a$ we compute:
\eq{
  \fa \delta_\xi \left(D^a \eta_b\right) =&\, \klie \left(D^a \eta_b\right) - D^a\left(D^m\xi_b - \xi_k {\cal Q}_b{}^{km}\right) \eta_m \;,
}
and therefore to cancel the anomalous second term, we have to propose the following transformation behavior for the connection coefficients:
\eq{ \label{anomalousgamma}
\fa\delta_\xi\,\fa\Gamma_c{}^{ab} = \klie \fa \Gamma_c{}^{ab} + D^a(D^b\xi_c
		- \xi_m\,\mathcal{Q}_c{}^{mb}) \; .
}
We will see in the next section that this condition is compatible with the definition of torsion, and finally it can be shown that the Levi-Civita connection satisfies this condition.

\paragraph{Torsion}
Following definition \ref{defcurv} of section \ref{ch-math}, we are able to define the torsion tensor in $(T^*M,[\cdot,\cdot]^H \ks,\beta^{\sharp})$ by using the corresponding bracket:
\eq{\label{torsion}
  \fa T(\xi,\eta) = \fa\nabla_{\!\xi}\,\eta-\fa\nabla_{\!\eta}\, \xi-[\xi,\eta]^H \ks \;.
}
As was proven in section \ref{ch-math}, it is ${\cal C}^{\infty}(M)$-linear in its arguments and as shown in the last two subsections, it maps two $\beta$-tensors into a $\beta$ tensor. The explicit coordinate expression reads
\eq{ \label{torsioncoordinates}
	\fa T_c{}^{ab} = \iota_{\partial_c}\fa T(dx^a,dx^b)
	= \fa\Gamma_c{}^{ab} -\fa\Gamma_c{}^{ba} - \mathcal{Q}_c{}^{ab} \,.
}
Finally, let us check that the anomalous transformation behavior \eqref{anomalousgamma} is compatible with the definition of torsion. From \eqref{torsioncoordinates}, we have to show ($\fa \Delta_\xi :=  \delta_\xi - \klie$)
\eq{
2\,\fa \Delta_\xi \Gamma_a{}^{[\underline{bc}]} =\,\fa \Delta_\xi{\cal Q}_a{}^{bc} \;.
}
The right hand side can be computed directly using the anomalous transformation behavior of $\beta$, given in \eqref{varbeta2}. The result is
\eq{
	\fa\Delta_{\xi}\,\mathcal{Q}_c{}^{ab} &= \mathcal{Q}_m{}^{ab}\,D^m\xi_c
	 + 2\,\mathcal{Q}_c{}^{m[\underline{a}}\,D^{\underline{b}]}\xi_m - 2\,\xi_m\,D^{[\underline{a}}\mathcal{Q}_c{}^{m\underline{b}]} \;,
}
but this is exactly two times the anti-symmetrization of the second term in \eqref{anomalousgamma}.

\paragraph{Levi-Civita connection}
As seen in section \ref{ch-math}, for general Lie algebroids, it is possible to distinguish a unique connection having vanishing torsion and being compatible with the metric. In our case, we use the torsion defined in the last section, and the metric will be on the total space $T^*M$, i.e. compatibility with the connection means:
\eq{
	(\beta^\sharp\xi)\hspace{1pt}\left(\fa g(\eta,\zeta)\right) = \fa g\bigl(\fa\nabla_{\!\xi} \eta,\zeta\bigr)
		+ \fa g\bigl(\eta, \fa\nabla_{\!\xi}\zeta\bigr) \;.
}
Combining this with vanishing of the torsion results in the Koszul formula, as given in \eqref{Koszulformula}. In our case it is given in terms of basis one-forms $dx^i$ by:
\eq{ \label{koszulformuladxi}
2\hspace{1pt} \fa g \bigl(\fa\nabla_{dx^i} dx^j,dx^k\bigr) = & \;D^i\,\fa g^{ik}
+ D^j\, \fa g^{kj} - D^k \,\fa g^{ij} - \fa g\left( dx^i ,[dx^j,dx^k]^H \ks\right) \\
&+ \fa g \left( dx^j,[dx^k,dx^i]^H \ks\right) + \fa g \left(dx^k,[dx^i,dx^j]^H\ks \right)\;.
}
This enables us to compute the Christoffel connection coefficients explicitly. They are given by:
\eq{\label{Christoffelsym}
	\fa\Gamma_c{}^{ab} = \frac{1}{2}\,\fa g_{cm}\bigl(D^a\fa g^{bm}
	+D^b\fa g^{am}-D^m\fa g^{ab}\bigr)
	-\fa g_{cm}\,\fa g^{(\underline{a}n}\,\mathcal{Q}_n{}^{\underline{b})m} + \frac{1}{2}\,\mathcal{Q}_c{}^{ab}\,.
}
We will again denote the Christoffel connection and its coefficients by the same symbols, because there is no danger of confusion. Furthermore, it can be checked by a straightforward computation using the tensor transformation behavior of $\fa g^{ij}$ and the anomalous behavior \eqref{varbeta1}, that the Christoffel symbols have the right anomalous behavior \eqref{anomalousgamma} to ensure $\beta$-tensoriality of the covariant derivative.

\paragraph{Relation to standard geometry}
\label{subsubsec-realisation}
Having established essential parts of differential geometry on our Lie algebroid, we want to give important relations to standard geometry on the tangent bundle. Using the anchor map $\beta^{\sharp}: T^*M \rightarrow TM$, we can relate the metric $\fa g$ to a metric $G$ on the tangent bundle:
\eq{
\otimes^2 \beta^{\sharp}\left(G_{mn}\,dx^m\otimes dx^n\right) = \fa g^{ij}\, \partial_i \otimes \partial_j \;.
}
For the components of the two metrics we get:
\eq{ \label{transmetric}
\fa g^{ij} = \beta^{im}\beta^{jn}\, G_{mn} \;.
}
Using this redefinition in the Christoffel symbols for the metric $G$,
\eq{
\Gamma^i{}_{jk} =\,\frac{1}{2}G^{im}\left(\partial_j G_{km} + \partial_k G_{jm} - \partial_m G_{jk}\right) \;,
}
we are able to relate these to the Christoffel symbols \eqref{Christoffelsym} of the metric $\fa g$:
\eq{
\label{transgamma}
\Gamma^i{}_{jk}=-\fa \beta^{ip}\, \fa \beta_{jm}\, \fa \beta_{kn}\,
         \fa \Gamma_p{}^{mn} - \fa\beta_{nk} \, \partial_j \fa\beta^{in}\, .
}
This relation enables us to relate standard covariant derivatives of standard tensors to Lie algebroid-covariant derivatives of $\beta$-tensors. As an example, let $\hat T _i$ be a $\beta$-tensor. Then we can relate it to a standard tensor on the tangent bundle by:
\eq{\label{transtensor}
T^k =\, \beta^{kn} \, \fa T_n \;.
}
Now let us prove the following relation between the standard and $\beta$-covariant derivatives:
\eq{
\beta^{mi}\beta_{nj} \nabla _i T^j =\, \fa \nabla^m \fa T _n \;.
}
\begin{proof}
\eq{
  \beta^{mi}\beta_{nj}\left(\nabla_i T^j \right) =&\, \beta^{mi}\beta_{nj} \left( \partial_i T^j + \Gamma^j{}_{ik} T^k \right) \\
=&\,\beta^{mi}\beta_{nj}\left(\partial_i T^j -\beta^{jj'}\beta_{ii'}\beta_{kk'} T^k\, \fa \Gamma_{j'}{}^{i'k'}  - T^k \,\beta_{k'k} \partial_i \beta^{jk'} \right) \\
=&\, D^m\left(\beta_{nj} T^j \right) + \fa \Gamma_n{}^{mk'}\beta_{k'k} T^k \\
=&\, \fa \nabla^m \fa T_n \;,
}
where in the second step we employed the transformation \eqref{transgamma} of the Christoffel symbols.
\end{proof}
\noindent Similarly we can relate the two covariant derivatives of tensors and their anchors of arbitrary index structure. As this result will be used frequently in the following, let us formulate it separately as a proposition:
\begin{prop}\label{covdertrans}
Let $\fa T^{a_1 \dots a_r}{}_{b_1\dots b_s}$ be a $\beta$-tensor with anchor given by:
\eq{ \label{ntensor12}
T_{a_1\dots a_r}{}^{b_1\dots b_s} = \beta_{a_1 a_1'}\cdots\beta_{a_r a_r'}\beta^{b_1 b_1'}\cdots \beta^{b_s b_s'} \, \fa T^{a_1'\dots a_r'}{}_{b_1'\dots b_s'}\;, }
then the covariant derivatives $\fa \nabla$ and $\nabla$ are related by:
\eq{
\fa \nabla^m \fa T^{a_1\dots a_r}{}_{b_1\dots b_s} =\, \beta^{mn}\beta^{a_1 a_1'}\cdots \beta^{a_r a_r'}\beta_{b_1 b_1'}\cdots \beta_{b_s b_s'} \,\nabla_n T_{a_1'\dots a_r'}{}^{b_1'\dots b_s'} \;.
}
\end{prop}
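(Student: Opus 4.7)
The strategy is to reduce the general statement to the single-index case already treated in the excerpt, and then extend it by the Leibniz rule. First I would observe that both covariant derivatives $\nabla$ and $\hat\nabla$ are derivations: $\nabla$ satisfies the usual product rule for tensor products, while $\hat\nabla$ satisfies the Leibniz rule that was built into the construction in \eqref{covdergen}. The anchor map $\beta^\sharp$ and its inverse, used to raise/lower indices via $\beta^{ab}$ and $\beta_{ab}$ as in \eqref{ntensor12}, commute with tensor products. Therefore it suffices to prove the identity on the two elementary building blocks of arbitrary tensors: $\beta$-one-forms $\hat T_b$ and $\beta$-vectors $\hat T^a$, since every mixed tensor can be written as a tensor product of these and both sides of the identity respect the product rule (the $\beta$-derivative terms that appear when $\hat\nabla$ hits $\beta$-factors will match those generated when $\nabla$ hits the $\beta$-factors in the index conversion, provided the elementary cases hold).

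Next I would establish the two base cases in turn. The upper-index case, $\hat\nabla^m\hat T^a = \beta^{mn}\beta_{aa'}\nabla_n T^{a'}$ with $T^{a'} = \beta^{a'a''}\hat T_{a''}$---wait, more precisely the case of a single upper $\beta$-index and the case of a single lower $\beta$-index. The former is exactly the calculation carried out just before the proposition: expand $\nabla_i T^j$, substitute $T^j = \beta^{jk}\hat T_k$, and insert the Christoffel transformation \eqref{transgamma}. The $\partial\beta$ piece of $\Gamma$ cancels the derivative hitting $\beta^{jk}$, leaving exactly $D^m(\beta_{nj}T^j) + \hat\Gamma_n{}^{mk}\beta_{kk'}T^{k'}$, which is $\hat\nabla^m\hat T_n$. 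The lower-index case is completely analogous: start from $T_a = \beta_{ab}\hat T^b$, apply \eqref{transgamma} with the opposite sign convention for the index position, and verify that the inhomogeneous $\partial\beta$-term of $\Gamma$ combines with $\nabla_n \beta_{ab}$ to reproduce the Christoffel term of $\hat\nabla$ acting on $\hat T^b$.

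With both elementary identities in hand, the general case follows by writing $\hat T^{a_1\dots a_r}{}_{b_1\dots b_s}$ as a sum of tensor products of $r$ upper and $s$ lower elementary $\beta$-tensors (which is permissible pointwise because the identity to be proved is $\mathcal{C}^\infty(M)$-linear in $\hat T$) and applying the Leibniz rule together with the two base cases term by term. Alternatively one can run an induction on $r+s$: peeling off one factor contributes exactly the $\Gamma$- or $\hat\Gamma$-term corresponding to that index, and the inductive hypothesis handles the remaining $(r{-}1)+s$ or $r+(s{-}1)$ indices.

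The main obstacle is purely bookkeeping: each index contributes its own $\Gamma$ or $\hat\Gamma$ term, and one has to be sure the many $\partial\beta$-contributions produced by the transformation rule \eqref{transgamma}, one per index, cancel in exactly the right way against the derivatives falling on the $\beta$- and $\beta^{-1}$-factors in the definition \eqref{ntensor12}. Reducing to the two base cases via the Leibniz property is what keeps this cancellation transparent rather than combinatorially unwieldy; it is also what makes the induction manifestly closable.
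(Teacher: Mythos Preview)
Your proposal is correct and follows essentially the same route as the paper: the paper proves the single lower-index case explicitly just before the proposition and then states that the general case ``is similar to the above example by using the product rule for covariant derivatives and the general definition of the covariant derivative, given in \eqref{covdergen}.'' Your reduction to the two elementary cases via the Leibniz rule (or equivalently induction on $r+s$) is exactly this strategy spelled out in more detail, and your identification of the key mechanism---the inhomogeneous $\partial\beta$-term in \eqref{transgamma} cancelling against the derivative of the conversion factors---is precisely what drives the single-index computation in the paper.
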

\noindent The proof is similar to the above example by using the product rule for covariant derivatives and the general definition of the covariant derivative, given in \eqref{covdergen}.
\paragraph{Curvature}
Again, following the general construction of section \ref{ch-math}, especially definition \ref{defcurv}, we write down the Riemann curvature tensor for our Lie algebroid. It is given by:
\eq{\label{Riemtensor}
	\fa R(\xi,\eta)\zeta = \bigl[\fa\nabla_{\!\xi},\fa\nabla_{\!\eta}\bigr]\zeta
		- \fa\nabla_{\![\xi,\eta]^H \ks}\, \zeta \;,
}
which in components reads:
\eq{
  \label{Riemtensorcomp}
	\fa R_a{}^{bcd}\equiv \iota_{\partial_a}\bigl(\fa R(dx^c,dx^d)dx^b\bigr)
	= 2\bigl(D^{[\underline{c}}\fa\Gamma_a{}^{\underline{d}]b} + \fa\Gamma_a{}^{[\underline{c}m}\,\fa\Gamma_m{}^{\underline{d}]b}\bigr)
		- \fa\Gamma_a{}^{mb}\,\mathcal{Q}_m{}^{cd} \, .
}
Since  the covariant derivative and the bracket give $\beta$-tensors,
also $\fa R$ is a $\beta$-tensor. Note that the first two terms are similar to the standard Riemann tensor (with partial derivatives replaced by $D^a$), but there is an additional term involving the ${\cal Q}$-flux, i.e. the structure functions of the $[\cdot,\cdot]^H\ks$-bracket.

\vspace{0.3cm}

\noindent To derive the symmetries and Bianchi identities of the curvature tensor, let us note the following: By using the transformation \eqref{transgamma} in the expression \eqref{Riemtensorcomp}, a straightforward calculation results in:
\eq{
\label{curvaturetrafo}
     R^d{}_{cab}=-\fa\beta^{dq}\, \fa\beta_{cp}\,
    \fa\beta_{am}\, \fa\beta_{bn}\,
      \fa R_q{}^{pmn}     \, .
}
Together with proposition \ref{covdertrans} about the different covariant derivatives, it is easy to derive the symmetries and Bianchi identities of the new curvature tensor $\fa R_i{}^{jkl}$ from the corresponding identities of the standard Riemann tensor. Raising indices with the metric $\fa g^{ab}$, the new Riemann tensor inherits the symmetries:
\eq{\label{Rsymm}
	\fa R^{abcd} = -\fa R^{bacd} \;,\hspace{40pt}
	\fa R^{abcd} = -\fa R^{abdc} \;, \hspace{40pt}
	\fa R^{abcd} = \fa R^{cdab} \;,
}
as well as the Bianchi identities:
\eq{
\label{Rsymm2}
       \fa R^{abcd}+\fa R^{adbc}+\fa R^{acdb}&=0 \;,\\
       \fa\nabla^m \fa R^{abcd}+\fa\nabla^d \fa R^{abmc}+\fa\nabla^c \fa
  		R^{abdm}&=0\;.
}

\vspace{0.3cm}

The Ricci tensor is defined  by $\fa R^{ab} = \fa R_m{}^{amb}$, which is symmetric in its indices due to \eqref{Rsymm}. In terms of the connection, it can be written as
\eq{
	\fa R^{ab} = D^m\fa\Gamma_m{}^{ba} - D^b\fa\Gamma_m{}^{ma}
		+ \fa\Gamma_n{}^{ba}\,\fa\Gamma_m{}^{mn}
		- \fa\Gamma_n{}^{ma}\,\fa\Gamma_m{}^{nb} \, .
}
Finally the Ricci scalar $\fa R = \fa g_{ab}\fa R^{ab}$ can be expanded in terms of the metric and the derivative $D^a$ in the following way
\eq{
\label{Rexpand}
     \hat R=-\Bigl[ &\hspace{15pt}D^a D^b \hat g_{ab}
          - D^a\left( \hat g_{ab} \, \hat g^{mn}\,  D^b \hat g_{mn}  \right)\\
          &-{1\over 4} \hat g_{ab}\Bigl( D^a  \hat g_{mn}\, D^b  \hat g^{mn}
                            -2 D^a  \hat g_{mn}\, D^m  \hat g^{nb}
           -  \hat g_{mn}\,  \hat g_{pq}\, D^a  \hat g^{mn}\, D^b  \hat g^{pq}\Bigr)\\
      &+{1\over 4}  \hat g_{ab}\,  \hat g_{mn} \, \hat g^{pq}\,
		{\cal Q}_p{}^{ma} {\cal Q}_q{}^{nb}
      +{1\over 2}  \hat g_{ab}\, {\cal Q}_m{}^{nb}\, {\cal Q}_n{}^{ma}
     +  \hat g_{ab}\, {\cal Q}_m{}^{ma}\, {\cal Q}_n{}^{nb} \\
      &+2   D^a \bigl( \hat g_{ab} \, {\cal Q}_m{}^{mb} \bigr)
      -  \hat g_{ab}\, \hat g_{mn}\,  D^a \hat g^{pn}  \,   {\cal Q}_p{}^{bm}
     +   \hat g_{ab} \,\hat g^{mn}\,  D^a \hat g_{mn}   \,  {\cal
       Q}_p{}^{bp}\,\Bigr]\;.
}
\noindent We now have the basic geometric building blocks to formulate a gravity theory (or more generally a low energy effective action for the bosonic string) on the Lie algebroid  $(T^*M,[\cdot,\cdot]^H \ks,\beta^{\sharp})$. Quantities like the Ricci scalar are now invariant under both, diffeomorphisms and $\beta$-diffeo\-mor\-phisms by construction.

\subsubsection{Bi-invariant theory of gravity}

In this section, we  construct an Einstein-Hilbert action invariant under standard as well as $\beta$-diffeomorphisms, which we call \emph{bi-invariant} for short.
This action contains the metric $\hat{g}^{ab}$ and the bi-vector ${\beta}^{ab}$ as dynamical fields. In addition, it will be coupled to the dilaton $\phi$ and the $\Theta$-flux.

\paragraph{Invariant action}

As we have illustrated in the last section, it is possible to construct a Ricci scalar $\fa{R}$ which behaves as a scalar with respect to both types of diffeomorphisms.
Furthermore, by construction, the derivative of the dilaton $D^a \phi$ is a $\beta$-tensor and therefore the corresponding kinetic term $\hat{g}_{ab} D^a \phi D^b\phi$ behaves as a $\beta$-scalar. Also, the $R$-flux $\Theta^{abc}$ is a tensor with respect to $\beta$-diffeomorphisms as was shown in \eqref{Rflusstensor}, and it behaves as a standard tensor due to its definition \eqref{R-flux} in terms of the Schouten-Nijenhuis bracket of ${\beta}$ with itself.
Therefore, the following Lagrangian is a scalar with respect to both types of diffeomorphisms:
\eq{
   \hat{\cal L}=e^{-2\phi} \left(   \hat R -{1\over 12} \Theta^{abc}\, \Theta_{abc}
              +4\hspace{0.5pt} \hat g_{ab}\, D^a\phi D^b \phi\right)\;.
}
This Lagrangian has been constructed in a way to resemble the bosonic low-energy effective action \eqref{stringaction_intro}. Analogous to the geometric case, $\Theta$ can also be included as (con-)torsion of the connection.

To obtain a bi-invariant action, we have to find an appropriate measure $\mu$. More precisely, the variation of
\eq{
     \hat S={1\over 2\kappa^2} \int d^nx\, \mu(\fa g,  \beta)\, \hat{\cal L}\, ,
}
under standard and $\beta$-diffeomorphisms, has to vanish. As it turns out, the direct analogue to Riemannian geometry, namely the measure $\mu = \sqrt{-|\hat{g}|}$ with $|\fa g|=\det \fa g^{ab}$, does not lead to the desired result. This can be seen from
\eq{\label{varmu}
\delta_X  \bigl(\sqrt{-|\fa g|}\,\fa{\cal L}\hspace{1pt}\bigr) &= \partial_m\bigl(X^m \sqrt{-|\fa g|}\,\fa {\cal L} \hspace{1pt}\bigr) -2\sqrt{-|\fa g|}\,(\partial_m X^m) \,\fa{\cal L}\, ,\\[3pt]
\fa\delta_{\xi}\bigl(\sqrt{-|\fa g|}\,\fa{\cal L}\hspace{1pt}\bigr)&=
       \partial_m\bigl( \sqrt{-|\fa g|}\, \fa {\cal L}\,
       \xi_n\,\bigr)\beta^{nm}
      -  \sqrt{-|\fa g|}\, \fa {\cal L}\,  \xi_m (\partial_n
       \beta^{mn}) \, .
   }
Obviously, the right-hand sides in \eqref{varmu} are not total derivatives which would be required for the action to be invariant. However, taking as an additional factor the determinant of  $ \beta^{-1}$ into account, that means
\eq{\label{correctmeasure}
\mu =  \sqrt{-|\fa g|}\, \bigl| \beta^{-1}\bigr| \;,
}
we obtain the correct behavior under both types of diffeomorphisms. This can be seen by considering the variation of the determinant of the bi-vector:
\eq{\label{varbeta}
\delta_X  \bigl|  \beta^{-1} \bigr| &= X^m \partial_m \bigl| \beta^{-1}\bigr| + 2\hspace{1pt}\bigl| \beta^{-1} \bigr| \partial_m X^m \;, \\
\fa \delta_{ \xi} \bigl| \beta^{-1}\bigr| &= 2 \bigl| \beta^{-1}\bigr|\, \xi_m\, \partial_n \beta^{mn} + \xi_m \beta^{mk} \partial_k \bigl| \beta^{-1}\bigr| \;,
}
so that  the combination of \eqref{varmu} and \eqref{varbeta} results in a total derivative. We therefore propose the following bi-invariant Einstein-Hilbert action coupled to a dilaton $\phi$ and $R$-flux $\Theta^{abc}$:
\eq{
\label{theorem1}
      \hat S={1\over 2\kappa^2} \int d^nx\, \sqrt{-|\hat g|}\,\bigl|  \beta^{-1} \bigr|\, e^{-2\phi}
     \Bigl(   \hat R -{1\over 12} \Theta^{abc}\, \Theta_{abc}
              +4\hspace{1pt} \hat g_{ab}\, D^a\phi D^b \phi\Bigr)\, .
}
Due to the appearance of the \mbox{(quasi-)}symplectic structure $\beta^{ab}$, we will call the theory defined by the action \eqref{theorem1} \emph{symplectic gravity}.

\paragraph{Remarks}

Let us close this section with two remarks about the  measure \eqref{correctmeasure}.
\begin{itemize}

\item In general, the determinant of an anti-symmetric matrix vanishes in odd
  dimensions. Thus, our measure \eqref{correctmeasure} only makes sense for
  even dimensions, e.g. for symplectic manifolds. Denoting the components of $\beta ^{-1}$ by $\beta_{ab}$\footnote{This is only a convenient notation and does not mean that the indices are lowered by the metric.}, for the  latter case one has:
\eq{
  \det \beta_{ab} = \bigl( {\rm Pfaff}\, \beta_{ab} \bigr)^2 \;,
}
so that the determinant $| \beta^{-1} |$ is always non-negative.

\item In the Lie-algebroid construction of section \ref{subsec-geoHR}, we have effectively replaced the tangent bundle of a manifold by the co-tangent bundle. Performing the same procedure for
an integral, we would formally obtain:
\eq{ \label{newint}
\int \sqrt{-|G|}\,dx^1 \wedge \ldots \wedge dx^{n} \quad\to\quad\int
\sqrt{-|\fa g|}\,\partial_1 \wedge \ldots \wedge \partial_{n}\, .
}
Employing then the inverse of the anchor, we can relate the right-hand side to a standard integral by using $\partial_a = \beta_{ab}\, dx^b$ which results in the same measure as in \eqref{correctmeasure}
\eq{
\int \sqrt{-|\fa g|}\, \partial_1 \wedge \dots \wedge \partial_{n} = \int \sqrt{-|\fa g|} \hspace{1pt}\bigl|  \beta^{-1} \bigr|\, dx^1 \wedge \ldots \wedge dx^n \,.
}
However, let us
note again that this replacement is only possible in an even number of dimensions, otherwise the determinant of $\beta$ would vanish and the anchor would not be invertible.
\end{itemize}

\subsubsection{Equations of motion}
\label{subsec-eom}
The first immediate physical question related to the symplectic gravity action \eqref{theorem1} is about the equations of motion for its dynamical fields, the metric $\fa g ^{ab}$, the bi-vector $\beta^{ab}$ and the dilaton $\phi$. The variation with respect to the different fields is done in the standard way and we only want to mention the main steps in the calculation and the final result. We will use two relations known from standard geometry, which also hold in the Lie algebroid case. The first one concerns integration by parts of a divergence:
\eq{
  \int d^nx\, \sqrt{-|\hat g|}\hspace{2pt}  \bigl| \beta^{-1} \bigr|\, \fa \nabla^a \eta_a
  = -\int d^nx\,\, \partial_a\Bigl(\sqrt{-|\hat g|}\hspace{2pt}  \bigl| \beta^{-1} \bigr|\,  \beta^{am}\, \eta_m\Bigr) = 0\;,
}
where we assume vanishing of the fields and their derivatives at infinity (or having a manifold without boundary). The second one is the analogue of the Palatini-identity for the variation of the Ricci tensor, which can be transformed straightforwardly to our setup:
\eq{
  \delta_{\fa\Gamma} \fa R^{ab} = \fa \nabla^m \delta \fa\Gamma_m{}^{ab}
   - \fa\nabla^a \delta \fa\Gamma_m{}^{mb}
   + \fa\Gamma_n{}^{mb} \bigl( \delta \fa\Gamma_m{}^{na} - \delta\fa\Gamma_m{}^{an} \bigr) \;.
}
Now using the first identity, the variation of the action \eqref{theorem1} with respect to the dilaton $\phi$ can be done to give:
\eq{
  \label{eom1}
  0 =  \hat R -{1\over 12}\hspace{1pt} \Theta^{abc}\, \Theta_{abc}
   -4\hspace{0.5pt} \hat g_{ab}\hspace{1pt} \fa\nabla^a\phi \fa\nabla^b \phi
   +4\hspace{0.5pt} \hat g_{ab}\hspace{1pt} \fa\nabla^a \fa\nabla^b \phi \;.
}
Next, the variation with respect to the metric $\fa g^{ab}$ is done with the help of the Palatini identity. The result is:
\eq{
  \label{eom2}
  0 = \fa R^{ab} + 2 \hspace{0.5pt} \fa\nabla^a \fa\nabla^b \phi - \frac14 \hspace{0.5pt} \Theta^{amn}
  \Theta^b{}_{mn} - \frac12 \, \fa g^{ab} \bigl[ \, \mbox{$\phi$ eom} \, \bigr] \;.
}
Here the terms are ordered in a way that the last term collects all the contributions which vanish due to the first equation of motion \eqref{eom1}. Finally, varying with respect to the bi-vector $\beta^{ij}$ results in:
\eq{ \label{eom3}
  0 = \frac12 \hspace{1pt} \fa\nabla^m \Theta_{mab} - (\fa\nabla^m \phi) \hspace{1pt}\Theta_{mab}
  &+ 2 \hspace{1pt} \fa g_{ap} \beta_{bq}  \bigl[ \, \mbox{$\fa g$ eom} \, \bigr] ^{pq} \\
  &+ \beta_{ab}  \bigl[ \, \mbox{$\phi$ eom} \, \bigr]  \;,
}
where we can again drop all the terms which vanish due to the equations of motion derived before. Let us also write down the contraction of \eqref{eom2} with the metric:
\eq{
  \label{eom_03}
  0 = \fa R + 2 \hspace{0.5pt} \fa g_{ab} \fa\nabla^a \fa\nabla^b \phi - \frac14 \hspace{0.5pt} \Theta^{abc}
  \Theta_{abc} \;.
}
Using this equation to eliminate the Ricci-scalar term in \eqref{eom1}, we arrive at the following set of three independent equations of motion for the dynamical fields:
\eq{
  \label{eom_final}
  &0 =
   -\frac12\hspace{0.5pt} \hat g_{ab}\hspace{1pt} \fa\nabla^a \fa\nabla^b \phi
   +\hat g_{ab}\hspace{1pt} \fa\nabla^a\phi \fa\nabla^b \phi
    -\frac{1}{24}\hspace{1pt} \Theta^{abc}\, \Theta_{abc} \;, \\
  & 0 = \fa R^{ab} + 2 \hspace{0.5pt} \fa\nabla^a \fa\nabla^b \phi
  - \frac14 \hspace{0.5pt} \Theta^{amn}
  \Theta^b{}_{mn} \;, \\
  &0 = \frac12 \hspace{1pt} \fa\nabla^m \Theta_{mab} - (\fa\nabla^m \phi) \hspace{1pt}\Theta_{mab}
  \;.
}
We observe that they are formally the same as the standard equations for the bosonic string (by replacing $\fa g^{ab} \rightarrow G_{ab}$, $\fa \nabla^a \rightarrow \nabla_a$, $\Theta \rightarrow H$ and $\fa R^{ab} \rightarrow R_{ab}$. However note, that due to the appearance of $\beta$ for example in the derivative $D^a$ and the additional ${\cal Q}$-flux terms (as for example in the Christoffel symbols \eqref{Christoffelsym} and in the Riemann curvature tensor \eqref{Riemtensorcomp}), the dynamics described by this set of equations is very different.

\subsubsection{Relations to string theory}
We are now going to compare the low energy effective action of the bosonic string with the symplectic gravity action \eqref{theorem1}.  It will turn out that they are related by a field redefinition similar to changes of fields encountered in double field theory but followed by a Seiberg-Witten limit described in section \ref{ch-quant}. Even though open strings were considered there, we will formally take the same scalings.

\paragraph{Field redefinitions}
Either from double field theory \cite{Andriot:2012an} or from open string theory \cite{Seiberg:1999vs}, let us recall an important field redefinition\footnote{See also the sigma model considerations of section \ref{sec-Poissonsigma}, especially \eqref{basicfieldredef1}. To focus on geometrical aspects, we set $2\pi \alpha' = 1$. }, relating the standard metric to one with upper indices and the $B$-field to a bi-vector:
\eq{\label{LMUfield}
\tilde g ^{ij}  =&\hspace{12pt}\left(\frac{1}{G + B} G \frac{1}{G - B} \right)^{ij} \;,\\
\tilde \beta^{ij} =&\,-\left(\frac{1}{G + B}B\frac{1}{G - B} \right)^{ij} \;.
}
As was shown in \cite{Andriot:2012an}, starting with the standard bosonic low energy effective string action:
\eq{
\label{stringaction}
S={1\over 2\kappa^2}
\int
\hspace{-0.75pt}
d^nx \hspace{1pt}\sqrt{-|G|}\hspace{1pt} e^{-2\phi}\Bigl(R-{\textstyle{1\over
    12}} H_{abc} H^{abc}
+4 \hspace{0.5pt} G^{ab}\, \partial_a \phi \hspace{1pt}\partial_b \phi
 \Bigr) \;,
}
and changing fields by the full redefinition \eqref{LMUfield} does not lead to the symplectic gravity action \eqref{theorem1}. However, looking at the Seiberg-Witten scaling limit, introduced in section \ref{ch-quant}, equation \eqref{scaling}, we observe that the field redefinition changes as follows:
\eq{\label{MPIfield}
\fa g =&\, -B^{-1}\,G\, B^{-1} =\, -\beta \,G \,\beta \;, \\
\beta =&\, \hspace{14pt}B^{-1}\;,
}
which means in components:
\eq{
\beta^{ij} = (B^{-1})^{ij}\;,  \qquad \fa g^{ij} = \beta^{im}\beta^{jn}\,G_{mn}\;.
}
Recall, that we already encountered these expressions by relating the metric and bi-vector from the Lie algebroid $(T^*M,[\cdot,\cdot]^H \ks,\beta^{\sharp})$ to the tangent bundle, e.g. \eqref{transmetric}. Therefore, we are able to use the results derived there to perform the field redefinition \eqref{MPIfield}. First of all, the measure transforms as:
\eq{
      \sqrt{-|G|}=\sqrt{-|\fa g|}\: \bigl|\beta^{-1}\bigr| \;.
}
This is the measure encountered in \eqref{theorem1}. Furthermore we can use the results of section \ref{subsec-dglie}, especially \eqref{transgamma}, to transform the Riemann curvature tensor as given there:
\eq{
\label{curvaturetrafo}
     R^d{}_{cab}=-\beta^{dq}\, \beta_{cp}\,
    \beta_{am}\, \beta_{bn}\,
      \fa R_q{}^{pmn}     \, .
}
But this means for the Ricci tensor and Ricci scalar:
\eq{
  R_{ab}=  \beta_{am} \, \beta_{bn} \, \fa R^{mn}\, , \hspace{40pt}
  R=\fa R\, .
}
The transformation of the $H$-flux can be computed to give the important relation \eqref{hochziehen}, which was proven to be equivalent to $(T^*M,[\cdot,\cdot]^H \ks,\beta^{\sharp})$ being a Lie algebroid. Contracting two fluxes completely gives:
\eq{
H_{abc} H^{abc}=\, \Theta^{abc}\hspace{0.5pt} \Theta_{abc}\;.
}
Finally we transform the kinetic term for the dilaton. Since the field redefinition was chosen in such a way that the dilaton does not change, we can simply rewrite
\eq{
      \partial_a\phi = \, \beta_{am} D^m\phi \;.
}
To sum up, we see that the field redefinition \eqref{MPIfield} relates the standard string action to the action in the symplectic frame with variables $(\fa g, \beta, \phi)$, i.e.
\eq{
S\bigl(\,G(\fa g,\beta),\,B(\fa g,\beta),\,\phi\,\bigr)=\hat S\bigl(\fa g,\beta,\phi\bigr)\, ,
}
where we denote the symplectic gravity action by $\fa S$.
%\eq{
%\label{finalaction}
%      \hat S= \frac{1}{2\kappa^2} \int d^nx\, \sqrt{-|\hat g|}\: \bigl|\hat \beta^{-1}\bigr|\: e^{-2\phi}
%      \Bigl(   \hat R -\tfrac{1}{12} \hspace{1pt}\fa\Theta^{abc}\, \fa\Theta_{abc}
%              +4\hspace{1pt} \hat g_{ab}\, D^a\phi D^b \phi\Bigr)\, .
%}

\paragraph{Higher order corrections}

The effective action \eqref{stringaction} for the massless string modes is known to receive
higher-order $\alpha'$-corrections. Due to the freedom of
performing field redefinitions these are not unique. However, choosing a specific set of field variables, all the terms appearing at next to leading order
 \cite{Metsaev:1987bc,Metsaev:1987zx,Hull:1987yi}
can be expressed in terms of  covariant derivatives of
the curvature tensor $R_{abcd}$, the
three-form $H_{abc}$ and the dilaton $ \phi$.
Since we have determined how each of these building blocks transforms
under $\beta$-diffeomorphisms, it is possible to transform every term in the action se\-pa\-ra\-tely. As a consequence, we have a well-motivated  guess for the
form of the higher-order corrections in the symplectic gravity frame.
For instance,   the next to leading order corrections
to the bosonic string effective action are expected to take the form:
\eq{
 &  \fa S^{(1)} =  \frac{1}{2\kappa^2}
    \,\frac{\alpha'}{4}
  \int d^{n}x\, \sqrt{-|\fa g|}\, \bigl|\beta^{-1}\bigr|\, e^{-2\phi}
   \Bigl( \fa R^{abcd}\, \fa R_{abcd}
     -{\textstyle {1\over 2}}  \fa R^{abcd}\,  \Theta_{abm}
     \Theta_{cd}{}^m\\
   &\hspace{142pt}
      + {\textstyle{1\over 24}}    \Theta_{abc}\,  \Theta^{a}{}_{mn}\,  \Theta^{bm}{}_p\,
         \Theta^{cnp}
    -{\textstyle {1\over 8}} (\Theta^2)_{ab} \, (\Theta^2)^{ab}     \Bigr) \,,
}
where we have abbreviated $(\Theta^2)_{ab}=\Theta_{amn}\, \Theta_b{}^{mn}$. At the moment, it is not clear if such a Lagrangian can also be understood from first principles, i.e. without referring to the standard bosonic string action.

\subsubsection{Extension to the superstring}
\label{subsec-super}
Having reformulated successfully the bosonic low energy effective action of string theory in a frame with dynamical variables $(\fa g, \beta, \phi)$, the question arises if this is extendable also to the type II superstring actions. As spinors are independent of gauge transformations of the $B$-field and are not affected by the Lie algebroid constructions so far, we do not expect a nontrivial transformation behavior of them under the field redefinitions \eqref{MPIfield}. In contrast to this, the spin connection uses the tangent bundle and should therefore be generalized to Lie algebroids. Let us discuss the different sectors in turn.

\paragraph{The R-R sector}
In the R-R sector we first consider the $p$-form fields $C_{a_1\dots a_p}$. They are tensorial in the standard sense and they remain invariant under gauge transformations of the $B$-field. As discussed in section \ref{subsubsec-realisation}, especially equation \eqref{ntensor12} we can convert the $p$-form fields by use of the anchor form the $(G,B,\phi)$-frame to the $(\fa g,\beta,\phi)$-frame
\eq{
\fa C^{a_1\dots a_p} =\, \beta^{a_1 b_1}\cdots\beta^{a_p b_b}\,C_{b_1\dots b_p} \;.
}
From the theory established in section \ref{subsubsec-realisation}, we know that the totally antisymmetric $p$-vectors $\fa C^{a_1 \dots a_p}$ are standard- and $\beta$-tensors. In addition we know that the $\beta$-covariant derivative $\fa \nabla $ of these multi-vectors again gives a bi-covariant tensor:
\eq{
  \label{gen_potentials}
    \hat F^{a_1\ldots a_{p+1}}= \fa\nabla^{[\underline{a_1}} \fa C^{\underline{a_2\ldots a_{p+1}}]}\;.
}
However, to interpret them as proper physical field strengths, we have to show that they are invariant under the corresponding gauge transformations of the $p$-form fields $\fa C^{a_1\dots a_p}$:
\eq{
  \delta_{\Lambda} \fa C^{a_1\ldots a_p} = \fa\nabla^{[\underline{a_1}} \Lambda^{\underline{a_2 \ldots a_p}]} \;.
}
Let us show the invariance explicitly:
\eq{
\delta_{\Lambda} \fa F^{a_1 \dots a_{p+1}} =&\, \fa \nabla^{[\underline{a_1}} \fa \nabla^{\underline{a_2}} \Lambda^{\underline{a_3\dots a_{p+1}}]} \\
=&\,\fa R_k{}^{[ \underline{a_3 a_1 a_2}}\Lambda^{\underline{a_4\dots a_{p+1}}]k} =\,0 \;,
}
where in the last line we first used vanishing torsion for the Levi-Civita connection and then the Bianchi identity \eqref{Rsymm2} for the Riemann curvature tensor in $(\fa g, \beta, \phi)$-frame. Therefore, identifying $C_{a}$ and $C_{a_1 a_2 a_3}$ with the one- and three-form gauge potentials of type IIA supergravity, we have found corresponding expressions in the symplectic frame.

In analogy to the standard formulation, we then introduce generalized field strengths of the form:
\eq{\label{IIBRR}
  \fa {\cal F}_2 =\fa F_2 \ , \hspace{40pt} \fa{\cal F}_4=\fa F_4- \Theta\wedge \hat C_1\, ,
}
and for the corresponding action we consider:
\eq{
\label{iiaactions}
\fa S^{\mbox{\scriptsize R-R}}_{\rm IIA}=\frac1{2\kappa_{10}^2}\int d^{10}x
\hspace{1pt}
\sqrt{- |\fa g|}\hspace{1pt}\bigl| \beta^{-1}\bigr| \,
\Bigl(-{\textstyle {1\over 2}} \hspace{0.5pt} |\fa {\cal F}_2|^2-{\textstyle{1\over 2}}\hspace{0.5pt}|\fa
{\cal F}_4|^2\Bigr)\, ,
}
where we employ:
\eq{
\label{formsquareabs}
|\fa{\mathcal F}_p|^2={1\over p!} \hspace{1pt} \fa {\mathcal F}_{a_1\dots a_p} \hspace{0.5pt}\fa {\mathcal F}^{a_1\dots a_p}\; .
}
As was explained in detail, these expressions are bi-invariant scalars and combined with the corresponding measure in \eqref{iiaactions}, we get a bi-invariant action for the R-R field strengths. It remains to reformulate the Chern-Simons part of the R-R sector of type IIA supergravity which is given by:
\eq{
S^{\rm CS}_{\rm IIA}&=\frac1{4\hspace{1pt}\kappa_{10}^2}\int H\wedge F_4\wedge
C_3\\
&=  \frac1{4\hspace{1pt}\kappa_{10}^2} \, \frac{1}{3! \;4! \;3!}
\int d^{10}x \; \epsilon^{a_1\ldots a_{10}}\,
H_{a_1 a_2 a_3}\hspace{1pt} F_{(4)a_4 a_5 a_6 a_7}\hspace{1pt} C_{(3)a_8 a_9 a_{10}} \;.
}
Here, we denote by $\epsilon^{a_1\dots a_{10}} =\,\pm 1$ the epsilon-tensor density. Noting the fact that $\epsilon^{a_1\ldots a_{10}}/\sqrt{-|G|}$ transforms as a standard tensor field and is invariant under $B$-field gauge transformations, one can show in a straightforward way that $\epsilon_{b_1\ldots b_{10}}/\sqrt{-|\fa g|}$ is a $\beta$-tensor. Thus, we are able to write down the Chern-Simons part of the type IIA action in the symplectic frame:
\eq{
  \label{csaction}
\fa S^{\rm CS}_{\rm IIA}
=  \frac1{4\hspace{1pt}\kappa_{10}^2} \, \frac{1}{3! \;4! \;3!}
\int d^{10}x \hspace{1pt}\bigl| \beta^{-1}\bigr| \,
   \epsilon_{b_1\ldots b_{10}}\,
    \Theta^{b_1 b_2 b_3}\, \fa F_{(4)}^{b_4 b_5 b_6 b_7}\,
    \fa C_{(3)}^{b_8 b_9 b_{10}} \;.
}
Let us conclude our exposition of the R-R part by the following remark. Having already seen the invariance of the field strengths $\fa F$ under gauge transformations of the multi-vectors $\fa C$, by using also the Bianchi identity of $\Theta$, rewritten in the form
\eq{
	\fa\nabla^{[\underline{a}}\Theta^{\underline{bcd}]} = 0 \;,
}
it is possible to show the invariance of the complete transformed type IIA action under the following set of gauge transformations:
\eq{
    &\delta_{\Lambda_{(0)}} \fa C^{ a} = \fa\nabla^{a} \Lambda_{(0)} \;,
    \hspace{90pt}
    \delta_{\Lambda_{(2)}} \fa C^{ a_1a_2 a_3} = \fa\nabla^{[\underline{a_1}} \Lambda_{(2)}^{\underline{a_2  a_3}]} \;, \\
    &\delta_{\Lambda_{(0)}} \fa C^{a_1 a_2 a_3} = - \Lambda_{(0)} \hspace{1pt} \Theta^{a_1 a_2 a_3} \;.
}

\paragraph{The NS-R and R-NS sectors}
As was mentioned in the introduction to this section, the spin connection uses the tangent bundle and therefore should be transformed carefully to the Lie algebroid setting. Due to its structure, involving Christoffel symbols and derivatives of vielbein matrices, it is not clear a priori, if the transformation works. But it turns out, due to the inhomogeneous transformation behavior of the Christoffel symbols \eqref{transgamma}, that we get a spin connection in the $(\fa g, \beta, \phi)$-frame having the same structure as the standard one with partial derivatives replaced by the $D^a$-derivative and standard Christoffel symbols replaced by the ones in the symplectic frame.

Let us  first establish our notation and  state that
\begin{equation}
  \nonumber
  \begin{array}{l@{\hspace{30pt}}l}
   \alpha, \beta, \gamma,\ldots      &\mbox{denote Lorentz-frame indices,} \\
   a,b,c,\ldots  &\mbox{denote space-time indices.}
   \end{array}
\end{equation}
The vielbein matrices $e_{\alpha}{}^a$ relating these two frames via $e_a = e_a{}^\alpha\,e_\alpha$ and $e^a = e_\alpha{}^a\,e^\alpha$ are  defined in the usual way by requiring that:
\eq{
  \label{vielbein_04}
  e_\alpha{}^a \, e_\beta{}^b \hspace{1pt} G_{ab} = \eta_{\alpha\beta} \;,
}
with $\eta_{\alpha\beta} = \mbox{diag}\, (-1,+1,\ldots, +1)$. In these conventions the spin connection can be expressed in terms of the Christoffel symbols $\Gamma^c{}_{ab}$ as follows
\eq{
  \label{spinchris}
  \omega_c{}^{\alpha}{}_{\beta}=e_a{}^{\alpha}{}\, e_{\beta}{}^b\, \Gamma^a{}_{cb}
  + e_a{}^{\alpha}{}\,  \partial_c   e_{\beta}{}^a\,.
}
Introducing spacetime-dependent gamma matrices by contracting with the vielbein $\gamma^a = \, \gamma^{\alpha}e_{\alpha}{}^a$ and using standard notation $\gamma^{a_1\dots a_p} = \gamma^{[\underline{a_1}}\gamma^{\underline{a_2}}\cdots \gamma^{\underline{a_p}]}$, we write the kinetic term for the dilatino $\lambda$ as:
\eq{
  \label{action_dilatini}
    {\cal L}^\lambda_{\rm IIA} =   \overline{\lambda}\hspace{1pt} \gamma^{a}\left(
            \partial_a - {\textstyle {i\over 4}} \hspace{1pt}
    \omega_{a\,\alpha\beta}\hspace{1pt} \gamma^{\alpha\beta}\right) \lambda \;.
}
Let us now transform this expression into the symplectic frame. By applying the anchor map to a basis in the Lorentz frame, we get:
\eq{
e^{\alpha} =\,e^{\alpha}{}_a \,dx^a \rightarrow e^{\alpha}{}_a\beta^{ab}\,\partial_b =:\eta^{\alpha \gamma} \fa e^m{}_\gamma \,\partial_m \;.
}
Note, that this is compatible with the expected definition of the vielbein matrices in the symplectic frame:
\eq{
  \fa e^{\alpha}{}_a \, \fa e^{\beta}{}_b \hspace{1pt} \fa g^{ab} = \eta^{\alpha\beta} \;.
}
We thus take the following definition for the new vielbein matrices in terms of the standard ones:\eq{\label{transvielbein}
     \fa e^{\alpha}{}_a=\eta^{\alpha\beta}\, e_{\beta}{}^b\, \beta_{ba}  \, .
}
To finally state the result for the dilatino kinetic term and the corresponding form of the new spin connection let us recall that spinors are not affected by transformation to a Lie algebroid. Thus, $\fa \lambda = \lambda$ and let us in addition define the new gamma matrices by:
\eq{
\fa \gamma_a =\, \gamma_\alpha \fa e^\alpha{}_a =\, \gamma^\beta e_\beta{}^b\beta_{ba} \;.
}
Due to its importance let us state the result in a separate proposition and give a detailed proof of the transformation.
\begin{prop}
 With the definitions given above, the kinetic term for the dilatino in the symplectic frame is given by:
\eq{
    \fa{\cal L}^\lambda_{\rm IIA} =   \overline{\fa\lambda}\hspace{1pt} \fa\gamma_{a}\left(
            D^a - {\textstyle {i\over 4}} \hspace{1pt}
    \fa\omega^a{}_{\beta \delta} \fa \gamma^{\beta\delta}\right) \fa\lambda \;,
}
where the transformed spin connection takes the form:
\eq{
    \fa\omega^a{}_{\alpha}{}^{\beta}=\fa e^b{}_{\alpha}\, \fa e^{\beta}{}_c\, \fa\Gamma_b{}^{ac}
   + \hat e^b{}_{\alpha}  D^a \fa e^{\beta}{}_b\, .
}
\end{prop}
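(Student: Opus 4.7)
The plan is to reduce the proposition to the single algebraic identity
\begin{equation}\nonumber
\fa\omega^a{}_{\alpha}{}^{\beta} \;=\; \fa e^b{}_{\alpha}\, \fa e^{\beta}{}_c\, \fa\Gamma_b{}^{ac}
   + \fa e^b{}_{\alpha}\, D^a \fa e^{\beta}{}_b\,,
\end{equation}
obtained by substituting the redefinition of vielbeins and Christoffel symbols into the standard spin-connection formula \eqref{spinchris}. Once this is in place, the transformation of the Lagrangian is immediate: for the derivative piece one uses $\gamma^a = \gamma^\alpha e_\alpha{}^a$, inverts $\fa e^\alpha{}_a = \eta^{\alpha\beta}e_\beta{}^b\beta_{ba}$ to write $e_\alpha{}^a = \eta_{\alpha\beta}\fa e^\beta{}_c\beta^{ca}$, and combines this with $\partial_a = \beta_{ab}D^b$ to obtain $\gamma^a\partial_a = \fa\gamma_a D^a$, using the definition $\fa\gamma_a = \gamma^\beta e_\beta{}^b\beta_{ba}$.

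First I would record the inverse relations that follow from $\fa e^\alpha{}_a=\eta^{\alpha\beta}e_\beta{}^b\beta_{ba}$, namely $e_\beta{}^b = \eta_{\beta\alpha}\fa e^\alpha{}_a\beta^{ab}$ and $e^\alpha{}_a = \eta^{\alpha\beta}\fa e_\beta{}^b\beta_{ba}$, which are verified directly using $\beta^{ab}\beta_{bc}=\delta^a_c$. Then I would substitute the transformation rule \eqref{transgamma} of the Christoffel symbols, $\Gamma^a{}_{cb} = -\beta^{ap}\beta_{cm}\beta_{bn}\fa\Gamma_p{}^{mn} - \beta_{nb}\partial_c\beta^{an}$, into the first term $e_a{}^\alpha e_\beta{}^b\Gamma^a{}_{cb}$ of \eqref{spinchris}. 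After contracting with $\gamma^c$ (which converts the remaining $\beta^{cm}$ factor into $D^{(\cdot)}$ via $\gamma^c = \fa\gamma_m\beta^{cm}$), the homogeneous piece collapses to exactly $\fa e^b{}_\alpha\fa e^\beta{}_c\fa\Gamma_b{}^{ac}$, which is the first term of the claimed $\fa\omega$.

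Next I would treat the second, inhomogeneous, term $e_a{}^\alpha\partial_c e_\beta{}^a$ of \eqref{spinchris} together with the residual inhomogeneous contribution $-e_a{}^\alpha e_\beta{}^b\beta_{nb}\partial_c\beta^{an}$ coming from \eqref{transgamma}. Expanding the first of these via the product rule applied to $e_\beta{}^a = \eta_{\beta\gamma}\fa e^\gamma{}_d\beta^{da}$ yields two pieces: one containing $\partial_c\fa e^\gamma{}_d$ and one containing $\partial_c\beta^{da}$. The $\partial_c\beta$-piece must cancel the inhomogeneous Christoffel remainder, and this cancellation is where the key algebraic work lies: using $\partial_c(\beta^{an}\beta_{nb}) = 0$ to trade derivatives of $\beta$ for derivatives of $\beta^{-1}$ and then applying $\beta^{ab}\beta_{bc}=\delta^a_c$ repeatedly, the two contributions combine and, after contraction with $\gamma^c$ which again produces the $\beta^{c\,\cdot}$ that completes $\partial_c$ into $D^a$, one is left precisely with $\fa e^b{}_\alpha D^a\fa e^\beta{}_b$.

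The hard part will be keeping the index gymnastics clean through this cancellation: tracking which $\beta$ contracts with which index and verifying that the inhomogeneous pieces from the Christoffel transformation and the vielbein derivative combine into a single $D^a$ acting on $\fa e^\beta{}_b$, rather than producing spurious $\partial_c\beta$ terms. This cancellation is in fact guaranteed in principle, since both $\fa\Gamma$ is the unique torsion-free metric connection for $\fa g$ on the Lie algebroid $(T^*M,[\cdot,\cdot]^H\ks,\beta^\sharp)$ and $\fa e$ is by construction the corresponding vielbein, so the spin connection defined by the proposition is the unique one compatible with $\fa e$ and metric-compatible with $\eta_{\alpha\beta}$; checking it directly is the content of the calculation. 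Having established the claim for $\fa\omega$, the Lagrangian statement $\fa{\cal L}^\lambda_{\rm IIA} = \overline{\fa\lambda}\,\fa\gamma_a(D^a - \tfrac{i}{4}\fa\omega^a{}_{\beta\delta}\fa\gamma^{\beta\delta})\fa\lambda$ follows by combining it with $\gamma^a\partial_a = \fa\gamma_a D^a$ and the invariance of spinors $\fa\lambda = \lambda$.
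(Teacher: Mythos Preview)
Your proposal is correct and follows essentially the same route as the paper: substitute the Christoffel transformation \eqref{transgamma} and the vielbein redefinition \eqref{transvielbein} into the standard spin connection \eqref{spinchris}, then verify that the inhomogeneous $\partial\beta$ contributions from the Christoffel side and from the product rule on $e_\beta{}^a=\eta_{\beta\gamma}\fa e^\gamma{}_d\beta^{da}$ cancel, leaving exactly $\fa e^b{}_\alpha\,\fa e^\beta{}_c\,\fa\Gamma_b{}^{ac}+\fa e^b{}_\alpha\,D^a\fa e^\beta{}_b$. The only organizational difference is that the paper performs the substitution directly inside the full Lagrangian and reads off the transformed spin connection at the end, whereas you propose to isolate the spin-connection identity first and then invoke $\gamma^a\partial_a=\fa\gamma_aD^a$ and $\fa\lambda=\lambda$ for the Lagrangian; the algebraic content is the same.
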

\begin{proof}
Writing out the definition and employing the relation \eqref{transgamma} together with the transformation behavior \eqref{transvielbein} gives
\eq{
{\cal L}_{IIA} =&\,\bar{\lambda}\gamma^a\left[\partial_a - \tfrac{i}{4}\left(e_k{}^{\alpha'}e_{\beta}{}^b \Gamma^k{}_{ab} + e_k{}^{\alpha'}\partial_a e_\beta{}^k\right)\eta_{\alpha \alpha'}\gamma^{\alpha \beta}\right] \lambda \\
=&\,\bar{\lambda}\fa\gamma_{a'}\beta^{a'a}\Bigl[\partial_a - \tfrac{i}{4}(e_k{}^{\alpha'} e_\beta{}^b(-\beta^{kk'}\beta_{aa''}\beta_{bb'}\fa \Gamma_{k'}{}^{a''b'} - \beta_{b'b} \partial_a \beta^{kb'})  \\
&\,+ \beta_{kk'}\eta^{\alpha'\alpha''} \fa e^{k'}{}_{\alpha''}\,\partial_a(\beta^{k''k}\eta_{\beta\beta'} \fa e^{\beta'}{} _{k''}))\eta_{\alpha \alpha'}\gamma^{\alpha\beta}\Bigr]\lambda \\
=&\, \bar{\fa \lambda}\fa\gamma_{a'} \left[D^{a'} - \tfrac{i}{4}(\fa e^{k'}{}_\alpha \fa e^{\beta'}{}_b \fa \Gamma_{k'}{}^{a'b'} + \fa e^{k'}{}_\alpha \partial_a \fa e^{\beta'}{}_{k'})\eta^{\alpha \alpha'} \gamma_{\alpha'\beta'}\right]\fa \lambda\;,
}
where we also used the transformation behavior of the inverse vielbein in the symplectic frame $\fa e^a{}_\alpha =\, \eta_{\alpha\beta}\beta^{ab}e_b{}^\beta$, which gives again the right sign in the second term.
\end{proof}

The remaining part in this sector is the kinetic term for the gravitino $\Psi_a$, which also has a spacetime-vector index. It is given by the Rarita-Schwinger Lagrangian:
\eq{ \label{RaritaSchwinger}
      {\cal L}^{\Psi}_{\rm IIA} =   \overline{\Psi}\vphantom{\fa\Psi}_a\hspace{1pt}
      \gamma^{abc} \left(
            \nabla_b -{\textstyle {i\over 4}} \hspace{1pt}
       \omega_{b\, \alpha\beta}\, \gamma^{\alpha\beta}\right) \Psi_c\, .
}
In this form, the connection coefficients of the covariant derivative drop out due to the totally anti-symmetrized gamma matrices. Note that this is a special property of the Levi-Civita connection of Riemannian geometry and is different e.g. in theories with torsion. In the case of a general Lie algebroid, even the Levi-Civita connection can have an antisymmetric part and therefore we keep the general form \eqref{RaritaSchwinger}. To transform the above expression into the symplectic frame, we first note the transformation of the vector index of the gravitino:
\eq{
\fa \Psi^a =  \beta^{ab} \hspace{1pt}  \Psi_b \; .
}
Performing an analogous transformation as done for the dilatino kinetic term, we observe that this factor of $\beta^{ab}$ is crucial: It is needed to transform the covariant derivative in the right way to the symplectic frame. The result is then given by:
\eq{ \label{symplecticRaritaSchwinger}
      \fa{\cal L}^{\Psi}_{\rm IIA} =   \overline{\fa\Psi}\vphantom{\fa\Psi}^a\hspace{1pt}
      \fa\gamma_{abc} \left(
            \fa \nabla^b -{\textstyle {i\over 4}} \hspace{1pt}
      \fa\omega^b{}_{\alpha\beta}\, \gamma^{\alpha\beta}\right) \fa\Psi^c \;.
}
Note, that now the antisymmetric part of the symplectic Christoffel symbols enter in the new kinetic term: $\fa\Gamma_c{}^{[\underline{ab}]}={1\over 2}{\cal Q}_c{}^{ab}$.

This completes our exposition of the fermionic terms in type IIA supergravity, since all the terms can be transformed to the symplectic frame in the ways des\-cribed above. We therefore constructed a complete type IIA supergravity theory in the symplectic frame. A natural question to ask is, whether this theory can be obtained from first principles, i.e. by supersymmetrizing $\beta$-diffeomorphims and then writing down an invariant theory. We will comment about this question in section \ref{ch-future} on future directions.

\subsubsection{First solutions in the symplectic frame}
Having established a complete differential geometry framework on the previous Lie algebroid \mbox{$(T^*M,[\cdot,\cdot]^H \ks,\beta^{\sharp})$} together with the corresponding supergravity Lagrangians, it is natural to ask for solutions to the equations of motion given in section \ref{subsec-eom} and their relation to solutions in the standard geometric frame and to frames explored in double field theory \cite{Andriot:2012an}. Whereas the standard frame and the above Lie algebroid allow for differential geometric descriptions, the frame used in \cite{Andriot:2012an} is more appropriate to describe the prime examples of non-geometric fluxes. Finding the appropriate fluxes which are well described in our Lie algebroid setting will be one of the tasks of this section. On the other hand giving an appropriate Lie algebroid setting for the frame discovered in \cite{Andriot:2012an} will be sketched briefly in section \ref{ch-future}.

\paragraph{Calabi-Yau manifolds in the symplectic frame}
In this and the next section we are going to discuss first attempts to find solutions to the equations of motion given in section \ref{subsec-eom}. This can only be the start of the enterprise to find and classify solutions which could include the analogue of fundamental string- and brane solutions. Whereas the latter go beyond the scope of this work, we are able to give at least one class of solutions in the symplectic frame, which are directly related to the standard geometric frame. These are the well-known Calabi-Yau geometries, which are given by the following conditions:
\eq{
\arraycolsep2pt
  \begin{array}{lclcrcl}
      R_{ab}&=&0\;, &\hspace{40pt} &   d\omega&=&0 \;, \\[5pt]
      H_{abc}&=&0\;, &&\phi&=&{\rm const.} \;,
    \end{array}
}
where $\omega$ denotes the K\"ahler form $\omega={i\over 2} \hspace{1pt}G_{a\overline b}\, dz^a\wedge
d\overline{z}^{\overline b}$ and we are using complex coordinates.
Looking at the expression for the Christoffel symbols \eqref{Christoffelsym} we observe that the additional ${\cal Q}$-flux terms vanish for a constant bi-vector and the resulting formula has the same pattern as in standard differential geometry. In addition, the $\Theta$-flux also vanishes, i.e. $d_{\beta}^H \beta =\,d_\beta \beta =\, 0$. Thus we make the ansatz:
\eq{
    \beta=\left(\begin{matrix}
       0 & +1 & 0 & 0 & 0 & 0\\
     - 1 & 0 & 0 & 0& 0 & 0 \\
     0 & 0 & 0 & +1 & 0 & 0\\
     0 & 0 &  -1& 0 & 0 & 0\\
   0 & 0 & 0 & 0 & 0 & +1\\
     0 & 0 &  0& 0 & -1 & 0
\end{matrix}\right) \;.
}
Using the field redefinition \eqref{MPIfield}, we obtain a smooth solution to the
string equations of motion in the non-geometric frame
 characterized by:
\eq{
      \fa R_{ab}=0  \;,\hspace{40pt}
      \Theta_{abc}=0\;,\hspace{40pt}
       \phi={\rm const.}
}
Applying the anchor map to the K\"ahler form $\omega$ we observe that we get as components the corresponding metric in the symplectic frame, i.e.
\eq{
W = \frac{i}{2}\,\fa g^{a\bar{b}}\,\partial_{z_a}\wedge\partial_{\bar{z}_b}\;.
}
Since we assumed the anchor to be bijective, the non-degeneracy of $W$ is inherited from the original K\"ahler form $\omega$. Finally, we want to find the condition on $W$ corresponding to the closedness of the K\"ahler form. As detailed in section \ref{subsec-geoHR}, we identified $(T^*M,[\cdot,\cdot]^H \ks,\beta^{\sharp})$ to be a proper Lie algebroid, meaning that we have the following properties:
\begin{itemize}
\item The anchor $\beta : T^*M \rightarrow TM$ is an algebra homomorphism, i.e.
\eq{
\beta^{\sharp}([\xi,\eta]^H \ks) =\, \left[\beta^{\sharp}(\xi),\beta^{\sharp}(\eta)\right]_L
}
\item Using proposition \ref{d-E} of section \ref{ch-math} we can define the corresponding differential $d^H _\beta$ which is by construction nilpotent:
\eq{
(d^H _\beta)^2 =\, 0 \;.
}
\item There is the corresponding Poisson-cohomology theory for multi- vectorfields, as was described generally in \eqref{Poissonkohomologie}.
\end{itemize}
By observing that the anchor is a Lie algebroid homomorphism from  $(T^*M,[\cdot,\cdot]^H \ks,\beta^{\sharp})$ to the trivial Lie algebroid $(TM,[\cdot,\cdot]_L, \textrm{id})$ we can use proposition \ref{Algebroidhomodiff} of section \ref{ch-math} to conclude the following equivalence from the relation between $\omega$ and $W$:
\eq{
d\omega =\,0 \quad \Longleftrightarrow \quad d^H _\beta W =\,0 \;.
}
Thus we identified the latter property of $W$ to be the analogue of the necessary Calabi-Yau condition $d\omega =\, 0$, and we see that transforming standard Calabi-Yau solutions lead to manifolds bearing similar properties, where the old conditions get replaced by the corresponding objects in the new Lie algebroid frame. Note, that this construction is valid in a general Lie algebroid. Thus, as long as such a framework is identified, we can always transport the Calabi-Yau condition to it. Because of the similarity of the new conditions to the old ones, we finally want to give the following definition:
\begin{definition}
A \emph{co-Calabi-Yau manifold} is a complex manifold together with a non-degenerate $d_\beta^H$-closed two-vector field $W$ and  associated Ricci-flat ($\fa R_{ab} =\, 0$) Hermitean metric $\fa g$. The corresponding K\"ahler form is given by
\eq{
	W = \frac{i}{2}\,\fa g^{a\bar{b}}\,\partial_{z_a}\wedge\partial_{\bar{z}_b}\,.
}
\end{definition}

\subsubsection{An approximate solution with constant $\Theta$-flux}
We now want to take the example of the constant $H$-flux background described in section \ref{subsec-t3h} and construct in a similar way a solution to the equations of motion with constant $\Theta$-flux. Although this is only an approximate solution, it contains important strategies to think about finding solutions.

As a concrete example, let us choose a flat four-dimensional metric $\fa g^{ab} = \delta^{ab}$, together with a constant dilaton and a specific ansatz for the bi-vector. We assume the latter to be invertible and therefore we are bound to even-dimensional spacetime. Note that even for a flat metric the Ricci tensor in the symplectic frame does not vanish for general ${\cal Q}$-flux. Intuitively, the presence of a non-trivial $\beta$-field influences the geometry (e.g. producing non-vanishing curvature) even in the case of a flat metric. Let $\epsilon$ and $\theta$ be constant parameters, then for the bi-vector in a four dimensional flat space we choose the ansatz:
\eq{ \label{solution+}
    \beta=\left(\begin{matrix}
       0 & +\epsilon^{-1}\hspace{1pt} (1+x_4) & 0 & 0\\
     - \epsilon^{-1}\hspace{1pt} (1+x_4) & 0 & 0 & 0\\
     0 & 0 & 0 &+ \epsilon\hspace{1pt}\theta\\
     0 & 0 &  -\epsilon\hspace{1pt} \theta & 0\end{matrix}\right)
     \hspace{30pt}
      {\rm for}\ x_4>0 \;,
}
and
\eq{ \label{solution-}
   \beta=\left(\begin{matrix}
       0 &  +\epsilon^{-1}\hspace{1pt} (1-x_4) & 0 & 0\\
     - \epsilon^{-1}\hspace{1pt} (1-x_4) & 0 & 0 & 0\\
    0 & 0 & 0 & -\epsilon\hspace{1pt}\theta\\
    0 & 0 &  +\epsilon\hspace{1pt} \theta & 0\end{matrix}\right)
    \hspace{30pt}
    {\rm for}\ x_4<0 \;.
}

\noindent Because the ${\cal Q}$-flux contains inverse $\beta$-fields, we want to avoid zeros of the latter and thus we are forced to take the above two patches. With this ansatz, the $\Theta$-flux is constant whereas the ${\cal Q}$-flux is a continuous function of $x_4$. In components the result is:
\begin{gather}
 \Theta^{123} =\, \theta\;, \\
    {\cal Q}_1{}^{31}=-{\cal Q}_1{}^{13}={\cal Q}_2{}^{32}=-{\cal
       Q}_2{}^{23}=  {\theta\hspace{1pt}\epsilon\over  1+|x_4|}\, ,
\end{gather}
with all other components vanishing. The corresponding non-vanishing Christoffel-symbols are then given by:
\eq{
     \fa \Gamma_3{}^{11}
   =\fa \Gamma_3{}^{22}
   =-\fa \Gamma_1{}^{13}
   =-\fa \Gamma_2{}^{23}
   = {\theta\hspace{1pt}\epsilon\over  1+|x_4|}\, ,
}
and therefore the non-vanishing part of the Ricci tensor in the symplectic frame is given by:
\eq{
      \fa R^{11}= \fa R^{22}= \frac{3}{4}\, \fa R^{33}=-
     3 \, {(\theta\hspace{1pt}\epsilon)^2\over  (1+|x_4|)^2} \;.
}
In the following we assume weak $\Theta$-flux and thus only linear terms in the flux contribute to the equations of motion. In this limit, the first and last equation in the set \eqref{eom_final} are satisfied. For the second, we have to take the limit $\epsilon \rightarrow 0$ in which (independent of the coordinate $x_4$) the ${\cal Q}$-flux and therefore the Ricci tensor $\fa R^{ab}$ vanish with the $\Theta$-flux staying constant.

\noindent Let us close this section with two remarks:
\begin{itemize}
\item In the above limit, the ${\cal Q}$-flux is well defined, but the $Q$-flux component
\eq{
Q_4{}^{12} =\, \partial_4 \beta^{12}
}
gets infinite. This shows that the appropriate fluxes in the symplectic frame are the $\Theta$- and ${\cal Q}$-flux.
\item Using the field redefinition \eqref{MPIfield}, we can transform the above ansatz for a solution from the $(\fa g, \beta, \phi)$-frame into the original $(G,B,\phi)$-frame. The non-vanishing components of the metric and $B$- field are given by:
\eq{
B_{12} &=\, -\frac{\epsilon}{1+x_4}\;, \qquad
\hspace{40pt} B_{34} =\,-\frac{1}{\epsilon \hspace{1pt}\theta}  \;, \\
G_{11} = G_{22} &=  \frac{\epsilon^2}{(1+x_4)^2} \;, \hspace{27pt} 
G_{33}=G_{44} = \frac{1}{(\epsilon \hspace{1pt}\theta)^2} \;.
}
Thus, in the limit $\epsilon \rightarrow 0$, the original $H$-flux is well-defined, whereas the original metric is ill-defined: Even locally, it is not invertible and therefore not a proper metric. We interpret this as a hint that there exist well-defined solutions in the symplectic frame which cannot be described by the original Riemannian geometry setup.
\end{itemize}

\subsubsection{Summary}
In the last section we employed the differential geometry of Lie algebroids, a well known construction in the mathematics literature, to reformulate the standard bosonic low energy effective string action in a frame containing a metric $\fa g^{ab}$ on the cotangent-bundle, a (quasi-)symplectic bi-vector $\beta^{ab}$ and a dilaton $\phi$ as dynamical fields. By transforming the gauge-transformations of the original $B$-field to this frame, we were led to a new symmetry, called $\beta$-diffeomorphisms. Using a specific Lie algebroid, we were able to construct a covariant derivative, torsion and curvature being tensorial under both, standard and $\beta$-diffeomorphisms. As a consequence it was possible to write down a Lagrangian containing the new Ricci-scalar $\fa R$, dilaton and the field strength $\Theta$. The resulting action, which we called symplectic gravity due to the non-trivial dependence on the bi-vector $\beta$ is given by:
\eq{
      \hat S={1\over 2\kappa^2} \int d^nx\, \sqrt{-|\hat g|}\,\bigl|  \beta^{-1} \bigr|\, e^{-2\phi}
     \Bigl(   \hat R -{1\over 12} \Theta^{abc}\, \Theta_{abc}
              +4\hspace{1pt} \hat g_{ab}\, D^a\phi D^b \phi\Bigr)\, .
}
This action can be related to the original bosonic string action via a field redefinition involving the anchor map of the Lie algebroid. Due to its simple structure, a generalization to higher order corrections in $\alpha'$ and to type II supergravity was possible. Finally we constructed simple solutions to the resulting equations of motion in the bosonic case.

%Könnte man schon noch ausführlicher schreiben!

\section{Outlook and future directions}
\label{ch-future}

\subsection{Remarks on non-geometric frames and supersymmetry}
In this section we comment on two further directions which are interesting applications of the results presented in the last section. We will be very brief and refer the reader for the first part to the corresponding publications \cite{Blumenhagen:2013aia, Felixphd}. The aim of the second part only is to give ideas and therefore will not be as precise as the sections before.

\subsubsection{Relation to non-geometric frames}
In the last section, we constructed a differential geometry framework which allowed us to write down an action determining the dynamics of the field variables $(\hat g^{ij}, \beta^{ij}, \phi)$. These fields, and as a consequence the whole symplectic gravity action, are related to the standard bosonic low energy effective supergravity action for the standard metric $G_{ij}$ and $B$-field $B_{ij}$ via the following field redefinition:
\eq{
\hat g ^{ij} =\, -\beta^{im}G_{mn}\beta^{nj}\;, \qquad \beta^{ij} =\, \left(B^{-1}\right)^{ij} \;.
}
We want to argue, that this is a way to describe the standard $(G,B,\phi)$ background in a special \emph{frame}, and that in general the geometry of different frames can be described by Lie algebroids with different anchors. As this is an outlook, we only describe the idea. The detailed connection of geometric/non-geometric frames and Lie algebroids is given in the original paper \cite{Blumenhagen:2013aia} and will be contained in a forthcoming PhD-thesis \cite{Felixphd}.

\paragraph{Field redefinitions and $O(d,d)$-transformations}
The basic object of generalized geometry is the so-called generalized tangent bundle $E$. For a $d$-dimensional manifold $M$, it is given by the direct sum of the tangent- and cotangent bundle: $E=TM \oplus T^*M$. This direct sum is equipped with the following natural bilinear pairing: For sections $X,Y \in \Gamma(TM)$ and $\xi, \eta \in \Gamma(T^*M)$ we have:
\eq{\label{bilinearform}
\langle X + \xi, Y + \eta \rangle = \xi(Y) + \eta(X) \;.
}
In coordinates, the bilinear form is represented by the $2d \times 2d$-matrix $\eta$ which is explicitly given by:
\eq{
\begin{pmatrix}
0 & \mathbf{1} \\
\mathbf{1} & 0
\end{pmatrix} \:,
}
where $\mathbf{1}$ is the $d$-dimensional identity matrix. The transformations which leave the bilinear form \eqref{bilinearform} invariant are given by $2d \times 2d$-matrices ${\cal M}$, which have the property:
\eq{\label{oddcond}
{\cal M}^t \,\eta {\cal M} =\, \eta \;.
}
This is the definition of the group $O(d,d)$. Writing the matrices ${\cal M}$ in terms of $d\times d$- blocks ${\cal M} = \begin{pmatrix}
a & b \\
c& d
\end{pmatrix}$, the condition \eqref{oddcond} can be rewritten in the form:
\eq{
a^t c + c^ta &= \, 0\;, \\
b^t d + d^t b &=\, 0\;, \\
b^t c + d^t a &=\, \mathbf{1} \;.
}
Having the standard metric $G$ on $M$ and B-field $B$ as basic fields, it is common in generalized geometry to combine them into the generalized metric ${\cal H}$ on the bundle $E$ in the following way:
\eq{\label{genermetric}
{\cal H}(G,B) =\, \begin{pmatrix}
G - BG^{-1}B & BG^{-1} \\
-G^{-1}B & G^{-1}
\end{pmatrix}\;.
}
The group of $O(d,d)$-transformations acts by conjugation on the generalized metric:
\eq{
\widehat{\cal H}(G,B) =\, {\cal M}^t {\cal H}(G,B) {\cal M}\;.
}
To write the new metric $\widehat{\cal H}$ again into the standard form \eqref{genermetric}, we have to \emph{redefine} the metric and $B$-field:
\eq{\label{framedefinition}
{\cal H}(G,B)\; \xrightarrow{{\cal M}^t {\cal H} {\cal M}}\; \widehat{\cal H}(G,B) \;\xrightarrow{\widehat{G}(G,B),\widehat{B}(G,B)}\; {\cal H}(\widehat{G},\widehat{B})
}
This determines new field variables $(\widehat{G},\widehat{B},\phi)$, which describe the old background $(G,B,\phi)$ in a new \emph{frame}. Note that we did not change the dilaton up to now, because the generalized metric \eqref{genermetric} only contains the metric and $B$-field\footnote{To compare this to the action of T-duality, also the dilaton has to transform, as we observed in the Buscher rules. The transformation of the dilaton manifests itself in the choice of the appropriate integration measures if we write down actions in different frames.}. It turns out that for every such frame, it is possible to construct a Lie algebroid and its corresponding differential geometry which in turn enables us to write down the low energy effective string action in these frames in a manifest invariant way.

\paragraph{Field redefinitions and the anchor}
As stated in the last subsection, the choice of a frame can be geometrized by selecting an appropriate Lie algebroid. Let us sketch briefly how one can determine the anchor map in terms of the field redefinition \eqref{framedefinition}. The construction of the complete Lie algebroid corresponding to a chosen frame together with the differential geometry and action functional is detailed in \cite{Blumenhagen:2013aia}.

Comparing the transformed generalized metric $\widehat{\cal H}(G,B)$ to the standard one in terms of the new field variables ${\cal H}(\widehat{G},\widehat{B})$, it is possible to infer the corresponding field redefinition in terms of the matrix entries $a,b,c,d$ of the $O(d,d)$-matrix ${\cal M}$:
\eq{\label{frame12}
\widehat{G}(G,B) =&\,\gamma^{-1}(G,B)\,G\,\left(\gamma^{-1}\right)^t(G,B)\;, \\
\widehat{B}(G,B) =&\, \gamma^{-1}(G,B)\left[\gamma(G,B) \delta^t(G,B) - G \right]\left(\gamma^{-1}(G,B)\right)^t \;,
}
where we introduced the parameters $\gamma(G,B), \delta(G,B)$, which contain the matrix blocks $a,b,c,d$ in the following combination:
\eq{
\gamma(G,B) =\, d + (G-B)b \;, \qquad \delta(G,B) =\, c + (G-B)a \;.
}
Having a general Lie algebroid $(E, [\cdot,\cdot]_E,\rho)$ as described in the mathematical introduction of section \ref{ch-math}, it is possible to define a metric $\widehat{G} \in \Gamma (E^* \otimes_{\textrm{sym}} E^*)$ on it via \eqref{metric}. The anchor map $\rho$ enables us to relate it to the standard metric $G$ on the base manifold by the dual anchor:
\eq{
G =\, \left(\otimes^2 \rho^*\right) (\widehat{G})\;, \qquad \rho^* =\, \left(\rho^t\right)^{-1} : \Gamma(E^*) \rightarrow \Gamma(T^*M) \;.
}
The dual anchor is represented in coordinates by a $d\times d$-matrix, which we denote by\footnote{For distinction, we denote coordinate indices of sections in $E^*$ by Greek letters and sections in $T^*M$ by Latin letters, i.e. $s=s_\alpha e^\alpha \in \Gamma(E^*)$ and $\xi = \xi_k\,dx^k \in \Gamma(T^*M)$.} $(\rho^*)_m{}^\alpha$. In this notation the relation between the metric $\widehat{G}_{\alpha \beta}$ on the Lie algebroid and the standard metric $G_{mn}$ is given by:
\eq{\label{coordinateanchor}
G_{mn} =\, \left(\rho^*\right)_m{}^\alpha \left(\rho^*\right)_n{}^\beta \,\widehat{G}_{\alpha \beta} \;.
 }
If we compare this general relation to the redefined metric $\widehat{G}$ in \eqref{frame12}, we can read off the anchor corresponding to the field redefinition: $\rho = \left(\gamma^{-1}\right)^t$. Thus to every field redefinition, there exists a corresponding anchor and as a consequence it is possible to construct a Lie algebroid, which is proven in \cite{Blumenhagen:2013aia}.

\paragraph{Two examples}
To conclude this section, we want to mention two major examples of $O(d,d)$-induced field redefinitions and give the corresponding anchors. The first one is used in \cite{Andriot:2012an} for the double field theory description of non-geometric backgrounds. The field redefinition is given explicitly by:
\eq{\label{LMUfield12}
\widehat{G}(G,B) &=\, \hspace{10pt}\left(\mathbf{1} + BG^{-1}\right)\,G\,\left(\mathbf{1} -G^{-1}B \right)\;, \\
\widehat{B}(G,B) &=\, -\left(\mathbf{1} + BG^{-1} \right)\,B\,\left(\mathbf{1} -G^{-1}B\right)\;.
}
It turned out in \cite{Andriot:2012an} and becomes also clear from the preceding sections that it is convenient to describe the geometry in terms of a bi-vector which we denote in this special example also by a hat: $\hat{\beta} \in \Gamma(\wedge^2 TM)$. With the definition of the bi-vector given in \eqref{basicfieldredef2}, motivated by Poisson sigma models, we have:
\eq{
\hat{\beta} =\, -\widehat{G}^{-1}\widehat{B} \widehat{G}^{-1} \;.
}
We can now express the old variables $(G,B)$ in terms of the hatted ones conveniently in terms of the bi-vector:
\eq{
G(\widehat{G},\widehat{B}) &=\, \hspace{10pt}\left(\widehat{G}^{-1} + \hat{\beta}\right)^{-1} \widehat{G}^{-1} \left( \widehat{G}^{-1} - \hat{\beta} \right)^{-1}\;, \\
B(\widehat{G},\widehat{B}) &=\, \left(\widehat{G}^{-1} + \hat{\beta}\right)^{-1} \,\hat{\beta}\, \left( \widehat{G}^{-1} - \hat{\beta} \right)^{-1} \;.
}
In this form it is now easy to determine the anchor by comparing with the expression for $G$ with the corresponding one in \eqref{frame12}. It is shown in \cite{Blumenhagen:2013aia} that also the transformation of the $B$-field is consistent with the corresponding formula in \eqref{frame12}. The anchor is finally determined by the relation \eqref{coordinateanchor} to be:
\eq{
\rho^t =\, \widehat{G} + \hat{\beta} \;.
}
As a second example, let us see how the frame described by the Lie algebroid of the previous section, for which we explicitly formulated a symplectic gravity action can be obtained from the preceding example. In the relation \eqref{LMUfield12} we want to perform formally a Seiberg-Witten limit as described in section \ref{ch-quant} in the sense that we scale the metric and $B$-field as given in \eqref{scaling}. Note that we did not introduce factors of $\alpha'$ in this section, so the scaling in our case will be:
\eq{
G_{mn} \propto \epsilon\;, \qquad B_{mn} \propto \sqrt{\epsilon} \;,
}
and we only keep the leading order terms in quantities depending on $G$ and $B$ in the limit $\epsilon \rightarrow 0$. The results for the redefined fields $\widehat{G}$ and $\widehat{B}$ are:
\eq{
\widehat{G} =\, -BG^{-1}B\;, \quad \widehat{B} =\, BG^{-1}BG^{-1}B \; \Rightarrow \hat{\beta} =\, B^{-1}\;.
}
We thus recover the field redefinitions \eqref{MPIfield}, relating the standard $(G,B)$-frame to the $(\widehat{G},\hat{\beta})= (\fa g, \beta)$-frame of the previous section. Consequently the anchor is given by the bi-vector $\hat{\beta}$, because this dominates $\widehat{G}^{-1}$ in the limit $\epsilon \rightarrow 0$. We conclude that the Lie algebroid geometry described in the previous section can be seen as a Seiberg-Witten type limit of the preceding example. The corresponding field-dependent $O(d,d)$-transformation together with further details is computed in the original paper \cite{Blumenhagen:2013aia}.

\subsubsection{Speculations about supersymmetry}
As an application of the formalism of Lie algebroids, we showed in section \ref{subsec-super} how to transform type IIA supergravity into the frame with basic variables $(\hat g^{ij}, \beta^{ij},\phi)$. It turned out, that this is easily possible by \emph{transforming} the vielbein and spin connection by the anchor map. As an example, we repeat the expressions derived for the spin connection:
\eq{
    \fa\omega^a{}_{\alpha}{}^{\beta}=\fa e^b{}_{\alpha}\, \fa e^{\beta}{}_c\, \fa\Gamma_b{}^{ac}
   + \hat e^b{}_{\alpha}  D^a \fa e^{\beta}{}_b\, .
}
With the help of this result it was possible to write down the Rarita-Schwinger Lagrangian for the gravitino and the kinetic term for the dilatino in the new frame. For example, the latter was given by:
\eq{ \label{susy1}
    \fa{\cal L}^\lambda_{\rm IIA} =   \overline{\fa\lambda}\hspace{1pt} \fa\gamma_{a}\left(
            D^a - {\textstyle {i\over 4}} \hspace{1pt}
    \fa\omega^a{}_{\beta \delta} \gamma^{\beta\delta}\right) \fa\lambda \;.
}
For more details of the construction we refer the reader to section \ref{subsec-super}. All of these results were constructed out of the standard type IIA expressions by applying the anchor map (which was given by the bi-vector $\beta$) in an appropriate way.

Because these constructions worked out without problems (note, that this is not obvious as shown in section \ref{subsec-super}), one might ask if there is a possibility to construct the transformed type IIA supergravity theory independent of the transformation by the anchor map. One possible way (which is still to be analyzed and therefore we want to stress that it is not clear if this is possible at all) would be to begin with the supersymmetry algebra itself. As an example, in four dimensions, denoting by\footnote{We use indices $\alpha$ for the two dimensional $(\tfrac{1}{2},0)$-representation of $SL(2,\mathbb{C})$ and indices $\dot \beta$ for its complex conjugate representation $(0,\tfrac{1}{2})$. For further details on this notation we refer the reader to \cite{Wess:1992cp}.} $Q_\alpha , \bar{Q}_{\dot \beta}$ the supersymmetry generators and their Hermitean conjugates (we suppress further indices labeling the supercharges), their anti-commutator gives the generator of spacetime-translations $P_\mu$. With the Pauli-matrices denoted by\footnote{We refer the reader to the appendix for their definition.} $\sigma^\mu{}_{\alpha \dot{\beta}}$, the anti-commutator is:
\eq{
[Q_\alpha , \bar{Q}_{\dot \beta} ]_+ =  \, \sigma^\mu{} _{\alpha \dot \beta} \,P_\mu \;.
}
We observe, that the spacetime-indices of $P_\mu$ can be related by the anchor map to the ``symplectic'' frame by $\hat P ^\mu = \beta^{\mu \nu} P_\nu$, whereas spin-$\tfrac{1}{2}$- indices remain the same. Denoting in addition the Pauli-matrices (which also do not change) for convenience by $\hat \sigma_{\mu}{} ^{\alpha \dot \beta}$, we are able to write down the anti-commutator of two supercharges in the symplectic frame:
\eq{\label{symplsusy}
[\hat Q^\alpha, \bar{\hat Q}^{\dot \beta} ]_+ = \, \hat \sigma_\mu{}^{\alpha \dot \beta} \, \hat P^\mu \;.}
It would be interesting to study the consequences of this algebra in representation theory and in the construction of supersymmetric gauge theories and supergravity. If this is possible, there are immediate questions in physics and mathematics: What is the connection to symplectic and Poisson geometry (because the Poisson tensor $\beta$ enters non-trivially already into the algebra \eqref{symplsusy})? Can we recover the transformed type IIA theory, e.g. \eqref{susy1}? Is there a counterpart of the type IIB theory or eleven dimensional supergravity? We observe that the transformation with the help of the anchor was only possible in even dimensions, but the algebra \eqref{symplsusy} is possible a priori in arbitrary dimensions. These are all open questions to be studied in the future.

\subsection{Nambu-structures and Lie 3-algebroids}
The generalization of Poisson brackets to multilinear maps with more than two entries was first introduced by Nambu in \cite{Nambu:1973qe} to describe phase spaces of odd dimensions and mechanical systems with multiple Hamiltonians. A \emph{Nambu-Poisson n-bracket} is given by a bracket $\{\cdot,\dots, \cdot\}$ with $n$ entries, which is characterized by the following two properties. For functions $f_1,\dots f_{n-1}, g, h$ we have a generalized Leibniz rule:
\eq{
\{f_1,\dots,f_{n-1}, g\, h\} =\,\{f_1,\dots,f_{n-1},g \}h + \{f_1,\dots,f_{n-1},h\}g \;.
}
In addition, for $f_1,\dots , f_{n-1},\, g_1, \dots g_{n} $ we have the so-called fundamental identity, which is the n-ary generalization of the Jacobi identity:
\eq{
\{f_1,\dots,f_{n-1},\{g_1,\dots g_n\}\} =\, \sum_{i=1} ^n \, \{g_1,\dots g_{i-1},\{f_1,\dots,f_{n-1},g_i\},g_{i+1},\dots,g_n \} \;.
}
The quantization of phase spaces equipped with these structures is not completely understood up to now \cite{Dito:1996xr, Curtright:2002fd}, but might have applications to the dynamics of M-theory branes. As an example, Basu and Harvey \cite{Basu:2004ed} suggested a generalization of Nahm's equations \cite{Nahm:1979yw} by analyzing M2-branes ending on a stack of M5-branes. The fluctuations $X^\mu$ of the M5-brane world-volume coordinates in one of the directions of the M2-brane can be described by:
\eq{
\frac{d}{ds}\,X^\mu = \, \epsilon^{\mu}{} _{\nu \lambda \rho} \left[X^\nu,X^\lambda,X^\rho\right] \;,
}
where the right hand side is given by a three-bracket $[\cdot,\cdot,\cdot]$ (we will give more details on n-brackets in the next sections). Finally, in \cite{Blumenhagen:2010hj}, a similar type of 3-algebra was realized in the context of WZW models. Further applications of 3-algebras in string- and M-theory can be found in \cite{Bagger:2012jb}. An interesting connection of non-geometric fluxes and M(atrix) theory was established in \cite{Chatzistavrakidis:2012qj}.

%Beschreibung von BLG Theorien im paper über Quantization of 2-plectic manifolds !

\subsubsection{3- Koszul brackets}
In the introductory remarks we observed that a generalization of Poisson structures and their quantization to the case of more than two entries is an important mathematical structure in string theory with non-vanishing background fluxes as well as M-theory. We therefore want to describe in more detail the mathematical properties of such three- or n-ary algebras and give the first steps to generalize the construction of section \ref{sec-biinvsymplgrav} to the case of an underlying 3-algebroid.

As we described in section \ref{ch-math}, the Koszul bracket is intimately connected to Poisson structures and thus it is a natural question if there is a generalization to a Koszul bracket with more than two arguments. It is well known that this is possible (e.g. \cite{0615.58029, 2012arXiv1207.3590B} and the first part of \cite{0993.53025}) and we review the case of three entries in more detail. First, let us introduce a unifying concept which allows to express the standard Koszul bracket in a form which is easily generalized to the case of three or more arguments.

Let $M$ be a Poisson manifold. We will use the generalization of the standard Lie derivative in the direction of a multi-vectorfield $X \in \Gamma(\wedge^k TM)$. This is defined by its action on multi-vectorfields $Y \in \Gamma(\wedge^{\bullet}TM)$ and forms $\alpha \in \Gamma(\wedge^{\bullet}T^*M)$ by using the Schouten-Nijenhuis bracket $[\cdot,\cdot]_{SN}$ in the following way
\eq{ \label{multilie}
L_X Y =&\, [X,Y]_{SN}\;, \\
L_X \alpha =&\, \left(\iota_X \circ d - (-1)^k d \circ \iota_X \right)\alpha \;,
}
where we use the insertion $\iota_X$ of a multi-vectorfield which can be traced back to the insertion of a vector field by:
\eq{
  \iota_{fX} =\, f\iota_X \;, \qquad \iota_{X\wedge Y} =\, \iota_X \circ \iota_Y \;,
}
and is simply zero if the degree of the multi-vectorfield $X$ is greater than the argument of the insertion. Furthermore, let $F,G$ be differential operators on $\Gamma(\wedge^{\bullet}T^*M)$ of degree $|F|$ and $|G|$ respectively, i.e.
\eq{
F\,&: \quad \Gamma(\wedge^k T^*M) \rightarrow \Gamma(\wedge^{k + |F|}T^*M)\;, \\
G\,&: \quad \Gamma(\wedge^k T^*M) \rightarrow \Gamma(\wedge^{k + |G|}T^*M)\;.
}
The graded commutator of two such operators is then defined by:
\eq{
[F,G] := F\circ G -(-1)^{|F||G|} G \circ F \;.
}
Let us consider examples of operators on $\Gamma(\wedge^{\bullet}T^*M)$ which are important in the following. Clearly, the exterior differential $d$ is an operator of degree one and the insertion of a multi-vectorfield $X \in \Gamma(\wedge^kTM)$ has degree $-k$. From its definition \eqref{multilie} it is clear that $L_X$ is an operator of degree $(1-k)$. Another operator which is important for the following is the wedge-multiplication by a $k$-form $\alpha$ defined by
\eq{
\mu_\alpha\;: \quad \Gamma(\wedge^m T^*M) &\rightarrow \Gamma(\wedge^{m+k}T^*M) \\
\mu_\alpha(\xi) &:=\, \alpha \wedge \xi \;.
}
It is clearly of the same degree as the form $\alpha$. We are now ready to prove the following form of the Koszul bracket which was also given by Koszul \cite{0615.58029}:

\begin{prop}
Let $M$ be a Poisson manifold with bi-vector $\beta = \tfrac{1}{2}\beta^{ij}\,\partial_i \wedge \partial_j$. Then the Koszul-bracket for one-forms $\xi, \eta$ has the following form
\eq{
\left[\xi , \eta \right]\ks = \, \left[\, [L_\beta, \mu_\xi], \mu_\eta \right](1) \;.
}
\end{prop}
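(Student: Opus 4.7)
The plan is to unfold both graded commutators explicitly, apply the resulting operator to the constant function $1$, and collapse the expression onto the defining formula $[\xi,\eta]\ks=L_{\beta^\sharp(\xi)}\eta-\iota_{\beta^\sharp(\eta)}d\xi$. Since $|L_\beta|=-1$ and $|\mu_\xi|=|\mu_\eta|=+1$, the signs in the graded commutator give $[L_\beta,\mu_\xi]=L_\beta\mu_\xi+\mu_\xi L_\beta$ and hence $[[L_\beta,\mu_\xi],\mu_\eta]=L_\beta\mu_\xi\mu_\eta+\mu_\xi L_\beta\mu_\eta-\mu_\eta L_\beta\mu_\xi-\mu_\eta\mu_\xi L_\beta$. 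Using $\mu_\alpha(1)=\alpha$ and $L_\beta(1)=\iota_\beta\,d(1)-d\,\iota_\beta(1)=0$, evaluation at $1$ reduces the identity to be proved to
\begin{equation*}
L_\beta(\xi\wedge\eta)+\xi\wedge L_\beta\eta-\eta\wedge L_\beta\xi\;=\;[\xi,\eta]\ks.
\end{equation*}

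I would then rewrite each $L_\beta$ via $L_\beta=\iota_\beta d-d\iota_\beta$. On the one-forms one has $\iota_\beta\xi=\iota_\beta\eta=0$, because a bi-vector needs two slots, so $L_\beta\xi=\iota_\beta d\xi$ and $L_\beta\eta=\iota_\beta d\eta$ are functions, while a short local computation in the basis $\partial_i\wedge\partial_j$ gives $\iota_\beta(\xi\wedge\eta)=-\iota_{\beta^\sharp(\xi)}\eta$. The crucial step is computing $\iota_\beta\,d(\xi\wedge\eta)=\iota_\beta(d\xi\wedge\eta)-\iota_\beta(\xi\wedge d\eta)$, which requires the generalized Leibniz rule for $\iota_\beta$ on a wedge product. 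For a decomposable bi-vector $\beta=X\wedge Y$ and forms $\alpha,\gamma$ with $|\alpha|=p$,
\begin{equation*}
\iota_{X\wedge Y}(\alpha\wedge\gamma)=\iota_{X\wedge Y}\alpha\wedge\gamma+(-1)^{p-1}\iota_Y\alpha\wedge\iota_X\gamma+(-1)^{p}\iota_X\alpha\wedge\iota_Y\gamma+\alpha\wedge\iota_{X\wedge Y}\gamma,
\end{equation*}
and, after summation against $\tfrac12\beta^{ij}\partial_i\wedge\partial_j$, antisymmetry of $\beta^{ij}$ collapses the two middle terms into a single contribution, yielding $\iota_\beta(d\xi\wedge\eta)=(L_\beta\xi)\,\eta-\iota_{\beta^\sharp(\eta)}d\xi$ and $\iota_\beta(\xi\wedge d\eta)=-\iota_{\beta^\sharp(\xi)}d\eta+\xi\,(L_\beta\eta)$.

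Assembling the pieces gives $L_\beta(\xi\wedge\eta)=\eta\wedge L_\beta\xi-\iota_{\beta^\sharp(\eta)}d\xi+\iota_{\beta^\sharp(\xi)}d\eta-\xi\wedge L_\beta\eta+d\iota_{\beta^\sharp(\xi)}\eta$, where the last term comes from $-d\iota_\beta(\xi\wedge\eta)=d\iota_{\beta^\sharp(\xi)}\eta$. Substituting this into the preliminary identity, the unwanted terms $\xi\wedge L_\beta\eta$ and $\eta\wedge L_\beta\xi$ cancel, and the ordinary Cartan magic formula for the honest vector field $\beta^\sharp(\xi)\in\Gamma(TM)$, namely $L_{\beta^\sharp(\xi)}\eta=\iota_{\beta^\sharp(\xi)}d\eta+d\iota_{\beta^\sharp(\xi)}\eta$, consolidates the surviving terms into $L_{\beta^\sharp(\xi)}\eta-\iota_{\beta^\sharp(\eta)}d\xi$, which is the Koszul bracket by definition. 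The main obstacle is the generalized Leibniz rule above: unlike insertion by a vector field, $\iota_\beta$ is a second-order derivation with genuine cross-terms, and ensuring that the signs arising there — together with those from the graded commutator expansion and the antisymmetry of $\beta^{ij}$ — all conspire correctly is the only delicate bookkeeping. Once that identity is established, the reduction to $[\xi,\eta]\ks$ is automatic.
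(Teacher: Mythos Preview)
Your proof is correct and follows essentially the same route as the paper's: expand the nested graded commutator, evaluate at $1$ to obtain $L_\beta(\xi\wedge\eta)+\xi\wedge L_\beta\eta-\eta\wedge L_\beta\xi$, then unpack $L_\beta=\iota_\beta d-d\iota_\beta$ using a Leibniz-type identity for $\iota_\beta$ on wedge products. The paper states that identity in the compact form $\iota_\beta(\omega\wedge\alpha)=(\iota_\beta\omega)\wedge\alpha-\iota_{\beta^\sharp(\alpha)}\omega$ for a two-form $\omega$ and a one-form $\alpha$, whereas you derive the same relation by starting from a decomposable bi-vector $X\wedge Y$ and summing; this extra derivation is the only real difference, and both arrive at $d(\beta(\xi,\eta))-\iota_{\beta^\sharp(\eta)}d\xi+\iota_{\beta^\sharp(\xi)}d\eta=[\xi,\eta]\ks$.
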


\begin{proof}
Using the definition of the graded commutator, by a straight forward computation we get
\eq{
\left[\,[L_\beta, \mu_\xi ],\mu_\eta \right](1) = L_\beta (\xi \wedge \eta) - \eta \wedge L_\beta \xi + \xi \wedge L_\beta \eta \;.
}
Now we use the following relation for two-forms $\omega$ and one-forms $\alpha$:
\eq{
\iota_\beta(\omega \wedge \alpha) =\, (\iota_\beta \omega) \wedge \alpha - \iota_{\beta^\sharp(\alpha)} \omega \;.
}
With the help of this relation we get to the result:
\eq{
\left[\,[L_\beta, \xi ],\eta \right](1) =\, d\left(\beta(\xi,\eta)\right) - \iota_{\beta^\sharp(\eta)}d\xi + \iota_{\beta^\sharp(\xi)}d\eta \;,
}
which is just a rewritten form of the Koszul bracket.
\end{proof}

In the following, we are going to generalize the form of the bracket given in the preceding proposition. Let $M$ now be a manifold equipped with a 3-vectorfield $\gamma \in \Gamma(\wedge^3 TM)$. The latter defines a map $\gamma^\sharp$ relating two-forms with vectorfields:
\eq{
\gamma^\sharp\, : \quad &\Gamma(\wedge^2 T^*M) \rightarrow TM \\
&\gamma^\sharp(\xi \wedge \eta) =\, \iota_{\eta \wedge \xi} \gamma \;.
}
In addition we have the degree $(-2)$ operator $L_\gamma$ which is the defining operator for the following 3-Koszul bracket of one-forms $\eta,\xi,\omega$:
\eq{ \label{3-Koszulbracket}
\left[\eta,\xi,\omega\right] := \left[\,[\,[L_\gamma, \mu_\eta],\mu_\xi],\mu_\omega \right](1) \;.
}
Evaluating the last expression in the same way as done in the previous proposition for the bi-vector case, we get the following expression for the 3-Koszul bracket:
\eq{ \label{3-Koszulbracket2}
\left[\eta, \xi, \omega \right] = \, \iota_{\gamma^\sharp(\eta \wedge \xi)} d\omega +\iota_{\gamma^\sharp(\xi \wedge \omega)} d\eta + \iota_{\gamma^\sharp(\omega \wedge \eta)}d\xi + d\left(\gamma(\eta, \xi, \omega)\right)\;.
}
This means in local coordinates $\eta = \eta_i\,dx^i, \xi = \xi_i\,dx^i, \omega = \omega_i \,dx^i$:
\eq{
[\eta, \xi, \omega] =&\, \bigl( \gamma^{mnk}\omega_n (\xi_m \partial_k \eta_i - \eta_m \partial_k \xi_i) + \gamma^{mnk}\eta_m \xi_n \partial_k \omega_i \\
&\, + \eta_m \xi_n \omega_k Q_i{}^{mnk} \bigr)\, dx^i \;,
}
where we introduced the field $Q_i{}^{mnk} = \partial_i \gamma^{mnk}$ which appears similarly to the $Q$-flux in case of the standard Koszul-bracket.

Having defined a generalization of the Koszul-bracket to the case of three entries (similar constructions can be done with an arbitrary number of entries, if the manifold has the corresponding multi-vectorfield to define the appropriate Lie derivative), the next step is to ask if there is a structure on the cotangent bundle of the manifold which is similar to that of the Lie algebroid $(T^*M, [\cdot,\cdot]\ks, \beta^\sharp)$ given in detail in section \ref{ch-math}. This is known to be true and we will review the corresponding notion in the next subsection and extend it by proposing expressions for torsion and curvature operators.

\subsubsection{Differential geometry of Filippov 3-algebroids}
In section \ref{ch-math} we used the notion of Lie algebroids to generalize the structures of the tangent bundle (like vectorfields and Lie bracket) to arbitrary vector bundles. It turns out that we can do a similar generalization to vector bundles having a three-bracket instead of a two-bracket (and similarly for the general case of an $n$-bracket). The construction is very close to that of a Lie algebroid and was done by Filippov \cite{0594.17002}: With the help of an anchor map, which is a vector bundle homomorphism relating three- and two brackets, the action of a general section on functions can be defined. More precisely\footnote{For a review, see also the first part of \cite{0993.53025}.}:

\begin{definition} \label{Filippov3algebroid}
Let $F\rightarrow M$ be a vector bundle over a manifold $M$ together with an n-linear, skew symmetric bracket $[\cdot,\dots,\cdot] : \Gamma(F)\times \dots \times \Gamma(F) \rightarrow \Gamma(F)$
and a bundle morphism $a: \Gamma(\wedge^{n-1} F) \rightarrow TM$ (called \emph{anchor}). The triple $(F,[\cdot,\dots,\cdot],a)$ is called a \emph{Filippov n-algebroid} if the following homomorphism property, Leibniz rule and \emph{fundamental identity} hold:
\begin{itemize}
\item For sections $u_1, \dots, u_{n-1},v_1,\dots v_{n-1} \in \Gamma(F)$:
\begin{gather}\label{3homo}
\left[a(u_1 \wedge \dots \wedge u_{n-1}),a(v_1,\dots,v_{n-1})\right]_L \nonumber \\
 = \sum_{i=1} ^{n-1} a \left(v_1 \wedge \dots \wedge [u_1,\dots,u_{n-1},v_i]\wedge \dots \wedge v_{n-1} \right) \;.
\end{gather}
\item For sections $v_1,\dots,v_{n-1}, u \in \Gamma(F)$ and functions $f \in {\cal C}^{\infty}(M)$:
\begin{gather} \label{3Leibniz}
[v_1,\dots,v_{n-1}, fu] \nonumber \\
= f\,[v_1,\dots,v_{n-1},u] + a(v_1 \wedge \dots \wedge v_{n-1})(f) \, u \;.
\end{gather}
\item For sections $u_1,\dots, u_{n-1}, v_1, \dots v_n \in \Gamma(F)$:
\begin{gather}\label{fundamentalidentity}
\left[u_1,\dots,u_{n-1},[v_1,\dots,v_n]\,\right] \nonumber \\
= \sum_{i=1}^n \left[v_1,\dots,v_{i-1},[u_1,\dots,u_{n-1},v_i],v_{i+1},\dots,v_n\right] \;.
\end{gather}
\end{itemize}
\end{definition}

We easily see, that in the case of $n=2$, the structure reduces to that of an ordinary Lie algebroid as introduced in section \ref{ch-math}. In the following we are mainly focused on the case of a Filippov 3-algebroid (which is often also called Lie 3-algebroid). As was reviewed in detail in section \ref{ch-math}, all the standard differential geometry constructions like Lie- and covariant derivatives, torsion and curvature can be generalized to the Lie algebroid case. In the following we will give the fist steps of the same program for Filippov 3-algebroids.

Let $(F,[\cdot,\cdot,\cdot],a)$ be a Filippov 3-algebroid over a manifold $M$. The Lie derivative with respect to a wedge product of sections $s_1 \wedge s_2 \in \Gamma(\wedge^2 F)$ is defined by its action on functions and sections by
\eq{\label{Lie-3-der}
{\cal L}_{s_1 \wedge s_2}(f) := \, a(s_1 \wedge s_2)(f)\;, \quad {\cal L}_{s_1 \wedge s_2} \,s_3 :=\, \left[s_1,s_2,s_3\right] \;.
}
By using a generalization of the Lie algebroid differential \eqref{algebroiddiff} of section \ref{ch-math} it is also possible to get the action of the Lie derivative on dual sections. We call the Lie derivative of \eqref{Lie-3-der} the Lie 3-derivative. It has the following property for sections $s, s_i, e_i \in \Gamma(F)$, which has the same pattern as for the standard Lie derivative with respect to vector fields:
\eq{
[{\cal L}_{s_1 \wedge s_2}, {\cal L}_{e_1 \wedge e_2}]s = \, {\cal L}_{[s_1,s_2,e_1]\wedge e_2} s + {\cal L}_{e_1 \wedge [s_1,s_2,e_2]}s \;.
}
As a next step let us introduce the notion of covariant derivative in the setting of Filippov 3-algebroids. It should be an operator acting into the direction of a wedge product of sections:
\eq{ \label{3-covder}
\nabla : \quad \Gamma(F\wedge F)\times \Gamma(F) \rightarrow \Gamma(F)\;,
}
and in analogy to the standard or Lie algebroid case it should have the following properties for sections $s_i, e_i$ and functions $f\in {\cal C}^\infty(M)$:
\begin{itemize}
\item $\nabla_{s_1 \wedge s_2 + e_1 \wedge e_2} = \nabla_{s_1 \wedge s_2} + \nabla_{e_1 \wedge e_2}$.
\item $\nabla_{s_1 \wedge s_2}(e_1 + e_2) = \nabla_{s_1 \wedge s_2} e_1 + \nabla_{s_1 \wedge s_2} e_2$.
\item $\nabla_{f\, s_1 \wedge s_2} =\, f\,\nabla_{s_1 \wedge s_2}  $.
\item $\nabla_{s_1 \wedge s_2} (f\,s_3) =\, a(s_1 \wedge s_2)(f)\, s_3 + f\,\nabla_{s_1 \wedge s_2} s_3 $.
\end{itemize}
Having this definition, it is also possible to define torsion and curvature operators. As mentioned in section \ref{ch-math}, one has to check the tensoriality of the operators, meaning ${\cal C}^\infty(M)$-linearity in every argument. In the case of Filippov 3-algebroids and the covariant derivative \eqref{3-covder} one has to find the appropriate combinations of derivatives to get the tensoriality properties by using the Leibniz rule and anchor property of definition \ref{Filippov3algebroid}. First, let us define torsion $T \in \Gamma(\wedge^3 F^* \otimes F)$. For sections $s_1, s_2, s_3 \in \Gamma(F)$ we set:
\eq{\label{3torsion}
T(s_1, s_2, s_3) :=\, \nabla_{s_1 \wedge s_2 } s_3 + \nabla_{s_2 \wedge s_3} s_1 + \nabla_{s_3 \wedge s_1} s_2 - [s_1,s_2,s_3] \;.
}
The last expression clearly is tensorial in each of its entries because of the properties of the covariant derivative and the Leibniz rule \eqref{3Leibniz}. Furthermore it is totally antisymmetric in its arguments and therefore it has analogous properties to the standard torsion.

For the curvature operator, we have to find the right combinations of covariant derivatives to cancel terms coming from the commutator of two such derivatives. As we will show, the following operator $R \in \Gamma(\otimes^4 F^* \otimes F^* \otimes F)$ has the right tensoriality properties for sections $s_i \in \Gamma(F)$:
\eq{\label{3curvature}
R(s_1,s_2,s_3,s_4) =\, &\left[\nabla_{s_1 \wedge s_2}, \nabla_{s_3 \wedge s_4} \right] - \nabla_{[s_1,s_2,s_3]\wedge s_4} - \nabla_{s_3 \wedge [s_1,s_2,s_4]} \\
-&\left[\nabla_{s_1 \wedge s_4},\nabla_{s_2 \wedge s_3}\right] + \nabla_{[s_1,s_4,s_2]\wedge s_3} + \nabla_{s_2 \wedge [s_1,s_4,s_3]} \\
+&\left[\nabla_{s_2 \wedge s_4},\nabla_{s_3 \wedge s_1} \right] - \nabla_{[s_2,s_4,s_3]\wedge s_1} - \nabla_{s_3 \wedge [s_2,s_4,s_1]} \;.
}
To be a proper curvature operator, the right hand side of this expression should be a section in the endomorphism bundle of $F$ or in other words it should map sections of $F$ into sections of $F$. For this it suffices to prove ${\cal C}^\infty(M)$-linearity, i.e. for a section $e \in \Gamma(F)$ we should have $R(s_1,s_2,s_3,s_4)(f\,e) = f\,R(s_1,s_2,s_3,s_4)(e)$. However, this is easy to show by noting for sections $s_i, e$ and functions $f$:
\eq{
\left[\nabla_{s_1 \wedge s_2}, \nabla_{s_3 \wedge s_4} \right](f\,e) =\, \left[a(s_1 \wedge s_2), a(s_3 \wedge s_4) \right](f)\,e + f\,\left[\nabla_{s_1 \wedge s_2}, \nabla_{s_3 \wedge s_4} \right] \;.
}
The first summand on the right can be rewritten by using the defining property \eqref{3Leibniz}. We get the following form :
\eq{
\left[a(s_1 \wedge s_2), a(s_3 \wedge s_4) \right](f)\,e = a\left( [s_1,s_2,s_3]\wedge s_4 + s_3 \wedge [s_1,s_2,s_4] \right)(f) \,e \;.
}
But this shows that every line in \eqref{3curvature} separately is an endomorphism and we proved the first property which one demands on a curvature operator.

In addition to that, a curvature operator should be ${\cal C}^\infty(M)$-linear in each of its entries, e.g. for the first entry one has to demand $ R(fs_1,s_2,s_3,s_4) = f\,R(s_1,s_2,s_3,s_4)$. The proof is a lengthy but straightforward calculation using the fourth property of the covariant derivative introduced above and the Leibniz rule of the 3-bracket \eqref{3Leibniz}.

To sum up, we see that the notion of Filippov 3-algebroid allows to generalize the most important constructions of differential geometry to the case of a vector bundle with 3-bracket. The next step would be the introduction of a metric on $F$, metric compatible connections and the notion of Levi-Civita connection which finally would make it possible to formulate the analogue of an Einstein-Hilbert action for Filippov 3-algebroids. Similar to earlier sections, it would be interesting to generalize also gauge transformations, e.g. of the NS-NS $B$-field and formulate corresponding string actions.

\subsubsection{Future directions and concluding remarks}
As sketched in the last subsection, the notion of Filippov 3-algebroid allows to set up the main constructions of differential geometry and is therefore well-suited to generalize gravity and (low energy) bosonic string theory to the case of manifolds with 3-brackets. In addition, the description of M-branes suggests that this structure could be important in a formulation of M-theory suitable for quantization questions.

However, to realize a concrete calculus, one has to choose a definite 3-algebroid or in other words one has to choose a three-bracket. The first guess would be the generalization of the Koszul bracket to three entries given in \eqref{3-Koszulbracket}. But it is possible to show that a manifold equipped with a three-vector $\gamma \in \Gamma(\wedge^3 TM)$ and the latter generalization of the Koszul bracket is hard to find because the defining properties \ref{Filippov3algebroid} impose very strong algebraic and differential conditions on $\gamma$. As an example \cite{1026.17006}, demanding $\gamma$ only to be a Nambu-Poisson three-tensor is not sufficient. Thus it may be possible that the choice of the Koszul 3-bracket (or even a twisted version of it) does not lead to non-trivial geometries.

Clearly there are at least two possible ways to avoid the latter problem. The first would be to choose another three-bracket than \eqref{3-Koszulbracket}. The notion of a Filippov 3-algebroid is completely general and thus the total space of the bundle on which the 3-bracket (and the differential calculus) is defined needs not to be the cotangent bundle or the tangent bundle of the manifold. The correct choice of the total space and thus the three bracket is still ongoing work. Secondly we remark that the choice of the fundamental identity \eqref{fundamentalidentity} is connected to the notion of three-associativity if one is interested in quantization questions \cite{Curtright:2002fd}. The correct definition of the generalization of associativity to three-structures is to our knowledge an open problem. Choosing another type of fundamental identity could allow for non-trivial geometries even by using the simple generalization \eqref{3-Koszulbracket} of the Koszul bracket.

Once having found the correct Filippov 3-algebroid and the corresponding differential calculus (which then directly follows as discussed in the previous subsection), the next question would be about the quantization of such a structure. Setting up star-products, analyzing the corresponding associativity requirements or even trying to find a covariant star-product calculus using the differential geo\-me\-try of the previous subsection are only the first ideas in this interesting direction.

\paragraph{Acknowledgments}
This article is based on the author's PhD thesis, written at the Max Planck Institute for Physics. The author wants to thank his supervisors Ralph Blumenhagen and Dieter L\"ust for their invaluable support and continuous encouragement. Furthermore he wants to thank Erik Plauschinn, Felix Rennecke and Christian Schmid for collaboration and David Andriot, Peter Patalong for discussion. Finally the author wants to thank his parents Anneliese and Friedrich Deser. 

\newpage

\appendix

\section{The Rogers dilogarithm}
In this appendix, we collect some of the definitions and important properties of the Rogers dilogarithm function. This beautiful object has lots of applications in mathematics and string theory, reaching from scattering amplitudes to invariants of 3-geometries.
For a more detailed analysis of the mathematical aspects of the Rogers dilogarithm function we would like to refer the reader to  \cite{0744.33011, 0894.11052}, whereas its generalization is described in detail in \cite{0902.57013, 1053.57010}.

%%%%%%%%%%%%%%%%%%%%%%%%%%%%%%%%%%%%%%%%%%%%%%%
%%%%%%%%%%%%%%%%%%%%%%%%%%%%%%%%%%%%%%%%%%%%%%%

\subsection{Definition and fundamental properties}

The Rogers dilogarithm function $L(x)$ for real arguments $x$ is defined in the following way:
\begin{equation}
\label{rogers}
L(x) :=\textrm{Li}_2(x) + \frac{1}{2}\, \log(x)\log(1-x) \;,  \hspace{40pt} 0<x<1 \;,
\end{equation}
where $\textrm{Li}_2(x)$ denotes the Euler dilogarithm function, given by:
\begin{equation}
\label{euler}
\textrm{Li}_2(x) := \sum_{n=1}^{\infty} \:\frac{x^n}{n^2} = -\int_0 ^x \frac{\log(1-y)}{y} \;,
\hspace{40pt} 0\leq x \leq 1 \;.
\end{equation}
With the help of \eqref{euler}, the integral representation of the Rogers dilogarithm can be deduced as:
\begin{equation}
\label{integraldarstellung}
L(x)= -\frac{1}{2} \int_0 ^x \left( \frac{\log(1-y)}{y} + \frac{\log(y)}{1-y}\right) dy \;.
\end{equation}
Furthermore, from these definitions one can derive two functional relations, which in turn uniquely characterize the Rogers dilogarithm function:
\eq{
\label{app_Lrel_eral}
&L(x) + L(1-x) = L(1) \;, \\
& L(x) - L(y) + L\bigl({\textstyle \frac{y}{x}}\bigr) - L \bigl( \textstyle{\frac{1-x^{-1}}{1-y^{-1}}} \bigr)
+ L\bigl( \textstyle{\frac{1-x}{1-y}} \bigr) = 0 \;.
}

Employing the integral representation \eqref{integraldarstellung}, one can analytically continue $L(x)$ to the domain $\mathbb{C} \setminus \{0,1\}$. However, the resulting function $L(z)$ is not single valued any more and one should use the universal cover of $\mathbb{C} \setminus \{0,1\}$ as the domain of definition.
For the complex Rogers dilogarithm  the relation
\eq{
\label{complfund}
  L(z)+L(1-z)=L(1)
}
still holds, but the five-term relation in \eqref{app_Lrel_eral} receives logarithmic corrections. The systematics of those corrections can be described by the following generalization which is due to Neumann \cite{0902.57013}:
\begin{equation}
\label{NeumannRogers}
R(z\,;p,q) := L(z) -\frac{\pi^2}{6} + \frac{\pi i }{2}\Bigl(p \log(z-1) + q \log
z \Bigr)\; .
\end{equation}
Here, $p,q$ are integer numbers and the constant is just a convenient normalization. In the main text, we will not need the complicated properties of this function. However, in the investigation of Virasoro-Shapiro amplitudes in the $\textrm{CFT}_H$-framework, as performed in \cite{Blumenhagen:2011ph}, its properties play an important role.

%For \eqref{NeumannRogers} it is then possible to establish again a five-term relation
%\eq{
%\label{5terma}
%&R(x\,;p_0,q_0)-R(y\,;p_1,q_1)+R\bigl(\textstyle{\frac{y}{x}}\,;p_2,q_2\bigr) \\
%& \hspace{100pt}
% -R\bigl(\textstyle{\frac{1-x^{-1}}{1-y^{-1}}}\,;p_3,q_3\bigr) + R\bigl(\textstyle{\frac{1-x}{1-y}}\,;p_4,q_4\bigr) = 0 \;,
%}
%where the integers $p_i,q_i$ have to obey some restrictions called \emph{flattening conditions}.

\section{Tachyon correlation functions}
In this appendix we review standard results on scattering theory of tachyon vertex operators which are used in the main text. For a detailed introduction into basic concepts of quantum field theory, conformal field theory and further information on tachyon vertex operators we refer the reader to the standard literature (e.g. \cite{Polchinski:1998rq, DiFrancesco:1997nk} and references therein). We first consider the case of the free theory and give the results which are important for detection of open string non-commutativity in section \ref{ch-quant}.  In a second section we focus on a special interacting case which was used in section \ref{sec-cft}.

\subsection{The free case}
Consider the theory of $n$ free bosonic fields $X^i(z,\bar z)$, interpreted as the embedding of a 2-dimensional world sheet into flat target space. The action is given by:
\eq{\label{freeactionA1}
{\cal S} = \, \frac{1}{2\pi \alpha'}\int_\Sigma \, d^2 z\, \delta_{ij}\, \partial X^i \bar \partial X^j \;.
}
If we specialize to the sphere $\Sigma = \mathbb{P}^1$, the propagator of the theory is given by:
\eq{\label{freepropA1}
\langle X^i(z_1,\bar z_1)X^j(z_2,\bar z_2) \rangle =\, -\frac{\alpha'}{2} \delta^{ij} \ln |z_1 - z_2|^2 \;.
}
The energy-momentum tensor can be constructed by using the \mbox{(anti-)holomorphic} currents $J^i(z) = i\partial X^i (z)$ and $\bar J^i (\bar z) = i \bar \partial X^i(\bar z)$. We denote the normal ordering of field operators by $: \dots :$. The holomorphic component of the energy momentum tensor is:
\eq{\label{freeEMtensor}
T(z) =\, \frac{1}{\alpha'} \delta_{ij}:\!J^i J^j\!:(z) \;,
}
and similar for the anti-holomorphic component. Finally, by using the contraction of a momentum vector $p_i$ with the coordinates $X^i$ which we denote by $p \cdot X$,  the definition a \emph{tachyon vertex operator} carrying momentum $p$ is as follows:
\eq{\label{tachyonvertexA1}
V_p(z,\bar z) \stackrel{\textrm{def}}{=}\, :\!e^{i\,p\cdot X(z,\bar z)}\!: \;.
}
To begin with, we want to show that tachyon vertex operators are conformal fields and determine their conformal dimension by calculating the operator product expansion with the energy momentum tensor \eqref{freeEMtensor}. Only the singular parts are important and we write $\simeq$ if an equation holds up to regular terms.
\eq{
T(z_1)V_p(z_2) =&\, \frac{1}{\alpha'} \sum_{n=0}^\infty \frac{1}{n!} :\! \partial X^i(z_1)\partial X^j(z_1) \!: \delta_{ij}:\left[i\,p\cdot X(z_2,\bar z_2)\right]^n\!: \\
\simeq &\, \frac{\alpha'}{4} \sum_{n=2}^\infty \frac{1}{(n-2)!} \frac{1}{(z_1 - z_2)^2} p^2 :\!\left[i\,p\cdot X(z_2,\bar z_2) \right]^{n-2}\!: \\
&\,+ \sum_{n=1}^\infty \frac{1}{(n-1)!}\frac{1}{z_1 - z_2} i\,p_i :\!\partial X^i(z_2)\left[i\,p\cdot X(z_2,\bar z_2)\right]^{n-1} \!: \\
\simeq &\, \frac{\alpha'}{4} p^2 \frac{V_p(z_2,\bar z_2)}{(z_1 - z_2)^2} + \frac{\partial_{z_2} V_p(z_2, \bar z_2)}{z_1 - z_2} \;.
}
In the second step, the first summand is the result of taking two Wick-contractions with $:\left[i\,p\cdot X(z_2,\bar z_2)\right]^n\!:$ and the second summand results by taking one contraction. Thus the vertex operator is primary with conformal dimension $h = \alpha '\,p^2/4$. Si\-mi\-lar results hold for the OPE with the anti-holomophic component of the energy momentum tensor.

Let us now compute the OPE of two vertex operators with momenta $p_1$ and $p_2$. To simplify the calculation, we first note useful identities for the calculation. First, let $A(z_1), B(z_2)$ be field operators depending on the world sheet coordinates. Then we have the following result for the contraction with exponentials:
\eq{\label{contractionexp}
\contraction{}{A(z_1)}{}{:\! e^{B(z_2)}\!:}
\contraction{A(z_1)\,:\!e^{B(z_2)}\!: \;=\;}{A(z_1)}{}{B(z_2)}
A(z_1)\,:\!e^{B(z_2)}\!: \;=\; A(z_1)B(z_2)\,:\!e^{B(z_2)}\!: \;,
}
which is proven by expanding the second exponential. Using this, we can compute the OPE between two normal ordered exponentials:
\eq{
:\! e^{A(z_1)}\!:\;:\!e^{B(z_2)}\!: \;=&\, \sum_{n=0}^\infty \frac{1}{n!} :\!A(z_1)^n\!:\,:e^{B(z_2)}\!: \\
\simeq &\,\sum_{n=0}^\infty \frac{1}{n!} \sum_{k =0} ^n
\begin{pmatrix}
n \\
k
\end{pmatrix}
\left[
\contraction{}{A(z)}{}{B(z)}
A(z_1)B(z_2)
\right]^k\,:\!A(z_2)^{n-k} e^{B(z_2)} \!: \\
\simeq &\, e^{
\contraction{}{A(z_1)}{}{B(z_2)}
A(z_1) B(z_2)
} \,:\!e^{A(z_2)} e^{B(z_2)}\!: \;.
}
This result is general and independent of the specific propagator of the theory. Thus, it can be applied also in the special limits which were encountered in section \ref{ch-quant}. Specializing to the propagator \eqref{freepropA1}, this leads to the following OPE of two vertex operators of the type \eqref{tachyonvertexA1} with momenta $p_1$ and $p_2$:
\eq{
V_{p_1} (z_1) V_{p_2}(z_2) \simeq |z_1 - z_2|^{-\alpha' \, p_1 \cdot p_2} \, V_{p_1 + p_2} \;.
}
With this expansion, the $n$-point tachyon correlator on the sphere can now be readily calculated by iterated application of the previous formulas:
\eq{\label{ntachyoncorA1}
\bigl\langle \prod_{i=1} ^n \,:\!e^{i\,p_i \cdot X(z_i,\bar z_i)}\!: \bigr\rangle =\, \prod_{1\leq i<j \leq n} |z_i - z_j|^{\alpha' p_i \cdot p_j} \, \delta (\sum_{i=1}^n p_i) \;.
}
The delta function ensures momentum conservation: The action \eqref{freeactionA1} is invariant under the shift symmetry $X \mapsto X +a$ and therefore correlation functions should also be invariant under this symmetry. However, the correlator \eqref{ntachyoncorA1} obtains a phase $ e^{i\,\left(\sum_{i=1}^n p_i\right)\cdot a}$, which has to vanish. But the latter is equivalent to the momentum conservation condition because the field $a$ can take arbitrary values.

\subsection{Interacting case}
Let us now treat one special case of an interacting theory, which we used in the main text. We consider an action ${\cal S} = {\cal S}_0 + {\cal S}_1$, where we treat ${\cal S}_1$ as a perturbation:
\eq{
{\cal S}_0 = \, \frac{1}{2\pi \alpha'}\int_\Sigma \, d^2 z\, \delta_{ij}\, \partial X^i \bar \partial X^j\;, \quad {\cal S}_1 =\, \frac{1}{2\pi \alpha'} \frac{H_{abc}}{3} \int_\Sigma \; d^2z\, X^a\partial X^b \bar \partial X^c \;.
}
As detailed in the main text (see for example \eqref{pertexp}), if we are interested in correlation functions up to linear order in the $H$-flux, we only have to consider the following two contributions:
\eq{\label{perturbationA1}
\langle {\cal O}_1 \cdots {\cal O}_n \rangle = \langle {\cal O}_1 \cdots {\cal O}_n \rangle_0 - \langle {\cal O}_1 \cdots {\cal O}_n {\cal S}_1 \rangle_0 \;,
}
where $\langle \cdots \rangle_0$ denotes the correlator with respect to the free theory. Let us now consider tachyon vertex operators in the $\textrm{CFT}_H$-context, introduced in section \ref{subsec-tachyonvertex}:
\eq{
{\cal V}_i \stackrel{\textrm{def}}{=} {\cal V}_{p_i}(z_i,\bar z_i)  \stackrel{\textrm{def}}{=}\, :\!e^{i\, p_i \cdot {\cal X}(z_i,\bar z_i)}\!: \;,
}
where the ${\cal X}$-fields were defined by \eqref{rightcoords}. We are interested in the correlation function of $N$ such operators. This  will be denoted by $\langle {\cal V}_1 \cdots {\cal V}_N \rangle^\mp$, where the superscript ``$\mp$'' is to indicate whether we are in the $H$-flux background or its complete T-dual $R$-flux background, as was explained in the main text. According to \eqref{perturbationA1}, we have to consider two contributions. Let us begin with the second contribution, which is already linear in the flux and therefore we can replace the fields ${\cal X}(z,\bar z)$ by their free analogues $X(z,\bar z)$ and as a consequence we only have to consider the free vertex operators $:\!e^{i\,p_i \cdot X(z_i,\bar z_i)}\!:$, which we want to denote by $V_i$ in the following. Using contraction techniques of the previous section, we get:
\eq{ \label{firstpartA1}
 &\bigl\langle {\cal V}_1 \ldots {\cal V}_N \:(-\mathcal S_1) \bigr\rangle^\mp_0
  = \bigl\langle {V}_1 \ldots {V}_N \bigr\rangle^\mp_0  \times \\
  &\hspace{30pt} \left( \sum_{1\leq i <  j <  k\leq N}
 (-i)\:p_{i, a} \,p_{j,b}\, p_{k,c} \, \bigl\langle X^a(z_i,\ov z_i) X^b(z_j,\ov z_j) X^c(z_k,\ov z_k)
  \bigr\rangle^\mp \right)\;,
}
where we used contractions of the type \eqref{contractionexp}, which give the same result as calculating the perturbed three-$X$ correlator linear in the $H$-flux, to get the second factor. Now, let us turn to the first part in \eqref{perturbationA1}, which is the free correlator of the vertex operators ${\cal V}_i$. We recall, that the fields ${\cal X}$ are given by
\eq{
{\cal X}^i(z,\bar z) =\, X^i(z,\bar z) + \frac{1}{2}H^i{}_{jk} X_L^j(z)X_R^k(\bar z) \;,
}
where the free field was decomposed into $X^i(z,\bar z) = X^i_L(z) + X^i_R(\bar z)$. Expanding now the exponentials ${\cal V}_i$ up to linear order in the $H$-flux and collecting only terms linear in the flux, we get:
\eq{
\hspace{-8pt}\langle {\cal V}_1 \cdots {\cal V}_N \rangle^\mp_0 =& \, \langle V_1 \cdots V_N\rangle^\mp_0 \\
 &\,+\frac{1}{2}ip_{1,i_1} H^{i_1}{}_{j_1 k_1} \langle:\!e^{i\,p_1 \cdot X(z_1,\bar z_1)}  X_L^{j_1}(z_1) X_R^{k_1}(\bar z_1)\!: \times \\
&\,\times :\! e^{i\,p_2 \cdot X(z_2,\bar z_2)}\!: \cdots :\! e^{i\,p_N \cdot X(z_N,\bar z_N)}\!:\rangle_0^\mp \, + \; \dots \\
&\, + \frac{1}{2} i\, p_{N,i_N} H^{i_N}{}_{j_N k_N} \langle :\! e^{i\,p_1 \cdot X(z_1,\bar z_1)}\!:\times \dots \\
&\,\times \,:\! e^{i\,p_{N-1} \cdot X(z_{N-1},\bar z_{N-1})}\!::\! e^{i\,p_N \cdot X(z_N,\bar z_N)} X_L^{j_N}(z_N)X_R^{k_N}(\bar z_N)\!:\rangle_0 ^\mp \;.
}
Combining these terms with the result \eqref{firstpartA1}, we observe, that the same terms appear as if one expands the perturbed correlator $\langle {\cal X}^a(z_1,\bar z_1) {\cal X}^b(z_2,\bar z_2){\cal X}^c(z_3,\bar z_3)\rangle^\mp$ up to linear order in the $H$-flux and takes the product with the appropriate momenta. As a consequence, we can write the $N$-point tachyon correlator as
\eq{\label{ntachyonresultA1}
  \bigl\langle {\cal V}_1 \ldots {\cal V}_N \bigr\rangle^\mp &=
  \bigl\langle {V}_1 \ldots {V}_N \bigr\rangle^\mp_0 \:\biggl[ 1 +
  \!\!\!\! \sum_{1\leq i <  j <  k\leq N}  \!\!\!\!
 (-i)\:p_{i, a} \,p_{j,b}\, p_{k,c} \times \\[-0.2cm]
 & \hspace{143pt}
 \bigl\langle {\cal X}^a(z_i,\ov z_i)\, {\cal X}^b(z_j,\ov z_j) \,{\cal X}^c(z_k,\ov z_k)
  \bigr\rangle^\mp \biggr] \\[0.1cm]
 &=
 \bigl\langle {V}_1 \ldots {V}_N \bigr\rangle^\mp_0 \:\biggl[ 1 -i\hspace{0.5pt}
   \theta^{abc}\!\!\!\! \sum_{1\leq i<j<k\leq N}  \!\!\!\!
    p_{i ,a} \,p_{j,b}\, p_{k,c}
   \Bigl[ {\cal L}\bigl({\textstyle \frac{z_{ij}}{ z_{ik}}}\bigr)
  -  {\cal L}\bigl({\textstyle \frac{\ov z_{ij}}{ \ov z_{ik}}}\bigr)
  \Bigr] \biggr] \;,
}
where in the last step we inserted the basic three-point correlator \eqref{3coordinatecor} from the main text.

The same calculation can be done for the case of the $R$-flux background with the only difference that there is the sum of the Rogers dilogarithm terms instead of the difference in \eqref{ntachyonresultA1}. We only kept linear terms in the $H$-flux (or equivalently in the parameter $\theta$) and the results suggest that they could be seen as the beginning of the Taylor series expansion of the exponential function. We therefore rewrite the last expression as follows, the $[\dots]_\theta$ indicating that we only mean up to linear order in the $\theta$-parameter:
\eq{\label{NtachyonresultA12}
\bigl\langle {\cal V}_1 \ldots {\cal V}_N \bigr\rangle^\mp
  &=
  \bigl\langle {V}_1 \ldots {V}_N \bigr\rangle^\mp_0 \: \exp \biggl[  \:
   -i \hspace{0.5pt} \theta^{abc} \!\!\!\! \sum_{1\leq i<j<k\leq N} \!\!\!\!  p_{i ,a} \,p_{j,b}\, p_{k,c}
   \Bigl[ {\cal L}\bigl({\textstyle \frac{z_{ij}}{ z_{ik}}}\bigr)
  \mp  {\cal L}\bigl({\textstyle \frac{\ov z_{ij}}{ \ov z_{ik}}}\bigr)
  \Bigr] \biggr]_{\theta} \;.
}
These results will be used in the main text to investigate the possibility of a three- or even $N$-product structure for functions on the target space in the case of the $R$-flux. Note that for this task, the exponential structure of the tachyon vertex operators is particular useful because one can evaluate easily star products on exponential functions. Although being not physical, tachyons are therefore ideal probes for the algebra of functions on the target space.

\section{Notation conventions}
\begin{longtable}{ll}
\label{notations}
$\beta^\sharp$ & anchor map \\
&$\beta^\sharp : T^*M \rightarrow TM\;, \quad \beta^\sharp(\xi)(\eta) = \beta(\xi,\eta)$ \\
 & \\
$\gamma^\sharp$ & 3-anchor map \\
& $\gamma^\sharp : \wedge^2 T^*M \rightarrow TM, \quad \gamma^\sharp(\xi \wedge \eta)(\zeta) = \gamma(\xi, \eta, \zeta )$ \\
& \\
$[\cdot,\cdot]_+$ & anti-commutator of operators \\
 & \\
$[\cdot,\cdot]$ & commutator of operators \\
& \\
$\langle \cdot,\cdot \rangle_\pm$ & bilinear forms on $TM \oplus T^*M$ \\
& $\langle X+\xi,Y+\eta\rangle_\pm = \xi(Y)\pm \eta(X)$ \\
& \\
$[\cdot,\cdot,\cdot]$ & 3-Koszul bracket, e.g. \eqref{3-Koszulbracket} \\
 & \\
$\iota_X ,\, \iota_{\xi}$ & insertion map, \hspace{3pt} $\iota_{\partial_i}\, dx^j = \delta^j _i , \, \iota_{dx^i} \partial_j = \delta^i _j$ \\
 & \\
$\textrm{Li}_2(x)$ & Euler dilogarithm function \\
 &\\
$d$ & exterior differential \\
 & \\
$d_{\beta}$ & exterior differential on vector fields \\
 & \\
$\mathfrak{J}(\cdot,\cdot,\cdot)$ & Jacobiator \\
 & \\
$[\cdot,\cdot]\ks$ & Koszul-Schouten bracket \\
 &$[\xi,\eta]\ks = \left(\xi_m D^m \eta_n - \eta_m D^m \xi_n - \xi_m \eta_k Q_n{}^{mk}\right)$ \\
 & \\
$[\cdot,\cdot]\ks ^H$ & $H$-twisted Koszul-Schouten bracket \\
& $[\xi,\eta]\ks ^H = [\xi,\eta]\ks +\iota_{\beta^\sharp(\eta)}\iota_{\beta^\sharp(\xi)} H$ \\
& \\
$\klie$ & Koszul-(quasi) Lie derivative \\
 & \\
$[\cdot,\cdot]_L$ & Lie bracket \\
 &$[X,Y]_L = \left(X^m \partial_m Y^n - Y^m\partial_m X^n\right)\partial_n $\\
 & \\
$[\cdot,\cdot]_L ^H $ & $H$-twisted Lie bracket \\
&$[X,Y]_L ^H = [X,Y]_L -\beta^\sharp(\iota_Y \iota_X H)$ \\
& \\
$L_X$ & Lie derivative \\
 & \\
 $g$ & metric on tangent bundle \\
 & \\
 $ \fa g$ & metric on cotangent bundle \\
 & \\
$: \dots : $ & normal ordering \\
& \\
$D^a$ & partial derivative on $T^* M $, $D^a f = \beta^{am}\partial_m$ \\
 & \\
$\sigma^\mu$ & Pauli matrices, \\
& $\sigma^0 = \begin{pmatrix}
-1 & 0 \\
0 & -1
\end{pmatrix}\;, \quad \sigma^1 = \begin{pmatrix}
\;0 & \;1\; \\
\;1 & \;0\;
\end{pmatrix}\;,$ \\
& $\sigma^2 = \begin{pmatrix}
\;0 & \;-i\;\, \\
\;i & \;0\;\,
\end{pmatrix}\;, \quad \sigma^3 = \begin{pmatrix}
\;1 & \;0\; \\
\;0 & -1\;
\end{pmatrix} $ \\
& \\

$\{\cdot,\cdot \}$ & Poisson bracket \\
& \\
$\gl \cdot,\cdot \gr $ & Roytenberg bracket, e.g.  \eqref{Roytenbergbracket}\\
 & \\
$L(x)$ & Rogers dilogarithm function \\
 & \\
$[\cdot, \cdot]_{SN}$ & Schouten-Nijenhuis bracket \\

\end{longtable}

%%%%%%%%%%%%%%%%%%%%%%%%%%%%%%%%%%%%%%%%%%%%%%%%%%%%%%%%%%%%%%%%

%KOMMENTARE:
%-Am Ende auch einen Ausblick auf die generalisierte Geometrie von M-theorie, 3-Klammern usw geben: Einige Rechnungen noch einfügen, die noch nicht in Paperform vorliegen (einfach das wichtigste reinschreiben), außerdem einen großen Ausblick, da es sehr viele Zukunftsanwendungen gibt. Sogar bis hin zur Phänomenologie, d.h. nichtgeometrische Flüsse im Superpotential zur Erzeugung von De-Sitter vacua (=Lösungen der klassischen Bewegungsgleichungen) und deren mögliche Widerlegung durch die Bianchi-identitäten.
%-Ein Symbolverzeichnis, wie zum Beispiel im Thirring im Appendix einfügen (oder eventuell vorne)

\clearpage

% \backmatter
 % \include{bibliographie}
 % \markboth{}{}

 % \include{danksagung}

 % \include{lebenslauf}

\bibliography{references}

\providecommand{\href}[2]{#2}\begingroup\raggedright\begin{thebibliography}{10%
0}

\bibitem{Hawking:1969sw}
S.~Hawking and R.~Penrose, ``{The Singularities of gravitational collapse and
  cosmology},''
\href{http://dx.doi.org/10.1098/rspa.1970.0021}{{\em Proc.Roy.Soc.Lond.}
  {\bfseries A314} (1970) 529--548}.
%%CITATION = PRSLA,A314,529;%%.

\bibitem{0531.53051}
B.~O'Neill, {\em {Semi-Riemannian geometry. With applications to relativity.}}
\newblock {Pure and Applied Mathematics, 103. New York-London etc.: Academic
  Press}, 1983.

\bibitem{Barnett:1996hr}
{\bfseries Particle Data Group} Collaboration, R.~M. Barnett {\em et al.},
  ``{Review of particle physics. Particle Data Group},''
\href{http://dx.doi.org/10.1103/PhysRevD.54.1}{{\em Phys.Rev.} {\bfseries D54}
  (1996) 1--720}.
%%CITATION = PHRVA,D54,1;%%.

\bibitem{Snyder:1946qz}
H.~S. Snyder, ``{Quantized space-time},''
\href{http://dx.doi.org/10.1103/PhysRev.71.38}{{\em Phys.Rev.} {\bfseries 71}
  (1947) 38--41}.
%%CITATION = PHRVA,71,38;%%.

\bibitem{0377.53024}
F.~Bayen, M.~Flato, C.~Fr{\o}nsdal, A.~Lichnerowicz, and D.~Sternheimer,
  ``{Deformation theory and quantization. I: Deformations of symplectic
  structures},'' \href{http://dx.doi.org/10.1016/0003-4916(78)90224-5}{{\em
  Ann. Phys.} {\bfseries 111} (1978) 61--110}.

\bibitem{0377.53025}
F.~Bayen, M.~Flato, C.~Fr{\o}nsdal, A.~Lichnerowicz, and D.~Sternheimer,
  ``{Deformation theory and quantization. II: Physical applications},''
  \href{http://dx.doi.org/10.1016/0003-4916(78)90225-7}{{\em Ann. Phys.}
  {\bfseries 111} (1978) 111--151}.

\bibitem{1058.53065}
M.~Kontsevich, ``{Deformation quantization of Poisson manifolds},''
  \href{http://dx.doi.org/10.1023/B:MATH.0000027508.00421.bf}{{\em Lett. Math.
  Phys.} {\bfseries 66} no.~3, (2003) 157--216}.

\bibitem{Szabo:2001kg}
R.~J. Szabo, ``{Quantum field theory on noncommutative spaces},''
  \href{http://dx.doi.org/10.1016/S0370-1573(03)00059-0}{{\em Phys.Rept.}
  {\bfseries 378} (2003) 207--299},
\href{http://arxiv.org/abs/hep-th/0109162}{{\ttfamily arXiv:hep-th/0109162
  [hep-th]}}.
%%CITATION = HEP-TH/0109162;%%.

\bibitem{Aschieri:2005yw}
P.~Aschieri, C.~Blohmann, M.~Dimitrijevic, F.~Meyer, P.~Schupp, {\em et al.},
  ``{A Gravity theory on noncommutative spaces},''
  \href{http://dx.doi.org/10.1088/0264-9381/22/17/011}{{\em Class.Quant.Grav.}
  {\bfseries 22} (2005) 3511--3532},
\href{http://arxiv.org/abs/hep-th/0504183}{{\ttfamily arXiv:hep-th/0504183
  [hep-th]}}.
%%CITATION = HEP-TH/0504183;%%.

\bibitem{Connes:1994yd}
A.~Connes, ``{Noncommutative geometry},''
{\em {Academic Press}} (1994) .
%%CITATION = ISBN-9780121858605 ETC.;%%.

\bibitem{Connes:1990qp}
A.~Connes and J.~Lott, ``{Particle Models and Noncommutative Geometry (expanded
  version)},''
{\em Nucl.Phys.Proc.Suppl.} {\bfseries 18B} (1991) 29--47.
%%CITATION = NUPHZ,18B,29;%%.

\bibitem{Blumenhagen:2006ci}
R.~Blumenhagen, B.~K{\"o}rs, D.~L{\"u}st, and S.~Stieberger,
  ``{Four-dimensional String Compactifications with D-Branes, Orientifolds and
  Fluxes},'' \href{http://dx.doi.org/10.1016/j.physrep.2007.04.003}{{\em
  Phys.Rept.} {\bfseries 445} (2007) 1--193},
\href{http://arxiv.org/abs/hep-th/0610327}{{\ttfamily arXiv:hep-th/0610327
  [hep-th]}}.
%%CITATION = HEP-TH/0610327;%%.

\bibitem{Grana:2005jc}
M.~Grana, ``{Flux compactifications in string theory: A Comprehensive
  review},'' \href{http://dx.doi.org/10.1016/j.physrep.2005.10.008}{{\em
  Phys.Rept.} {\bfseries 423} (2006) 91--158},
\href{http://arxiv.org/abs/hep-th/0509003}{{\ttfamily arXiv:hep-th/0509003
  [hep-th]}}.
%%CITATION = HEP-TH/0509003;%%.

\bibitem{Schomerus:1999ug}
V.~Schomerus, ``{D-branes and deformation quantization},'' {\em JHEP}
  {\bfseries 9906} (1999) 030,
\href{http://arxiv.org/abs/hep-th/9903205}{{\ttfamily arXiv:hep-th/9903205
  [hep-th]}}.
%%CITATION = HEP-TH/9903205;%%.

\bibitem{Seiberg:1999vs}
N.~Seiberg and E.~Witten, ``{String theory and noncommutative geometry},'' {\em
  JHEP} {\bfseries 9909} (1999) 032,
\href{http://arxiv.org/abs/hep-th/9908142}{{\ttfamily arXiv:hep-th/9908142
  [hep-th]}}.
%%CITATION = HEP-TH/9908142;%%.

\bibitem{Buscher:1987sk}
T.~Buscher, ``{A Symmetry of the String Background Field Equations},''
\href{http://dx.doi.org/10.1016/0370-2693(87)90769-6}{{\em Phys.Lett.}
  {\bfseries B194} (1987) 59}.
%%CITATION = PHLTA,B194,59;%%.

\bibitem{Hitchin:2004ut}
N.~Hitchin, ``{Generalized Calabi-Yau manifolds},''
  \href{http://dx.doi.org/10.1093/qjmath/54.3.281}{{\em Quart.J.Math.Oxford
  Ser.} {\bfseries 54} (2003) 281--308},
\href{http://arxiv.org/abs/math/0209099}{{\ttfamily arXiv:math/0209099
  [math-dg]}}.
%%CITATION = MATH/0209099;%%.

\bibitem{Gualtieri:2003dx}
M.~Gualtieri, ``{Generalized complex geometry},''
\href{http://arxiv.org/abs/math/0401221}{{\ttfamily arXiv:math/0401221
  [math-dg]}}.
%%CITATION = MATH/0401221;%%.

\bibitem{Ellwood:2006ya}
I.~T. Ellwood, ``{NS-NS fluxes in Hitchin's generalized geometry},''
  \href{http://dx.doi.org/10.1088/1126-6708/2007/12/084}{{\em JHEP} {\bfseries
  0712} (2007) 084},
\href{http://arxiv.org/abs/hep-th/0612100}{{\ttfamily arXiv:hep-th/0612100
  [hep-th]}}.
%%CITATION = HEP-TH/0612100;%%.

\bibitem{Grana:2008yw}
M.~Grana, R.~Minasian, M.~Petrini, and D.~Waldram, ``{T-duality, Generalized
  Geometry and Non-Geometric Backgrounds},''
  \href{http://dx.doi.org/10.1088/1126-6708/2009/04/075}{{\em JHEP} {\bfseries
  0904} (2009) 075},
\href{http://arxiv.org/abs/0807.4527}{{\ttfamily arXiv:0807.4527 [hep-th]}}.
%%CITATION = ARXIV:0807.4527;%%.

\bibitem{Berman:2010is}
D.~S. Berman and M.~J. Perry, ``{Generalized Geometry and M theory},''
  \href{http://dx.doi.org/10.1007/JHEP06(2011)074}{{\em JHEP} {\bfseries 1106}
  (2011) 074},
\href{http://arxiv.org/abs/1008.1763}{{\ttfamily arXiv:1008.1763 [hep-th]}}.
%%CITATION = ARXIV:1008.1763;%%.

\bibitem{Hull:2004in}
C.~Hull, ``{A Geometry for non-geometric string backgrounds},''
  \href{http://dx.doi.org/10.1088/1126-6708/2005/10/065}{{\em JHEP} {\bfseries
  0510} (2005) 065},
\href{http://arxiv.org/abs/hep-th/0406102}{{\ttfamily arXiv:hep-th/0406102
  [hep-th]}}.
%%CITATION = HEP-TH/0406102;%%.

\bibitem{Hull:2006va}
C.~M. Hull, ``{Doubled Geometry and T-Folds},''
  \href{http://dx.doi.org/10.1088/1126-6708/2007/07/080}{{\em JHEP} {\bfseries
  0707} (2007) 080},
\href{http://arxiv.org/abs/hep-th/0605149}{{\ttfamily arXiv:hep-th/0605149
  [hep-th]}}.
%%CITATION = HEP-TH/0605149;%%.

\bibitem{Hull:2007jy}
C.~Hull and R.~Reid-Edwards, ``{Gauge symmetry, T-duality and doubled
  geometry},'' \href{http://dx.doi.org/10.1088/1126-6708/2008/08/043}{{\em
  JHEP} {\bfseries 0808} (2008) 043},
\href{http://arxiv.org/abs/0711.4818}{{\ttfamily arXiv:0711.4818 [hep-th]}}.
%%CITATION = ARXIV:0711.4818;%%.

\bibitem{Bouwknegt:2008kd}
P.~Bouwknegt and A.~S. Pande, ``{Topological T-duality and T-folds},'' {\em
  Adv.Theor.Math.Phys.} {\bfseries 13} (2009) ,
\href{http://arxiv.org/abs/0810.4374}{{\ttfamily arXiv:0810.4374 [hep-th]}}.
%%CITATION = ARXIV:0810.4374;%%.

\bibitem{Schulz:2011ye}
M.~B. Schulz, ``{T-folds, doubled geometry, and the SU(2) WZW model},''
  \href{http://dx.doi.org/10.1007/JHEP06(2012)158}{{\em JHEP} {\bfseries 1206}
  (2012) 158},
\href{http://arxiv.org/abs/1106.6291}{{\ttfamily arXiv:1106.6291 [hep-th]}}.
%%CITATION = ARXIV:1106.6291;%%.

\bibitem{mackenzie1987lie}
K.~Mackenzie, {\em Lie groupoids and Lie algebroids in differential geometry},
  vol.~124.
\newblock Cambridge University Press, 1987.

\bibitem{mackenzie2005general}
K.~C. Mackenzie, {\em General theory of Lie groupoids and Lie algebroids},
  vol.~213.
\newblock Cambridge University Press, 2005.

\bibitem{0885.58030}
Z.~Liu, A.~Weinstein, and P.~Xu, ``{Manin triples for Lie bialgebroids},'' {\em
  J. Differ. Geom.} {\bfseries 45} no.~3, (1997) 547--574.

\bibitem{1139.53001}
S.~Waldmann, \href{http://dx.doi.org/10.1007/978-3-540-72518-3}{{\em
  {Poisson-Geometrie und Deformationsquantisierung. Eine Einf{\"u}hrung}}}.
\newblock {Berlin: Springer}, 2007.

\bibitem{pre06081788}
M.~Boucetta, ``{Riemannian geometry of Lie algebroids},''
  \href{http://dx.doi.org/10.1016/j.joems.2011.09.009}{{\em J. Egypt. Math.
  Soc.} {\bfseries 19} no.~1-2, (2011) 57--70}.

\bibitem{Gualtieri:2007bq}
M.~Gualtieri, ``{Branes on Poisson varieties},''
\href{http://arxiv.org/abs/0710.2719}{{\ttfamily arXiv:0710.2719 [math.DG]}}.
%%CITATION = ARXIV:0710.2719;%%.

\bibitem{Roytenberg:01}
D.~Roytenberg, ``{Courant algebroids, derived brackets and even symplectic
  supermanifolds},'' \href{http://arxiv.org/abs/math/9910078}{{\ttfamily
  arXiv:math/9910078 [math]}}.

\bibitem{0850.70212}
T.~J. Courant, ``{Dirac manifolds},''
  \href{http://dx.doi.org/10.2307/2001258}{{\em Trans. Am. Math. Soc.}
  {\bfseries 319} no.~2, (1990) 631--661}.

\bibitem{0812.53034}
B.~V. Fedosov, ``{A simple geometrical construction of deformation
  quantization},'' {\em J. Differ. Geom.} {\bfseries 40} no.~2, (1994)
  213--238.

\bibitem{1038.53088}
A.~S. Cattaneo and G.~Felder, ``{A path integral approach to the Kontsevich
  quantization formula},'' \href{http://dx.doi.org/10.1007/s002200000229}{{\em
  Commun. Math. Phys.} {\bfseries 212} no.~3, (2000) 591--611}.

\bibitem{Herbst:2001ai}
M.~Herbst, A.~Kling, and M.~Kreuzer, ``{Star products from open strings in
  curved backgrounds},'' {\em JHEP} {\bfseries 0109} (2001) 014,
\href{http://arxiv.org/abs/hep-th/0106159}{{\ttfamily arXiv:hep-th/0106159
  [hep-th]}}.
%%CITATION = HEP-TH/0106159;%%.

\bibitem{Cornalba:2001sm}
L.~Cornalba and R.~Schiappa, ``{Nonassociative star product deformations for
  D-brane world volumes in curved backgrounds},''
  \href{http://dx.doi.org/10.1007/s002201000569}{{\em Commun.Math.Phys.}
  {\bfseries 225} (2002) 33--66},
\href{http://arxiv.org/abs/hep-th/0101219}{{\ttfamily arXiv:hep-th/0101219
  [hep-th]}}.
%%CITATION = HEP-TH/0101219;%%.

\bibitem{Herbst:2003we}
M.~Herbst, A.~Kling, and M.~Kreuzer, ``{Cyclicity of nonassociative products on
  D-branes},'' \href{http://dx.doi.org/10.1088/1126-6708/2004/03/003}{{\em
  JHEP} {\bfseries 0403} (2004) 003},
\href{http://arxiv.org/abs/hep-th/0312043}{{\ttfamily arXiv:hep-th/0312043
  [hep-th]}}.
%%CITATION = HEP-TH/0312043;%%.

\bibitem{Lust:2010iy}
D.~L{\"u}st, ``{T-duality and closed string non-commutative (doubled)
  geometry},'' \href{http://dx.doi.org/10.1007/JHEP12(2010)084}{{\em JHEP}
  {\bfseries 1012} (2010) 084},
  \href{http://arxiv.org/abs/1010.1361}{{\ttfamily arXiv:1010.1361 [hep-th]}}.

\bibitem{Dabholkar:2002sy}
A.~Dabholkar and C.~Hull, ``{Duality twists, orbifolds, and fluxes},'' {\em
  JHEP} {\bfseries 0309} (2003) 054,
\href{http://arxiv.org/abs/hep-th/0210209}{{\ttfamily arXiv:hep-th/0210209
  [hep-th]}}.
%%CITATION = HEP-TH/0210209;%%.

\bibitem{Hellerman:2002ax}
S.~Hellerman, J.~McGreevy, and B.~Williams, ``{Geometric constructions of
  nongeometric string theories},''
  \href{http://dx.doi.org/10.1088/1126-6708/2004/01/024}{{\em JHEP} {\bfseries
  0401} (2004) 024},
\href{http://arxiv.org/abs/hep-th/0208174}{{\ttfamily arXiv:hep-th/0208174
  [hep-th]}}.
%%CITATION = HEP-TH/0208174;%%.

\bibitem{Shelton:2005cf}
J.~Shelton, W.~Taylor, and B.~Wecht, ``{Nongeometric Flux Compactifications},''
  \href{http://dx.doi.org/10.1088/1126-6708/2005/10/085}{{\em JHEP} {\bfseries
  10} (2005) 085},
\href{http://arxiv.org/abs/hep-th/0508133}{{\ttfamily arXiv:hep-th/0508133}}.
%%CITATION = HEP-TH/0508133;%%.

\bibitem{Shelton:2006fd}
J.~Shelton, W.~Taylor, and B.~Wecht, ``{Generalized Flux Vacua},''
  \href{http://dx.doi.org/10.1088/1126-6708/2007/02/095}{{\em JHEP} {\bfseries
  0702} (2007) 095},
\href{http://arxiv.org/abs/hep-th/0607015}{{\ttfamily arXiv:hep-th/0607015
  [hep-th]}}.
%%CITATION = HEP-TH/0607015;%%.

\bibitem{Dabholkar:2005ve}
A.~Dabholkar and C.~Hull, ``{Generalised T-duality and non-geometric
  backgrounds},'' \href{http://dx.doi.org/10.1088/1126-6708/2006/05/009}{{\em
  JHEP} {\bfseries 0605} (2006) 009},
\href{http://arxiv.org/abs/hep-th/0512005}{{\ttfamily arXiv:hep-th/0512005
  [hep-th]}}.
%%CITATION = HEP-TH/0512005;%%.

\bibitem{Andriot:2011uh}
D.~Andriot, M.~Larfors, D.~L{\"u}st, and P.~Patalong, ``{A ten-dimensional
  action for non-geometric fluxes},''
  \href{http://dx.doi.org/10.1007/JHEP09(2011)134}{{\em JHEP} {\bfseries 1109}
  (2011) 134},
\href{http://arxiv.org/abs/1106.4015}{{\ttfamily arXiv:1106.4015 [hep-th]}}.
%%CITATION = ARXIV:1106.4015;%%.

\bibitem{Berman:2012vc}
D.~S. Berman, M.~Cederwall, A.~Kleinschmidt, and D.~C. Thompson, ``{The gauge
  structure of generalised diffeomorphisms},''
\href{http://arxiv.org/abs/1208.5884}{{\ttfamily arXiv:1208.5884 [hep-th]}}.
%%CITATION = ARXIV:1208.5884;%%.

\bibitem{Andriot:2012an}
D.~Andriot, O.~Hohm, M.~Larfors, D.~L{\"u}st, and P.~Patalong, ``{Non-Geometric
  Fluxes in Supergravity and Double Field Theory},''
\href{http://arxiv.org/abs/1204.1979}{{\ttfamily arXiv:1204.1979 [hep-th]}}.
%%CITATION = ARXIV:1204.1979;%%.

\bibitem{Andriot:2012wx}
D.~Andriot, O.~Hohm, M.~Larfors, D.~L{\" u}st, and P.~Patalong, ``{A geometric
  action for non-geometric fluxes},''
  \href{http://dx.doi.org/10.1103/PhysRevLett.108.261602}{{\em Phys.Rev.Lett.}
  {\bfseries 108} (2012) 261602},
\href{http://arxiv.org/abs/1202.3060}{{\ttfamily arXiv:1202.3060 [hep-th]}}.
%%CITATION = ARXIV:1202.3060;%%.

\bibitem{Condeescu:2012sp}
C.~Condeescu, I.~Florakis, and D.~L{\"u}st, ``{Asymmetric Orbifolds,
  Non-Geometric Fluxes and Non-Commutativity in Closed String Theory},''
  \href{http://dx.doi.org/10.1007/JHEP04(2012)121}{{\em JHEP} {\bfseries 1204}
  (2012) 121},
\href{http://arxiv.org/abs/1202.6366}{{\ttfamily arXiv:1202.6366 [hep-th]}}.
%%CITATION = ARXIV:1202.6366;%%.

\bibitem{Hull:2009mi}
C.~Hull and B.~Zwiebach, ``{Double Field Theory},''
  \href{http://dx.doi.org/10.1088/1126-6708/2009/09/099}{{\em JHEP} {\bfseries
  0909} (2009) 099}, \href{http://arxiv.org/abs/0904.4664}{{\ttfamily
  arXiv:0904.4664 [hep-th]}}.
51 pages.
%%CITATION = ARXIV:0904.4664;%%.

\bibitem{Hohm:2010jy}
O.~Hohm, C.~Hull, and B.~Zwiebach, ``{Background independent action for double
  field theory},'' \href{http://dx.doi.org/10.1007/JHEP07(2010)016}{{\em JHEP}
  {\bfseries 1007} (2010) 016},
\href{http://arxiv.org/abs/1003.5027}{{\ttfamily arXiv:1003.5027 [hep-th]}}.
%%CITATION = ARXIV:1003.5027;%%.

\bibitem{Aldazabal:2011nj}
G.~Aldazabal, W.~Baron, D.~Marques, and C.~Nunez, ``{The effective action of
  Double Field Theory},'' \href{http://dx.doi.org/10.1007/JHEP11(2011)052,
  10.1007/JHEP11(2011)109, 10.1007/JHEP11(2011)052,
  10.1007/JHEP11(2011)109}{{\em JHEP} {\bfseries 1111} (2011) 052},
\href{http://arxiv.org/abs/1109.0290}{{\ttfamily arXiv:1109.0290 [hep-th]}}.
%%CITATION = ARXIV:1109.0290;%%.

\bibitem{Hull:2009zb}
C.~Hull and B.~Zwiebach, ``{The Gauge algebra of double field theory and
  Courant brackets},''
  \href{http://dx.doi.org/10.1088/1126-6708/2009/09/090}{{\em JHEP} {\bfseries
  0909} (2009) 090},
\href{http://arxiv.org/abs/0908.1792}{{\ttfamily arXiv:0908.1792 [hep-th]}}.
%%CITATION = ARXIV:0908.1792;%%.

\bibitem{Green:1987sp}
M.~B. Green, J.~Schwarz, and E.~Witten, ``{Superstring Theory. Vol. 1:
  Introduction},''
{\em {Cambridge University Press}} (1987) .
%%CITATION = INSPIRE-250488;%%.

\bibitem{Blumenhagen:2013fgp}
R.~Blumenhagen, D.~L{\"u}st, and S.~Theisen, ``{Basic concepts of string
  theory},''
\href{http://dx.doi.org/10.1007/978-3-642-29496-9}{{\em {Springer}} (2013) }.
%%CITATION = INSPIRE-1217904;%%.

\bibitem{Buscher:1987qj}
T.~Buscher, ``{Path Integral Derivation of Quantum Duality in Nonlinear Sigma
  Models},''
\href{http://dx.doi.org/10.1016/0370-2693(88)90602-8}{{\em Phys.Lett.}
  {\bfseries B201} (1988) 466}.
%%CITATION = PHLTA,B201,466;%%.

\bibitem{Rocek:1991ps}
M.~Rocek and E.~P. Verlinde, ``{Duality, quotients, and currents},''
  \href{http://dx.doi.org/10.1016/0550-3213(92)90269-H}{{\em Nucl.Phys.}
  {\bfseries B373} (1992) 630--646},
\href{http://arxiv.org/abs/hep-th/9110053}{{\ttfamily arXiv:hep-th/9110053
  [hep-th]}}.
%%CITATION = HEP-TH/9110053;%%.

\bibitem{Zwiebach:2011rg}
B.~Zwiebach, ``{Double Field Theory, T-Duality, and Courant Brackets},''
  \href{http://dx.doi.org/10.1007/978-3-642-25947-0_7}{{\em Lect.Notes Phys.}
  {\bfseries 851} (2012) 265--291},
\href{http://arxiv.org/abs/1109.1782}{{\ttfamily arXiv:1109.1782 [hep-th]}}.
%%CITATION = ARXIV:1109.1782;%%.

\bibitem{Andriot:2012vb}
D.~Andriot, M.~Larfors, D.~L{\"u}st, and P.~Patalong, ``{(Non-)commutative
  closed string on T-dual toroidal backgrounds},''
\href{http://arxiv.org/abs/1211.6437}{{\ttfamily arXiv:1211.6437 [hep-th]}}.
%%CITATION = ARXIV:1211.6437;%%.

\bibitem{Aldazabal:2010ef}
G.~Aldazabal, E.~Andres, P.~G. Camara, and M.~Grana, ``{U-dual fluxes and
  Generalized Geometry},''
  \href{http://dx.doi.org/10.1007/JHEP11(2010)083}{{\em JHEP} {\bfseries 1011}
  (2010) 083},
\href{http://arxiv.org/abs/1007.5509}{{\ttfamily arXiv:1007.5509 [hep-th]}}.
%%CITATION = ARXIV:1007.5509;%%.

\bibitem{Schaller:1994es}
P.~Schaller and T.~Strobl, ``{Poisson structure induced (topological) field
  theories},'' \href{http://dx.doi.org/10.1142/S0217732394002951}{{\em
  Mod.Phys.Lett.} {\bfseries A9} (1994) 3129--3136},
\href{http://arxiv.org/abs/hep-th/9405110}{{\ttfamily arXiv:hep-th/9405110
  [hep-th]}}.
%%CITATION = HEP-TH/9405110;%%.

\bibitem{Schaller:1995xk}
P.~Schaller and T.~Strobl, ``{A Brief introduction to Poisson sigma models},''
\href{http://arxiv.org/abs/hep-th/9507020}{{\ttfamily arXiv:hep-th/9507020
  [hep-th]}}.
%%CITATION = HEP-TH/9507020;%%.

\bibitem{Alekseev:2004np}
A.~Alekseev and T.~Strobl, ``{Current algebras and differential geometry},''
  \href{http://dx.doi.org/10.1088/1126-6708/2005/03/035}{{\em JHEP} {\bfseries
  0503} (2005) 035},
\href{http://arxiv.org/abs/hep-th/0410183}{{\ttfamily arXiv:hep-th/0410183
  [hep-th]}}.
%%CITATION = HEP-TH/0410183;%%.

\bibitem{Baulieu:2001fi}
L.~Baulieu, A.~S. Losev, and N.~A. Nekrasov, ``{Target space symmetries in
  topological theories. 1.},'' {\em JHEP} {\bfseries 0202} (2002) 021,
\href{http://arxiv.org/abs/hep-th/0106042}{{\ttfamily arXiv:hep-th/0106042
  [hep-th]}}.
%%CITATION = HEP-TH/0106042;%%.

\bibitem{Halmagyi:2008dr}
N.~Halmagyi, ``{Non-geometric String Backgrounds and Worldsheet Algebras},''
  \href{http://dx.doi.org/10.1088/1126-6708/2008/07/137}{{\em JHEP} {\bfseries
  0807} (2008) 137},
\href{http://arxiv.org/abs/0805.4571}{{\ttfamily arXiv:0805.4571 [hep-th]}}.
%%CITATION = ARXIV:0805.4571;%%.

\bibitem{Wecht:2007wu}
B.~Wecht, ``{Lectures on Nongeometric Flux Compactifications},''
  \href{http://dx.doi.org/10.1088/0264-9381/24/21/S03}{{\em Class.Quant.Grav.}
  {\bfseries 24} (2007) S773--S794},
\href{http://arxiv.org/abs/0708.3984}{{\ttfamily arXiv:0708.3984 [hep-th]}}.
%%CITATION = ARXIV:0708.3984;%%.

\bibitem{Kachru:2002he}
S.~Kachru, M.~B. Schulz, and S.~Trivedi, ``{Moduli stabilization from fluxes in
  a simple IIB orientifold},'' {\em JHEP} {\bfseries 0310} (2003) 007,
\href{http://arxiv.org/abs/hep-th/0201028}{{\ttfamily arXiv:hep-th/0201028
  [hep-th]}}.
%%CITATION = HEP-TH/0201028;%%.

\bibitem{Gukov:1999ya}
S.~Gukov, C.~Vafa, and E.~Witten, ``{CFT's from Calabi-Yau four folds},''
  \href{http://dx.doi.org/10.1016/S0550-3213(00)00373-4}{{\em Nucl.Phys.}
  {\bfseries B584} (2000) 69--108},
\href{http://arxiv.org/abs/hep-th/9906070}{{\ttfamily arXiv:hep-th/9906070
  [hep-th]}}.
%%CITATION = HEP-TH/9906070;%%.

\bibitem{Hull:2005hk}
C.~Hull and R.~Reid-Edwards, ``{Flux compactifications of string theory on
  twisted tori},'' \href{http://dx.doi.org/10.1002/prop.200900076}{{\em
  Fortsch.Phys.} {\bfseries 57} (2009) 862--894},
\href{http://arxiv.org/abs/hep-th/0503114}{{\ttfamily arXiv:hep-th/0503114
  [hep-th]}}.
%%CITATION = HEP-TH/0503114;%%.

\bibitem{Green:1996bh}
M.~B. Green, C.~M. Hull, and P.~K. Townsend, ``{D-brane Wess-Zumino actions, t
  duality and the cosmological constant},''
  \href{http://dx.doi.org/10.1016/0370-2693(96)00643-0}{{\em Phys.Lett.}
  {\bfseries B382} (1996) 65--72},
\href{http://arxiv.org/abs/hep-th/9604119}{{\ttfamily arXiv:hep-th/9604119
  [hep-th]}}.
%%CITATION = HEP-TH/9604119;%%.

\bibitem{Bouwknegt:2004ap}
P.~Bouwknegt, K.~Hannabuss, and V.~Mathai, ``{Nonassociative tori and
  applications to T-duality},''
  \href{http://dx.doi.org/10.1007/s00220-005-1501-8}{{\em Commun. Math. Phys.}
  {\bfseries 264} (2006) 41--69},
\href{http://arxiv.org/abs/hep-th/0412092}{{\ttfamily arXiv:hep-th/0412092}}.
%%CITATION = HEP-TH/0412092;%%.

\bibitem{Mathai:2004qq}
V.~Mathai and J.~M. Rosenberg, ``{T duality for torus bundles with H fluxes via
  noncommutative topology},''
  \href{http://dx.doi.org/10.1007/s00220-004-1159-7}{{\em Commun.Math.Phys.}
  {\bfseries 253} (2004) 705--721},
\href{http://arxiv.org/abs/hep-th/0401168}{{\ttfamily arXiv:hep-th/0401168
  [hep-th]}}.
%%CITATION = HEP-TH/0401168;%%.

\bibitem{Blumenhagen:2010hj}
R.~Blumenhagen and E.~Plauschinn, ``{Nonassociative Gravity in String
  Theory?},'' \href{http://dx.doi.org/10.1088/1751-8113/44/1/015401}{{\em
  J.Phys.A} {\bfseries A44} (2011) 015401},
  \href{http://arxiv.org/abs/1010.1263}{{\ttfamily arXiv:1010.1263 [hep-th]}}.

\bibitem{Ginsparg:1988ui}
P.~H. Ginsparg, ``{Applied Conformal Field Theory},''
\href{http://arxiv.org/abs/hep-th/9108028}{{\ttfamily arXiv:hep-th/9108028
  [hep-th]}}.
%%CITATION = HEP-TH/9108028;%%.

\bibitem{DiFrancesco:1997nk}
P.~Di~Francesco, P.~Mathieu, and D.~Senechal, ``{Conformal field theory},''
{\em {Springer}} (1997) .
%%CITATION = INSPIRE-454643;%%.

\bibitem{Blumenhagen:2009zz}
R.~Blumenhagen and E.~Plauschinn, ``{Introduction to conformal field theory},''
\href{http://dx.doi.org/10.1007/978-3-642-00450-6}{{\em Lect.Notes Phys.}
  {\bfseries 779} (2009) 1--256}.
%%CITATION = LNPHA,779,1;%%.

\bibitem{Blumenhagen:2011ph}
R.~Blumenhagen, A.~Deser, D.~L{\"u}st, E.~Plauschinn, and F.~Rennecke,
  ``{Non-geometric Fluxes, Asymmetric Strings and Nonassociative Geometry},''
  {\em J.Phys.A} {\bfseries A44} (2011) 385401,
\href{http://arxiv.org/abs/1106.0316}{{\ttfamily arXiv:1106.0316 [hep-th]}}.
%%CITATION = ARXIV:1106.0316;%%.

\bibitem{DeBellis:2010sy}
J.~DeBellis, C.~Saemann, and R.~J. Szabo, ``{Quantized Nambu-Poisson Manifolds
  in a 3-Lie Algebra Reduced Model},''
  \href{http://dx.doi.org/10.1007/JHEP04(2011)075}{{\em JHEP} {\bfseries 1104}
  (2011) 075},
\href{http://arxiv.org/abs/1012.2236}{{\ttfamily arXiv:1012.2236 [hep-th]}}.
%%CITATION = ARXIV:1012.2236;%%.

\bibitem{deAzcarraga:2010mr}
J.~A. de~Azcarraga and J.~M. Izquierdo, ``{n-ary algebras: A Review with
  applications},'' \href{http://dx.doi.org/10.1088/1751-8113/43/29/293001}{{\em
  J.Phys.} {\bfseries A43} (2010) 293001},
\href{http://arxiv.org/abs/1005.1028}{{\ttfamily arXiv:1005.1028 [math-ph]}}.
%%CITATION = ARXIV:1005.1028;%%.

\bibitem{Mylonas:2012pg}
D.~Mylonas, P.~Schupp, and R.~J. Szabo, ``{Membrane Sigma-Models and
  Quantization of Non-Geometric Flux Backgrounds},''
  \href{http://dx.doi.org/10.1007/JHEP09(2012)012}{{\em JHEP} {\bfseries 1209}
  (2012) 012},
\href{http://arxiv.org/abs/1207.0926}{{\ttfamily arXiv:1207.0926 [hep-th]}}.
%%CITATION = ARXIV:1207.0926;%%.

\bibitem{Blumenhagen:2012pc}
R.~Blumenhagen, A.~Deser, E.~Plauschinn, and F.~Rennecke, ``{Bianchi Identities
  for Non-Geometric Fluxes - From Quasi-Poisson Structures to Courant
  Algebroids},''
\href{http://arxiv.org/abs/1205.1522}{{\ttfamily arXiv:1205.1522 [hep-th]}}.
%%CITATION = ARXIV:1205.1522;%%.

\bibitem{Blumenhagen:2012ma}
R.~Blumenhagen, A.~Deser, E.~Plauschinn, and F.~Rennecke,
  ``{Palatini-Lovelock-Cartan Gravity - Bianchi Identities for Stringy
  Fluxes},'' \href{http://dx.doi.org/10.1088/0264-9381/29/13/135004}{{\em
  Class.Quant.Grav.} {\bfseries 29} (2012) 135004},
\href{http://arxiv.org/abs/1202.4934}{{\ttfamily arXiv:1202.4934 [hep-th]}}.
%%CITATION = ARXIV:1202.4934;%%.

\bibitem{Halmagyi:2009te}
N.~Halmagyi, ``{Non-geometric Backgrounds and the First Order String Sigma
  Model},''
\href{http://arxiv.org/abs/0906.2891}{{\ttfamily arXiv:0906.2891 [hep-th]}}.
%%CITATION = ARXIV:0906.2891;%%.

\bibitem{1079.53126}
Y.~Kosmann-Schwarzbach, ``{Quasi, twisted, and all that$\hdots$ in Poisson
  geometry and Lie algebroid theory}.'' {Marsden, Jerrold E. (ed.) et al., The
  breadth of symplectic and Poisson geometry. Festschrift in honor of Alan
  Weinstein. Boston, MA: Birkh\"auser. Progress in Mathematics 232, 363-389},
  2005.

\bibitem{1029.53090}
P.~\v{S}evera and A.~Weinstein, ``{Poisson geometry with a 3-form
  background}.'' {Maeda, Yoshiaki (ed.) et al., Noncommutative geometry and
  string theory. Proceedings of the international workshop, Keio Univ.,
  Yokohama, Japan, March 16-22, 2001. Kyoto: Progress of Theoretical Physics,
  Prog. Theor. Phys., Suppl. 144, 145-154}, 2001.

\bibitem{1027.53104}
D.~Roytenberg, ``{Quasi-Lie bialgebroids and twisted Poisson manifolds},''
  \href{http://dx.doi.org/10.1023/A:1020708131005}{{\em Lett. Math. Phys.}
  {\bfseries 61} no.~2, (2002) 123--137}.

\bibitem{Ihl:2007ah}
M.~Ihl, D.~Robbins, and T.~Wrase, ``{Toroidal orientifolds in IIA with general
  NS-NS fluxes},'' \href{http://dx.doi.org/10.1088/1126-6708/2007/08/043}{{\em
  JHEP} {\bfseries 0708} (2007) 043},
\href{http://arxiv.org/abs/0705.3410}{{\ttfamily arXiv:0705.3410 [hep-th]}}.
%%CITATION = ARXIV:0705.3410;%%.

\bibitem{1076.32019}
N.~Hitchin, ``{Generalized Calabi-Yau manifolds},''
  \href{http://dx.doi.org/10.1093/qjmath/54.3.281}{{\em Q. J. Math.} {\bfseries
  54} no.~3, (2003) 281--308}.

\bibitem{Blumenhagen:2012nk}
R.~Blumenhagen, A.~Deser, E.~Plauschinn, and F.~Rennecke, ``{A bi-invariant
  Einstein-Hilbert action for the non-geometric string},''
\href{http://arxiv.org/abs/1210.1591}{{\ttfamily arXiv:1210.1591 [hep-th]}}.
%%CITATION = ARXIV:1210.1591;%%.

\bibitem{Blumenhagen:2012nt}
R.~Blumenhagen, A.~Deser, E.~Plauschinn, and F.~Rennecke, ``{Non-geometric
  strings, symplectic gravity and differential geometry of Lie algebroids},''
  \href{http://dx.doi.org/10.1007/JHEP02(2013)122}{{\em JHEP} {\bfseries 1302}
  (2013) 122},
\href{http://arxiv.org/abs/1211.0030}{{\ttfamily arXiv:1211.0030 [hep-th]}}.
%%CITATION = ARXIV:1211.0030;%%.

\bibitem{Metsaev:1987bc}
R.~Metsaev and A.~A. Tseytlin, ``{Two loop beta function for the generalized
  bosonic sigma model},''
\href{http://dx.doi.org/10.1016/0370-2693(87)90622-8}{{\em Phys.Lett.}
  {\bfseries B191} (1987) 354--362}.
%%CITATION = PHLTA,B191,354;%%.

\bibitem{Metsaev:1987zx}
R.~Metsaev and A.~A. Tseytlin, ``{Order alpha-prime (Two Loop) Equivalence of
  the String Equations of Motion and the Sigma Model Weyl Invariance
  Conditions: Dependence on the Dilaton and the Antisymmetric Tensor},''
\href{http://dx.doi.org/10.1016/0550-3213(87)90077-0}{{\em Nucl.Phys.}
  {\bfseries B293} (1987) 385}.
%%CITATION = NUPHA,B293,385;%%.

\bibitem{Hull:1987yi}
C.~Hull and P.~Townsend, ``{String effective actions from sigma model conformal
  anomalies},''
\href{http://dx.doi.org/10.1016/0550-3213(88)90342-2}{{\em Nucl.Phys.}
  {\bfseries B301} (1988) 197}.
%%CITATION = NUPHA,B301,197;%%.

\bibitem{Blumenhagen:2013aia}
R.~Blumenhagen, A.~Deser, E.~Plauschinn, F.~Rennecke, and C.~Schmid, ``{The
  Intriguing Structure of Non-geometric Frames in String Theory},''
\href{http://arxiv.org/abs/1304.2784}{{\ttfamily arXiv:1304.2784 [hep-th]}}.
%%CITATION = ARXIV:1304.2784;%%.

\bibitem{Felixphd}
F.~Rennecke, ``{PhD thesis},'' {\em to appear} .

\bibitem{Wess:1992cp}
J.~Wess and J.~Bagger, ``{Supersymmetry and supergravity},''
{\em {Princeton Series in Physics}} (1992) .
%%CITATION = INSPIRE-350988;%%.

\bibitem{Nambu:1973qe}
Y.~Nambu, ``{Generalized Hamiltonian dynamics},''
\href{http://dx.doi.org/10.1103/PhysRevD.7.2405}{{\em Phys.Rev.} {\bfseries D7}
  (1973) 2405--2414}.
%%CITATION = PHRVA,D7,2405;%%.

\bibitem{Dito:1996xr}
G.~Dito, M.~Flato, D.~Sternheimer, and L.~Takhtajan, ``{Deformation
  quantization and Nambu mechanics},''
  \href{http://dx.doi.org/10.1007/BF02509794}{{\em Commun.Math.Phys.}
  {\bfseries 183} (1997) 1--22},
\href{http://arxiv.org/abs/hep-th/9602016}{{\ttfamily arXiv:hep-th/9602016
  [hep-th]}}.
%%CITATION = HEP-TH/9602016;%%.

\bibitem{Curtright:2002fd}
T.~Curtright and C.~K. Zachos, ``{Classical and quantum Nambu mechanics},''
  \href{http://dx.doi.org/10.1103/PhysRevD.68.085001}{{\em Phys.Rev.}
  {\bfseries D68} (2003) 085001},
\href{http://arxiv.org/abs/hep-th/0212267}{{\ttfamily arXiv:hep-th/0212267
  [hep-th]}}.
%%CITATION = HEP-TH/0212267;%%.

\bibitem{Basu:2004ed}
A.~Basu and J.~A. Harvey, ``{The M2-M5 brane system and a generalized Nahm's
  equation},'' \href{http://dx.doi.org/10.1016/j.nuclphysb.2005.02.007}{{\em
  Nucl.Phys.} {\bfseries B713} (2005) 136--150},
\href{http://arxiv.org/abs/hep-th/0412310}{{\ttfamily arXiv:hep-th/0412310
  [hep-th]}}.
%%CITATION = HEP-TH/0412310;%%.

\bibitem{Nahm:1979yw}
W.~Nahm, ``{A Simple Formalism for the BPS Monopole},''
\href{http://dx.doi.org/10.1016/0370-2693(80)90961-2}{{\em Phys.Lett.}
  {\bfseries B90} (1980) 413}.
%%CITATION = PHLTA,B90,413;%%.

\bibitem{Bagger:2012jb}
J.~Bagger, N.~Lambert, S.~Mukhi, and C.~Papageorgakis, ``{Multiple Membranes in
  M-theory},''
\href{http://arxiv.org/abs/1203.3546}{{\ttfamily arXiv:1203.3546 [hep-th]}}.
%%CITATION = ARXIV:1203.3546;%%.

\bibitem{Chatzistavrakidis:2012qj}
A.~Chatzistavrakidis and L.~Jonke, ``{Matrix theory origins of non-geometric
  fluxes},''
\href{http://arxiv.org/abs/1207.6412}{{\ttfamily arXiv:1207.6412 [hep-th]}}.
%%CITATION = ARXIV:1207.6412;%%.

\bibitem{0615.58029}
J.-L. Koszul, ``{Schouten-Nijenhuis bracket and cohomology},'' 1985.

\bibitem{2012arXiv1207.3590B}
G.~{Bonavolont{\`a}} and N.~{Poncin}, ``{On the category of Lie
  n-algebroids},'' {\em ArXiv e-prints} (July, 2012) ,
  \href{http://arxiv.org/abs/1207.3590}{{\ttfamily arXiv:1207.3590 [math.DG]}}.

\bibitem{0993.53025}
J.~A. Vallejo, ``{Nambu-Poisson manifolds and associated $n$-ary Lie
  algebroids},'' \href{http://dx.doi.org/10.1088/0305-4470/34/13/315}{{\em J.
  Phys. A, Math. Gen.} {\bfseries 34} no.~13, (2001) 2867--2881}.

\bibitem{0594.17002}
V.~Filippov, ``{$n$-Lie algebras},''
  \href{http://dx.doi.org/10.1007/BF00969110}{{\em Sib. Math. J.} {\bfseries
  26} (1985) 879--891}.

\bibitem{1026.17006}
J.~Grabowski and G.~Marmo, ``{On Filippov algebroids and multiplicative
  Nambu-Poisson structures},''
  \href{http://dx.doi.org/10.1016/S0926-2245(99)00042-X}{{\em Differ. Geom.
  Appl.} {\bfseries 12} no.~1, (2000) 35--50}.

\bibitem{0744.33011}
D.~Zagier, ``{The dilogarithm function in geometry and number theory}.''
  {Number theory and related topics, Pap. Ramanujan Colloq., Bombay/India 1988,
  Stud. Math., Tata Inst. Fundam. Res. 12, 231-243}, 1989.

\bibitem{0894.11052}
A.~N. Kirillov, ``{Dilogarithm identities}.'' {Inami, Takeo (ed.) et al.,
  Quantum field theory, integrable models and beyond. Proceedings of a
  workshop, Kyoto, Japan, February 14-17, 1994. Kyoto: Kyoto University, Yukawa
  Inst. for Theoretical Physics, Prog. Theor. Phys., Suppl. 118, 61-142}, 1995.

\bibitem{0902.57013}
W.~D. Neumann, ``{Hilbert's 3rd problem and invariants of 3-manifolds}.''
  {Rivin, Igor (ed.) et al., The Epstein Birthday Schrift dedicated to David
  Epstein on the occasion of his 60th birthday. Warwick: University of Warwick,
  Institute of Mathematics, Geom. Topol. Monogr. 1, 383-411}, 1998.

\bibitem{1053.57010}
W.~D. Neumann, ``{Extended Bloch group and the Cheeger-Chern-Simons class},''
  \href{http://dx.doi.org/10.2140/gt.2004.8.413}{{\em Geom. Topol.} {\bfseries
  8} (2004) 413--474}.

\bibitem{Polchinski:1998rq}
J.~Polchinski, ``{String theory. Vol. 1: An introduction to the bosonic
  string},''
{\em {Cambridge University Press}} (1998) .
%%CITATION = INSPIRE-487240;%%.

\end{thebibliography}\endgroup
\bibliographystyle{utphys}

\end{document}